\documentclass{article}
\usepackage[utf8]{inputenc}
\usepackage{geometry}
\usepackage{amsmath,amsthm,amssymb}
\usepackage{graphics,graphicx,tikz,color,float, hyperref}
\usepackage{mathtools}
\usepackage{amsfonts}
\usepackage{tikz}
\usepackage {comment}
\usepackage{framed}
\usepackage[section]{placeins}
\usepackage{algorithm}
\usepackage[noend]{algpseudocode}
\usepackage{xcolor}
\newcommand{\mycomment}[1]{\textup{{\color{red}#1}}}

\newcommand{\rk}{\textsf{rank}}
\newtheorem{fact}{Fact}
\newtheorem{definition}{Definition}
\newtheorem{lemma}{Lemma}
\newtheorem{theorem}{Theorem}

\newcommand{\beq}{\begin{equation}}
	\newcommand{\enq}{\end{equation}}
\newcommand{\bel}{\begin{lemma}}
	\newcommand{\enl}{\end{lemma}}
\newcommand{\bet}{\begin{theorem}}
	\newcommand{\ent}{\end{theorem}}

\newcommand{\tr}{\mathrm{Tr}}

\newcommand{\E}{\mathbb{E}}
\newcommand{\ketbra}[1]{|#1\rangle\langle#1|}

\newcommand{\eps}{\varepsilon}

\newcommand*{\cH}{\mathcal{H}}

\newcommand*{\cO}{\mathcal{O}}

\newcommand*{\cX}{\mathcal{X}}

\newcommand*{\cZ}{\mathcal{Z}}
\newcommand*{\cE}{\mathcal{E}}

\newcommand*{\IP}{\mathsf{IP}}
\newcommand{\Ext}{\mathsf{Ext}}

\newcommand{\pre}{\mathsf{Crop}}

\newcommand{\advcb}{\mathsf{AdvCB}}
\newcommand{\ff}{\mathsf{FF}}
\newcommand{\ecc}{\mathsf{ECC}}

\newcommand{\supp}{\mathrm{supp}}
\newcommand{\suppress}[1]{}

\newcommand{\defeq}{\ensuremath{ \stackrel{\mathrm{def}}{=} }}
\newcommand{\F}{\mathbb{F}}

\newcommand {\br} [1] {\ensuremath{ \left( #1 \right) }}

\newcommand {\minusspace} {\: \! \!}
\newcommand {\smallspace} {\: \!}
\newcommand {\fn} [2] {\ensuremath{ #1 \minusspace \br{ #2 } }}

\newcommand {\dmax} [2] {\fn{\mathrm{D}_{\max}}{#1 \middle\| #2}}

\newcommand {\mutinf} [2] {\fn{\mathrm{I}}{#1 \smallspace : \smallspace #2}}

\newcommand {\condmutinf} [3] {\mutinf{#1}{#2 \smallspace \middle\vert \smallspace #3}}
\newcommand {\hminone} [1] {\fn{ \mathrm{H }_{\min}}{#1}}

\newcommand {\hmin} [2] {\fn{ \mathrm{H }_{\min}}{#1 \middle | #2}}

\newcommand {\id} {\ensuremath{\mathbb{I}}}

\newcommand {\Hmin}{\mathrm{H}_{\min}}

\newcommand{\nmextc}{\mathsf{2nmext \mhyphen c}}
\newcommand{\newnmextc}{\mathsf{new \mhyphen 2nmext \mhyphen c}}
\newcommand{\nmextq}{\mathsf{2nmext \mhyphen q}}
\newcommand{\newnmextq}{\mathsf{new \mhyphen 2nmext \mhyphen q}}
\newcommand{\IExt}{\mathsf{IExt}}
\newcommand{\Trev}{\mathsf{Trev}}

\newcommand{\samp}{\mathsf{Samp}}

\usepackage{afterpage}

\newcommand*{\cp}{\mathsf{copy}}
\newcommand*{\sm}{\mathsf{same}}

\newcommand*{\qmas}{\mathsf{qma\mhyphen state}}
\newcommand*{\nmas}{\mathsf{qnm\mhyphen state}}

\newcommand*{\nma}{\mathsf{qnm\mhyphen adv}}

\newcommand*{\nmext}{\mathsf{nmExt}}

\newcommand{\X}{\mathcal{X}}
\newcommand{\Y}{\mathcal{Y}}
\newcommand{\Z}{\mathcal{Z}}

\newcommand*{\cL}{\mathcal{L}}

\newcommand{\bra}[1]{\langle #1|}
\newcommand{\ket}[1]{|#1 \rangle}

\mathchardef\mhyphen="2D

\newcommand*{\enc}{\mathrm{Enc}}
\newcommand*{\dec}{\mathrm{Dec}}

\makeatletter
\newcommand*{\rom}[1]{\expandafter\@slowromancap\romannumeral #1@}
\makeatother

\mathchardef\mhyphen="2D

\newtheorem*{remark}{Remark}
\newtheorem{claim}{Claim}
\newtheorem{corollary}{Corollary}

\newcommand{\dnote}[1]{\marginpar{\textcolor{cyan}{\textbf{!!!}}}\textcolor{cyan}{\sf #1 --D.A.}}

\newfloat{Protocol}{tbhp}{lop}

\title{Quantum secure non-malleable codes in the split-state model}
\author{
	Divesh Aggarwal~\footnote{Center for Quantum Technologies, National University of Singapore, \texttt{dcsdiva@nus.edu.sg}}\qquad
	Naresh Goud Boddu~\footnote{Center for Quantum Technologies, National University of Singapore, \texttt{e0169905@u.nus.edu}}
	\and
	Rahul Jain~\footnote{Centre for Quantum Technologies and Department of Computer Science, 
  National University of Singapore and MajuLab, UMI 3654, Singapore,  \texttt{rahul@comp.nus.edu.sg}}
}
\date{}
\begin{document}
\maketitle
\begin{abstract}
{\em Non-malleable codes} introduced by Dziembowski, Pietrzak and Wichs~\cite{DPW10} encode a classical message $S$ in a manner such that the tampered codeword either decodes to the original message $S$ or a message that is unrelated/independent of $S$. Constructing non-malleable codes for various tampering function families has received significant attention in the recent years. We consider the well studied ($2$-part) {\em split-state}  model, in which the message $S$ is encoded into two parts $X$ and $Y$, and the adversary is allowed to arbitrarily tamper with each $X$ and $Y$ individually. Non-malleable codes in the split-state model have found applications in other important security notions like {\em non-malleable commitments} and {\em non-malleable secret sharing}. Thus, it is vital to understand if such non-malleable codes are secure against quantum adversaries.

We consider the security of non-malleable codes in the split-state model when the adversary is allowed to make use of arbitrary entanglement to tamper the parts $X$ and $Y$. We construct explicit quantum secure non-malleable codes in the split-state model. Our construction of quantum secure non-malleable codes is based on the recent construction of quantum secure {\em $2$-source non-malleable extractors} by Boddu, Jain and Kapshikar~\cite{BJK21}.
    \begin{itemize}
    \item We extend the connection of  Cheraghchi and Guruswami~\cite{CG14a}  between $2$-source non-malleable extractors and non-malleable codes in the split-state model in the classical setting to the quantum setting, i.e. we show that explicit quantum secure $2$-source  non-malleable extractors in $(k_1,k_2)\mhyphen\qmas$ framework of~\cite{BJK21} give rise to explicit quantum secure non-malleable codes in the split-state model. 
        \item We construct the first quantum secure non-malleable code with efficient encoding and decoding procedures for message length $m=n^{\Omega(1)}$, error $\eps=2^{-n^{\Omega(1)}}$ and codeword of size $2n$. Prior to this work, it remained open to provide such quantum secure non-malleable code even for a single bit message in the  split-state model.
    
\item We also study its  natural extension when the tampering of the codeword is performed $t$-times. 
         We construct quantum secure one-many non-malleable code with efficient encoding and decoding procedures for $t=n^{\Omega(1)}$, message length $m=n^{\Omega(1)}$, error $\eps=2^{-n^{\Omega(1)}}$ and codeword of size $2n$.
         \item As an application, we also construct the first quantum secure $2$-out-of-$2$ non-malleable secret sharing scheme for message/secret length $m=n^{\Omega(1)}$, error $\eps=2^{-n^{\Omega(1)}}$ and share of size $n$.
    \end{itemize}

\end{abstract}
\newpage
\section{Introduction}
In a seminal work, Dziembowski,
Pietrzak and Wichs~\cite{DPW10} introduced non-malleable codes to provide a meaningful guarantee for the encoded message $S$ in situations where traditional error-correction
or even error-detection is impossible. Informally, non-malleable codes encode a classical message $S$ in a manner such that tampering the codeword results in decoder either outputting the original message $S$ or a message that is unrelated/independent of $S$. Using probabilistic arguments,~\cite{DPW10} showed the existence of such non-malleable codes against any
family $\mathcal{F}$ of tampering functions  of size as large as $2^{2^{\alpha n}}$ for any fixed constant $\alpha<1$, where $n$ is the length of codeword for messages of length $\Omega(n)$.

Subsequent works continued to study non-malleable codes in various tampering models. Perhaps the most well known of these tampering function families is the so called split-state model introduced by Liu and Lysyanskaya~\cite{LL12}, who constructed efficient constant rate non-malleable codes against computationally bounded adversaries under strong cryptographic assumptions. We refer to the ($2$-part) split-state model as the split-state model in this paper. In the  split-state model, the message $S$ is encoded into two parts, $X$ and $Y$, after which the adversary is allowed to arbitrarily tamper $(X,Y) \to (X',Y')$ such that $(X',Y') =(f(X),g(Y))$ for any functions $(f,g)$ such that $f,g: \{0,1 \}^n \to  \{0,1 \}^n$. Dziembowski, Kazana and Obremski~\cite{DKO13} proposed a construction that provides non-malleable codes for a single
bit message based on strong extractors. Subsequently, multiple works considered non-malleable codes for multi-bit messages leading to constant rate non-malleable codes in the split-state model~\cite{LL12,CG14a,CGL15,CG14b,ADL17,A15,li15,AB16,Li17,Li19,AO20,AKOOS22}. Non-malleable codes in the split-state model have found applications to numerous other  important security notions such as non-malleable commitments and non-malleable secret sharing~\cite{GPR16,GK16,GK18,ADNOP19,AP19}.

More formally, a non-malleable code in the split-state model in the classical setting can be defined as follows. Let $n,m$  represent positive integers and $ k, \eps, \eps' > 0$ represent reals. Let $\mathcal{F}$ denote the set of all functions $f : \{0,1 \}^n \to \{0,1 \}^n$. We consider an encoding and decoding scheme $(\enc,\dec)$ in the split-state model where $\enc(S)=(X,Y)$. Here $S \sim U_m$ ($U_m$ is uniform distribution on $m$ bits) represents the plaintext/message and  $X,Y\in\{0,1\}^n$ are the two parts of the codeword. $\enc$ is a randomized function and $\dec(X,Y)$ is a deterministic function, such that $\Pr\left(\dec\left( \enc(S)\right) =S \right)=1$.

 We define the function $\cp : \{ 0,1\}^* \cup \{ \mathsf{same}\} \times \{ 0,1\}^* \to \{ 0,1\}^*$.  $\cp(x,s) =s$ if $x=\mathsf{same}$, otherwise $\cp(x,s) =x$. 

\begin{definition}[Non-malleable codes in the split-state model\label{def:nmcodes_classical}~\cite{LL12,DPW10}]$(\enc, \dec)$ is an $(m,n,\eps)$-non-malleable code with respect to a family of tampering functions $\mathcal{F} \times \mathcal{F}$, if for every $f=(g_1,g_2) \in \mathcal{F} \times \mathcal{F}$, there exists a random variable $\mathcal{D}_{f}=\mathcal{D}_{(g_1,g_2)}$ on $\{0,1 \}^m  \cup \lbrace \sm \rbrace$ which is independent of the randomness in $\enc$ such that for all messages $s \in \{0,1 \}^m$, it holds that
$$  \Vert  \dec(f (\enc(s)))  - \cp (\mathcal{D}_{f} ,s)\Vert_1 \leq \eps.$$
\end{definition}

Intuitively, if the adversary doesn't tamper the codeword (in which case $(X,Y)=(X',Y')$), the decoded message is same (captured by the variable $\mathsf{same}$) as original message $S$. If the adversary does tamper the codeword (in which case either $X \ne X'$ or $Y \ne Y'$), the decoded message is (approximately) distributed according to a distribution ($\mathcal{D}_{f}$ minus $\mathsf{same}$ normalized) that only depends on $f$ and is independent of the original message $S$.

\subsubsection*{Previous classical results in the split-state model}

\cite{DKO13}~constructed the first non-malleable code for a 1-bit message.  Following that Aggarwal, Dodis and Lovett~\cite{ADL17} gave
the first information-theoretic construction for $m$-bit messages, but the length of codeword being $2n=m^{\cO(1)}$. Chattopadhyay, Goyal and Li~\cite{CGL15} gave a non-malleable code for message length $m=n^{\Omega(1)}$, error $\eps=2^{- n^{\Omega(1)}}$ and codeword of size $2n.$ Improving upon the work of~\cite{CGL15}, Li~\cite{Li19} gave a non-malleable code for message length $m= \cO \left(\frac{n \log \log n }{\log n}\right)$, error $\eps=2^{- n^{\Omega(1)}}$ and codeword of size $2n.$ Only recently Aggarwal and Obremski~\cite{AO20} gave the first constant rate non-malleable code for message length $m=\Omega(n)$, error $\eps=2^{- n^{\Omega(1)}}$ and codeword of size $2n.$ This construction was improved to a rate $1/3$ construction in~\cite{AKOOS22}.

\suppress{
\dnote{The following should go in the "Our results" section, and should appear after introducing the quantum model, and motivating it}
Our work provides the first quantum secure non-malleable code with efficient encoding and decoding procedures for message length $m=n^{\Omega(1)}$, error $\eps=2^{- n^{\Omega(1)}}$ and codeword of size $2n$. When the tampering of the codeword is performed $t$-times, we also provide the first quantum secure one-many non-malleable code with efficient encoding and decoding procedures for $t=n^{\Omega(1)}$, message length $m=n^{\Omega(1)}$, error $\eps=2^{-n^{\Omega(1)}}$ and codeword of size $2n$. Prior to our work, it remained open to provide such quantum secure non-malleable codes even for a single bit message in the  split-state model.

}

\subsubsection*{Motivation to consider the  quantum setting}
 With the rise of quantum computers, it becomes vital to understand if non-malleable codes are secure against quantum adversaries. Quantum entanglement between various parties, used to generate classical information introduces {\em non-local correlations}~\cite{Bell64}. For example in the CHSH game, one can use local measurements on both the halves of a EPR state to generate a probability distribution which contains correlations stronger than those possible classically. Entanglement is of course known to yield several such unexpected effects with no classical counterparts,~e.g.,~{\em super-dense coding}~\cite{CS92}. Thus, it motivates us to consider if one can provide non-malleable codes when adversary in the  split-state model is allowed to make use of an arbitrary entanglement (between the two parts) to tamper the two parts $X$ and $Y$ (both classical) of an  encoded message $S$. 
 We note to the reader that the $(\enc,\dec)$ schemes considered in this paper are classical, and we provide quantum security in the sense that the adversary is allowed to do quantum operations to tamper $(X,Y) \to (X',Y')$ using pre-shared unbounded entanglement.

 \subsubsection*{Our results}

Our first contribution is setting up the required analogue/framework to define non-malleable codes in the quantum setting.

\subsubsection*{Quantum split-state adversary}
To tamper $(X,Y) \to (X',Y')$, we let the adversary share an arbitrary entanglement ${\psi}_{NM}$~\footnote{Without any loss of generality, one can consider $\psi$ as a pure state.} between the two different locations where split codewords are stored. The adversary then applies isometries  $U: \cH_X \otimes \cH_N \rightarrow  \cH_{X^\prime}  \otimes \cH_{N^\prime}$ and $V: \cH_Y \otimes \cH_M \rightarrow \cH_{Y'} \otimes \cH_{M'}$ to tamper~\footnote{This is without any loss of generality since one can consider Strinespring isometry extension~(see Definition~\ref{fact:stinespring}) of a CPTP map $\Phi$ as isometry if the tampering is given by a CPTP map $\Phi$.} $(X,Y) \to (X',Y')$. The decoding process begins by first measuring $(X^\prime, Y^\prime)$ and then outputting the decoded message $S'$ from $(X',Y')$ (post measurement in the computational basis). To show that non-malleable codes are secure against such an adversary, it is sufficient to show that if the adversary doesn't tamper the codeword, the decoded message $S'$ is same as the original message $S$. If the adversary does tamper the codeword, the decoded message $S'$ is (approximately) distributed according to a distribution~($\mathcal{D}_{(U,V,{\psi})}$ that only depends on $(U,V,{\psi})$) that is independent of the original message $S$ . For simplicity, we denote quantum split-state adversary as $\mathcal{A}=(U,V,{\psi}_{})$ in this paper.

We now formally define a quantum split-state adversary in the split-state model.

\begin{definition}[Quantum split-state adversary (see Figure~\ref{fig:splitstate1})]\label{def:splitstateadv}
      Let $\sigma_{SXY}$ be the state after encoding the message $S$. The quantum split-state adversary (denoted $\mathcal{A} = (U,V,{\psi}_{})$) will act via two isometries, $(U,V)$ using an additional shared entangled state $\ket{\psi}_{NM}$ as specified by $U: \cH_X \otimes \cH_N \rightarrow  \cH_{X^\prime}  \otimes \cH_{N^\prime}$ and $V: \cH_Y \otimes \cH_M \rightarrow \cH_{Y'} \otimes \cH_{M'}$.  Let $\hat{\rho} = (U \otimes V)(\sigma \otimes \ket{\psi}_{}\bra{\psi}_{})(U \otimes V)^\dagger$ and ${\rho}$ be the final state after measuring the registers $(X'Y')$ in computational basis~\footnote{We enforce the quantum split-state adversary to return classical registers $X',Y'$ by doing the measurement in computational basis. This is without any loss of generality since the decoding process can start by doing the measurement (this will not affect anything in the absence of quantum split-state adversary) and then decode using $\dec$.}.
        
\end{definition}

\begin{figure}
\centering
\begin{tikzpicture}

\node at (1,4.5) {$S$};
\draw (1.2,4.5) -- (5,4.5);
\draw (5,4.5) -- (15,4.5);
\node at (14.5,4.7) {$S$};

\draw (2,1.5) -- (3,1.5);
\draw (3,-0.5) rectangle (4.5,3.5);
\node at (3.8,1.5) {$\enc$};

\draw  (2,1.5) -- (2,4.5);
\draw [dashed] (4.78,-0.8) -- (4.78,5.5);
\draw [dashed] (8.3,-0.8) -- (8.3,5.5);
\draw [dashed] (10,-0.8) -- (10,5.5);

\node at (6.2,1.6) {$\ket{\psi}_{NM}$};
\node at (4.64,-0.8) {$\sigma$};
\node at (8.1,-0.8) {$\hat{\rho}$};
\node at (10.2,-0.8) {${\rho}$};

\node at (4.6,3) {$X$};
\node at (8,3) {$X'$};
\draw (4.5,2.8) -- (6,2.8);
\draw (7,2.8) -- (8.5,2.8);
\draw (9.5,2.8) -- (11,2.8);
\node at (14.5,2.6) {$S'$};
\draw (14,2.4) -- (15,2.4);

\node at (4.6,0.4) {$Y$};
\node at (8,0.0) {$Y'$};
\draw (4.5,0.2) -- (6,0.2);
\draw (7,0.2) -- (8.5,0.2);
\draw (9.5,0.2) -- (11,0.2);

\draw (6,2) rectangle (7,3);
\node at (6.5,2.5) {$U$};
\draw (6,0) rectangle (7,1);
\node at (6.5,0.5) {$V$};

\node at (6.5,-0.4) {$\mathcal{A}=(U,V,\psi)$};
\draw (5.2,-0.8) rectangle (7.8,3.8);

\draw (5.5,1.5) ellipse (0.2cm and 1cm);
\node at (5.5,2) {$N$};
\draw (5.7,2.2) -- (6,2.2);
\node at (7.4,2.1) {$N'$};
\draw (7,2.2) -- (7.2,2.2);
\node at (5.5,1) {$M$};
\draw (5.7,0.7) -- (6,0.7);
\node at (7.4,0.9) {$M'$};
\draw (7.0,0.7) -- (7.2,0.7);

\draw (9,2.8) circle (0.5);
\node at (9,2.8) {$\mathcal{M}$};
\draw (9,0.2) circle (0.5);
\node at (9,0.2) {$\mathcal{M}$};

\draw (11,-0.5) rectangle (14,3.2);
\node at (12.5,1.5) {$\dec$};

\end{tikzpicture}
\caption{Quantum split-state adversary along with the process.}\label{fig:splitstate1}
\end{figure}
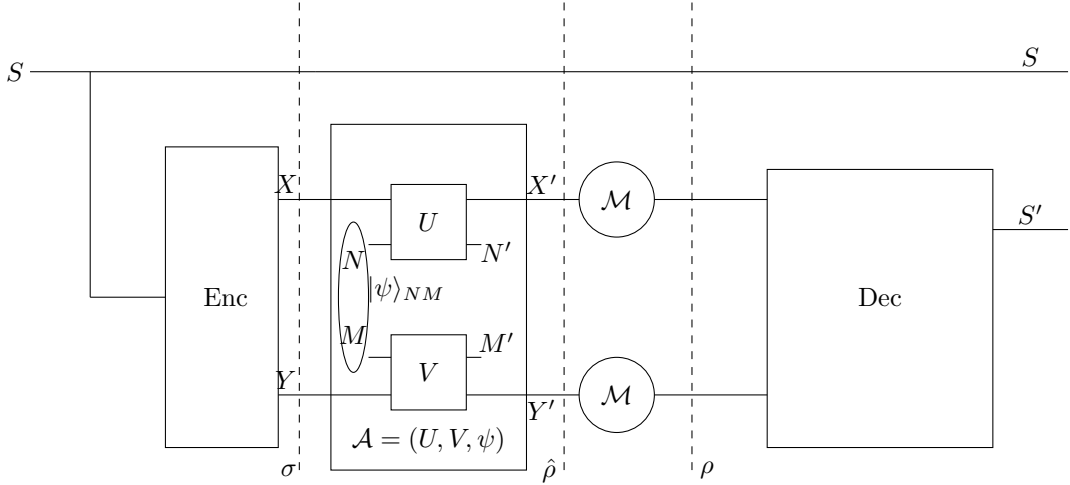

Our work provides the first quantum secure non-malleable code with efficient encoding and decoding procedures for message length $m=n^{\Omega(1)}$, error $\eps=2^{- n^{\Omega(1)}}$ and codeword of size $2n$. When the tampering of the codeword is performed $t$-times, we also provide the first quantum secure one-many non-malleable code with efficient encoding and decoding procedures for $t=n^{\Omega(1)}$, message length $m=n^{\Omega(1)}$, error $\eps=2^{-n^{\Omega(1)}}$ and codeword of size $2n$. Prior to our work, it remained open to provide such quantum secure non-malleable codes even for a single bit message in the  split-state model. We next formally define the quantum secure non-malleable codes in the split-state model.

\begin{definition}[Quantum secure non-malleable codes in the split-state model\label{def:nmcodes_qs}]  $(\enc, \dec)$ is an $(m,n,\eps)$-quantum secure non-malleable code in the split-state model with error $\eps$, if for state $\rho$ and adversary $\mathcal{A} = (U,V, {\psi})$ (as defined in Definition~\ref{def:splitstateadv}), there exists a random variable $\mathcal{D}_{\mathcal{A}}$~\footnote{Distribution depends only on $\mathcal{A}$ and is independent of the original message $S$.} on $\{0,1 \}^m  \cup \lbrace \sm \rbrace$ such that 
 $$\forall s \in \{0,1\}^m: \qquad \Vert S'_s- \cp (\mathcal{D}_{\mathcal{A}} ,s)   \Vert_1  \leq \eps.$$
 Above $S'=\dec(X',Y'), S'_s = (S'|S=s)$ and the function $\cp(x,s)$ is such that $\cp(x,s) =s$ if $x=\mathsf{same}$, otherwise $\cp(x,s) =x$.
\end{definition}

 Our first result is to show that a quantum secure non-malleable code in the split-state model can be constructed using a quantum secure $2$-source non-malleable extractor. We use the $2$-source non-malleable extractor of Boddu, Jain and Kapshikar~\cite{BJK21}. This is analogous to the classical result by Cheraghchi and Guruswami~\cite{CG14b}, however additional novelty over classical arguments is needed. This is to take care of the specific adversary model in which the security of $2$-source non-malleable extractor is shown by~\cite{BJK21} and other additional issues involving quantum information (example purifications of states).

\begin{theorem}[Quantum secure non-malleable codes in the split-state model]\label{thm:introqnmcodesfromnmext} Let $2\nmext : \{0,1\}^{n} \times \{0,1\}^{n} \to \{0,1\}^m$ be an $(n-k,n-k,\eps)$-quantum secure $2$-source non-malleable extractor. There exists an $(m,n,\eps')$-quantum secure non-malleable code in the split-state model with parameter $\eps' = 2^m(4(2^{-k} + \eps) + \eps)$~\footnote{All the quantum secure non-malleable codes in the split-state model we construct are qualitatively same as the non-malleable codes in the split-state model constructed in~\cite{CGL15}.}.
\end{theorem}
Above, $\enc, \dec$ for quantum secure non-malleable code in the split-state model are $2\nmext^{-1},  \newline 2\nmext$ respectively. It can be noted that computation of $Z=2\nmext(X,Y)$ (starting from $(X,Y)$) is efficient (in $n$). This ensures that $\dec$ is efficient. However $\enc$ involves given $Z$, sampling uniformly from the pre-image of $Z$ (under the function $2\nmext$) and it is not apriori clear that this is possible. In the next result, we show that this is indeed possible. This result is analogous to a result due to~\cite{CGL15}, however in our case additional novelty is needed which we explain in the proof overview below. 
\begin{theorem}\label{thm:intromain}
There exists an $(m,n,\eps)$-quantum secure non-malleable code in the split-state model with efficient encoding and decoding procedures for message length $m=n^{\Omega(1)}$, error $\eps=2^{-n^{\Omega(1)}}$ and codeword of size $2n$.
\end{theorem}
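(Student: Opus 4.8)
The plan is to obtain the code by instantiating the generic reduction of Theorem~\ref{thm:introqnmcodesfromnmext} with an explicit, efficiently pre-image-samplable quantum secure $2$-source non-malleable-extractor, namely a suitable parameter setting of the construction of~\cite{BJK21}.

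\textbf{Choosing parameters, invoking the reduction, and decoding.} Fix a small constant $\delta>0$ and set $k=n^{\delta}$. The construction of~\cite{BJK21} yields an explicit $2\nmext:\{0,1\}^n\times\{0,1\}^n\to\{0,1\}^m$ that is $(n-k,n-k,\eps)$-quantum secure with $m=n^{\Omega(1)}$ and $\eps=2^{-n^{\Omega(1)}}$; here one verifies that the very-high-min-entropy regime $n-k=n-n^{\delta}$ is covered and that this output length and error are simultaneously attainable, which the alternating-extraction pipeline of~\cite{BJK21,CGL15,Coh15} does achieve. Plugging this extractor into Theorem~\ref{thm:introqnmcodesfromnmext} produces an $(m,n,\eps')$-quantum secure non-malleable-code in the split-state model with $\eps'=4(2^{-k}+\eps)+\eps = 4\cdot 2^{-n^{\delta}}+5\eps=2^{-n^{\Omega(1)}}$ and codeword length $n+n=2n$. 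Decoding is immediate and efficient: following the Cheraghchi--Guruswami shape of the reduction, $\dec(x,y)$ returns $2\nmext(x,y)$ when $(x,y)$ is a valid codeword and $\sm$ otherwise, and $2\nmext$ is computable in $\mathrm{poly}(n)$ time; the decoding-consistency requirement of Definition~\ref{def:enc-dec_schemes} then holds by construction.

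\textbf{Efficient encoding --- the main obstacle.} The encoder of the reduction must output a \emph{uniformly random} pair $(X,Y)\in (2\nmext)^{-1}(s)$, and uniform pre-image sampling is not automatic for a generic extractor. The argument must exploit the structure of the~\cite{BJK21} extractor: its quantum security analysis alters only the error bounds, not the underlying combinatorial construction, so it retains the feature (already used to make the classical code of~\cite{CGL15} efficient) that, once all but one designated block of the two sources are fixed, the extractor output depends on that remaining block through an efficiently computable and efficiently invertible bijection onto $\{0,1\}^m$. Concretely, I would: (i) sample all the ``free'' blocks of $X$ and $Y$ uniformly at random; (ii) show that, conditioned on this, the induced map from the designated block(s) to $2\nmext(X,Y)$ is a bijection onto $\{0,1\}^m$ that can be inverted in polynomial time; (iii) invert it at $s$ to obtain the remaining block(s) and output the full $(X,Y)$; and (iv) check that the distribution so produced is exactly uniform on $(2\nmext)^{-1}(s)$, as the reduction demands. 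If exact uniformity is inconvenient to argue, the fallback is that the reduction tolerates a sampler that is only $2^{-n^{\Omega(1)}}$-close to uniform on the pre-image, incurring just an extra additive $2^{-n^{\Omega(1)}}$ in $\eps'$. Combining the efficient encoder, the efficient decoder, and the error bound above yields the claimed $(m,n,\eps)$-quantum secure non-malleable-code with $m=n^{\Omega(1)}$, $\eps=2^{-n^{\Omega(1)}}$ and codeword size $2n$.
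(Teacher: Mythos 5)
Your overall skeleton (instantiate Theorem~\ref{thm:introqnmcodesfromnmext} with a quantum secure $2$-source non-malleable-extractor at min-entropy $n-k$, $k=n^{\Omega(1)}$, and then make the encoder efficient by pre-image sampling) matches the paper, and your error accounting for $\eps'$ is fine. The gap is in the step you yourself flag as the main obstacle: you assert that the \cite{BJK21} extractor, unmodified, ``retains the feature (already used to make the classical code of \cite{CGL15} efficient) that, once all but one designated block of the two sources are fixed, the extractor output depends on that remaining block through an efficiently computable and efficiently invertible bijection onto $\{0,1\}^m$.'' This is precisely what is \emph{not} available. The CGL15 efficiency argument hinges on replacing the seeded extractors inside the alternating extraction by a specially engineered extractor $\iExt$ with fixed pre-image sizes for every seed, and the quantum security of $\iExt$ is open (the paper explicitly poses it as a future direction). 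The quantum secure construction of \cite{BJK21} instead uses Trevisan's extractor, which has no such regularity, and moreover it re-uses the same source material across the rounds of alternating extraction, so fixing ``all but one designated block'' does not decouple the constraints. Consequently your steps (ii)--(iv) do not go through as stated, and your fallback remark (that approximate sampling suffices) is correct in spirit but is left without any mechanism.

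What the paper actually does to close this gap has two components that are absent from your proposal. First, it does not use the \cite{BJK21} extractor as a black box: it defines a \emph{modified} extractor (Algorithm~\ref{alg:2nmExtnew}) that splits $X,Y$ into $n^{\Omega(1)}$ blocks and consumes a fresh block in each round of the flip-flop/alternating extraction, so that the linear constraints imposed by each extraction step involve disjoint variables; its security must then be re-argued (Theorem~\ref{thm:2nmext}). Second, even with this block structure, exact uniform pre-image sampling is not achieved; instead one samples a distribution that is only $\cO(\eps)$-close to $\hat S\,\enc(\hat S)$ for uniform $\hat S$ (Theorem~\ref{thm:main}), by backtracking the computation: since each seeded extractor is a \emph{linear} strong extractor, given (approximately uniform) output and seed one can efficiently sample the affine pre-image (Claim~\ref{claim:extractorsampling}), these partial samplers are stitched together through a Markov-chain composition argument (Claim~\ref{fact:markovapprox}) with careful control of the accumulated error, and a Reed--Solomon argument (Lemma~\ref{lem:reedsolomon}) handles the constraints coming from the advice generator after fixing $G=g$. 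Without the modified construction and this approximate backtracking machinery (or a proof of quantum security for $\iExt$, which would validate your bijection route), your encoder is not justified, so the proof as proposed has a genuine hole even though the target statement and the reduction it rests on are the right ones.
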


Prior to this work, it remained open to provide such construction for quantum secure non-malleable codes, even for a single bit message in the split-state model. 

As an application, we construct the first quantum secure $2$-out-of-$2$ non-malleable secret sharing scheme for message/secret length $m=n^{\Omega(1)}$, error $\eps=2^{-n^{\Omega(1)}}$ and share of size $n$ (see Appendix~\ref{Appendixss}).

We also study the  natural extension when the tampering of the codeword is performed $t$-times (see Appendix~\ref{Appendix1}). Here, the adversary is allowed to tamper 
$$(X,Y) \to \newline (X^1X^2 \ldots X^t, Y^1Y^2 \ldots Y^t)$$ making use of an arbitrary entanglement between two parts $X$ and $Y$. We require, in case of tampering, the original message $S$ to be  independent of $S^{1}\ldots S^t = \dec(X^1,Y^1)\ldots\dec(X^t,Y^t)$.

\begin{theorem}[Quantum secure one-many non-malleable codes in the split-state model]\label{thm:introqnmcodesfromnmextt} Let $t\mhyphen2\nmext : \{0,1\}^{n} \times \{0,1\}^{n} \to \{0,1\}^m$ be a $(t;n-k,n-k,\eps)$-quantum secure $2$-source $t$-non-malleable extractor. There exists a $(t;m,n,\eps')$-quantum secure one-many non-malleable code with parameter $\eps' =2^m(2^{2t}(2^{-k} + \eps) + \eps)$.
\end{theorem}

\begin{theorem}\label{thm:intromain1}
There exists a $(t;m,n,\eps)$-quantum secure one-many non-malleable code in the  split-state model with efficient encoding and decoding procedures for parameters $t=n^{\Omega(1)}$, $m=n^{\Omega(1)}$, error $\eps=2^{-n^{\Omega(1)}}$ and codeword of size $2n$.
\end{theorem}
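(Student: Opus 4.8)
The plan is to instantiate the general reduction of Theorem~\ref{thm:introqnmcodesfromnmextt} with an explicit quantum secure $2$-source $t$-non-malleable-extractor coming from~\cite{BJK21}, and then to check that its parameters can be pushed into the polynomial regime claimed here. First I would recall the explicit $t$-non-malleable-extractor construction of~\cite{BJK21}: it takes two independent $n$-bit sources with min-entropy $n-k$, uses the quantum-secure alternating-extraction machinery (extended from the classical constructions of~\cite{CGL15,Coh15}), and achieves output length $m$ and error $\eps$ with $t$ tampering functions, where the precise tradeoff among $n, k, m, t, \eps$ is the one stated in that paper. The key observation is that by choosing, for a suitable small constant exponent, $k = n^{\Omega(1)}$, $m = n^{\Omega(1)}$, $t = n^{\Omega(1)}$, one obtains error $\eps = 2^{-n^{\Omega(1)}}$; these are the same qualitative parameters as in the single-tamper case, since the alternating-extraction approach of~\cite{BJK21} already handles $t$-fold tampering natively with only a polynomial blow-up in the number of extraction steps.

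Next I would feed these parameters into Theorem~\ref{thm:introqnmcodesfromnmextt}. The reduction there loses a factor of $2^{2t}$ in the error: the resulting non-malleable-code error is $\eps' = 2^{2t}(2^{-k} + \eps) + \eps$. So the one real constraint is that $2^{2t}$ must be dominated by the inverse error of the extractor, i.e.\ we need $2^{2t} \cdot (2^{-k} + \eps) = 2^{-n^{\Omega(1)}}$. Since both $2^{-k}$ and $\eps$ are of the form $2^{-n^{c}}$ for some constant $c>0$ coming from the extractor, and $t = n^{c'}$ for a constant $c' < c$, the product is still $2^{-n^{\Omega(1)}}$, so $\eps' = 2^{-n^{\Omega(1)}}$ as required. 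The codeword length is $2n$ directly from the fact that the extractor acts on two $n$-bit strings and the encoding in Theorem~\ref{thm:introqnmcodesfromnmext}/\ref{thm:introqnmcodesfromnmextt} just produces the two sources; efficiency of encoding and decoding follows from efficiency of the extractor evaluation (and of the inversion/sampling step used in the encoder), which is part of the explicitness guarantee in~\cite{BJK21}.

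The main obstacle I anticipate is bookkeeping the exponents so that the $2^{2t}$ loss is genuinely absorbed: one must verify that the extractor from~\cite{BJK21} simultaneously supports $t$ as large as $n^{\Omega(1)}$ \emph{and} keeps its own error at $2^{-n^{\Omega(1)}}$ with $k = n^{\Omega(1)}$, rather than, say, forcing the error up to something like $2^{-\mathrm{polylog}(n)}$ when $t$ is polynomial. This amounts to reading off from~\cite{BJK21} the explicit dependence of the error on $t$ and $k$ and then choosing the constant exponents in $t = n^{c'}$, $k = n^{c}$, $m = n^{c''}$ so that all the inequalities (min-entropy requirement $n - k$, output length, error after the $2^{2t}$ loss) hold simultaneously for all large $n$. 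Once the right constants are fixed, the statement follows by combining Theorem~\ref{thm:introqnmcodesfromnmextt} with the instantiated extractor; the remaining verification of efficient encoding/decoding is routine given the explicitness of the building blocks.
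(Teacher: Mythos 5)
Your reduction step and the parameter bookkeeping are in line with the paper: Theorem~\ref{thm:introqnmcodesfromnmextt} is indeed the vehicle, and choosing $t=n^{c'}$, $k=n^{c}$ with $c'<c$ so that $2^{2t}(2^{-k}+\eps)$ stays $2^{-n^{\Omega(1)}}$ is exactly how the exponential loss is absorbed. The genuine gap is in your last claim, that efficiency of the encoder ``(and of the inversion/sampling step used in the encoder) \ldots is part of the explicitness guarantee in~\cite{BJK21}.'' It is not. Explicitness of the extractor gives an efficient \emph{decoder} (forward evaluation), but the encoder must sample (essentially uniformly) from the preimage $t\mhyphen2\nmext^{-1}(s)$, and no such inversion procedure comes with the construction of~\cite{BJK21}. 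This is precisely the obstacle the paper spends Section~\ref{sec:sampling} and the appendix on: the extractor of~\cite{BJK21} is \emph{modified} into $\mathsf{new}$-$t\mhyphen2\nmext$ (Algorithm~\ref{alg:2tnmExt}), with the sources cut into $n^{\Omega(1)}$ blocks so that each round of alternating extraction touches fresh blocks, the advice is generated through a Reed--Solomon code whose Vandermonde structure (Lemma~\ref{lem:reedsolomon}) keeps the imposed linear constraints off the blocks being inverted, and the seeded extractors are linear so that preimage sampling for a fixed seed is a linear-algebra task (Claim~\ref{claim:extractorsampling}). The encoder is then obtained by \emph{backtracking} the forward computation, composing these approximate inversions via the Markov-chain argument of Claim~\ref{fact:markovapprox} (Theorem~\ref{thm:main1}, Claims~\ref{claim:final}--\ref{claim:newsampling}). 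The paper explicitly notes (comparison with~\cite{CGL15}) that the classical route to efficient encoding, via the specially designed seeded extractor $\iExt$ with regular preimages, cannot be invoked here because its quantum security is unknown; so one cannot wave at explicitness and be done.

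A secondary point you also leave unaddressed: the efficient encoder only samples a distribution $\tilde S\tilde X\tilde Y$ that is $\cO(\eps)$-close to the ideal uniform-preimage encoding (Theorem~\ref{thm:main1}), so one must fold this approximation error into the non-malleability error of the final code, on top of the $2^{2t}(2^{-k}+\eps)+\eps$ coming from the reduction. With the modified extractor (whose security, Theorem~\ref{thm:nmext2t}, must itself be re-verified since the construction changed) and this extra $\cO(\eps)$ accounted for, the parameters $t=n^{\Omega(1)}$, $m=n^{\Omega(1)}$, $\eps=2^{-n^{\Omega(1)}}$ do go through as you computed; but without the modified construction and the backtracking/sampling argument, your proposal establishes only existence, not the efficient encoding claimed in the theorem.
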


\subsubsection*{Proof overview}

 Let $\nmextc$ refer to the $2$-source non-malleable extractor from~\cite{CGL15}. Let $XY = U_n \otimes U_n$ ($\otimes$ represents independence). Let $Z=\nmextc(X,Y)$. According to the scheme by~\cite{CG14b},  efficient construction of non-malleable codes requires us to, given any $z$, sample efficiently from the distribution $(XY|Z=z)$. It is not apriori clear that such efficient (reverse) sampling for $\nmextc$ is possible. \cite{CGL15} modified $\nmextc$ to come up with a new $2$-source non-malleable extractor (say $\newnmextc$) and exhibited efficient reverse sampling for $\newnmextc$. A key difference between the constructions of $\nmextc$ and $\newnmextc$ is the seeded extractor that is used in the alternating extraction argument (for both the constructions). $\nmextc$ uses the seeded extractor from~[GUV09] while $\newnmextc$ uses a seeded extractor $\IExt$ constructed by~\cite{CGL15}. Two key properties of $\IExt$ that are crucially used are:
\begin{enumerate}
    \item Let $W$ be the source, $S$ be the seed and $O=\IExt(W,S)$ be the output. For $WS=U_n \otimes U_d$, we have $OS=U_m \otimes U_d$. 
    \item $\IExt$ is a bi-linear function. This implies that for every $(o,s)$, one can sample (exactly) from $(W|OS=(o,s))$.
\end{enumerate}
This allows~\cite{CGL15} exact reverse sampling. That is for any $z$, they are able to efficiently sample from the distribution $(XY|Z=z)$ exactly.

There are a few other modifications required to finally make $\newnmextc$ suitable for efficient reverse sampling. For example, the input sources $X$ and $Y$ are divided into $n^{\Omega(1)}$ different blocks (since there are $n^{\Omega(1)}$ rounds of alternating extraction in the construction of $\nmextc$). This enables to use different blocks (each with almost full min-entropy) as sources to seeded extractors in each round of alternating extraction. This further ensures the linear constraints that are imposed in the alternating extraction are on different variables of input sources, $X$, $Y$ in each round which is crucial for the exact reverse sampling argument of~\cite{CGL15}.  

Let us now consider the quantum setting. Let $\nmextq$ refer to the $2$-source non-malleable extractor from~\cite{BJK21}. Again it is not apriori clear that efficient reverse sampling for $\nmextq$ is possible. Hence we modify the $\nmextq$ from~\cite{BJK21} to construct (say $\newnmextq$) in the full version. We follow the argument of dividing the input sources $X$ and $Y$ into different blocks (as stated in previous paragraph) and make necessary modifications to $\nmextq$. Next, we note the seeded extractor used in alternating extraction of both $\nmextq$, $\newnmextq$ is the Trevison extractor (say $\Trev$) which is quantum secure~\cite{DPVR09}. One can modify $\nmextq$ using a similar modification as that of~\cite{CGL15}, by considering $\IExt$ instead of $\Trev$. However then one would need to first show the quantum security of $\IExt$. This is not known as of now and we leave it for future work. For now we choose to make the arguments work with $\Trev$. We note the two key properties for $\Trev$:
\begin{enumerate}
    \item  For  $WS=U_n \otimes U_d$, we have $OS \approx U_m \otimes U_d$, where  $O=\Trev(W,S)$ ($\approx$ represent close in $\ell_1$ norm).
    \item For every $s$, $\Trev$ is a linear function of $W$. Hence for every $(o,s)$, we can sample efficiently (exactly) from $W|(OS)=(o,s)$. 
\end{enumerate}
Point 1. above is the differentiating property between $\IExt$ and $\Trev$. Hence, unlike~\cite{CGL15}, we cannot do exact reverse sampling and can only do approximate reverse sampling. We therefore have to carefully keep the overall error introduced under control. 

While generating $Z=\newnmextq(X,Y)$, starting from $(X,Y)$, several intermediate random variables (say $(R_1, R_2, ..., R_k)$ in this order) are generated. During the reverse sampling, starting from $Z$, they need to be generated in the reverse order. We call this process {\em backtracking}. Since we have to keep the overall error under control, we need to note and use important Markov-chain structures between the intermediate random variables (see Claim~\ref{fact:markovapprox} and Claim~\ref{claim:extractorsampling}). This is additional technical novelty over~\cite{CGL15}.

\subsubsection*{Organization}

In Section~\ref{sec:prelims}, we describe useful quantum information facts and other preliminaries. It also contains useful lemmas and claims. 
We describe the existential proof of quantum secure non-malleable codes, i.e. Theorem~\ref{thm:introqnmcodesfromnmext} in   Section~\ref{sec:nmcodes}. Section~\ref{sec:sampling} contains the construction of modified $2$-source non-malleable extractor along with proof of Theorem~\ref{thm:intromain}. The $t$-tampered version of non-malleable codes can be found in the Appendix~\ref{Appendix1}. Appendix~\ref{Appendixss} contains a  quantum secure $2$-out-of-$2$ non-malleable secret sharing scheme.

\section{Preliminaries}
\label{sec:prelims}
Let $n,m,d,t$  represent positive integers and $l, k, k_1, k_2, \delta, \gamma, \eps \geq 0$ represent reals.
\subsection*{Quantum information theory} All the logarithms are evaluated to the base $2$. Let $\X, \Y, \Z$ be finite sets (we only consider finite sets in this paper). For a {\em random variable} $X \in \X$, we use $X$ to denote both the random variable and its distribution, whenever it is clear form the context. We use $x \leftarrow X$ to denote $x$ drawn according to $X$. We also use $x \leftarrow \X$ to denote $x$ drawn uniformly from $\X$. For two random variables $X,Y$ we use $X \otimes Y$ to denote independent random variables.

We call random variables $X, Y$, {\em copies} of each other iff $\Pr[X=Y]=1$.   Let $Y^1, Y^2, \ldots, Y^t$ be random variables. We denote the joint random variable  $Y^1 Y^2 \ldots Y^t$ by $Y^{[t]}$.
Similarly for any subset $\mathcal{S} \subseteq [t]$, we use $Y^{\mathcal{S}}$ to denote the joint random variable comprised of all the $Y^s$ such that $s \in \mathcal{S}$. For a random variable $X \in \{0,1 \}^n$ and $0<d_1 \leq d_2\leq n$, let $\pre(X, d_1,d_2)$ represent the bits from $d_1$ to $d_2$ of $X$, i.e. $X^{[d_1,d_2]}$. Let $U_d$ represent the uniform distribution over $\{0,1 \}^d$.  For a random variable $X \in \F_q^n$ for a prime power $q$, we view $X$ as a row vector $(X^1,X^2, \ldots, X^n)$ where each $X^i \in \F_q$.

Consider a finite-dimensional Hilbert space $\cH$ endowed with an inner-product $\langle \cdot, \cdot \rangle$ (we only consider finite-dimensional Hilbert-spaces). A quantum state (or a density matrix or a state) is a positive semi-definite operator on $\cH$ with trace value  equal to $1$. It is called {\em pure} iff its rank is $1$.  Let $\ket{\psi}$ be a unit vector on $\cH$, that is $\langle \psi,\psi \rangle=1$.  With some abuse of notation, we use $\psi$ to represent the state and also the density matrix $\ketbra{\psi}$, associated with $\ket{\psi}$. Given a quantum state $\rho$ on $\cH$, {\em support of $\rho$}, called $\text{supp}(\rho)$ is the subspace of $\cH$ spanned by all eigenvectors of $\rho$ with non-zero eigenvalues.
 
A {\em quantum register} $A$ is associated with some Hilbert space $\cH_A$. Define $\vert A \vert := \log\left(\dim(\cH_A)\right)$. Let $\mathcal{L}(\cH_A)$ represent the set of all linear operators on the Hilbert space $\cH_A$. For operators $O, O'\in \cL(\cH_A)$, the notation $O \leq O'$ represents the L\"{o}wner order, that is, $O'-O$ is a positive semi-definite operator. We denote by $\mathcal{D}(\cH_A)$, the set of all quantum states on the Hilbert space $\cH_A$. State $\rho$ with subscript $A$ indicates $\rho_A \in \mathcal{D}(\cH_A)$. If two registers $A,B$ are associated with the same Hilbert space, we shall represent the relation by $A\equiv B$. For two states $\rho, \sigma$, we let $\rho \equiv \sigma$ represent that they are identical as states (potentially in different registers). Composition of two registers $A$ and $B$, denoted $AB$, is associated with the Hilbert space $\cH_A \otimes \cH_B$.  For two quantum states $\rho\in \mathcal{D}(\cH_A)$ and $\sigma\in \mathcal{D}(\cH_B)$, $\rho\otimes\sigma \in \mathcal{D}(\cH_{AB})$ represents the tensor product ({\em Kronecker} product) of $\rho$ and $\sigma$. The identity operator on $\cH_A$ is denoted $\id_A$. Let $U_A$ denote maximally mixed state in $\cH_A$. Let $\rho_{AB} \in \mathcal{D}(\cH_{AB})$. Define
$$ \rho_{B} \defeq \tr_{A}{\rho_{AB}} \defeq \sum_i (\bra{i} \otimes \id_{B})
\rho_{AB} (\ket{i} \otimes \id_{B}) , $$
where $\{\ket{i}\}_i$ is an orthonormal basis for the Hilbert space $\cH_A$.
The state $\rho_B\in \mathcal{D}(\cH_B)$ is referred to as the marginal state of $\rho_{AB}$ on the register $B$. Unless otherwise stated, a missing register from subscript in a state represents partial trace over that register. Given $\rho_A\in\mathcal{D}(\cH_A)$, a {\em purification} of $\rho_A$ is a pure state $\rho_{AB}\in \mathcal{D}(\cH_{AB})$ such that $\tr_{B}{\rho_{AB}}=\rho_A$. Purification of a quantum state is not unique.
Suppose $A\equiv B$. Given $\{\ket{i}_A\}$ and $\{\ket{i}_B\}$ as orthonormal bases over $\cH_A$ and $\cH_B$ respectively, the \textit{canonical purification} of a quantum state $\rho_A$ is $\ket{\rho_A} \defeq (\rho_A^{\frac{1}{2}}\otimes\id_B)\left(\sum_i\ket{i}_A\ket{i}_B\right)$. 
A quantum {map} $\cE: \mathcal{L}(\cH_A)\rightarrow \mathcal{L}(\cH_B)$ is a completely positive and trace preserving (CPTP) linear map. A {\em Hermitian} operator $H:\cH_A \rightarrow \cH_A$ is such that $H=H^{\dagger}$. A projector $\Pi \in  \mathcal{L}(\cH_A)$ is a Hermitian operator such that $\Pi^2=\Pi$. A {\em unitary} operator $V_A:\cH_A \rightarrow \cH_A$ is such that $V_A^{\dagger}V_A = V_A V_A^{\dagger} = \id_A$. The set of all unitary operators on $\cH_A$ is  denoted by $\mathcal{U}(\cH_A)$. An {\em isometry}  $V:\cH_A \rightarrow \cH_B$ is such that $V^{\dagger}V = \id_A$ and $VV^{\dagger} = \id_B$. A {\em POVM} element is an operator $0 \le M \le \id$. We use the shorthand $\overline{M} \defeq \id - M$, where $\id$ is clear from the context. We use shorthand $M$ to represent $M \otimes \id$, where $\id$ is clear from the context. 

A {\em classical-quantum} (c-q) state $\rho_{XE}$ is of the form \[ \rho_{XE} =  \sum_{x \in \X}  p(x)\ket{x}\bra{x} \otimes \rho^x_E , \] where ${\rho^x_E}$ are states. In a pure state $\rho_{XEA}$ in which $\rho_{XE}$ is c-q, we call $X$ a classical register and identify random variable $X$ with it with $\Pr(X=x) =p(x)$. For an event $\mathcal{S} \subseteq \cX$, define  $$\Pr(\mathcal{S})_\rho \defeq  \sum_{x \in \mathcal{S}} p(x) \quad ; \quad (\rho|S)\defeq \frac{1}{\Pr(\mathcal{S})_\rho} \sum_{x \in \mathcal{S}} p(x)\ket{x}\bra{x} \otimes \rho^x_E.$$  
For a function $Z:\cX \rightarrow \cZ$, define the following extension of $\rho_{XE}$ 
\[ \rho_{ZXE} \defeq  \sum_{x\in \cX}  p(x) \ket{Z(x)}\bra{Z(x)} \otimes \ket{x}\bra{x} \otimes  \rho^{x}_E.\]

We call an isometry $V: \cH_X \otimes \cH_A \rightarrow \cH_X \otimes \cH_B$, {\em safe} on $X$ iff there is a collection of isometries $V_x: \cH_A\rightarrow \cH_B$ such that the following holds.  For all states $\ket{\psi}_{XA} = \sum_x \alpha_x \ket{x}_X \ket{\psi^x}_A$,
$$V  \ket{\psi}_{XA} =  \sum_x \alpha_x \ket{x}_X V_x \ket{\psi^x}_A.$$
All the isometries considered in this paper are safe on classical registers they act upon. For a function $Z:\cX \rightarrow \cZ$, define $\rho_{Z\hat{Z}XEA}$ to be a pure state extension of $\rho_{XEA}$ generated via a safe isometry $V: \cH_X \rightarrow \cH_X \otimes \cH_Z \otimes \cH_{\hat{Z}}$ ($Z$ classical with copy $\hat{Z}$). For a pure state $\rho_{XE}$ and measurement $\mathcal{M}$ in the computational basis on register $X$,  define $\rho_{\hat{X}XE}$ a pure state extension post the measurement  $\mathcal{M}$ of state $\rho_{XE}$ generated via a safe isometry $V: \cH_X \rightarrow \cH_X \otimes \cH_{\hat{X}}$ such that $\rho_{\hat{X}XE} = V \rho V^\dagger$ and $\hat{X}$ a copy of $X$.


\begin{definition}
\label{def:infoquant}    
\begin{enumerate}
\item For $p \geq 1$ and matrix $A$,  let $\| A \|_p$ denote the {\em Schatten} $p$-norm defined as $\| A \|_p  \defeq (\tr(A^\dagger A)^{\frac{p}{2}})^{\frac{1}{p}}.$

\item  For states $\rho,\sigma: \Delta(\rho , \sigma) \defeq \frac{1}{2} \|\rho - \sigma\|_1$. We write $\rho \approx_\eps \sigma$ to denote $\Delta(\rho, \sigma) \le \eps$. 

\item {\bf Fidelity:}  For states $\rho,\sigma: ~\F(\rho,\sigma)\defeq\|\sqrt{\rho}\sqrt{\sigma}\|_1.$ 

\item {\bf Bures metric:}  For states $\rho,\sigma: \Delta_B(\rho,\sigma)\defeq \sqrt{1-\F(\rho,\sigma)}.$ 


\item {\bf Max-divergence (\cite{Datta09}, see also~\cite{JainRS02}):}\label{dmax}  For states $\rho,\sigma$ such that $\supp(\rho) \subset \supp(\sigma)$, $$ \dmax{\rho}{\sigma} \defeq  \min\{ \lambda \in \mathbb{R} :   \rho  \leq 2^{\lambda} \sigma \}.$$ 
\item {\bf Min-entropy and conditional-min-entropy:}  For a state $\rho_{XE}$, the min-entropy of $X$ is defined as,
 $$ \hminone{X}_\rho \defeq - \dmax{\rho_{X}}{\id_X} .$$
 The conditional-min-entropy of $X$, conditioned on $E$, is defined as,
 $$ \hmin{X}{E}_\rho \defeq - \inf_{\sigma_E \in  \mathcal{D}(\cH_{E}) } \dmax{\rho_{XE}}{\id_X \otimes \sigma_E}.$$
  \item {\bf Markov-chain:}\label{cqcmarkov}  A state $\rho_{XEY}$ forms a Markov-chain (denoted $(X-E-Y)_\rho$) iff $\condmutinf{X}{Y}{E}_\rho=0$~\footnote{$\condmutinf{A}{B}{C}_\rho$ represents conditional mutual information between registers $A,B$ given register $C$ in state $\rho$.}.


\end{enumerate}
\end{definition}

For the facts stated below without citation, we refer the reader to standard text books~\cite{NielsenC00,WatrousQI}.
\begin{fact}[Uhlmann's Theorem~\cite{uhlmann76}]
\label{uhlmann}
Let $\rho_A,\sigma_A\in \mathcal{D}(\cH_A)$. Let $\rho_{AB}\in \mathcal{D}(\cH_{AB})$ be a purification of $\rho_A$ and $\sigma_{AC}\in\mathcal{D}(\cH_{AC})$ be a purification of $\sigma_A$. There exists an isometry $V$ (from a subspace of $\cH_C$ to a subspace of $\cH_B$) such that,
 $$ \Delta_B(\ketbra{\theta}_{AB}, \ketbra{\rho}_{AB}) =  \Delta_B(\rho_A,\sigma_A) ,$$
 where $\ket{\theta}_{AB} = (\id_A \otimes V) \ket{\sigma}_{AC}$.
\end{fact}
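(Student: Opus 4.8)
Unwinding the definitions reduces the claim to a statement about a single scalar. For unit vectors $\ket{\theta}_{AB},\ket{\rho}_{AB}$ one has $\F(\ketbra{\theta}_{AB},\ketbra{\rho}_{AB})=|\braket{\theta}{\rho_{AB}}|$, and $\Delta_B(\cdot,\cdot)=\sqrt{1-\F(\cdot,\cdot)}$, so the asserted identity is equivalent to producing an isometry $V$ with $|\braket{\rho_{AB}}{(\id_A\otimes V)\sigma_{AC}}|=\F(\rho_A,\sigma_A)=\|\sqrt{\rho_A}\sqrt{\sigma_A}\|_1$. The plan is: (i) replace the two given purifications by canonical ones; (ii) transfer the unknown isometry onto a common reference register; (iii) evaluate the resulting overlap as a trace; and (iv) read off the optimal choice from the polar decomposition, i.e.\ from the variational formula for the trace norm.

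For step (i), I would fix a register $R\equiv A$ with canonical purifications $\ket{\rho_A}_{AR}$, $\ket{\sigma_A}_{AR}$, and use the standard fact that every purification is obtained from the canonical one by an isometry on the purifying system: $\ket{\rho}_{AB}=(\id_A\otimes W_1)\ket{\rho_A}_{AR}$ and $\ket{\sigma}_{AC}=(\id_A\otimes W_2)\ket{\sigma_A}_{AR}$ for isometries $W_1\colon\cH_R\to\cH_B$ and $W_2\colon\cH_R\to\cH_C$ (defined on $\supp(\rho_R)$, $\supp(\sigma_R)$ respectively; the registers are big enough since they already carry purifications). Step (ii) is then immediate: for any isometry $V\colon\cH_C\to\cH_B$,
\[
\braket{\rho_{AB}}{(\id_A\otimes V)\sigma_{AC}}=\bra{\rho_A}\bigl(\id_A\otimes W_1^\dagger V W_2\bigr)\ket{\sigma_A},
\]
and $U\defeq W_1^\dagger V W_2$ is a contraction on $\cH_R$; conversely, any unitary $U$ on the relevant subspace of $\cH_R$ is realised by some isometry $V$ from a subspace of $\cH_C$ into a subspace of $\cH_B$ (take an isometric extension of $W_1 U W_2^\dagger$ restricted to $\supp(\sigma_C)$ --- this is precisely the ``subspace of $\cH_C$'' allowed in the statement). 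So it suffices to maximise $|\bra{\rho_A}(\id_A\otimes U)\ket{\sigma_A}|$ over contractions (equivalently unitaries) $U$.

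For steps (iii)--(iv), I would write $\ket{\rho_A}=(\sqrt{\rho_A}\otimes\id_R)\ket{\Omega}$ and $\ket{\sigma_A}=(\sqrt{\sigma_A}\otimes\id_R)\ket{\Omega}$ with $\ket{\Omega}=\sum_i\ket{i}_A\ket{i}_R$, and apply the transpose trick $(\id\otimes M)\ket{\Omega}=(M^T\otimes\id)\ket{\Omega}$ together with $\bra{\Omega}(X\otimes\id)\ket{\Omega}=\tr X$ to get $\bra{\rho_A}(\id_A\otimes U)\ket{\sigma_A}=\tr\bigl(\sqrt{\rho_A}\sqrt{\sigma_A}\,U^T\bigr)$. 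Now the variational characterisation of the Schatten $1$-norm, $\max_{\|Z\|_\infty\le 1}|\tr(MZ)|=\|M\|_1$ with maximum attained at $Z=U_M^\dagger$ for a polar decomposition $M=U_M|M|$ (extend $U_M$ to a unitary if necessary), applied with $M=\sqrt{\rho_A}\sqrt{\sigma_A}$ and $Z=U^T$, gives $\max_U|\tr(\sqrt{\rho_A}\sqrt{\sigma_A}U^T)|=\|\sqrt{\rho_A}\sqrt{\sigma_A}\|_1=\F(\rho_A,\sigma_A)$, with optimiser the unitary $U=\overline{U_M}$. Taking $V$ to be the isometry from step (ii) corresponding to this $U$ and $\ket{\theta}_{AB}=(\id_A\otimes V)\ket{\sigma}_{AC}$ then gives $|\braket{\theta}{\rho_{AB}}|=\F(\rho_A,\sigma_A)$, hence $\Delta_B(\ketbra{\theta}_{AB},\ketbra{\rho}_{AB})=\sqrt{1-\F(\rho_A,\sigma_A)}=\Delta_B(\rho_A,\sigma_A)$.

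The delicate part is the bookkeeping in step (i)/(ii): since a purification needs a purifying register of dimension at least the rank of the state, the canonical purification with $R\equiv A$ need not embed literally, and one must work on supports and allow $V$ to be an isometry only on a subspace of $\cH_C$ --- exactly what the statement grants. I would also record two small structural points: that maximising a linear functional over the operator-norm unit ball is the same as maximising over its extreme points (the unitaries), which is what makes the polar-decomposition argument valid; and that the reverse inequality $|\braket{\theta}{\rho_{AB}}|\le\F(\rho_A,\sigma_A)$ holds automatically for \emph{any} isometry $V$ by monotonicity of fidelity under the partial trace $\tr_B$ (since $\tr_B\ketbra{\theta}_{AB}=\sigma_A$ and $\tr_B\ketbra{\rho}_{AB}=\rho_A$), so the $V$ we build is in fact optimal. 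The rest is the routine algebra indicated above.
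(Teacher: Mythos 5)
Your argument is the standard (and correct) textbook proof of Uhlmann's theorem: reduce to maximising the pure-state overlap, pass to canonical purifications, use the transpose trick to rewrite the overlap as $\tr\bigl(\sqrt{\rho_A}\sqrt{\sigma_A}\,U^T\bigr)$, and invoke the polar decomposition / variational formula for the trace norm, with monotonicity of fidelity under $\tr_B$ supplying the matching upper bound. The paper itself offers no proof (it cites Uhlmann's original article and defers to standard texts), so there is nothing to compare against; the only point worth flagging is one you already half-acknowledge, namely that realising the optimal partial isometry as a map into $\cH_B$ tacitly requires $\dim\cH_B\geq\mathrm{rank}(\sigma_A)$, an implicit assumption in the Fact as stated rather than a gap in your argument.
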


\begin{fact}[\cite{CLW14}]
	\label{fact102}  
	 Let $\cE :    \mathcal{L} (\cH_M ) \rightarrow   \mathcal{L}(\cH_{M'} )$ be a CPTP map and let $\sigma_{XM'} =(\id \otimes \cE)(\rho_{XM}) $. Then,  $$ \hmin{X}{M'}_\sigma  \geq \hmin{X}{M}_\rho  .$$
Above is equality iff $\cE$ is a CPTP map corresponding to an isometry.
\end{fact}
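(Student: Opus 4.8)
The statement is the standard data-processing inequality for conditional min-entropy under a channel acting on the conditioning register, and the plan is to unfold the definition of $\hmin{\cdot}{\cdot}$ in terms of $\dmax{\cdot}{\cdot}$ and invoke operator monotonicity of completely positive maps. Set $k \defeq \hmin{X}{M}_\rho$. By the conditional-min-entropy clause of Definition~\ref{def:infoquant}, $\inf_{\sigma_M \in \mathcal{D}(\cH_M)} \dmax{\rho_{XM}}{\id_X \otimes \sigma_M} = -k$, and since we work in finite dimension the infimum is attained; hence there is a state $\sigma_M \in \mathcal{D}(\cH_M)$ with
\[ \rho_{XM} \;\le\; 2^{-k}\,\big(\id_X \otimes \sigma_M\big). \]

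Next I would apply the positive map $\id_X \otimes \cE$ to both sides. Complete positivity of $\cE$ makes $\id_X \otimes \cE$ monotone for the L\"{o}wner order, so by linearity
\[ \sigma_{XM'} \;=\; (\id_X \otimes \cE)(\rho_{XM}) \;\le\; 2^{-k}\,(\id_X \otimes \cE)(\id_X \otimes \sigma_M) \;=\; 2^{-k}\,\big(\id_X \otimes \cE(\sigma_M)\big). \]
Since $\cE$ is trace preserving and positive, $\cE(\sigma_M) \in \mathcal{D}(\cH_{M'})$, and the operator inequality itself guarantees the support condition $\supp(\sigma_{XM'}) \subseteq \supp(\id_X \otimes \cE(\sigma_M))$ needed for $D_{\max}$ to be finite. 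Therefore $\dmax{\sigma_{XM'}}{\id_X \otimes \cE(\sigma_M)} \le -k$, so the infimum defining $\hmin{X}{M'}_\sigma$ is at most $-k$, i.e. $\hmin{X}{M'}_\sigma \ge k = \hmin{X}{M}_\rho$.

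For the equality characterization: if $\cE$ is (the channel of) an isometry $V$ with $V^\dagger V = \id$, then $V$ admits a CPTP left inverse — e.g. $\cF(\cdot) \defeq V^\dagger(\cdot)V + \tr\!\big((\id - VV^\dagger)(\cdot)\big)\,\tau_M$ for an arbitrary fixed $\tau_M \in \mathcal{D}(\cH_M)$ satisfies $\cF \circ \cE = \id$ — so applying the inequality just proved to both $\cE$ and $\cF$ pins $\hmin{X}{M'}_\sigma$ between $\hmin{X}{M}_\rho$ and itself, giving equality; conversely a non-isometric $\cE$ strictly raises the conditional min-entropy when tested on a full-rank conditioning state. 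Honestly there is no real obstacle here — the content is entirely operator monotonicity — and the only mildly fussy points are the attainment of the $D_{\max}$ infimum and the support bookkeeping, both of which are routine in finite dimension; if one preferred to avoid even attainment, one can run the argument with a sequence of $\eps$-near-optimal $\sigma_M$ and take $\eps \to 0$.
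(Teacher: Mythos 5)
The paper itself gives no proof of this Fact (it is imported from~\cite{CLW14}), so the comparison is with the standard argument, and your proof of the inequality matches it: pick an optimizer $\sigma_M$ (attained in finite dimension, or use near-optimizers as you note), push $\rho_{XM}\le 2^{-k}(\id_X\otimes\sigma_M)$ through the positive map $\id_X\otimes\cE$, and normalize $\cE(\sigma_M)$. This is exactly the data-processing property of $\mathrm{D}_{\max}$ (third bullet of Fact~\ref{fact:data}, extended to the subnormalized operator $\id_X\otimes\sigma_M$), so that part is correct. Your argument for equality when $\cE$ is an isometry channel is also correct: the map $\cF(\cdot)=V^\dagger(\cdot)V+\tr\bigl((\id-VV^\dagger)(\cdot)\bigr)\tau_M$ is indeed CPTP with $\cF\circ\cE=\id$, and applying the inequality to $\cE$ and then to $\cF$ sandwiches the two entropies. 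This is the only direction of the equality clause the paper ever uses (e.g.\ inside Claim~\ref{claim:987}).

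The genuine gap is the ``only if'' half, which you merely assert. As a statement about the fixed state $\rho_{XM}$ appearing in the Fact it is false: for $\rho_{XM}=\rho_X\otimes\rho_M$ every CPTP map $\cE$, isometric or not, gives $\hmin{X}{M'}_\sigma=\hmin{X}{M}_\rho=\hminone{X}_\rho$. Even read as a characterization quantified over all input states it fails: the non-isometric channel $\cE(\cdot)=(\cdot)\otimes\tau_A$ with $\tau_A$ a mixed state preserves conditional min-entropy for every $\rho_{XM}$ (take $\sigma_M\otimes\tau_A$ as a candidate optimizer for one direction, and trace out $A$ and invoke the inequality for the other). So your sentence that a non-isometric $\cE$ ``strictly raises the conditional min-entropy when tested on a full-rank conditioning state'' cannot be turned into a proof; the clause should be stated and used only in the direction ``isometry $\Rightarrow$ equality'', which is all the paper relies on.
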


\begin{fact}[Stinespring isometry extension~\cite{WatrousQI}]\label{fact:stinespring}
		 Let $\Phi :    \mathcal{L} (\cH_X ) \rightarrow   \mathcal{L}(\cH_Y )$ be a CPTP map. Then there exists an isometry $V :  \cH_{X} \rightarrow   \cH_{Y} \otimes \cH_{Z}$ (Stinespring isometry extension of $\Phi$) such that $\Phi(\rho_X)= \tr_{Z}(V \rho_X V^\dagger)$ for every state $\rho_X$.
\end{fact} 

\begin{fact}[\cite{FvdG06}]
\label{fidelty_trace}
Let $\rho,\sigma$ be states. Then,
\[  1-\F(\rho,\sigma) \leq \Delta(\rho , \sigma) \leq \sqrt{ 1-\F^2(\rho,\sigma)} \quad ; \quad \Delta_B^2(\rho,\sigma) \leq \Delta(\rho , \sigma) \leq  \sqrt{2}\Delta_B(\rho,\sigma).  \]

\end{fact}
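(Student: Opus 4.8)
The plan is to deduce the two Bures inequalities from the trace--fidelity ones, and then prove the latter. First observe that the second chain follows from the first: by definition $\Delta_B^2(\rho,\sigma) = 1-\F(\rho,\sigma)$, so $\Delta_B^2 \le \Delta$ is exactly the left inequality $1-\F \le \Delta$; and since $0 \le \F(\rho,\sigma) \le 1$ we have $\sqrt{1-\F^2} = \sqrt{(1-\F)(1+\F)} \le \sqrt{2(1-\F)} = \sqrt{2}\,\Delta_B$, so the right inequality $\Delta \le \sqrt{1-\F^2}$ gives $\Delta \le \sqrt{2}\,\Delta_B$. Hence it suffices to prove $1-\F(\rho,\sigma) \le \Delta(\rho,\sigma) \le \sqrt{1-\F^2(\rho,\sigma)}$.

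For the upper bound I would first settle the pure-state case: for unit vectors $\ket{\psi},\ket{\phi}$, restricting to the (at most two-dimensional) span of $\ket{\psi}$ and $\ket{\phi}$ shows that the Hermitian operator $\ketbra{\psi}-\ketbra{\phi}$ is traceless with eigenvalues $\pm\sqrt{1-|\braket{\psi}{\phi}|^2}$, so $\Delta(\ketbra{\psi},\ketbra{\phi}) = \sqrt{1-|\braket{\psi}{\phi}|^2}$, whereas $\F(\ketbra{\psi},\ketbra{\phi}) = |\braket{\psi}{\phi}|$ by direct computation. I would then invoke Uhlmann's theorem (Fact~\ref{uhlmann}) to choose purifications $\ket{\rho}_{AB}$ and $\ket{\sigma}_{AB}$ on a common space with $|\braket{\rho}{\sigma}| = \F(\rho,\sigma)$, and apply monotonicity of the trace distance under the partial trace $\tr_B$ --- itself a consequence of the identity $\Delta(\rho,\sigma) = \max\{\,\tr(P(\rho-\sigma)) : 0 \le P \le \id\,\}$ together with the fact that the adjoint $X \mapsto X\otimes\id_B$ of $\tr_B$ sends POVM elements to POVM elements --- to conclude $\Delta(\rho,\sigma) \le \Delta(\ketbra{\rho}_{AB},\ketbra{\sigma}_{AB}) = \sqrt{1-|\braket{\rho}{\sigma}|^2} = \sqrt{1-\F^2(\rho,\sigma)}$.

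For the lower bound I would take the Powers--St\o rmer route. Since $\sqrt{\rho},\sqrt{\sigma} \ge 0$, the quantity $\tr(\sqrt{\rho}\sqrt{\sigma}) = \tr(\rho^{1/4}\sqrt{\sigma}\,\rho^{1/4})$ is nonnegative, and choosing the identity unitary in $\|A\|_1 = \max_U|\tr(UA)|$ gives $\tr(\sqrt{\rho}\sqrt{\sigma}) \le \|\sqrt{\rho}\sqrt{\sigma}\|_1 = \F(\rho,\sigma)$; combined with $\tr\rho = \tr\sigma = 1$ this yields $1-\F(\rho,\sigma) \le 1-\tr(\sqrt{\rho}\sqrt{\sigma}) = \tfrac12\tr\big((\sqrt{\rho}-\sqrt{\sigma})^2\big) = \tfrac12\|\sqrt{\rho}-\sqrt{\sigma}\|_2^2$. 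The Powers--St\o rmer inequality $\|\sqrt{\rho}-\sqrt{\sigma}\|_2^2 \le \|\rho-\sigma\|_1$ for positive semidefinite operators then finishes the argument: $1-\F(\rho,\sigma) \le \tfrac12\|\rho-\sigma\|_1 = \Delta(\rho,\sigma)$. (Alternatively, for a more self-contained proof one can fix a POVM $\{M_i\}$ attaining the measurement characterization $\F(\rho,\sigma) = \sum_i\sqrt{\tr(M_i\rho)\,\tr(M_i\sigma)}$, apply the elementary bound $1-\sum_i\sqrt{p_iq_i} \le \tfrac12\sum_i|p_i-q_i|$ with $p_i = \tr(M_i\rho)$ and $q_i = \tr(M_i\sigma)$, and then use monotonicity of the trace distance under the measurement channel.)

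The main obstacle is the lower bound, and more precisely whichever black box one invokes there: the Powers--St\o rmer inequality, or equivalently the existence of a fidelity-attaining measurement. Both are standard classical facts --- the statement itself is quoted from~\cite{FvdG06} --- but neither is wholly trivial. Everything else (the small eigenvalue computation for pure states, Uhlmann's theorem, which is already available as Fact~\ref{uhlmann}, and monotonicity of trace distance under CPTP maps) is routine.
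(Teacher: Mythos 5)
Your proposal is correct. The paper offers no proof of this statement---it is imported as a black box from~\cite{FvdG06}---and your argument (the pure-state eigenvalue computation plus Uhlmann's theorem and data-processing for $\Delta(\rho,\sigma) \leq \sqrt{1-\F^2(\rho,\sigma)}$, Powers--St\o rmer or the fidelity-attaining measurement for $1-\F(\rho,\sigma)\leq\Delta(\rho,\sigma)$, and the identity $\Delta_B^2 = 1-\F$ to transfer both bounds to the Bures metric) is precisely the standard Fuchs--van de Graaf proof from that reference, so it fills the gap in exactly the intended way.
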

\begin{fact}[Data-processing]
\label{fact:data}
Let $\rho, \sigma$ be states and $\cE$ be a CPTP map. Then 
\begin{itemize}
    \item $ \Delta ( \cE(\rho)  , \cE(\sigma))  \le \Delta (\rho  , \sigma).$    
     \item $ \Delta_B ( \cE(\rho)  , \cE(\sigma))  \le \Delta_B (\rho  , \sigma).$    
    \item  $\dmax{ \cE(\rho) }{ \cE(\sigma) }  \le \dmax{\rho}{ \sigma} .$    
\end{itemize}
The inequalities above are equalities in case $\Phi$ is a CPTP map corresponding to an isometry.
\end{fact}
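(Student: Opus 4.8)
The plan is to prove the three contraction inequalities separately, each from the corresponding variational or operator-theoretic characterization of the quantity, and then obtain the equality statement for isometric channels uniformly at the end. For the trace distance, I would use the Jordan--Hahn decomposition: write $\rho-\sigma = P-Q$ with $P,Q\ge 0$ having mutually orthogonal supports. Since $\tr(\rho-\sigma)=0$ we get $\tr P=\tr Q$, and orthogonality of supports gives $\|\rho-\sigma\|_1 = \tr P+\tr Q$, so $\Delta(\rho,\sigma)=\tr P$. Now $\cE(\rho)-\cE(\sigma)=\cE(P)-\cE(Q)$ with $\cE(P),\cE(Q)\ge 0$ by positivity of $\cE$, hence by the triangle inequality and trace preservation $\|\cE(\rho)-\cE(\sigma)\|_1\le \|\cE(P)\|_1+\|\cE(Q)\|_1 = \tr\cE(P)+\tr\cE(Q)=\tr P+\tr Q=\|\rho-\sigma\|_1$, i.e. $\Delta(\cE(\rho),\cE(\sigma))\le\Delta(\rho,\sigma)$. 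Only positivity and trace preservation of $\cE$ are used here.

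For the Bures metric, since $\Delta_B(\rho,\sigma)=\sqrt{1-\F(\rho,\sigma)}$ and $t\mapsto\sqrt{1-t}$ is nonincreasing, it suffices to show that fidelity is nondecreasing under $\cE$. I would combine Uhlmann's theorem (Fact~\ref{uhlmann}), in its maximal-overlap form $\F(\rho,\sigma)=\max |\langle\psi|\phi\rangle|$ over purifications $\ket{\psi}$ of $\rho$ and $\ket{\phi}$ of $\sigma$ on a common space, with a Stinespring dilation $\cE(\cdot)=\tr_E(W\,\cdot\,W^\dagger)$ for an isometry $W:\cH_A\to\cH_B\otimes\cH_E$. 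Picking purifications $\ket{\psi}_{AR},\ket{\phi}_{AR}$ of $\rho,\sigma$ achieving $|\langle\psi|\phi\rangle|=\F(\rho,\sigma)$, the vectors $(W\otimes\id_R)\ket{\psi}$ and $(W\otimes\id_R)\ket{\phi}$ are purifications of $\cE(\rho)$ and $\cE(\sigma)$ respectively, so the maximum defining $\F(\cE(\rho),\cE(\sigma))$ is at least $|\bra{\psi}(W^\dagger W\otimes\id_R)\ket{\phi}| = |\langle\psi|\phi\rangle| = \F(\rho,\sigma)$, using $W^\dagger W=\id_A$. Combining with the square-root bound gives $\Delta_B(\cE(\rho),\cE(\sigma))\le\Delta_B(\rho,\sigma)$.

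For the max-divergence, let $\lambda=\dmax{\rho}{\sigma}$, so that $\supp(\rho)\subseteq\supp(\sigma)$ and $2^{\lambda}\sigma-\rho\ge 0$. Applying the (completely) positive map $\cE$ preserves this operator inequality: $2^{\lambda}\cE(\sigma)-\cE(\rho)=\cE(2^{\lambda}\sigma-\rho)\ge 0$, i.e. $\cE(\rho)\le 2^{\lambda}\cE(\sigma)$; the support condition $\supp(\cE(\rho))\subseteq\supp(\cE(\sigma))$ is inherited from $\rho\le 2^{\lambda}\sigma$. Hence $\dmax{\cE(\rho)}{\cE(\sigma)}\le\lambda=\dmax{\rho}{\sigma}$. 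For the equality claim when $\cE(X)=VXV^{\dagger}$ with $V:\cH_A\to\cH_B$ an isometry ($V^{\dagger}V=\id_A$), the key point is that $\cE$ is invertible by a CPTP map: set $\cR(Y)\defeq V^{\dagger}YV + \tr\!\bigl[(\id_B-VV^{\dagger})Y\bigr]\,\tau$ for an arbitrary fixed state $\tau$. One checks $\cR$ is CPTP (a sum of CP maps, with $\tr[\cR(Y)]=\tr Y$ since $VV^{\dagger}\le\id_B$) and $\cR(\cE(\rho))=\rho$ for every state $\rho$. Then for each of the three quantities $q\in\{\Delta(\cdot,\cdot),\Delta_B(\cdot,\cdot),\dmax{\cdot}{\cdot}\}$, the monotonicity just proved applied to both $\cE$ and $\cR$ yields
\[ q(\rho,\sigma) = q\bigl(\cR(\cE(\rho)),\,\cR(\cE(\sigma))\bigr) \le q\bigl(\cE(\rho),\,\cE(\sigma)\bigr) \le q(\rho,\sigma), \]
forcing equality throughout. (Equivalently, one can argue directly that conjugation by an isometry leaves all singular values unchanged, hence preserves $\|\cdot\|_1$, and commutes with operator square roots, hence preserves $\F$, and preserves the order relation $\rho\le 2^{\lambda}\sigma$ in both directions.)

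I expect no step here to be a genuine obstacle; this is a packaging of standard data-processing facts. The only one with any real content is the monotonicity of fidelity, and even that reduces, via Stinespring plus the already-available Uhlmann theorem (Fact~\ref{uhlmann}), to the trivial remark that applying an isometry to two purifications does not change their overlap. The mild care needed is the support/orthogonality bookkeeping --- Jordan--Hahn for the trace norm and the support hypothesis in the definition of $\dmax{\rho}{\sigma}$ --- and, for the equality claim, verifying that the proposed recovery map $\cR$ is trace preserving.
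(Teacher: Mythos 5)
Your proof is correct. The paper gives no proof of this fact at all --- it is listed among the facts ``stated below without citation'' for which the reader is referred to standard textbooks (\cite{NielsenC00,WatrousQI}) --- so there is no in-paper argument to compare against; your three arguments (Jordan--Hahn for the trace norm, Uhlmann plus Stinespring for fidelity, positivity for $\mathrm{D}_{\max}$) and the recovery-map sandwich for the isometry case are exactly the standard textbook proofs, and all the small verifications (trace preservation of $\cR$, the support condition for $\mathrm{D}_{\max}$) check out.
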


\begin{fact}[]
\label{traceavg}
Let $\rho_{XE},\sigma_{XE}$ be c-q states. Then,
\begin{itemize}
    \item $ \| \rho_{XE}-\sigma_{XE} \|_1 \geq   \E_{x \leftarrow \rho_X } \| \rho^x_{E}-\sigma^x_{E} \|_1. $
     \item $ \Delta_B( \rho_{XE},\sigma_{XE} ) \geq   \E_{x \leftarrow \rho_X } \Delta_B( \rho^x_{E}, \sigma^x_{E} ). $
\end{itemize}
The above inequalities are equalities iff $\rho_X = \sigma_X$.
\end{fact}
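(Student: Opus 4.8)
The plan is to use the block-diagonal structure of classical-quantum states in the computational basis of $X$. Writing $\rho_{XE}=\sum_x p(x)\ketbra{x}_X\otimes\rho_E^x$ and $\sigma_{XE}=\sum_x q(x)\ketbra{x}_X\otimes\sigma_E^x$ (so $p=\rho_X$, $q=\sigma_X$), the operator $\rho_{XE}-\sigma_{XE}=\sum_x\ketbra{x}_X\otimes\big(p(x)\rho_E^x-q(x)\sigma_E^x\big)$ is block-diagonal across the mutually orthogonal subspaces indexed by $x$. The first ingredient I would record is that the Schatten $1$-norm of a block-diagonal operator equals the sum of the $1$-norms of its blocks, giving the exact identity $\|\rho_{XE}-\sigma_{XE}\|_1=\sum_x\|p(x)\rho_E^x-q(x)\sigma_E^x\|_1$. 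For the Bures part I would run the analogous computation on $\F(\rho_{XE},\sigma_{XE})=\|\sqrt{\rho_{XE}}\sqrt{\sigma_{XE}}\|_1$: since $\sqrt{\rho_{XE}}$ and $\sqrt{\sigma_{XE}}$ are block-diagonal with blocks $\sqrt{p(x)}\sqrt{\rho_E^x}$ and $\sqrt{q(x)}\sqrt{\sigma_E^x}$, one gets $\F(\rho_{XE},\sigma_{XE})=\sum_x\sqrt{p(x)q(x)}\,\F(\rho_E^x,\sigma_E^x)$.

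Next I would dispatch the case $\rho_X=\sigma_X$, i.e. $p=q$, which is the regime in which the fact gets applied and which produces the equalities. The first identity then reads $\|\rho_{XE}-\sigma_{XE}\|_1=\sum_x p(x)\|\rho_E^x-\sigma_E^x\|_1=\E_{x\leftarrow\rho_X}\|\rho_E^x-\sigma_E^x\|_1$, which is the first bullet with equality. For the second bullet, the fidelity identity together with $\Delta_B(\cdot,\cdot)^2=1-\F(\cdot,\cdot)$ gives $\Delta_B(\rho_{XE},\sigma_{XE})^2=\sum_x p(x)\big(1-\F(\rho_E^x,\sigma_E^x)\big)=\E_{x\leftarrow\rho_X}\big[\Delta_B(\rho_E^x,\sigma_E^x)^2\big]$; taking square roots and applying the power-mean inequality $\sqrt{\E[Y^2]}\ge\E[Y]$ (equivalently $\vr[Y]\ge0$) to the nonnegative quantity $Y=\Delta_B(\rho_E^x,\sigma_E^x)$ yields $\Delta_B(\rho_{XE},\sigma_{XE})\ge\E_{x\leftarrow\rho_X}\Delta_B(\rho_E^x,\sigma_E^x)$, with equality precisely when $Y$ is constant.

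To obtain the one-sided inequalities for general marginals together with the sharp equality condition, I would feed the block identities into the variational formula $\tfrac12\|\rho_{XE}-\sigma_{XE}\|_1=\max_{0\le M\le\id}\tr\!\big(M(\rho_{XE}-\sigma_{XE})\big)$, using that for block-diagonal states one may restrict to $M=\sum_x\ketbra{x}_X\otimes M_x$ and probing with the per-block optimal $M_x$; the residual terms, weighted by the zero-sum vector $(p(x)-q(x))_x$, then have to be accounted for, and Cauchy--Schwarz, $\sum_x\sqrt{p(x)q(x)}\le1$ with equality iff $p=q$, plays the analogous role on the fidelity side. I expect exactly this last step --- passing from the exact block identities to a clean one-sided bound and a tight ``equality iff $\rho_X=\sigma_X$'' characterization using only $\sum_x p(x)=\sum_x q(x)=1$, rather than an (invalid) block-by-block comparison --- to be the only genuinely delicate point; the remaining pieces (block-diagonal norm and fidelity identities, the power-mean step, $\Delta_B=\sqrt{1-\F}$) are entirely standard.
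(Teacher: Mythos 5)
Your two block identities, your treatment of the case $\rho_X=\sigma_X$, and the power-mean step for the Bures metric are all correct, and they already cover everything the paper actually needs (the paper gives no proof of this fact, deferring to textbooks, and in its applications --- e.g.\ passing from the uniform-message average case to the worst case after Definition~5 --- the two classical marginals coincide). However, the step you deferred --- upgrading the exact block identities to the stated one-sided inequalities for \emph{arbitrary} marginals --- is not merely delicate: it cannot be done, because both bullets are false when $\rho_X\neq\sigma_X$. Concretely, for the trace norm take $E$ classical on $\{e_1,e_2,e_3,e_4\}$, $\rho_X=(\tfrac25,\tfrac35)$, $\sigma_X=(\tfrac35,\tfrac25)$, $\rho^0_E=e_1$, $\sigma^0_E=\tfrac23 e_1+\tfrac13 e_2$, $\rho^1_E=e_3$, $\sigma^1_E=e_4$; your block identity gives $\|\rho_{XE}-\sigma_{XE}\|_1=\tfrac15+1=\tfrac65$, while $\E_{x\leftarrow\rho_X}\|\rho^x_E-\sigma^x_E\|_1=\tfrac25\cdot\tfrac23+\tfrac35\cdot2=\tfrac{22}{15}>\tfrac65$. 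For the Bures bullet take $\rho_X=(\tfrac14,\tfrac34)$, $\sigma_X=(1-\delta,\delta)$ with small $\delta>0$, $\rho^0_E=\sigma^0_E$ and $\rho^1_E\perp\sigma^1_E$; then your fidelity identity gives $\Delta_B(\rho_{XE},\sigma_{XE})\to\sqrt{1-\tfrac12}\approx 0.707$ while $\E_{x\leftarrow\rho_X}\Delta_B(\rho^x_E,\sigma^x_E)=\tfrac34$. So no amount of variational-formula or Cauchy--Schwarz bookkeeping in your last step can recover the claimed inequalities; what is true in general is a corrected form such as $\E_{x\leftarrow\rho_X}\|\rho^x_E-\sigma^x_E\|_1\le\|\rho_{XE}-\sigma_{XE}\|_1+\|\rho_X-\sigma_X\|_1$, obtained from your block identity plus the triangle inequality $p(x)\|\rho^x_E-\sigma^x_E\|_1\le\|p(x)\rho^x_E-q(x)\sigma^x_E\|_1+|p(x)-q(x)|$.

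The ``equalities iff $\rho_X=\sigma_X$'' clause also cannot be salvaged, and your own computation shows it: with $\rho_X=\sigma_X$ the trace-norm bullet is an identity, but the Bures bullet is generally strict unless $\Delta_B(\rho^x_E,\sigma^x_E)$ is constant on the support of $\rho_X$ (your ``equality precisely when $Y$ is constant''); conversely, equality of the two sides can occur with $\rho_X\neq\sigma_X$, e.g.\ $\rho_X=(0.9,0.1)$, $\sigma_X=(0.1,0.9)$ with $\rho^0_E\perp\sigma^0_E$ and $\rho^1_E=\sigma^1_E$ makes both sides of the first bullet equal to $1.8$. The honest conclusion of your approach is therefore: state and prove the fact under the hypothesis $\rho_X=\sigma_X$ (equality for the trace norm, ``$\geq$'' for Bures), which is exactly the regime in which the paper invokes it, or else use the corrected general inequality with the additive $\|\rho_X-\sigma_X\|_1$ term; your write-up should flag this rather than present the last step as a finishable refinement.
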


\begin{fact}\label{fact:traceconvex} Let $\rho, \sigma$  be states such that $\rho = \sum_{x} p_x \rho^x$,  $\sigma = \sum_{x} p_x \sigma^x$, $\{\rho^x, \sigma^x\}_x$ are states and $\sum_x p_x =1$. Then, 
 $\Delta(\rho, \sigma) \leq \sum_x p_x \Delta(\rho^x,\sigma^x)$.

\end{fact}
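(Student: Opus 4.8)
The plan is to prove this by exploiting that $\Delta(\cdot,\cdot)$ is (up to the factor $\tfrac12$) the Schatten $1$-norm of the difference, and that any norm satisfies the triangle inequality together with absolute homogeneity. Concretely, I would first write
\[
\Delta(\rho,\sigma) = \frac{1}{2}\|\rho - \sigma\|_1 = \frac{1}{2}\left\| \sum_x p_x \rho^x - \sum_x p_x \sigma^x \right\|_1 = \frac{1}{2}\left\| \sum_x p_x (\rho^x - \sigma^x) \right\|_1,
\]
using the hypotheses $\rho = \sum_x p_x \rho^x$ and $\sigma = \sum_x p_x \sigma^x$ to combine the two sums over the common index $x$ with common weights $p_x$.

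Next I would apply the triangle inequality for $\|\cdot\|_1$ term by term over the (finite) index set, giving $\left\|\sum_x p_x(\rho^x-\sigma^x)\right\|_1 \le \sum_x \left\| p_x(\rho^x-\sigma^x)\right\|_1$, and then pull out the nonnegative scalars $p_x \ge 0$ by homogeneity of the norm, $\left\|p_x(\rho^x-\sigma^x)\right\|_1 = p_x\|\rho^x-\sigma^x\|_1$. Reinstating the factor $\tfrac12$ and recognizing $\tfrac12\|\rho^x-\sigma^x\|_1 = \Delta(\rho^x,\sigma^x)$ then yields
\[
\Delta(\rho,\sigma) \le \frac{1}{2}\sum_x p_x \|\rho^x - \sigma^x\|_1 = \sum_x p_x \Delta(\rho^x,\sigma^x),
\]
which is the claim. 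The constraint $\sum_x p_x = 1$ is not even needed for the inequality itself; it only serves to make $\rho,\sigma$ legitimate states (convex combinations of states).

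There is essentially no obstacle here: this is the standard joint convexity of trace distance and the argument is just triangle inequality plus homogeneity of the norm, with all sums finite so no convergence issues arise. If one wanted to be maximally self-contained one could instead invoke Fact~\ref{fact:data} (data processing) applied to a dephasing-style channel on an auxiliary classical register holding the index $x$, but the direct norm argument above is shorter and cleaner, so I would present that.
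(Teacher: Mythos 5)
Your proof is correct: writing $\rho-\sigma=\sum_x p_x(\rho^x-\sigma^x)$ and applying the triangle inequality and homogeneity of $\|\cdot\|_1$ is exactly the standard argument for joint convexity of the trace distance, and your observation that $\sum_x p_x=1$ is not needed for the inequality is also right. The paper states this as a Fact without proof, deferring to standard textbooks, and your argument matches the textbook proof.
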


\begin{fact}
\label{measuredmax}
Let $\rho_{AB} \in \mathcal{D}(\cH_A \otimes \cH_B)$ be a state and $M \in \cL(\cH_B)$ such that $M^\dagger M \leq \id_B$. Let $\hat{\rho}_{AB}= \frac{M \rho_{AB} M^\dagger}{\tr{M \rho_{AB} M^\dagger}}$. Then, 
$$\dmax{\hat{\rho}_{A}}{\rho_{A}} \leq \log \left(\frac{1}{\tr{M \rho_{AB} M^\dagger}}\right).$$
\end{fact}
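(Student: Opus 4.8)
\textbf{Proof proposal for Fact~\ref{measuredmax}.}

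The plan is to unfold the definition of the max-divergence and reduce the claim to a single L\"owner-order inequality between operators on $\cH_A$, which I would then verify directly from the spectral decomposition of $M^\dagger M$. First, by the definition of max-divergence (Definition~\ref{def:infoquant}) it suffices to prove the operator inequality $\hat{\rho}_A \le \tfrac{1}{\tr(M\rho_{AB}M^\dagger)}\,\rho_A$ on $\cH_A$; the support condition $\supp(\hat\rho_A)\subseteq\supp(\rho_A)$ that is needed for $\dmax{\hat\rho_A}{\rho_A}$ to be finite then holds automatically. Recalling the shorthand $M=\id_A\otimes M$ (with $M\in\cL(\cH_B)$ as in the statement), taking the partial trace over $B$ of the definition of $\hat\rho_{AB}$ gives $\hat\rho_A = \tfrac{1}{\tr(M\rho_{AB}M^\dagger)}\,\tr_B\!\big((\id_A\otimes M)\rho_{AB}(\id_A\otimes M^\dagger)\big)$, so the inequality to prove is equivalent to $\tr_B\!\big((\id_A\otimes M)\rho_{AB}(\id_A\otimes M^\dagger)\big)\le\rho_A$.

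Second, I would record the elementary partial-trace identity $\tr_B\!\big((\id_A\otimes M)\rho_{AB}(\id_A\otimes M^\dagger)\big)=\tr_B\!\big((\id_A\otimes M^\dagger M)\rho_{AB}\big)$, which is immediate upon expanding both sides in a product basis of $\cH_A\otimes\cH_B$. Writing $N:=M^\dagger M$, the hypothesis is $0\le N\le\id_B$; fix a spectral decomposition $N=\sum_k\mu_k\ketbra{e_k}$ with $\{\ket{e_k}\}$ an orthonormal basis of $\cH_B$ and each $\mu_k\in[0,1]$. Evaluating the partial trace over $B$ in this basis yields $\tr_B\!\big((\id_A\otimes N)\rho_{AB}\big)=\sum_k\mu_k\,T_k$, where $T_k:=(\id_A\otimes\bra{e_k})\rho_{AB}(\id_A\otimes\ket{e_k})\ge 0$ is a conjugation of the state $\rho_{AB}$ and $\sum_k T_k=\tr_B(\rho_{AB})=\rho_A$. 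Since $0\le\mu_k\le 1$ and each $T_k\ge 0$, we get $\sum_k\mu_k T_k\le\sum_k T_k=\rho_A$, which is exactly the needed bound. Chaining the two displays gives $\tr_B(M\rho_{AB}M^\dagger)\le\rho_A$, hence $\hat\rho_A\le 2^{\log(1/\tr(M\rho_{AB}M^\dagger))}\,\rho_A$, and the claim follows by the definition of $\dmax{\cdot}{\cdot}$.

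I do not expect a genuine obstacle here; the only points that need care are (i) keeping track that $M$ abbreviates $\id_A\otimes M$ throughout, and (ii) noting that the partial trace must be taken over the subsystem $B$ on which $M$ acts. Regarding (ii), the tempting shortcut of invoking data processing for $\dmax{\cdot}{\cdot}$ (Fact~\ref{fact:data}) with respect to the trace-out channel would require $\dmax{\hat\rho_{AB}}{\rho_{AB}}\le\log(1/\tr(M\rho_{AB}M^\dagger))$, i.e. $M\rho_{AB}M^\dagger\le\rho_{AB}$, which is false in general (for instance, take $M$ a rank-one projector on $\cH_B$ and $\rho_{AB}$ a maximally entangled state, where the left side even fails the support inclusion). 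Thus the explicit positivity argument via the spectral decomposition of $M^\dagger M$ is what is actually required.
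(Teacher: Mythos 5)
Your proof is correct: the reduction of the max-divergence bound to the operator inequality $\tr_B\big((\id_A\otimes M^\dagger M)\rho_{AB}\big)\le\rho_A$, the eigenbasis computation with eigenvalues $\mu_k\in[0,1]$, and the support remark are all sound, and your caveat that the shortcut $M\rho_{AB}M^\dagger\le\rho_{AB}$ fails in general (so data processing cannot be applied at the level of $\rho_{AB}$) is well taken. The paper states this fact without proof, deferring to standard references, and your argument is the standard one; if you want to shave the spectral decomposition, note that $\rho_A-\tr_B\big((\id_A\otimes M^\dagger M)\rho_{AB}\big)=\tr_B\big((\id_A\otimes(\id_B-M^\dagger M))\rho_{AB}\big)=\tr_B\big((\id_A\otimes\sqrt{\id_B-M^\dagger M})\rho_{AB}(\id_A\otimes\sqrt{\id_B-M^\dagger M})\big)\ge 0$, using the same cyclicity-within-$\tr_B$ identity you already invoke.
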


\begin{fact}\label{fact:tensortrace}
For random variables $A,B$ such that $A=A_1\otimes A_2$, $B=B_1\otimes B_2$, we have 
$$ \| A-B \|_1 = \| A_1-B_1\|_1+\| A_2-B_2\|_1. $$

\end{fact}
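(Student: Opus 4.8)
The plan is a one-line hybrid argument. I would interpolate between $A = A_1 \otimes A_2$ and $B = B_1 \otimes B_2$ through the mixed product $B_1 \otimes A_2$, writing
\[ A_1 \otimes A_2 - B_1 \otimes B_2 \;=\; (A_1 - B_1) \otimes A_2 \;+\; B_1 \otimes (A_2 - B_2), \]
and then applying the triangle inequality for $\|\cdot\|_1$ to split $\|A - B\|_1$ into the two pieces $\|(A_1 - B_1) \otimes A_2\|_1$ and $\|B_1 \otimes (A_2 - B_2)\|_1$. (Using $A_1 \otimes B_2$ as the intermediate point instead would work just as well.)

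The only nontrivial ingredient is multiplicativity of the Schatten $1$-norm across a tensor product, $\|X \otimes Y\|_1 = \|X\|_1 \, \|Y\|_1$: in the quantum case this follows since $(X \otimes Y)^\dagger (X \otimes Y) = X^\dagger X \otimes Y^\dagger Y$ has eigenvalues the pairwise products of those of $X^\dagger X$ and $Y^\dagger Y$, so the singular values of $X \otimes Y$ multiply; in the classical case it is just $\sum_{i,j} |x_i y_j| = (\sum_i |x_i|)(\sum_j |y_j|)$. Combining this with $\|A_2\|_1 = \|B_1\|_1 = 1$ — both are probability distributions, equivalently density operators, hence have trace-norm $1$ — gives $\|(A_1 - B_1)\otimes A_2\|_1 = \|A_1 - B_1\|_1$ and $\|B_1 \otimes (A_2 - B_2)\|_1 = \|A_2 - B_2\|_1$, and the two displays together yield $\|A - B\|_1 \le \|A_1 - B_1\|_1 + \|A_2 - B_2\|_1$.

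The step I would scrutinize is whether this is genuinely an \emph{equality} under the stated hypotheses, since a hybrid argument only ever produces the upper bound. I do not expect the reverse inequality to hold in general: if $A_1,B_1$ have disjoint supports and likewise $A_2,B_2$, then $A$ and $B$ have disjoint supports, so $\|A-B\|_1 = 2$ while $\|A_1-B_1\|_1 + \|A_2-B_2\|_1 = 4$. So I would read Fact~\ref{fact:tensortrace} as the sub-additivity bound $\|A-B\|_1 \le \|A_1-B_1\|_1 + \|A_2-B_2\|_1$ (which is the form actually invoked downstream, e.g. when bounding the simulation error of the code), with the "$=$" being accurate in the regime where one of the two differences is $0$ (say $A_1 = B_1$), in which case the bound collapses to $\|A_2 - B_2\|_1$ exactly. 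If literal equality is needed at some point I would treat that as the real obstacle and trace back to the extra structure supplied there — a coinciding factor, or an orthogonality/support pattern — rather than expecting it from the generic statement.
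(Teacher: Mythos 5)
Your hybrid argument is correct, and your skepticism about the stated equality is well placed. The paper supplies no proof of this Fact (it falls under the blanket remark that uncited facts are deferred to standard textbooks), so there is nothing to compare against; the telescoping decomposition $A_1\otimes A_2-B_1\otimes B_2=(A_1-B_1)\otimes A_2+B_1\otimes(A_2-B_2)$, the triangle inequality, multiplicativity of $\|\cdot\|_1$ over tensor products, and $\|A_2\|_1=\|B_1\|_1=1$ give exactly the subadditivity bound $\|A-B\|_1\le\|A_1-B_1\|_1+\|A_2-B_2\|_1$, and every step you cite for the multiplicativity (singular values of a Kronecker product are pairwise products; in the classical case $\sum_{i,j}|x_iy_j|=(\sum_i|x_i|)(\sum_j|y_j|)$) is sound. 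Your counterexample to the literal ``$=$'' is also valid: taking $A_1,B_1$ (and $A_2,B_2$) to be point masses at distinct values gives $\|A-B\|_1=2$ while $\|A_1-B_1\|_1+\|A_2-B_2\|_1=4$, so the Fact as printed is an overstatement --- what does hold in general is $\max\bigl(\|A_1-B_1\|_1,\|A_2-B_2\|_1\bigr)\le\|A-B\|_1\le\|A_1-B_1\|_1+\|A_2-B_2\|_1$, the lower bound coming from data processing (marginalization). Crucially, the only live invocation of Fact~\ref{fact:tensortrace} is the line so labelled inside the estimate chains of Claim~\ref{fact:markovapprox} (and its analogue in Claim~\ref{claim:6final}'s dependencies), where that line is immediately followed by further upper bounds terminating in ``$\le 2\eps_1+\eps_2$''; replacing the ``$=$'' there by ``$\le$'' leaves every downstream conclusion intact. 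So your reading --- subadditivity, with equality only in degenerate situations such as $A_1=B_1$ --- is the right one, and the discrepancy is a harmless erratum in the statement rather than a gap in your proof.
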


\begin{fact}\label{fact:usefultracedistance} For random variables $AB$, $\tilde{A}\tilde{B}$, we have 
$$\Vert \tilde{A}\tilde{B}-AB\Vert_1 \leq \Vert \tilde{B}-B\Vert_1+ \mathbb{E}_{b \leftarrow \tilde{B}}
        \Vert \tilde{A}\vert(\tilde{B}=b)
        )-A\vert(B=b)\Vert_1.$$
\end{fact}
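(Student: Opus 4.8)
The statement is a routine hybrid (triangle‑inequality) estimate for total variation distance, so the plan is to exhibit one intermediate distribution and bound the two resulting pieces by the two terms on the right‑hand side. Concretely, I would define the joint distribution $H$ on pairs $(a,b)$ by
$$ H(a,b) \defeq \Pr(\tilde{B}=b)\cdot \Pr\!\left(A=a \,\middle|\, B=b\right), $$
i.e. $H$ keeps the $\tilde{B}$‑marginal of $\tilde{A}\tilde{B}$ but splices on the conditional structure of $AB$. By the triangle inequality for $\|\cdot\|_1$,
$$ \Vert \tilde{A}\tilde{B}-AB\Vert_1 \;\leq\; \Vert \tilde{A}\tilde{B}-H\Vert_1 \;+\; \Vert H-AB\Vert_1, $$
and the two summands will match the two terms in the claimed bound.

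For the first summand, $\tilde{A}\tilde{B}$ and $H$ share the marginal $\tilde{B}$, so writing both out in terms of marginals and conditionals and factoring $\Pr(\tilde{B}=b)$ out of the inner sum over $a$ gives
$$ \Vert \tilde{A}\tilde{B}-H\Vert_1 = \sum_{b}\Pr(\tilde{B}=b)\sum_{a}\Big|\Pr(\tilde{A}=a\mid \tilde{B}=b)-\Pr(A=a\mid B=b)\Big| = \mathbb{E}_{b\leftarrow \tilde{B}}\Vert \tilde{A}\vert(\tilde{B}=b)-A\vert(B=b)\Vert_1. $$
For the second summand, $H$ and $AB$ share the conditional $\Pr(A=a\mid B=b)$, so
$$ \Vert H-AB\Vert_1 = \sum_{b}\Big|\Pr(\tilde{B}=b)-\Pr(B=b)\Big|\sum_{a}\Pr(A=a\mid B=b) = \sum_{b}\Big|\Pr(\tilde{B}=b)-\Pr(B=b)\Big| = \Vert \tilde{B}-B\Vert_1, $$
using that $\sum_a \Pr(A=a\mid B=b)=1$. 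Adding the two displays yields the Fact.

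The only point requiring a word of care is that $\Pr(A=a\mid B=b)$ is not defined when $\Pr(B=b)=0$; I would adopt the standard convention of fixing it to an arbitrary distribution on such $b$. This does not affect the first computation (there $b$ is weighted by $\Pr(\tilde{B}=b)$, exactly as in the right‑hand side), and in the second computation it only turns the final equality into $\Vert H-AB\Vert_1 \leq \Vert \tilde{B}-B\Vert_1$, which still suffices. I expect no genuine obstacle here; the entire content is the choice of the hybrid $H$ and the elementary averaging argument above.
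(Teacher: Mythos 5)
Your proof is correct. The paper states this fact without proof, and your hybrid argument — splicing the $\tilde{B}$-marginal onto the conditional of $AB$, then applying the triangle inequality and bounding the two pieces by the conditional-expectation term and $\Vert\tilde{B}-B\Vert_1$ respectively — is the standard and complete way to establish it, including your correct handling of the $\Pr(B=b)=0$ convention.
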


\begin{fact}[Folklore]\label{fact:samplingeffifeintly}
Let $m,n$ be positive integers such that $m\leq n$. Let $A$ be any $m \times n$ matrix over the Field $\mathbb{F}$. For any string $o \in \mathbb{F}^m$, let $\mathsf{S}_o \defeq \{x \in \mathbb{F}^n   : Ax^\dagger =o^\dagger \}.$ There exists an efficient algorithm that runs in time polynomial in $(m,n,\vert \mathbb{F} \vert)$ and outputs sample $x \leftarrow \mathsf{S}_o$. 
\end{fact}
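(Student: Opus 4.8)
The plan is to reduce the problem to elementary linear algebra over $\mathbb{F}$: compute one particular solution together with a basis of $\ker(A)$, and then output a uniformly random affine combination of the basis vectors shifted by the particular solution.

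\textbf{Step 1 (Gaussian elimination).} First I would run Gaussian elimination on the augmented matrix $[A \mid o^\dagger]$, bringing $A$ to reduced row-echelon form while applying the same elementary row operations to $o^\dagger$. This uses $O(mn\min(m,n))$ arithmetic operations in $\mathbb{F}$, each costing time polynomial in $|\mathbb{F}|$, so the step runs in time $\mathrm{poly}(m,n,|\mathbb{F}|)$. From the echelon form one reads off $r=\rk(A)$, the pivot columns, and whether the system is consistent; if $\mathsf{S}_o=\emptyset$ the algorithm reports failure, and otherwise we proceed (the case $\mathsf{S}_o\neq\emptyset$ is automatic when $A$ has full row rank $m$, which is the regime relevant to the applications in this paper).

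\textbf{Step 2 (particular solution and kernel basis).} From the echelon form I would obtain by back-substitution a particular solution $x_0\in\mathbb{F}^n$ with $Ax_0^\dagger=o^\dagger$, and a basis $v_1,\dots,v_{n-r}\in\mathbb{F}^n$ of $\ker(A)=\{v: Av^\dagger=0\}$ via the standard construction (set each free coordinate equal to $1$ in turn and solve for the pivot coordinates). This again costs $\mathrm{poly}(m,n,|\mathbb{F}|)$ time, and it yields $\mathsf{S}_o=x_0+\ker(A)=\bigl\{\,x_0+\sum_{i=1}^{n-r} c_i v_i : (c_1,\dots,c_{n-r})\in\mathbb{F}^{n-r}\,\bigr\}$.

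\textbf{Step 3 (sampling and correctness).} The algorithm then draws $c_1,\dots,c_{n-r}$ independently and uniformly from $\mathbb{F}$ (each such draw is possible in time $\mathrm{poly}(|\mathbb{F}|)$, e.g.\ by enumerating $\mathbb{F}$) and outputs $x\defeq x_0+\sum_{i=1}^{n-r}c_i v_i$. Since $\{v_1,\dots,v_{n-r}\}$ is a basis of $\ker(A)$, the linear map $(c_1,\dots,c_{n-r})\mapsto \sum_i c_i v_i$ is a bijection $\mathbb{F}^{n-r}\to\ker(A)$; hence $\sum_i c_i v_i$ is uniform on $\ker(A)$, and therefore $x$ is uniform on the coset $x_0+\ker(A)=\mathsf{S}_o$. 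The overall running time is $\mathrm{poly}(m,n,|\mathbb{F}|)$, as required.

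There is no genuine obstacle here, as the statement is folklore; the only two points to handle carefully are (i) detecting consistency of $Ax^\dagger=o^\dagger$, which is taken care of by working with the echelon form of the \emph{augmented} matrix, and (ii) the uniformity argument, which relies precisely on the linear independence of the computed kernel vectors so that distinct coefficient tuples map to distinct elements of $\mathsf{S}_o$.
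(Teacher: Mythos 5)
Your proof is correct and is exactly the standard argument one would expect for this folklore fact; the paper itself states it without proof, and your route (Gaussian elimination to get a particular solution plus a kernel basis, then a uniform coset element via uniform coefficients) is the canonical one, with the consistency check and the bijectivity of the coefficient map being precisely the right details to flag.
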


\begin{fact}[Corollary 5.2 in~\cite{CGL15}]\label{fact:samp}
  For any constant $\delta  \in ( 0,1)$, there exist constants $\alpha,\beta$  such that $3\beta \leq \alpha \leq 1/14$ and  for all positive integers $\nu, r, t$, with $r \geq \nu^{\alpha}$ and $t=\cO(\nu^{\beta})$ the following holds.  
  
  There exists a polynomial time computable function 
  $\samp : \{0,1 \}^r \to [\nu]^{t}$, such that for any set $\mathcal{S} \subset [\nu]$ of size $\delta \nu$,
  $$ \Pr( \vert \samp(U_r) \cap \mathcal{S}  \vert \geq 1 ) \geq 1-2^{- \Omega(\nu^{\alpha})}.$$
\end{fact}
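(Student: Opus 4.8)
The plan is to realize $\samp$ as (an instance of) an explicit \emph{averaging} sampler and then exploit the elementary fact that an averaging sampler whose accuracy is smaller than $\delta$ is automatically a hitter for $\delta$-dense sets. Concretely, suppose $\samp : \{0,1\}^r \to [\nu]^t$ is such that, for every $T \subseteq [\nu]$ and with probability at least $1-\gamma$ over $x \leftarrow U_r$, the fraction of coordinates of $\samp(x)$ that lie in $T$ is within $\delta/2$ of $|T|/\nu$. Applying this to $T = \cS$ with $|\cS|/\nu = \delta$, except with probability $\gamma$ that fraction is at least $\delta - \delta/2 > 0$, so $|\samp(U_r) \cap \cS| \ge 1$. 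Hence it suffices to build a polynomial-time computable averaging sampler with seed length $r$, $t = \cO(\nu^\beta)$ samples over $[\nu]$, accuracy $\delta/2$ (a constant), and confidence error $\gamma = 2^{-\nu^{\Omega(1)}}$.

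For the averaging sampler I would use the extractor--sampler correspondence: a \emph{strong} $(\kappa,\delta/2)$-extractor $\Ext : \{0,1\}^r \times \{0,1\}^b \to [\nu]$ yields $\samp(x) \defeq \big(\Ext(x,y)\big)_{y \in \{0,1\}^b}$, which has $t = 2^b$ samples, seed length $r$, accuracy $\delta/2$, and confidence error at most $2^{-(r-\kappa)}$ --- if more than $2^{\kappa}$ strings $x$ were ``bad'' for some target set, the uniform distribution on those strings would be a min-entropy-$\ge \kappa$ source on which $\Ext$ is not $(\delta/2)$-close to uniform given the seed, contradicting the extractor property. So it is enough to supply an explicit strong extractor on $r$-bit sources with a constant error $\delta/2$, output length $\log\nu$ (obtained by truncating a longer extractor, using strongness), seed length $b \le \beta\log\nu$ (so that $t = 2^b \le \nu^\beta$), and entropy deficiency $r-\kappa$ equal to a fixed positive power of $\nu$ (so that $2^{-(r-\kappa)} = 2^{-\nu^{\Omega(1)}}$); the constants can then be made to satisfy $3\beta \le \alpha \le 1/14$ with $r \ge \nu^\alpha$. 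A simpler but weaker alternative that already suffices if one only wants confidence error $2^{-\Omega(\nu^\beta)}$: take $t = \Theta(\nu^\beta)$ \emph{independent} uniform samples in $[\nu]$, which miss a $\delta$-dense set with probability only $(1-\delta)^t$; this costs $t\log\nu \le \nu^\alpha \le r$ random bits (the bound $t\log\nu \le \nu^\alpha$ uses $\beta < \alpha$), so padding with the unused bits gives a valid $\samp : \{0,1\}^r \to [\nu]^t$.

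Polynomial-time computability of $\samp$ is immediate for the independent-sampling construction and follows from explicitness of the chosen extractor in the other. I expect the only real work to be the parameter fitting: choosing an explicit extractor whose seed length is a small enough constant multiple of $\log\nu$ to keep the sample count at $\cO(\nu^\beta)$, while simultaneously its entropy deficiency is a fixed positive power of $\nu$ so that the confidence error is $2^{-\nu^{\Omega(1)}}$, and then checking that all the inequalities close for some constants with $3\beta \le \alpha \le 1/14$ and $r \ge \nu^\alpha$. (Downstream, $\samp$ must also mesh with the classical Reed--Solomon argument of~\cite{CGL15} used to handle the constraints imposed on the source, but that is a separate ingredient that does not bear on the construction of $\samp$.)
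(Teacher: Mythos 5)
The paper never proves Fact~\ref{fact:samp}: it is imported verbatim from~\cite{CGL15}, whose own derivation goes through Zuckerman's extractor-to-sampler correspondence, so your high-level route (an averaging sampler with constant accuracy below $\delta$ is automatically a hitter; instantiate it from an explicit strong extractor by enumerating seeds) is exactly the standard argument behind the cited result, and your individual steps (bad-seed counting giving confidence error $2^{\kappa-r+O(1)}$, accuracy $\delta/2$ giving the hitting property) are fine. The genuine problem is the parameter regime, and it is not one you can fix by cleverer parameter fitting: with only $t=\cO(\nu^{\beta})$ samples, \emph{no} function $\samp:\{0,1\}^r\to[\nu]^t$ can miss every $\delta$-dense set with probability at most $2^{-\Omega(\nu^{\alpha})}$ when $\alpha\geq 3\beta$. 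Indeed, averaging over a uniformly random $\mathcal{S}\subset[\nu]$ of size $\delta\nu$, each fixed output tuple (at most $t$ distinct coordinates) is avoided by $\mathcal{S}$ with probability at least $(1-\delta-t/\nu)^{t}=2^{-\cO(\nu^{\beta})}$, so some fixed $\mathcal{S}$ is missed by $\samp(U_r)$ with probability at least $2^{-\cO(\nu^{\beta})}$, which for large $\nu$ is far bigger than $2^{-\Omega(\nu^{\alpha})}$ since $\alpha>\beta$. Your extractor instantiation hits the same ceiling from inside: the confidence error is $2^{-(r-\kappa)}$, but $t=2^{b}$ and (by the Radhakrishnan--Ta-Shma seed-length lower bound, or simply by the hitting bound just stated) $2^{b}=\Omega(r-\kappa)$ once the accuracy is a constant, so $2^{-(r-\kappa)}\geq 2^{-\cO(t)}=2^{-\cO(\nu^{\beta})}$. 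Thus your requirement ``entropy deficiency $r-\kappa$ equal to a fixed positive power of $\nu$'' is compatible with ``$b\leq\beta\log\nu$, so $t\leq\nu^{\beta}$'' only when that power is at most $\beta$; the parameter fitting you defer is precisely where any attempt at the literal bound $2^{-\Omega(\nu^{\alpha})}$ must fail.

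What your argument does prove, and what is actually true, is the statement with failure probability $2^{-\Omega(\nu^{\beta})}$ (equivalently $2^{-\Omega(t)}$); note that the ``weaker alternative'' you mention, $t$ independent uniform samples using $t\lceil\log\nu\rceil\leq\nu^{\alpha}\leq r$ bits, already achieves this optimum up to constants, so here the extractor machinery buys nothing beyond randomness efficiency. In other words, the error exponent in Fact~\ref{fact:samp} as transcribed is off (it should be $\nu^{\beta}$, i.e.\ exponential in the number of samples, which is also all that the advice-generator argument and Claim~\ref{claim:new2nmextsampling1new} need, since the overall target error is only $2^{-n^{\Omega(1)}}$ with unspecified small constants). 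So rather than trying to prove the literal bound, the right move is to flag the mismatch and prove the $2^{-\Omega(\nu^{\beta})}$ version, which your proposal essentially does.
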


\begin{definition} Let $M=2^m$. The inner-product function, $\IP^{n}_{M}: \mathbb{F}_{M}^{n} \times \mathbb{F}_{M}^n \rightarrow \mathbb{F}_{M}$ is defined as follows:\[\IP^{n}_{M}(x,y)=\sum_{i=1}^{n} x_i y_i,\]
where the operations are over the Field $\mathbb{F}_{M}.$
\end{definition}

\subsection*{Extractors and non-malleable codes}
Throughout the paper we use extractor to mean seeded extractor unless stated otherwise. 
\begin{definition}[Quantum secure extractor]
\label{qseeded}
	An $(n,d,m)$-extractor $\Ext : \{0,1\}^n \times \{0,1\}^d \to \{0,1\}^m$  is said to be $(k,\eps)$-quantum secure if for every state $\rho_{XES}$, such that $\Hmin(X|E)_\rho \geq k$ and $\rho_{XES} = \rho_{XE} \otimes U_d$, we have 
	$$  \| \rho_{\Ext(X,S)E} - U_m \otimes \rho_{E} \|_1 \leq \eps.$$
	In addition, the extractor is called strong if $$  \| \rho_{\Ext(X,S)SE} - U_m \otimes U_d \otimes \rho_{E} \|_1 \leq \eps .$$
	$S$ is referred to as the {\em seed} for the extractor.
	\end{definition}

\begin{fact}[\cite{DPVR09,CV16}]
    \label{fact:extractor} There exists an explicit $(2m,\eps)$-quantum secure strong $(n,d,m)$-extractor $\Ext : \{ 0,1\}^n \times  \{ 0,1\}^d \to  \{ 0,1\}^m$ for parameters  $d = \cO( \log^2(n/\eps) \log m )$. Moreover the extractor $\Ext$ is linear extractor, i.e. for every fixed seed, the output of the extractor is a linear function of the input source.
\end{fact}

\begin{definition}[$l\mhyphen\qmas$~\cite{ABJO21}]\label{qmadv}	Let $\tau_{X\hat{X}}$, $\tau_{Y\hat{Y}}$ be the canonical purifications of independent and uniform sources $X, Y$ respectively. Let $\tau_{NM}$ be a pure state. Let 
$$ \theta_{X\hat{X}NMY\hat{Y}}= \tau_{X\hat{X}} \otimes \tau_{NM} \otimes \tau_{Y\hat{Y}}.$$
Let $U : \cH_{X} \otimes \cH_{N} \rightarrow   \cH_{X} \otimes \cH_{N'} \otimes \cH_{A}$ and $V : \cH_Y \otimes \cH_{M} \rightarrow   \cH_{Y} \otimes \cH_{M'} \otimes \cH_{B}$ be isometries
such that registers $A, B$ are single qubit registers. Let $$\rho_{X\hat{X}AN'M'BY\hat{Y}} = (U \otimes V)\theta_{X\hat{X}NMY\hat{Y}}(U \otimes V)^\dagger,$$
and 
 $$l = \log\left( \frac{1}{ \Pr(A=1, B=1)_{\rho}}  \right) \quad ; \quad \sigma_{X\hat{X}N'M'Y\hat{Y}} =(\rho_{X\hat{X}AN'M'BY\hat{Y}} \vert A=1,B=1).$$
 We call $\sigma_{X\hat{X}N'M'Y\hat{Y}}$ an $l\mhyphen\qmas$ .
\end{definition}
\begin{fact}[$\IP$ security against $l$-$\qmas$~\cite{ABJO21}] \label{l-qma-needed-fact} Let $n=\frac{n_1}{m}$ and $n_1-l \geq 2 \log\left(\frac{1}{\eps}\right)+m$. Let $\sigma_{X \hat{X} N' Y \hat{Y} M'}$ be an $l\mhyphen\qmas$ with $\vert X \vert = \vert Y\vert = n_1$. Then
\[\Vert \sigma_{\IP^n_{2^m}(X,Y)XN'} - U_{m} \otimes \sigma_{XN'}  \Vert_1 \leq \eps \quad ; \quad \Vert \sigma_{\IP^n_{2^m}(X,Y)YM'} - U_{m} \otimes \sigma_{YM'}  \Vert_1 \leq \eps.\]
\end{fact}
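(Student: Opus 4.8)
This statement is quoted verbatim from~\cite{ABJO21}; here is the shape of the argument one would give for it. First, observe that the two displayed inequalities are interchanged by the symmetry $(X,\hat X,N',A)\leftrightarrow(Y,\hat Y,M',B)$, which preserves the definition of an $l$-$\qmas$, so it suffices to prove the first bound $\Vert\sigma_{\IP^n_{2^m}(X,Y)XN'}-U_m\otimes\sigma_{XN'}\Vert_1\le\eps$.

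The plan has two ingredients. The first is to unpack what the $l$-$\qmas$ definition provides in terms of min-entropy. An $l$-$\qmas$ arises from the product state $\tau_{X\hat X}\otimes\tau_{NM}\otimes\tau_{Y\hat Y}$ by a safe isometry $U$ on $(X,N)$, a safe isometry $V$ on $(Y,M)$, and post-selection on $\{A=1,B=1\}$ of probability $2^{-l}$. Since $U$ touches only $(X,N)$, $V$ only $(Y,M)$, and $\tau_{NM}$ is independent of $X\hat XY\hat Y$, in the pre-selection state $\rho$ the register $Y$ is uniform and independent of $(X,\hat X,N')$ and, symmetrically, $X$ is uniform and independent of $(Y,\hat Y,M')$; in particular $\hminone{X}_\rho=\hminone{Y}_\rho=n_1$. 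Post-selecting on a probability-$2^{-l}$ event costs at most $l$ bits of min-entropy (Fact~\ref{measuredmax} with $M=\ketbra{1}_A\otimes\ketbra{1}_B$), and the crucial point is that this ``deficiency budget'' of $l$ bits is \emph{shared} between the two sources --- this is the converse direction of the known characterization of $l$-$\qmas$ states via min-entropy --- so that $\hminone{X}_\sigma+\hmin{Y}{XN'}_\sigma\ge 2n_1-l$. By the hypothesis $n_1-l\ge 2\log(1/\eps)+m$ this sum is at least $n_1+m+2\log(1/\eps)$.

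The second ingredient is a quantum-proof, strong-in-the-revealed-source version of the classical fact that the inner product over $\F_{2^m}$ is a two-source extractor. Writing $Z=\IP^n_{2^m}(X,Y)$ and applying an $\F_{2^m}$-Fourier / XOR-lemma reduction to the classical register $Z$ against the environment $XN'$, the trace distance $\Vert\sigma_{ZXN'}-U_m\otimes\sigma_{XN'}\Vert_1$ is controlled by a sum, over the nontrivial additive characters $\psi_\alpha$ of $\F_{2^m}$, of Schatten-norm quantities $\bigl\Vert\sum_{z}\psi_\alpha(z)\,\sigma^{z}_{XN'}\bigr\Vert$. For fixed $\alpha\neq0$ the map $x\mapsto\alpha x$ is a bijection of $\F_{2^m}^{\,n}$, so this operator is block-diagonal in the classical register $X$ with blocks $\sum_y\psi_\alpha(\langle x,y\rangle)\,\sigma^{x,y}_{N'}$; bounding these blocks by the collision / conditional-min-entropy of $Y$ given $(X,N')$ and summing over $x$ using $\hminone{X}_\sigma$ --- the exact quantum analogue of the Chor--Goldreich character-sum computation, with $\ell_2$ replaced by the Schatten-$2$ norm --- gives an estimate that, summed over the $2^m-1$ characters, is at most $2^{(m+n_1-\hminone{X}_\sigma-\hmin{Y}{XN'}_\sigma)/2}\le\eps$.

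The main obstacle is precisely this second ingredient: the character-sum bound must be \emph{genuinely} quantum-proof, since, as the Gavinsky et al.\ example shows, not every classical extractor survives quantum side information, and the naive route --- separately invoking ``$Y$ has high min-entropy given $N'$'' and ``$X$ is nearly uniform'' --- charges the post-selection cost $l$ twice and only yields the weaker hypothesis $n_1-2l\ge 2\log(1/\eps)+m$. Recovering the stated $n_1-l$ forces one to use the $l$-$\qmas$ structure globally (the shared-budget bound above) and to run the Fourier estimate with the conditional-min-entropy of $Y$ taken \emph{jointly} given the seed $X$ and the quantum register $N'$. The cleanest way to make the latter rigorous --- and the route actually taken in~\cite{ABJO21} --- is to first reduce, for the purpose of the classical output $Z$, to a ``quantum measurement adversary'' who merely measures $N'$, after which the estimate becomes essentially classical with a controlled min-entropy penalty, and the classical inner-product bound finishes the proof.
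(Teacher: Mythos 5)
The paper does not prove this statement at all: it is imported verbatim as a Fact from~\cite{ABJO21}, so there is no internal proof to compare your argument against. Your sketch — symmetry between the two bounds, min-entropy accounting for the post-selected $l\mhyphen\qmas$ state, and a quantum-proof inner-product bound obtained by reducing to a quantum measurement adversary — is consistent with the route the paper itself attributes to~\cite{ABJO21}, so as a reconstruction of the cited source it is reasonable, though its finer details (e.g.\ the exact shared-budget min-entropy inequality and the character-sum estimate) cannot be checked against anything in this paper.
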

\begin{fact}[Theorem 8 in~\cite{CGL15}]\label{classicalip}
Let $X,Y$ be independent sources on $\mathbb{F}^{n}_{2^m}$ with min-entropy $k_1,k_2$ respectively. Then
\[\Vert \IP^n_{2^m}(X,Y)Y -U_m \otimes Y \Vert_1 \leq \eps \quad ; \quad  \Vert \IP^n_{2^m}(X,Y)X -U_m \otimes X  \Vert_1 \leq \eps,\] where $\eps=2^{\frac{-(k_1+k_2-m(n+1))}{2}}.$
\end{fact}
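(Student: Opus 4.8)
The plan is to prove this by Fourier analysis over the additive group of $\mathbb{F}_{2^m}$, in the spirit of the Chor--Goldreich inner-product extractor. It suffices to prove the first bound $\Vert \IP^n_{2^m}(X,Y)\,Y - U_m\otimes Y\Vert_1\le\eps$, since the inner product is symmetric in its two arguments and the second bound follows by swapping the roles of $X$ and $Y$. Write $q=2^m$, identify $U_m$ with the uniform distribution on $\mathbb{F}_q$, write $\langle x,y\rangle:=\IP^n_{2^m}(x,y)=\sum_i x_iy_i$, and let $\{\psi_\beta\}_{\beta\in\mathbb{F}_q}$ be the additive characters of $\mathbb{F}_q$, with $\psi_0$ trivial.

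First I would unfold the (strong) $\ell_1$-distance using independence of $X$ and $Y$:
\[ \Vert \IP^n_{2^m}(X,Y)\,Y - U_m\otimes Y\Vert_1 \;=\; \E_{y\leftarrow Y}\Vert \langle X,y\rangle - U_m\Vert_1 , \]
where $\langle X,y\rangle$ is viewed as a distribution on $\mathbb{F}_q$ over the randomness of $X$ alone. For each fixed $y$ I would apply the $\ell_2$/Parseval bound over $\mathbb{F}_q$ (the abelian XOR lemma): $\Vert D-U_m\Vert_1\le \sqrt q\,\Vert D-U_m\Vert_2=\big(\sum_{\beta\ne 0}|\E_{a\leftarrow D}\psi_\beta(a)|^2\big)^{1/2}$ for any distribution $D$ on $\mathbb{F}_q$ (Cauchy--Schwarz followed by Parseval; the trivial character contributes $0$ once $U_m$ is subtracted). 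Combined with Jensen's inequality applied to $\sqrt{\,\cdot\,}$ with $y$ drawn from $Y$, this gives
\[ \Vert \IP^n_{2^m}(X,Y)\,Y - U_m\otimes Y\Vert_1 \;\le\; \Big(\sum_{\beta\ne 0}\E_{y\leftarrow Y}\big|\E_X\psi_\beta(\langle X,y\rangle)\big|^2\Big)^{1/2}. \]

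The heart of the argument is then to control, for each nontrivial $\psi_\beta$, the quantity $\sum_{y\in\mathbb{F}_q^n}|\E_X\psi_\beta(\langle X,y\rangle)|^2$. Expanding the square, using $\psi_\beta(\langle x,y\rangle)\overline{\psi_\beta(\langle x',y\rangle)}=\psi_\beta(\langle x-x',y\rangle)$, and observing that $y\mapsto\psi_\beta(\langle x-x',y\rangle)$ is a character of $\mathbb{F}_q^n$ which (since $\beta\ne 0$ and $\langle\cdot,\cdot\rangle$ is nondegenerate) is trivial exactly when $x=x'$, the sum over $y$ collapses to $q^n$ or $0$ accordingly, leaving
\[ \sum_{y\in\mathbb{F}_q^n}\big|\E_X\psi_\beta(\langle X,y\rangle)\big|^2 \;=\; q^n\sum_{x}P_X(x)^2 \;\le\; q^n\,2^{-k_1}, \]
the last step being the collision bound $\sum_x P_X(x)^2\le\max_xP_X(x)\le 2^{-\Hmin(X)}$. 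Since each summand $|\E_X\psi_\beta(\langle X,y\rangle)|^2$ is nonnegative, I may replace $P_Y(y)$ by $\max_yP_Y(y)\le 2^{-k_2}$ and extend the sum over all of $\mathbb{F}_q^n$, obtaining $\E_{y\leftarrow Y}|\E_X\psi_\beta(\langle X,y\rangle)|^2\le 2^{-k_2}q^n2^{-k_1}$. Plugging into the previous display and using that there are $q-1<2^m$ nontrivial characters,
\[ \Vert \IP^n_{2^m}(X,Y)\,Y - U_m\otimes Y\Vert_1 \;\le\; \big(2^m\cdot 2^{mn}\cdot 2^{-k_1-k_2}\big)^{1/2} \;=\; 2^{-(k_1+k_2-m(n+1))/2}\;=\;\eps. \]

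The step I expect to be most delicate is the use of the \emph{strong} distance: conditioning on $Y$ means that for individual ``bad'' seeds $y$ the distribution $\langle X,y\rangle$ need not be close to uniform, so one cannot argue pointwise in $y$. The resolution above is to pass to the averaged Fourier mass and then exploit nonnegativity to extend the $y$-sum to all of $\mathbb{F}_q^n$, where Parseval makes it exact; a naive bound such as $\Pr_{y\leftarrow Y}[\langle w,y\rangle=0]\le q^{n-1}2^{-k_2}$ for each fixed $w$ would lose an additive $\Theta(m)$ in the exponent. The only remaining care is in tracking the Fourier normalization constants so that the final exponent is exactly $-(k_1+k_2-m(n+1))/2$. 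Note this argument handles arbitrary min-entropy sources directly and needs no reduction to flat sources.
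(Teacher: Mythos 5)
Your proof is correct: the chain (strong distance $=$ average over $y$ of $\Vert\langle X,y\rangle-U_m\Vert_1$, Cauchy--Schwarz/Parseval over the additive characters of $\mathbb{F}_{2^m}$, Jensen, extending the $y$-sum via $P_Y(y)\le 2^{-k_2}$, and the collision bound $\sum_x P_X(x)^2\le 2^{-k_1}$) gives exactly $2^{-(k_1+k_2-m(n+1))/2}$ for the unhalved $\ell_1$ norm, with the nondegeneracy of the trace pairing correctly handling the character-sum collapse. The paper does not prove this statement itself (it imports it as Theorem 8 of CGL15), and the argument you give is essentially the same standard Chor--Goldreich-style character-sum proof underlying that cited result.
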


\begin{definition}[$(k_1,k_2)\mhyphen\qmas$~\cite{BJK21}]\label{qmadvk1k2}
We call a pure state $\sigma_{X\hat{X}NMY\hat{Y}}$, with $(XY)$ classical and $(\hat{X}\hat{Y})$ copy of $(XY)$,  a  $(k_1,k_2)\mhyphen\qmas$ iff 
\[ \hmin{X}{MY\hat{Y}}_\sigma \geq k_1 \quad ; \quad \hmin{Y}{NX\hat{X}}_\sigma \geq k_2.\]
\end{definition}
\begin{definition}[$(k_1,k_2)\mhyphen\nmas$~\cite{BJK21}]\label{def:2source-qnmadversarydef}
     Let $\sigma_{X\hat{X}NMY\hat{Y}}$ be a $(k_1,k_2)\mhyphen\qmas$. Let $U: \cH_X \otimes \cH_N \rightarrow \cH_X \otimes \cH_{X^\prime} \otimes  \cH_{\hat{X}'} \otimes \cH_{N^\prime}$ and $V: \cH_Y \otimes \cH_M \rightarrow \cH_Y \otimes \cH_{Y'} \otimes  \cH_{\hat{Y}'} \otimes \cH_{M'}$ be isometries  such that for $\rho = (U \otimes V)\sigma(U \otimes V)^\dagger,$ we have $(X'Y')$ classical (with copy $\hat{X}'\hat{Y}'$) and, 
       $$\Pr(Y \ne Y^\prime)_\rho =1 \quad or \quad \Pr(X \ne X^\prime)_\rho =1.$$ 
       We call state $\rho$ a $(k_1,k_2)\mhyphen\nmas$.
\end{definition}
\begin{definition}[Quantum secure $2$-source non-malleable extractor~\cite{BJK21}]\label{def:2nme}
		An $(n,n,m)$-non-malleable extractor $2\nmext : \{0,1\}^{n} \times \{0,1\}^{n} \to \{0,1\}^m$ is $(k_1,k_2,\eps)$-secure against $\nma$ if for every $(k_1,k_2)\mhyphen\nmas$ $\rho$ (chosen by the adversary $\nma$),
	$$  \Vert \rho_{ 2\nmext(X,Y)2\nmext(X^\prime,Y^\prime) Y  Y^\prime M^\prime} - U_m \otimes \rho_{ 2\nmext(X^\prime ,Y^\prime) Y  Y^\prime M^\prime} \Vert_1 \leq \eps, $$ and 
	$$  \Vert \rho_{ 2\nmext(X,Y)2\nmext(X^\prime,Y^\prime) X  X^\prime N^\prime} - U_m \otimes \rho_{ 2\nmext(X^\prime ,Y^\prime) X X^\prime N^\prime} \Vert_1 \leq \eps.$$
\end{definition}


\subsection*{Error correcting codes} 
 
 \begin{definition}\label{def:ecc}
     Let $\Sigma$ be a finite set. A mapping $\ecc: \Sigma^k \to \Sigma^n$ is called an error correcting code with relative distance $\gamma$ if for any $x,y \in \Sigma^k$ such that $x \ne y,$ the Hamming distance between $\ecc(x)$ and $\ecc(y)$ is at least $\gamma n.$ The rate of the code denoted by $\delta$, is defined as $\delta \defeq \frac{k}{n}$. The alphabet size of the code is the number of elements in $\Sigma.$
 \end{definition}
 \begin{fact}[MDS Codes]\label{fact:ecc}
         Let $q$ be a prime power.
         For every positive integer $k$, there exists a large enough $n$ such that there exists an efficiently computable linear error correcting code $\ecc: \F^{k}_q \to \F^{ n}_q $ with rate $\frac{k}{n}$ and relative distance $ \frac{n-k+1}{n}$. 
         Such codes are known as maximum distance separable (MDS) codes.
         Reed-Solomon codes is a typical example of an MDS code family.
 \end{fact}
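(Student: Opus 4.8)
The plan is to exhibit the Reed--Solomon family explicitly and verify the three asserted properties: $\F_q$-linearity, the claimed relative distance, and efficient computability. Fix the prime power $q$; to encode messages of dimension $k$ we will take the block length $n$ to satisfy $k \le n \le q$ (this is the sense in which $n$ must be ``large enough'' yet compatible with $q$: choose $q$ to be a prime power at least $n$, then pick any $n$ distinct evaluation points). Concretely, fix distinct field elements $\alpha_1,\dots,\alpha_n \in \F_q$, identify a message $(a_0,\dots,a_{k-1}) \in \F_q^k$ with the polynomial $p(X)=\sum_{i=0}^{k-1} a_i X^i$ of degree less than $k$, and set $\ecc(a_0,\dots,a_{k-1}) = (p(\alpha_1),\dots,p(\alpha_n)) \in \F_q^n$.

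First I would check linearity and dimension. Evaluation of a polynomial at a fixed point is an $\F_q$-linear functional of its coefficient vector, so $\ecc$ is the linear map given by the $n\times k$ matrix $G=(\alpha_j^{\,i})_{1\le j\le n,\ 0\le i\le k-1}$. Any $k$ rows of $G$ form a square Vandermonde matrix with distinct nodes, hence are invertible; therefore $\ecc$ is injective, the code has dimension $k$, and its rate is $k/n$ as claimed.

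Next I would establish the distance. For $x\ne y$ in $\F_q^k$, the difference $p_x - p_y$ is a nonzero polynomial of degree less than $k$, so it has at most $k-1$ roots in $\F_q$. Consequently $p_x(\alpha_j)=p_y(\alpha_j)$ for at most $k-1$ indices $j$, i.e. the codewords $\ecc(x)$ and $\ecc(y)$ agree in at most $k-1$ coordinates and therefore differ in at least $n-(k-1)=n-k+1$ coordinates. Thus the minimum Hamming distance is at least $n-k+1$, giving relative distance at least $\frac{n-k+1}{n}$, which is exactly the bound in the statement; the Singleton bound (a code of dimension $k$ has minimum distance at most $n-k+1$) shows this is tight, justifying the ``maximum distance separable'' terminology. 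Finally, efficient computability is immediate: encoding is a matrix--vector product over $\F_q$, or equivalently $n$ applications of Horner's rule, running in time polynomial in $n$, $k$, and $\log q$.

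There is no real obstacle here --- this is the classical Reed--Solomon construction. The only points needing a word of care are the compatibility condition $n \le q$ (so that $n$ distinct evaluation points exist), and the observation that the entire distance argument rests on the single elementary fact that a nonzero univariate polynomial of degree $d$ over a field has at most $d$ roots. Both are standard, so the verification above suffices.
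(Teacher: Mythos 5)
Your proof is correct and is exactly the standard Reed--Solomon argument that the paper implicitly relies on: this statement is presented as a Fact without proof, with Reed--Solomon codes named as the witnessing family, and your Vandermonde/root-counting/Horner's-rule verification is the canonical way to fill it in. Your remark on the compatibility condition $k \le n \le q$ is also apt, since the paper's own quantifier ordering is loose on this point and in its actual applications it always chooses $q \ge n+1$ (e.g.\ $q = 2^{\log(n+1)}$ in Section~\ref{subsec:modnmext}), consistent with your reading.
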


\subsection*{Other useful facts,  claims and lemmas\label{sec:claims}}
\begin{fact}[\cite{BJK21}]\label{fact:minentropydecrease}
Let $\rho_{ABC} \in \mathcal{D}(\cH_A \otimes \cH_B \otimes \cH_C)$ be a state and $M \in \cL(\cH_C)$ such that $M^\dagger M \leq \id_C$. Let $\hat{\rho}_{ABC}= \frac{M \rho_{ABC} M^\dagger}{\tr{M \rho_{ABC} M^\dagger}}$. Then, 
\[ \hmin{A}{B}_{\hat{\rho}} \geq   \hmin{A}{B}_{\rho} - \log \left(\frac{1}{\tr{M \rho_{ABC} M^\dagger}}\right). \]
\end{fact}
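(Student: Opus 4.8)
The plan is to deduce this from the already-proved bound on the max-divergence of a renormalized post-measurement state, namely Fact~\ref{measuredmax}, together with the chain property of $\dmax{\cdot}{\cdot}$. Write $p \defeq \tr(M \rho_{ABC} M^\dagger)$ and $c \defeq \log(1/p)$, so that the claim to be shown is $\hmin{A}{B}_{\hat{\rho}} \geq \hmin{A}{B}_{\rho} - c$.

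First I would observe that, although $M$ acts only on $\cH_C$, Fact~\ref{measuredmax} applies to $\rho_{ABC}$ with the bipartition $(AB)\mid C$: the operator $\id_{AB}\otimes M$ satisfies $(\id_{AB}\otimes M)^\dagger(\id_{AB}\otimes M) = \id_{AB}\otimes M^\dagger M \leq \id_{ABC}$, and $\hat{\rho}_{ABC}$ is precisely its renormalized conjugation of $\rho_{ABC}$ with normalization $p$. Hence Fact~\ref{measuredmax}, with $AB$ playing the role of the ``kept'' register and $C$ the ``measured'' register, gives $\dmax{\hat{\rho}_{AB}}{\rho_{AB}} \leq c$; in particular $\supp(\hat{\rho}_{AB}) \subseteq \supp(\rho_{AB})$, so all divergences below are well defined. (If one prefers to be self-contained, the inequality $\hat{\rho}_{AB} \leq 2^{c}\rho_{AB}$ is immediate: for every $\ket{\phi}\in\cH_{AB}$ one has $\bra{\phi}\tr_C\!\big((\id_{AB}\otimes M)\rho_{ABC}(\id_{AB}\otimes M)^\dagger\big)\ket{\phi} = \tr\!\big(\rho_{ABC}(\ketbra{\phi}_{AB}\otimes M^\dagger M)\big) \leq \bra{\phi}\rho_{AB}\ket{\phi}$, and one divides by $p$.)

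Next I would combine this with the definition of conditional min-entropy. For an arbitrary state $\tau_B\in\cD(\cH_B)$, the chain property of $\dmax{\cdot}{\cdot}$ (from $\alpha\leq 2^{a}\beta$ and $\beta\leq 2^{b}\gamma$ one gets $\alpha\leq 2^{a+b}\gamma$) yields
\[
\dmax{\hat{\rho}_{AB}}{\id_A\otimes\tau_B} \;\leq\; \dmax{\hat{\rho}_{AB}}{\rho_{AB}} + \dmax{\rho_{AB}}{\id_A\otimes\tau_B} \;\leq\; c + \dmax{\rho_{AB}}{\id_A\otimes\tau_B}.
\]
Taking the infimum over $\tau_B\in\cD(\cH_B)$ on both sides and using $\hmin{A}{B}_{\omega} = -\inf_{\tau_B}\dmax{\omega_{AB}}{\id_A\otimes\tau_B}$, we obtain $-\hmin{A}{B}_{\hat{\rho}} \leq c - \hmin{A}{B}_{\rho}$, which is exactly the claim after rearranging.

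I do not anticipate a real obstacle here; the only points needing a moment's care are (i) invoking Fact~\ref{measuredmax} with the registers $A,B$ bundled so that the ``measurement'' $M$ acts on the remaining register $C$, and (ii) noting that the infimum in the definition of $\hmin{A}{B}$ ranges over the same set $\cD(\cH_B)$ in both appearances, so the divergence bound transfers to the entropy bound with no extra loss.
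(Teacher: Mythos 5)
Your proposal is correct and essentially matches the paper's own argument: both apply Fact~\ref{measuredmax} with registers $AB$ bundled as the kept system and $C$ as the measured one to get $\hat{\rho}_{AB}\leq 2^{c}\rho_{AB}$, and then combine this with $\rho_{AB}\leq 2^{-\hmin{A}{B}_{\rho}}(\id_A\otimes\sigma_B)$ from the definition of conditional min-entropy. The only cosmetic difference is that the paper fixes an optimal $\sigma_B$ up front while you keep an arbitrary $\tau_B$ and take the infimum at the end, which changes nothing of substance.
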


\begin{fact}[\cite{BJK21}]\label{fact:prefixminentropyfact}
     Let $\rho_{XE} \in \mathcal{D}(\cH_X \otimes \cH_E)$ be a c-q state such that $\vert X \vert =n$ and $\hmin{X}{E}_\rho \geq n-k.$ Let $\tilde{X} = \pre(X,d_1,d_2)$ for some positive integer $k \leq d_2-d_1$ and $1 \leq d_1 < d_2 \leq n$. Then, $$\hmin{\tilde{X}}{E}_\rho \geq d_2-d_1+1-k.$$
\end{fact}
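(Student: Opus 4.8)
The plan is to reduce the statement to the standard chain-rule-type bound relating the conditional min-entropy of a string to that of a substring, once we have peeled off the prefix and suffix bits of $X$. Write $n = \vert X \vert$, and split $X = (X_{<d_1}, \tilde X, X_{>d_2})$ where $X_{<d_1} = \pre(X,1,d_1-1)$ denotes the first $d_1-1$ bits, $\tilde X = \pre(X,d_1,d_2)$ is the block of interest of length $\ell := d_2 - d_1 + 1$, and $X_{>d_2} = \pre(X,d_2+1,n)$ is the remaining $n - d_2$ bits. These are all classical registers obtained from $X$ by a safe isometry (indeed just a relabeling), so conditioning on them or tracing them out is unproblematic.

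First I would record the hypothesis in divergence form: $\hmin{X}{E}_\rho \ge n-k$ means there is a state $\sigma_E$ with $\rho_{XE} \le 2^{k-n}\,\id_X \otimes \sigma_E$. Now $\id_X = \id_{X_{<d_1}} \otimes \id_{\tilde X} \otimes \id_{X_{>d_2}}$. Tracing out the registers $X_{<d_1}$ and $X_{>d_2}$ (a CPTP map, so the Löwner inequality is preserved by Fact~\ref{fact:data}, and $\tr$ of an identity on $b$ qubits is $2^b \id$ on the remaining space) gives
\[
\rho_{\tilde X E} \;\le\; 2^{k-n}\cdot 2^{(d_1-1)}\cdot 2^{(n-d_2)}\,\id_{\tilde X}\otimes\sigma_E \;=\; 2^{\,k - (d_2 - d_1 + 1)}\,\id_{\tilde X}\otimes\sigma_E.
\]
Reading this back off via the definition of conditional min-entropy yields $\hmin{\tilde X}{E}_\rho \ge (d_2-d_1+1) - k = d_2 - d_1 + 1 - k$, which is exactly the claim; the hypothesis $k \le d_2 - d_1$ just guarantees the bound is nonnegative (and in fact at least $1$), so nothing is lost there.

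The only point requiring a little care — and the step I expect to be the main (minor) obstacle — is justifying that tracing out the prefix/suffix bits turns $\id_X$ into the correctly-scaled $\id_{\tilde X}$ while keeping the \emph{same} $\sigma_E$ on the right-hand side. This is where one uses that $\tr_{X_{<d_1}X_{>d_2}}(\id_{X_{<d_1}}\otimes \id_{\tilde X}\otimes \id_{X_{>d_2}}\otimes\sigma_E) = 2^{d_1-1+n-d_2}\,\id_{\tilde X}\otimes\sigma_E$, together with the fact that partial trace is a CPTP map and hence operator-monotone on the cone of positive semidefinite operators (Fact~\ref{fact:data}). Alternatively, and perhaps more cleanly, one can invoke the standard fact (Lemma B.3 of~\cite{DPVR09}) that conditioning a min-entropy estimate on a classical register of $b$ bits costs at most $b$, applied twice — once to drop $X_{<d_1}$ at a cost of $d_1-1$ from the ``$X$'' side and once to drop $X_{>d_2}$ at a cost of $n-d_2$ — after first noting $\hmin{\tilde X X_{>d_2}}{E} \ge \hmin{X}{E} - (d_1-1)$ and then $\hmin{\tilde X}{E X_{>d_2}} \ge \hmin{\tilde X}{E} \ge \cdots$; but the direct divergence computation above is shortest and avoids bookkeeping over which register the classical bits are charged to.
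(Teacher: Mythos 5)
Your main argument is correct and is essentially the standard proof of this fact (the paper does not reproduce a proof, citing~\cite{BJK21}): fix $\sigma_E$ achieving $\hmin{X}{E}_\rho$, apply the partial trace over $X_{<d_1}X_{>d_2}$ to both sides of $\rho_{XE}\le 2^{k-n}\,\id_X\otimes\sigma_E$ (partial trace preserves the L\"{o}wner order), pick up the factor $2^{(d_1-1)+(n-d_2)}$, and read off $\hmin{\tilde X}{E}_\rho\ge d_2-d_1+1-k$, with $k\le d_2-d_1$ serving only to make the bound meaningful. One small caveat about your sketched alternative route: the step $\hmin{\tilde X}{E X_{>d_2}}\ge\hmin{\tilde X}{E}$ is stated in the wrong direction (by Fact~\ref{fact102}, discarding part of the conditioning register can only increase conditional min-entropy); the correct bookkeeping there would be to drop $X_{>d_2}$ and $X_{<d_1}$ from the \emph{system} side, as in $\hmin{\tilde X}{E}\ge\hmin{\tilde X X_{>d_2}}{E}-(n-d_2)\ge\hmin{X}{E}-(n-d_2)-(d_1-1)$ — but since your primary divergence computation is self-contained and complete, this does not affect the proof.
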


\begin{fact}[\cite{BJK21}] \label{lemma:nearby_rho_prime_prime} Let $\rho_{X \hat{X} N Y \hat{Y} M}$ be a $(k_1,k_2)\mhyphen\qmas$ such that $\vert X \vert = \vert \hat{X} \vert= \vert Y \vert= \vert \hat{Y} \vert =n$. There exists an $l\mhyphen\qmas$, $\rho^{(1)}$ such that,
\[ \Delta_B(\rho^{(1)}, \rho) \leq {3}\eps \quad and \quad l \leq 2n- k_1 - k_2+ 4 +  4\log \left( \frac{1}{\eps} \right). \]
\end{fact}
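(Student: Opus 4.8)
Write $\rho = \rho_{X\hat{X}NMY\hat{Y}}$ for the given $(k_1,k_2)\mhyphen\qmas$ (Definition~\ref{qmadvk1k2}), so that $X,Y$ are classical with copies $\hat{X},\hat{Y}$, $\hmin{X}{MY\hat{Y}}_\rho \geq k_1$ and $\hmin{Y}{NX\hat{X}}_\rho \geq k_2$. I want to produce a pure state $\rho^{(1)}$ that is genuinely realizable as an $l\mhyphen\qmas$ in the sense of Definition~\ref{qmadv}, with $l \leq 2n - k_1 - k_2 + 4 + 4\log(1/\eps)$ and $\Delta_B(\rho^{(1)},\rho) \leq 3\eps$. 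The route is to reduce to the characterization of $l\mhyphen\qmas$'s from~\cite{ABJO21}: a pure state with classical $X,Y$ and copies $\hat{X},\hat{Y}$ is realizable as an $l\mhyphen\qmas$ with $l$ of the order $2n-k_1-k_2$ as soon as $\hmin{X}{MY\hat{Y}} \geq k_1$ and $\hminn{Y}{NX\hat{X}} \geq k_2$, \emph{provided} the state also sits in the canonical-purification picture, i.e.\ its relevant marginals have the prescribed (maximally mixed / maximally entangled) form. So there are two gaps: our $Y$-side bound is in terms of $\hmin{\cdot}{\cdot}$ rather than $\hminn{\cdot}{\cdot}$, and the marginals of $\rho$ need not be the prescribed ones.

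\textbf{Step 1 (replacing $\hmin{\cdot}{\cdot}$ by $\hminn{\cdot}{\cdot}$ on the $Y$-side).} I would invoke the fact from~\cite{BJK21} relating $\hmin{\cdot}{\cdot}$ and $\hminn{\cdot}{\cdot}$, applied with $A := Y$ and $B := NX\hat{X}$. It gives a state $\eps$-close in the Bures metric to $\rho_{YNX\hat{X}}$ with $\hminn{Y}{NX\hat{X}}$ at least $k_2 - 2\log(1/\eps)$. Purifying this back onto the registers $M\hat{Y}$ by Uhlmann's theorem (Fact~\ref{uhlmann}), and arranging that $X,Y$ stay classical with copies $\hat{X},\hat{Y}$, I obtain a pure state $\rho^{(a)}$ with $\Delta_B(\rho^{(a)},\rho) \leq \eps$, $\hminn{Y}{NX\hat{X}}_{\rho^{(a)}} \geq k_2 - 2\log(1/\eps)$, and (a short check) $\hmin{X}{MY\hat{Y}}_{\rho^{(a)}}$ still essentially $\geq k_1$ — any degradation is an $\cO(\log(1/\eps))$ term that I fold into the final constant.

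\textbf{Step 2 (correcting the marginal).} Even so, $\rho^{(a)}$ need not have the marginal structure required to invoke~\cite{ABJO21} directly; this is where the substate perturbation lemma of~\cite{JK21} comes in. From the two (modified-)min-entropy bounds I read off a substate inequality $\sigma_{XB} \leq 2^{c}\,\psi_X \otimes \sigma_B$ with $c \leq 2n - k_1 - k_2 + 2\log(1/\eps)$, where $\psi_X$ is the prescribed marginal and $\rho_B$ the prescribed target marginal (which is within $\eps$ in the Bures metric of the actual marginal $\sigma_B$). Applying the lemma with $\delta_0 := \eps$ yields $\rho^{(1)}$ with $\Delta_B(\rho^{(1)},\rho^{(a)}) \leq \delta_0 + \eps \leq 2\eps$ and $\rho^{(1)} \leq 2^{c+1}(1 + 4/\eps^2)\,\psi_X \otimes \rho_B$, i.e.\ a further slack of $1 + \log(1 + 4/\eps^2) \leq 4 + 2\log(1/\eps)$ in the exponent (using $\log 5 < 3$). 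Thus $\rho^{(1)}$ satisfies the hypotheses of the~\cite{ABJO21} characterization with $l \leq (2n - k_1 - k_2 + 2\log(1/\eps)) + (4 + 2\log(1/\eps)) = 2n - k_1 - k_2 + 4 + 4\log(1/\eps)$, hence is an $l\mhyphen\qmas$; and $\Delta_B(\rho^{(1)},\rho) \leq \Delta_B(\rho^{(1)},\rho^{(a)}) + \Delta_B(\rho^{(a)},\rho) \leq 3\eps$ by the triangle inequality for the Bures metric.

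\textbf{Main obstacle.} The delicate part is Step 2: one must pin down exactly which marginal $\psi_X$ and which target $\rho_B$ turn the substate inequality into a correct entry point for the~\cite{ABJO21} framework, so that the perturbed state genuinely \emph{is} an $l\mhyphen\qmas$ and not merely close to one, all while keeping $X,Y$ classical with their copies and retaining both entropy bounds. Since conditioning on the quantum registers $N,M$ does not commute, the order in which the two sides are adjusted matters, and one must check that fixing the $Y$-side in Step~1 costs the $X$-side bound no more than an $\cO(\log(1/\eps))$ term. The exact figures ($3\eps$ in Bures distance and $4 + 4\log(1/\eps)$ in $l$) then drop out of feeding $\delta_0 = \eps$ into the substate perturbation lemma together with the $2\log(1/\eps)$ penalty from the $\hmin{\cdot}{\cdot}$-to-$\hminn{\cdot}{\cdot}$ conversion.
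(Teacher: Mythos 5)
The paper imports this statement from~\cite{BJK21} as a Fact and gives no proof of its own, and your outline follows exactly the route attributed to~\cite{BJK21}: convert the $Y$-side bound from $\hmin{Y}{NX\hat{X}}$ to $\hminn{Y}{NX\hat{X}}$ on an $\eps$-close state at a $2\log(1/\eps)$ cost, repair the marginal with the substate perturbation lemma of~\cite{JK21} applied with $\delta_0=\eps$, and then invoke the~\cite{ABJO21} realizability criterion for $l\mhyphen\qmas$ states. Your parameter accounting ($\eps+2\eps=3\eps$ in Bures distance, and $2\log(1/\eps)+1+\log(1+4/\eps^2)\leq 4+4\log(1/\eps)$ added to $2n-k_1-k_2$) reproduces the stated bounds, so the proposal is correct and essentially the same argument.
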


\begin{fact}[Quantum secure $2$-source non-malleable extractor~\cite{BJK21}]\label{fact:thm:2nmext}
Let $k = \cO(n^{1/4})$ and $\eps =2^{-n^{\Omega(1)}}$. There exists an efficient $2$-source non-malleable extractor $2\nmext :\{0,1 \}^n \times \{0,1 \}^n \to \{0,1 \}^{n/4}$ that is  $(n-k,n-k, \cO(\eps))$-quantum secure.
\end{fact}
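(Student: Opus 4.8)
The plan is to follow the classical template of Cheraghchi--Guruswami and Chattopadhyay--Goyal--Li for $2$-source non-malleable-extractors --- an \emph{advice generator} followed by a \emph{correlation breaker with advice} built from alternating extraction and the flip-flop primitive --- and to lift every step to the quantum setting using the $(k_1,k_2)\mhyphen\qmas$ formalism of Definition~\ref{qmadvk1k2}, which replaces the classical conditional-independence bookkeeping that breaks down under non-commuting quantum side information. Concretely, fix a $(n-k,n-k)\mhyphen\nmas$ $\rho$ chosen by the adversary (Definition~\ref{def:2source-qnmadversarydef}), say with $\Pr(Y \ne Y')_\rho = 1$, and show that $\rho_{2\nmext(X,Y)\,2\nmext(X',Y')\,YY'M'}$ is $\cO(\eps)$-close to $U_{n/4} \otimes \rho_{2\nmext(X',Y')\,YY'M'}$; the case $\Pr(X\ne X')_\rho=1$ and the $N'$-side bound of Definition~\ref{def:2nme} are symmetric.

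\emph{A near-uniform seed from two weak sources.} Take short prefixes $X_1 = \pre(X,1,n_1)$, $Y_1 = \pre(Y,1,n_1)$ with $n_1 = n^{\Omega(1)}$; by Fact~\ref{fact:prefixminentropyfact} both still have min-entropy $n_1 - k \gg n_1/2$, so $k_1 + k_2 > n_1$ for the associated pair. Set $W = \IP^{\,n_1/m'}_{2^{m'}}(X_1,Y_1)$ for a suitable word size $m'$. Its quantum security --- that $W$ is $\cO(\eps)$-close to uniform given the other source and the adversary's register --- follows by first realizing the relevant $(k_1,k_2)\mhyphen\qmas$ as an $l\mhyphen\qmas$ with $l \approx 2n_1 - k_1 - k_2 < n_1$ via Fact~\ref{lemma:nearby_rho_prime_prime}, and then applying the inner-product security bound Fact~\ref{l-qma-needed-fact} (the classical counterpart being Fact~\ref{classicalip}). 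From $W$, $X$, $Y$ build a short advice string $\alpha$ so that, conditioned on $Y \ne Y'$, one has $\alpha \ne \alpha'$ with probability $1-\cO(\eps)$: encode the relevant pieces with the MDS code of Fact~\ref{fact:ecc}, sample a small coordinate set with the sampler of Fact~\ref{fact:samp} using $W$, and append the sampled symbols; distinctness holds because distinct codewords disagree on a $(1-o(1))$-fraction of coordinates and the hitting argument survives quantum side information since $W$ is close to uniform and approximately independent of it.

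\emph{Correlation breaker with advice via alternating extraction.} Feed $(X,Y,\alpha)$ into an advice correlation breaker: $\Theta(\log n)$ rounds of alternating extraction, each round using a \emph{fresh} block of $X$ or $Y$ of size $n^{\Omega(1)}$ --- which keeps the linear constraints imposed by the extractors on pairwise-disjoint coordinates and keeps every block at almost full min-entropy --- interleaved with flip-flop steps keyed by the bits of $\alpha$, and output $2\nmext(X,Y)$ from the last round. Each extraction uses the quantum-secure strong linear extractor of Fact~\ref{fact:extractor} (Trevisan's extractor; this is the crucial departure from $\iExt$ of Chattopadhyay--Goyal--Li, whose quantum security is open). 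Correctness of one alternating-extraction round inside a $(k_1,k_2)\mhyphen\qmas$ with an approximately uniform seed is the quantum alternating-extraction lemma, proved via Uhlmann's theorem (Fact~\ref{uhlmann}); the amount of conditional min-entropy surviving each classical message is controlled by the min-entropy-loss-under-classical-communication lemma; and the drops from conditioning on measurement outcomes by Fact~\ref{fact:minentropydecrease}. At the first round where $\alpha$ and $\alpha'$ differ, the flip-flop forces the correlation to be broken, making the final output $\cO(\eps)$-close to uniform and independent of the tampered output given $YY'M'$. Since each block has size $n^{\Omega(1)}$, Trevisan's seed length is $\mathrm{poly}(\log(n/\eps))$, the total min-entropy consumed is $o(n)$ so $k=\cO(n^{1/4})$ suffices, the output length is $n/4$, every step is polynomial-time, and the errors summed over the $\cO(\log n)$ rounds stay $\cO(\eps)=2^{-n^{\Omega(1)}}$.

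\emph{Main obstacle.} The hard part is not the combinatorial skeleton but maintaining the conditional-independence relations under quantum side information: one must fix the correct order in which to generate the many auxiliary classical random variables --- for both the original and the tampered execution --- so that every invocation of the alternating-extraction lemma and of inner-product security is genuinely against a $(k_1,k_2)\mhyphen\qmas$ (respectively $l\mhyphen\qmas$), and it is a priori unclear that such an ordering exists; a careful reading of the construction shows it does. A secondary cost is the substate-perturbation detour inside Fact~\ref{lemma:nearby_rho_prime_prime}, whose minor loss in $l$ (and hence in the admissible $k$) must be absorbed into the min-entropy budget.
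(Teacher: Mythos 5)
This statement is not proved in the paper at all: it is imported verbatim as a Fact from~\cite{BJK21}, and your sketch reconstructs essentially the same argument given there (and reused in this paper for the modified extractor of Theorem~\ref{thm:2nmext}) --- an advice generator built from the inner product, an MDS code and a sampler, followed by a correlation breaker with advice via alternating extraction and the flip-flop primitive instantiated with the quantum-secure Trevisan extractor, with inner-product security obtained by embedding the $(k_1,k_2)\mhyphen\qmas$ into an $l\mhyphen\qmas$ (Facts~\ref{lemma:nearby_rho_prime_prime} and~\ref{l-qma-needed-fact}) and the conditional-independence bookkeeping handled via Uhlmann's theorem and the min-entropy-loss-under-communication lemma. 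The only inaccuracies are cosmetic: the number of flip-flop rounds equals the advice length $a=n^{\Omega(1)}$ rather than $\Theta(\log n)$ (since $\log(1/\eps)=n^{\Omega(1)}$), the per-round losses compound roughly as $2^{\cO(a)}\sqrt{\eps'}$ rather than merely summing (which is why the per-invocation error $\eps'$ is taken with $2^{\cO(a)}\sqrt{\eps'}=\eps$), and the use of a fresh source block in every round is this paper's later modification for invertibility rather than a feature needed for the cited fact itself.
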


\begin{fact}[Alternating extraction~\cite{BJK21}]\label{lem:2}
Let $\theta_{XASB}$ be a pure state with $(XS)$ classical, $\vert X \vert =n, \vert S \vert =d$ and
\[ \hmin{X}{SB}_\theta \geq k \quad ; \quad \Delta_B( \theta_{X A S} , \theta_{X A} \otimes U_d ) \leq \eps^{\prime}. \]
Let $T \defeq \Ext(X,S)$ where $\Ext$ is a $(k,\eps)$-quantum secure strong $(n,d,m)$-extractor. Then, 
\[\Delta_B( \theta_{T B} , U_m \otimes \theta_{B} ) \leq  2\eps' + \sqrt{\eps}.\]
\end{fact}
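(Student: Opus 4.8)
The plan is to replace $\theta$ by an idealized purification in which the seed $S$ is exactly uniform and independent of $X$ and $A$, and to invoke the strong-extractor security there.

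First I would apply Uhlmann's theorem (Fact~\ref{uhlmann}) to the hypothesis $\Delta_B(\theta_{XAS},\theta_{XA}\otimes U_d)\le\eps'$: since $\theta_{XASB}$ purifies $\theta_{XAS}$, there is a pure state $\eta_{XASB}$ with $\eta_{XAS}=\theta_{XA}\otimes U_d$ and $\Delta_B(\eta_{XASB},\theta_{XASB})\le\eps'$. Since $T=\Ext(X,S)$ is obtained from $(X,S)$ by a safe isometry followed by a partial trace, data processing (Fact~\ref{fact:data}) gives $\Delta_B(\theta_{TB},\eta_{TB})\le\eps'$ and $\Delta_B(\theta_B,\eta_B)\le\eps'$, the latter implying $\Delta_B(U_m\otimes\theta_B,U_m\otimes\eta_B)\le\eps'$ by multiplicativity of fidelity over tensor products. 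Hence, by the triangle inequality for the Bures metric,
\[\Delta_B(\theta_{TB},U_m\otimes\theta_B)\;\le\;2\eps'+\Delta_B(\eta_{TB},U_m\otimes\eta_B),\]
and it remains to bound the last term by $\sqrt\eps$.

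For this I would use that $\Delta_B(\eta_{TB},U_m\otimes\eta_B)$ is unchanged when $\eta$ is replaced by any isometrically equivalent purification of $\theta_{XA}\otimes U_d$, so I may take $\eta$ in the product form $\eta_{XASB}=\phi_{XAB_1}\otimes\Phi_{SB_2}$ with $B=B_1B_2$, where $\phi_{XAB_1}$ is the canonical purification of the c-q state $\theta_{XA}$ (so $X$ stays classical and $B_1$ is a copy of $XA$) and $\Phi_{SB_2}$ is the canonical purification of $U_d$ (so $B_2$ is a copy of the classical register $S$). In this $\eta$ one has $\eta_{XSB_1}=\eta_{XB_1}\otimes U_d$, i.e. $S$ is uniform and independent of $(X,B_1)$; moreover $\hmin{X}{B_1}_\eta=\hmin{X}{SB}_\eta=\hmin{X}{SB}_\theta\ge k$, the first equality because $SB_2$ is independent of $XB_1$ in $\eta$, the second because $\eta_{XASB}$ and $\theta_{XASB}$ are both purifications of $\theta_{XA}$ with common purifying register $SB$ and conditional min-entropy is invariant under the isometry relating them (Fact~\ref{fact102}, equality case). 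Now Definition~\ref{qseeded}, applied with source $X$, seed $S$ and side information $E=B_1$, gives $\|\eta_{TSB_1}-U_m\otimes U_d\otimes\eta_{B_1}\|_1\le\eps$. Finally, since $B_2$ is a copy of $S$ while $T$ and $B_1$ are produced without touching $B_2$, the states $\eta_{TB_1B_2}$ and $\eta_{TB_1S}$ coincide up to relabeling $B_2\leftrightarrow S$, and likewise $\eta_{B_1B_2}$ and $\eta_{B_1S}=\eta_{B_1}\otimes U_d$; therefore $\|\eta_{TB}-U_m\otimes\eta_B\|_1=\|\eta_{TSB_1}-U_m\otimes U_d\otimes\eta_{B_1}\|_1\le\eps$, and Fact~\ref{fidelty_trace} yields $\Delta_B(\eta_{TB},U_m\otimes\eta_B)\le\sqrt{\eps/2}\le\sqrt\eps$. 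Substituting into the displayed inequality completes the argument.

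The step I expect to carry the real content is the passage through $\eta$: in any purification of $\theta_{XA}\otimes U_d$ the seed $S$ is entangled with part of the purifying register $B$, so one cannot feed $B$ itself to the extractor as side information. The product decomposition $\eta=\phi_{XAB_1}\otimes\Phi_{SB_2}$ is precisely what isolates the benign part $B_2$ (a copy of $S$) from the part $B_1$ that carries genuine side information, and the exact identity $\hmin{X}{B_1}_\eta=\hmin{X}{SB}_\theta$ — with no smoothing loss — is what keeps the min-entropy parameter $k$ intact for Definition~\ref{qseeded}.
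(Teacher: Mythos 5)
Your proof is correct and follows essentially the same route the paper attributes to~\cite{BJK21} for this fact: Uhlmann's theorem to pass to a nearby purification in which the seed is exactly uniform and independent, the strong-extractor guarantee applied there with side information $B_1$, and the trace-distance-to-Bures conversion, yielding $2\eps'+\sqrt{\eps}$. One small imprecision worth fixing: in the product purification, $X$ is generically entangled with $B_1$ (and $B_2$ with $S$), so $\eta_{XB_1S}$ is not literally a c-q state as Definition~\ref{qseeded} presumes; this is harmless because the map producing $T$ measures $X$ and $S$ in the computational basis, so you may first dephase $X$ (which leaves $\eta_{TB}$ and the seed's uniformity unchanged and can only increase $\hmin{X}{B_1}_{\eta}\geq k$) and then invoke the extractor security, and likewise $B_2$ becomes a classical copy of $S$ only after that measurement.
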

\begin{fact}[Min-entropy loss under classical interactive communication~\cite{BJK21}]\label{lem:minentropy}
Let $\rho_{XNM}$ be a pure state where Alice holds registers $(XN)$ and Bob holds register $M$, such that register $X$ is classical and
\[ \hmin{X}{M}_\rho \geq k. \]
Let Alice and Bob proceed for $t$-rounds, where in each round Alice generates a classical register $R_i$ and sends it to Bob, followed by Bob generating a classical register $S_i$ and sending it to Alice. Alice applies an isometry $V^{i}: \cH_X \otimes \cH_{N_{i-1}} \rightarrow \cH_X \otimes \cH_{N'_{i-1}} \otimes \cH_{R_{i}}$ (in round $i$) to generate $R_{i}$. Let 
 $\theta^i_{XN_iM_i}$ be the state at the end of round-$i$, where Alice holds registers $XN_i$ and Bob holds register $M_i$. Then,
 \[ \hmin{X}{M_t}_{\theta^t} \geq k-\sum_{j=1}^{t} \vert R_j\vert .\]
\end{fact}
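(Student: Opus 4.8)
The plan is to prove the bound by induction on the round number, tracking the single quantity $\hmin{X}{\,\cdot\,}$ as the register held by Bob evolves. The guiding principle is that, among all the operations performed during the protocol, only the transmissions \emph{from Alice to Bob} (the classical registers $R_i$) can lower the conditional min-entropy of $X$ given Bob's side information, and each such transmission costs at most $|R_i|$; every other operation leaves it unchanged or can only increase it.

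Concretely, I would establish the one-round recursion
\[
\hmin{X}{M_i}_{\theta^i} \;\ge\; \hmin{X}{M_{i-1}}_{\theta^{i-1}} - |R_i|
\]
by breaking round $i$ into its three moves. (a) Alice applies $V^i$ to create $R_i$: since $V^i$ acts only on Alice's registers $X,N_{i-1}$ and is safe on $X$, the marginal on $X$ together with Bob's register $M_{i-1}$ is untouched, so $\hmin{X}{M_{i-1}}$ does not change. (b) Alice sends the classical register $R_i$ to Bob, whose side information becomes $M_{i-1}R_i$; since $R_i$ is classical with $|R_i| = \log\dim\cH_{R_i}$, the standard classical-side-information chaining bound (Lemma~B.3 of~\cite{DPVR09}) gives $\hmin{X}{M_{i-1}R_i} \ge \hmin{X}{M_{i-1}} - |R_i|$. (c) Bob applies an isometry (safe on his classical registers) to produce $S_i$ and forwards $S_i$ to Alice; this is an isometry followed by a partial trace over $S_i$, each acting only on Bob's registers, so by Fact~\ref{fact102} the isometry preserves $\hmin{X}{\,\cdot\,}$ exactly and the partial trace can only increase it, hence $\hmin{X}{M_i} \ge \hmin{X}{M_{i-1}R_i}$. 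Chaining (a)--(c) gives the displayed recursion, where move (a) is also what lets me identify the $XM_{i-1}$ marginal of $\theta^{i-1}$ with the one present just before $R_i$ is transmitted.

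Iterating the recursion from the base case $\hmin{X}{M_0}_\rho = \hmin{X}{M}_\rho \ge k$ then yields $\hmin{X}{M_t}_{\theta^t} \ge k - \sum_{j=1}^t |R_j|$, as required. I do not anticipate a genuine obstacle: the argument is essentially bookkeeping, and the only external ingredients are Fact~\ref{fact102} (invariance/monotonicity of conditional min-entropy under isometries, resp.\ maps, on the conditioning system) and the one-line chaining inequality $\hmin{X}{MR} \ge \hmin{X}{M} - |R|$ for classical $R$. The one subtlety worth stating explicitly is that the registers $S_i$ received by Alice never enter the count, because they are not part of Bob's register and Alice's subsequent operations do not affect $\hmin{X}{M_i}$.
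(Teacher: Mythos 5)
Your proof is correct; note that the paper itself gives no proof of this statement (it is imported as a Fact from \cite{BJK21}), and your round-by-round recursion --- the safe isometry on Alice's side leaves the $XM_{i-1}$ marginal unchanged, the chain rule for classical side information from \cite{DPVR09} costs at most $\vert R_i\vert$, and isometries/partial trace on Bob's register cannot decrease $\hmin{X}{M}$ by Fact~\ref{fact102} --- is exactly the standard argument along the lines of the cited work. The one point that genuinely needs to be made explicit is the one you did make: step (a) relies on the paper's convention that $V^i$ is safe on the classical register $X$, since an arbitrary isometry on $\cH_X\otimes\cH_{N_{i-1}}$ could overwrite $X$ and the claimed bound would then fail.
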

\begin{claim}
\label{factcor:2nmextext}
 Let $\sigma_{X\hat{X}NY\hat{Y}M}$ be a $(k_1,k_2)\mhyphen \qmas$ with $\vert X \vert=n$ and $\vert Y \vert=n$. Let $2\nmext:\{0,1\}^n \times\{0,1\}^{n} \to \lbrace 0,1\rbrace^{m}$ be an efficient  $(k_1,k_2, \eps)$-quantum secure $2$-source non-malleable extractor. Let $S=2\nmext(X,Y)$. Then,
$$ \| \sigma_{SYM} - U_{m} \otimes \sigma_{YM} \|_1 \leq \eps.$$
\end{claim}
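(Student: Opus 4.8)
The plan is to deduce this directly from the security of $2\nmext$ as a non-malleable-extractor (Definition~\ref{def:2nme}), by exhibiting a \emph{trivial} tampering that turns the given $(k_1,k_2)\mhyphen\qmas$ into a $(k_1,k_2)\mhyphen\nmas$ while leaving the marginal on the registers $SYM$ untouched.

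Concretely, I would take $U:\cH_X\otimes\cH_N\to\cH_X\otimes\cH_{X'}\otimes\cH_{\hat X'}\otimes\cH_{N'}$ to be the safe isometry that makes two fresh classical copies of $X$ (so that $X'=\hat X'=X$ and $N'=N$), and $V:\cH_Y\otimes\cH_M\to\cH_Y\otimes\cH_{Y'}\otimes\cH_{\hat Y'}\otimes\cH_{M'}$ to be the safe isometry $\ket{y}_Y\ket{\phi}_M\mapsto\ket{y}_Y\ket{\bar y}_{Y'}\ket{\bar y}_{\hat Y'}\ket{\phi}_{M'}$, where $\bar y$ is the bitwise complement of $y$. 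Then $\rho=(U\otimes V)\sigma(U\otimes V)^\dagger$ has $(X'Y')$ classical with copies $(\hat X'\hat Y')$, and $\Pr(Y\ne Y')_\rho=1$ since no string equals its complement, so by Definition~\ref{def:2source-qnmadversarydef} the state $\rho$ is a $(k_1,k_2)\mhyphen\nmas$; the required conditional min-entropy bounds are simply inherited from $\sigma$ being a $(k_1,k_2)\mhyphen\qmas$. Moreover, since $U,V$ are safe on the classical registers $X,Y$ and act as the identity on $N,M$, we have (after the harmless relabellings $N'=N$, $M'=M$) that $\rho_{XNYM}=\sigma_{XNYM}$; in particular, writing $S=2\nmext(X,Y)$, we get $\rho_{SYM}=\sigma_{SYM}$ and $\rho_{YM}=\sigma_{YM}$.

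Next I would invoke the first security inequality of Definition~\ref{def:2nme} applied to the $(k_1,k_2)\mhyphen\nmas$ $\rho$, namely
\[\Vert\rho_{2\nmext(X,Y)\,2\nmext(X',Y')\,Y\,Y'\,M'}-U_m\otimes\rho_{2\nmext(X',Y')\,Y\,Y'\,M'}\Vert_1\le\eps,\]
and then apply data processing (Fact~\ref{fact:data}) under the CPTP map that traces out the registers $2\nmext(X',Y')$ and $Y'$. This yields $\Vert\rho_{SYM'}-U_m\otimes\rho_{YM'}\Vert_1\le\eps$, and substituting $M'=M$, $\rho_{SYM}=\sigma_{SYM}$, and $\rho_{YM}=\sigma_{YM}$ gives exactly $\Vert\sigma_{SYM}-U_m\otimes\sigma_{YM}\Vert_1\le\eps$.

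There is no substantial obstacle: the statement is essentially a one-line reduction once the tampering is fixed. The only point requiring care is the bookkeeping demanded by Definition~\ref{def:2source-qnmadversarydef} — choosing $V$ to be genuinely fixed-point-free on $Y$ (the complement map works), and handling the auxiliary copy registers $\hat X',\hat Y'$ and the relabellings $N'=N$, $M'=M$ with \emph{safe} isometries so that the marginal of $\sigma$ on $SYM$ is preserved verbatim.
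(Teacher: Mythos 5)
Your proposal is correct and follows essentially the same route as the paper's proof: introduce a trivial tampering isometry pair so that the given $(k_1,k_2)\mhyphen\qmas$ becomes a $(k_1,k_2)\mhyphen\nmas$ whose marginal on $XNMY$ coincides with $\sigma$'s, apply Definition~\ref{def:2nme}, and trace out the tampered registers via Fact~\ref{fact:data}. The only difference is cosmetic: the paper merely asserts (in a footnote) that suitable isometries exist, while you exhibit them explicitly via copying $X$ and complementing $Y$, which is a perfectly valid instantiation.
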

\begin{proof}
Let $U: \cH_X \rightarrow \cH_X \otimes \cH_{X'} \otimes  \cH_{\hat{X}'}$, $V: \cH_Y \rightarrow \cH_Y \otimes \cH_{Y'} \otimes  \cH_{\hat{Y}'}$ be (safe) isometries such that for $\rho = (U \otimes V) \sigma (U \otimes V)^\dagger$, we have $X'Y'$ classical (with copies $\hat{X}'\hat{Y}'$ respectively) and either  $\Pr(X \ne X')_\rho =1$ or $\Pr(Y \ne Y')_\rho =1.$~\footnote{It is easily seen that such isometries exists.} Notice the state $\rho$ is a $(k_1,k_2)\mhyphen\nmas$. Since $2\nmext$ is a  $(k_1,k_2, \eps)$-quantum secure $2$-source  non-malleable extractor (see Definition~\ref{def:2nme}), we have $$ \| \rho_{SS'YY'M} - U_{m} \otimes \rho_{S'YY'M} \|_1 \leq \eps
.$$ 
Using Fact~\ref{fact:data}, we further get $$ \| \rho_{SYM} - U_{m} \otimes \rho_{YM} \|_1 \leq \eps.$$ 
The desired now follows by noting $\sigma_{XNMY} =  \rho_{XNMY}.$
\end{proof}

\begin{claim}\label{fact:markovapprox}
Let random variables $ABC$, $\tilde{A}\tilde{B}\tilde{C}$ be such that
\[A\leftrightarrow B \leftrightarrow C \quad ;\quad  \tilde{A}\leftrightarrow \tilde{B} \leftrightarrow \tilde{C} \quad ; \quad\Vert AB -\tilde{A}\tilde{B}  \Vert_1 \leq  \eps_1 \quad ;\quad \Vert BC - \tilde{B}\tilde{C} \Vert_1 \leq  \eps_2.\]Then, 
\[\Vert    ABC-\tilde{A}\tilde{B}\tilde{C} \Vert \leq 2\eps_1+ \eps_2.\]
\end{claim}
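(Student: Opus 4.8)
The plan is to exploit the two Markov conditions to write each of the joint distributions $ABC$ and $\tilde A\tilde B\tilde C$ as the image of its $(B,C)$-marginal (resp. $(\tilde B,\tilde C)$-marginal) under a stochastic map that depends only on the middle coordinate, and then to bound the total error by a two-step triangle inequality, the second step being a small perturbation argument to correct for the mismatch between the $B$- and $\tilde B$-marginals.

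First I would apply Fact~\ref{fact:usefultracedistance} with the role of the register ``$B$'' there played by the pair $(B,C)$ here, obtaining
$$\Vert \tilde A\tilde B\tilde C - ABC\Vert_1 \le \Vert \tilde B\tilde C - BC\Vert_1 + \mathbb{E}_{(b,c)\leftarrow \tilde B\tilde C}\Vert \tilde A\vert(\tilde B\tilde C=(b,c)) - A\vert(BC=(b,c))\Vert_1.$$
The Markov conditions $A\leftrightarrow B\leftrightarrow C$ and $\tilde A\leftrightarrow\tilde B\leftrightarrow\tilde C$ imply that the conditional distribution of $A$ (resp. $\tilde A$) given $(B,C)=(b,c)$ depends only on $b$. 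Hence the expectation collapses to $\mathbb{E}_{b\leftarrow\tilde B}\Vert \tilde A\vert(\tilde B=b) - A\vert(B=b)\Vert_1$, while the first term is at most $\eps_2$ by hypothesis.

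It then remains to show $\mathbb{E}_{b\leftarrow\tilde B}\Vert \tilde A\vert(\tilde B=b) - A\vert(B=b)\Vert_1 \le 2\eps_1$. Here I would write $\Vert \tilde A\tilde B - AB\Vert_1 = \sum_b \Vert \Pr(\tilde B=b)\,\tilde A\vert(\tilde B=b) - \Pr(B=b)\,A\vert(B=b)\Vert_1$ and split each summand by the triangle inequality into a part with common weight $\Pr(\tilde B=b)$ and a ``weight difference'' part, which yields
$$\Vert \tilde A\tilde B - AB\Vert_1 \ge \mathbb{E}_{b\leftarrow\tilde B}\Vert \tilde A\vert(\tilde B=b) - A\vert(B=b)\Vert_1 - \Vert \tilde B - B\Vert_1.$$
Since $\Vert \tilde B - B\Vert_1 \le \Vert \tilde A\tilde B - AB\Vert_1 \le \eps_1$ by Fact~\ref{fact:data} (data processing under partial trace), the expectation is at most $2\Vert \tilde A\tilde B-AB\Vert_1 \le 2\eps_1$. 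Combining the two displays gives $\Vert ABC - \tilde A\tilde B\tilde C\Vert_1 \le 2\eps_1 + \eps_2$.

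I do not expect a genuine obstacle in this claim; the only subtlety is bookkeeping, namely taking the conditional-distribution expectation against the correct ($\tilde B$) marginal and correctly accounting for the discrepancy between the $B$- and $\tilde B$-marginals, which is precisely what the extra $\Vert\tilde B - B\Vert_1$ correction term absorbs.
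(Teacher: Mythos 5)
Your proposal is correct and reaches the same bound $2\eps_1+\eps_2$, but it organizes the argument differently from the paper. The paper applies Fact~\ref{fact:usefultracedistance} conditioning only on the middle register $B$, uses both Markov conditions to factor the conditional joint $(A,C)\vert (B=b)$ into a product, splits that product via Fact~\ref{fact:tensortrace}, and then bounds the two conditional averages by $\Vert \tilde{A}\tilde{B}-AB\Vert_1$ and $\Vert \tilde{B}\tilde{C}-BC\Vert_1$ using Fact~\ref{traceavg}, picking up the extra $\Vert \tilde{B}-B\Vert_1\leq \eps_1$ from the decomposition itself. You instead condition on the pair $(B,C)$, pay $\eps_2$ immediately through $\Vert \tilde{B}\tilde{C}-BC\Vert_1$, use the Markov property only to collapse the $A$-conditionals to depend on $b$ alone, and then bound $\mathbb{E}_{b\leftarrow\tilde{B}}\Vert \tilde{A}\vert(\tilde{B}=b)-A\vert(B=b)\Vert_1$ by $\Vert \tilde{A}\tilde{B}-AB\Vert_1+\Vert \tilde{B}-B\Vert_1\leq 2\eps_1$ via an explicit weight-splitting argument plus Fact~\ref{fact:data}. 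What your route buys: it avoids Fact~\ref{fact:tensortrace} entirely, and it makes the marginal-mismatch bookkeeping explicit rather than citing Fact~\ref{traceavg} (your derivation is exactly the corrected, self-contained form of that averaging bound when the $B$- and $\tilde{B}$-marginals differ). What the paper's route buys: it treats $A$ and $C$ symmetrically and reuses the same facts invoked elsewhere in the paper. One small point of care in your version: when $\Pr(BC=(b,c))=0$ the conditional $A\vert(BC=(b,c))$ is formally undefined; adopting the convention that it equals $A\vert(B=b)$ (which does not change the joint distribution and keeps the Markov collapse valid) closes this, and the analogous convention is implicitly used by the paper as well.
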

\begin{proof}
Since $\Vert AB -\tilde{A}\tilde{B}  \Vert_1 \leq  \eps_1 $, using Fact~\ref{fact:data}, we have
\begin{equation}\label{eq:factmarkov9888}
    \Vert B -\tilde{B}  \Vert_1 \leq  \eps_1.
\end{equation}
Since $A \leftrightarrow B \leftrightarrow C $, $\tilde{A} \leftrightarrow \tilde{B} \leftrightarrow \tilde{C}$, we have
\begin{equation}\label{eq:factmarkov9999}
   (AC)\vert(B=b)=A\vert(B=b)\otimes C\vert(B=b) \quad ;\quad  (\tilde{A}\tilde{C})\vert(\tilde{B}=b) =\tilde{A}\vert(\tilde{B}=b) \otimes \tilde{C}\vert(\tilde{B}=b).
\end{equation}Consider,\begin{align*}
        \Vert  \tilde{A}\tilde{B}\tilde{C}-ABC\Vert_1 & \leq \Vert \tilde{B}-B\Vert_1+ \mathbb{E}_{b \leftarrow \tilde{B}}
        \Vert (\tilde{A}\tilde{C})\vert(\tilde{B}=b)
        )-(AC)\vert(B=b)\Vert_1& \mbox{(Fact~\ref{fact:usefultracedistance})}\\
        & = \Vert \tilde{B}-B\Vert_1+ \mathbb{E}_{b \leftarrow \tilde{B}}
        \Vert \tilde{A}\vert(\tilde{B}=b) \otimes \tilde{C}\vert(\tilde{B}=b)
        -A\vert(B=b) \otimes C\vert(B=b)\Vert_1& \mbox{(Eq.~\eqref{eq:factmarkov9999})}\\
         & = \Vert \tilde{B}-B\Vert_1+ \mathbb{E}_{b \leftarrow \tilde{B}} \left(\Vert \tilde{A}\vert(\tilde{B}=b)- A\vert(B=b)\Vert_1+ \Vert \tilde{C}\vert(\tilde{B}=b)- C\vert(B=b)\Vert_1 \right)& \mbox{(Fact~\ref{fact:tensortrace})}\\
         & =\Vert \tilde{B}-B\Vert_1+ \mathbb{E}_{b \leftarrow \tilde{B}}
        \left(\Vert \tilde{A}\vert(\tilde{B}=b)- A\vert(B=b)\Vert_1 \right)+\mathbb{E}_{b \leftarrow \tilde{B}}\left(\Vert \tilde{C}\vert(\tilde{B}=b)- C\vert(B=b)\Vert_1 \right)& \mbox{}\\
        & \leq\Vert \tilde{B}-B\Vert_1+\Vert \tilde{A}\tilde{B}-AB\Vert_1+\Vert \tilde{B}\tilde{C}-BC\Vert_1& \mbox{(Fact~\ref{traceavg})}\\
        & \leq \eps_1+\Vert \tilde{A}\tilde{B}-AB\Vert_1+\Vert \tilde{B}\tilde{C}-BC\Vert_1& \mbox{(Eq.~\eqref{eq:factmarkov9888})}\\
        & \leq 2\eps_1+\eps_2.
        \end{align*}
This completes the proof.
\end{proof}

Since the construction of the quantum secure  non-malleable extractor is composed of alternating extraction using $\Ext$ from Fact~\ref{fact:extractor}, we first state a claim about the invertibility of the $\Ext$ given the output (close to the desired).
\begin{claim}\label{claim:extractorsampling}
  Let $\Ext : \{ 0,1\}^n \times  \{ 0,1\}^d \to  \{ 0,1\}^m$ be an explicit $(2m,\eps)$-quantum secure strong extractor from Fact~\ref{fact:extractor} or $\IP$ from Fact~\ref{classicalip}~\footnote{Inputs are of same size in this case, i.e. $d=n$.} with error $\eps$~\footnote{We assume there is enough min-entropy in the sources for the extractors to work.}. Let $X,H,O,\tilde{O}$ be random variables such that,
  \[ \Vert XH -U_n \otimes U_d \Vert_1 \leq  \eps'   \quad ; \quad O =\Ext(X,H) \quad ; \quad \Vert  \tilde{O} -U_m\Vert_1 \leq \eps''.  \]
  Given samples from $(\tilde{o}, \tilde{h}) \leftarrow \tilde{O}\tilde{H} = \tilde{O} \otimes U_d$, we can sample from $\tilde{X} \vert (\tilde{O}\tilde{H} =\tilde{o}\tilde{h})$ (which is same as  sampling $\tilde{x}  \leftarrow\{ x :  \Ext(x,\tilde{h}) =\tilde{o} \}$) in time polynomial in $(m,n)$ such that 
  \[\Vert \tilde{O}\tilde{H}\tilde{X} -OHX \Vert_1 \leq \eps +\eps'+\eps'' \quad ; \quad \tilde{O}=\Ext(\tilde{X},\tilde{H}). \]
\end{claim}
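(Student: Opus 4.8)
The plan is to prove the claim in two stages corresponding to the two approximations the statement allows: first replace the "real" object $OHX$ by a cleaner idealized object in which the seed and source are exactly uniform, and then handle the fact that $\tilde O$ is only close to uniform. First I would introduce the ideal random variable $\bar X \otimes \bar H$ distributed as $U_n \otimes U_d$, together with $\bar O \defeq \Ext(\bar X,\bar H)$. Since $\Vert XH - U_n\otimes U_d\Vert_1 \le \eps'$ and the map $(x,h)\mapsto(\Ext(x,h),h,x)$ is a deterministic (hence data-processing) map, Fact~\ref{fact:data} gives $\Vert OHX - \bar O\bar H\bar X\Vert_1 \le \eps'$. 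So it suffices to bound $\Vert \tilde O\tilde H\tilde X - \bar O\bar H\bar X\Vert_1$ by $\eps+\eps''$, where $\tilde H = U_d$ is independent of $\tilde O$, and $\tilde X$ is sampled from the uniform distribution on the pre-image $\{x : \Ext(x,\tilde h)=\tilde o\}$ once $(\tilde o,\tilde h)$ is drawn.

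The heart of the argument is the observation that $\Ext$ from Fact~\ref{fact:extractor} is a \emph{linear} strong extractor: for every fixed seed $h$, $x\mapsto \Ext(x,h)$ is an $\F_2$-linear map $\{0,1\}^n\to\{0,1\}^m$ (and $\IP$ over $\F_{2^m}$ is likewise $\F_2$-linear in its first argument for fixed second argument). By the strong extractor property with source min-entropy $n \ge 2m$, for each $h$ this linear map is surjective and all its fibers have the same size $2^{n-m}$; equivalently, $\bar O\bar H\bar X$ has the explicit form: sample $\bar h\leftarrow U_d$, sample $\bar o\leftarrow U_m$ (this uses that $\Ext$ is perfectly balanced for each fixed seed — which follows from surjectivity of the linear map, a consequence of the strong extractor guarantee on the flat source), then sample $\bar x$ uniformly from the affine subspace $\{x:\Ext(x,\bar h)=\bar o\}$. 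Sampling from such a fiber is exactly solving a linear system $A_h x^\dagger = o^\dagger$ for the seed-indexed matrix $A_h$, which by Fact~\ref{fact:samplingeffifeintly} can be done in time polynomial in $(m,n)$ (and $\vert\F\vert$, which is $O(2^m)$ in the $\IP$ case, still polynomial in the relevant parameters). Now $\tilde O\tilde H\tilde X$ is generated by the \emph{same} two-step procedure except that in the first step we draw $\tilde o\leftarrow \tilde O$ instead of $\bar o\leftarrow U_m$, while $\tilde H=\bar H=U_d$ and the conditional law of the last register given the first two is identical in both. Hence by Fact~\ref{fact:usefultracedistance} (or directly, since the two distributions are obtained from $\tilde O$ and $U_m$ by applying the same channel $o\mapsto (o, U_d, \text{uniform fiber sample})$) we get $\Vert \tilde O\tilde H\tilde X - \bar O\bar H\bar X\Vert_1 \le \Vert\tilde O - U_m\Vert_1 \le \eps''$, and combining with the first stage yields $\Vert\tilde O\tilde H\tilde X - OHX\Vert_1 \le \eps' + \eps''$, which is even stronger than the claimed $\eps+\eps'+\eps''$. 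Finally, $\tilde O = \Ext(\tilde X,\tilde H)$ holds by construction since $\tilde x$ is drawn from the fiber $\{x:\Ext(x,\tilde h)=\tilde o\}$.

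The main obstacle is justifying the "perfectly balanced for each fixed seed" property cleanly. The strong-extractor guarantee of Fact~\ref{fact:extractor} only gives that $\Ext(U_n,H)H$ is $\eps$-close to $U_m\otimes U_d$, which a priori allows slightly unequal fiber sizes. The resolution is precisely linearity: for a fixed seed $h$, $\Ext(\cdot,h)$ is a linear map, so its image is a subspace and all nonempty fibers have equal size; if the image were a proper subspace, $\Ext(U_n,h)$ would be supported on at most half of $\{0,1\}^m$ and no averaging over $h$ with weight on such $h$ could bring the output within $\eps<1/2$ of uniform — so for essentially all seeds the map is surjective, and by restricting (or by a negligible modification of $\Ext$ on the bad seed set, absorbed into $\eps$) we may assume surjectivity for every seed. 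I would spell this out as a short preliminary observation, citing the linearity clause of Fact~\ref{fact:extractor} and the analogous clause for $\IP$, and then the rest of the argument is the routine fiber-sampling bookkeeping sketched above together with the invocation of Fact~\ref{fact:samplingeffifeintly} for efficiency.
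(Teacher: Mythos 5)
There is a genuine gap at the step where you assert that $\Ext(\cdot,h)$ is \emph{perfectly balanced for every fixed seed}, i.e.\ that $\bar{O}\bar{H}$ is exactly $U_m\otimes U_d$. The strong-extractor guarantee only bounds the \emph{average} over seeds of the per-seed distance to uniform, so it rules out non-surjective seeds only up to a set of seeds of measure $\approx\eps$; your averaging argument ("no averaging over $h$ with weight on such $h$ could bring the output within $\eps<1/2$ of uniform") does not exclude a small bad-seed set, it only bounds its measure. The two repairs you suggest are not available: the extractor is fixed by the non-malleable-extractor construction (it cannot be modified on bad seeds), and restricting to good seeds would destroy the property $\tilde{O}\tilde{H}=\tilde{O}\otimes U_d$ that the claim requires and that the backtracking argument uses downstream. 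For $\IP$ the issue is not even hypothetical: the all-zero seed gives the zero map, so surjectivity for \emph{every} seed is false. Consequently your claimed sharper bound $\eps'+\eps''$ is not established, and patching it by carrying the bad-seed set through the estimate gives an extra additive term of order $\eps$ anyway.

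The fix is to drop the exact-balancedness claim entirely, and this lands you on the paper's proof, whose skeleton you otherwise already have: keep your first stage ($\Vert OHX-\bar{O}\bar{H}\bar{X}\Vert_1\le\eps'$ by Fact~\ref{fact:data}), then instead of asserting $\bar{O}\bar{H}=U_m\otimes U_d$, use only $\Vert\bar{O}\bar{H}-U_m\otimes U_d\Vert_1\le\eps$ (strong extractor) and $\Vert\tilde{O}\tilde{H}-U_m\otimes U_d\Vert_1\le\eps''$, so that by the triangle inequality $\Vert\bar{O}\bar{H}-\tilde{O}\tilde{H}\Vert_1\le\eps+\eps''$. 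Since $\bar{X}\bar{H}=U_n\otimes U_d$, the conditional law $\bar{X}\vert(\bar{O}\bar{H}=oh)$ is uniform on the fiber $\{x:\Ext(x,h)=o\}$, which is exactly the channel your sampler applies to $(\tilde{o},\tilde{h})$; applying this common channel to both marginals and invoking Fact~\ref{fact:data} gives $\Vert\bar{O}\bar{X}\bar{H}-\tilde{O}\tilde{X}\tilde{H}\Vert_1\le\eps+\eps''$, and combining with the first stage yields the stated bound $\eps+\eps'+\eps''$. Your use of linearity and Fact~\ref{fact:samplingeffifeintly} for efficient fiber sampling, and the observation $\tilde{O}=\Ext(\tilde{X},\tilde{H})$ by construction, are correct and coincide with the paper.
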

\begin{proof}
Let $ \hat{X}\hat{H} =U_n \otimes U_d $ and $\hat{O} = \Ext(\hat{X},\hat{H})$ be the output of the extractor. Since $ \Vert XH -  \hat{X}\hat{H} \Vert_1  \leq \eps',$ using Fact~\ref{fact:data} we have
\begin{equation}\label{eq:claimtrevext1}
    \Vert OXH - \hat{O}\hat{X}\hat{H} \Vert_1  \leq \eps'.
\end{equation}
Also, since $\hat{O} = \Ext(\hat{X},\hat{H})$ is the output of the strong extractor, we have 
\begin{equation*}
      \Vert \hat{O}\hat{H} - U_m \otimes U_d \Vert_1 \leq \eps.
\end{equation*}
Since $\tilde{O}\tilde{H} =\tilde{O}\otimes U_d$ and $\Vert \tilde{O} -U_m \Vert_1 \leq \eps''$, we have $\Vert \tilde{O}\tilde{H} - U_m \otimes U_d \Vert_1 \leq \eps''.$ Using triangle inequality, we have 
\begin{equation}\label{eq:claimtrevext2}
      \Vert \hat{O}\hat{H} - \tilde{O}\tilde{H} \Vert_1 \leq \eps+\eps''.
\end{equation}

We now proceed by noting that the extractor is linear. In other words, for every seed $H=h$, the output of the extractor $O=o$ is a linear function of the input $X=x$. For a fixed output $o$ of the extractor and seed $h$, we have a matrix $A_h$ of size $m \times n$ such that $A_h x^\dagger =o^\dagger$. Note for any fixing of the seed $h$ and output $o$, the size of
the set $\{ x  :  \Ext(x,h) =o \}$ is $2^{n- \rk(A_h) }$ and sampling $x$ uniformly from the set can be done efficiently from Fact~\ref{fact:samplingeffifeintly}.

Given samples $(\tilde{o}, \tilde{h}) \leftarrow \tilde{O}\tilde{H} =\tilde{O} \otimes U_d$, we use the following sampling strategy:
\begin{itemize}
    \item Sample $ \tilde{x} \leftarrow \tilde{X} \vert (\tilde{O}\tilde{H}=\tilde{o}\tilde{h})$ which is same as  sampling $\tilde{x} \leftarrow \{ x :  \Ext(x,\tilde{h}) =\tilde{o} \}$.
\end{itemize}
Note for any fixing of the seed $\tilde{h}$ and output $\tilde{o}$ of the extractor, $\hat{X} \vert (\hat{O}\hat{H}=\tilde{o}\tilde{h})=\tilde{X} \vert (\tilde{O}\tilde{H}=\tilde{o}\tilde{h})$ since $\hat{X}\hat{H}=U_n \otimes U_d$. Thus, from Fact~\ref{fact:data} and Eq.~\eqref{eq:claimtrevext2}, we have\begin{equation}\label{eq:claimtrevext3}
     \Vert \hat{O}\hat{X}\hat{H} -\tilde{O}\tilde{X}\tilde{H} \Vert_1 \leq \eps+\eps''.
\end{equation}Using Eqs.~\eqref{eq:claimtrevext1},~\eqref{eq:claimtrevext3} along with  triangle inequality, we have the desired.
\end{proof}
\begin{corollary}\label{corr:claim4}
  Let $\Ext : \{ 0,1\}^n \times  \{ 0,1\}^d \to  \{ 0,1\}^m$ be an explicit $(2m,\eps)$-quantum secure strong extractor from Fact~\ref{fact:extractor} or $\IP$ from Fact~\ref{classicalip}~\footnote{Inputs are of same size in this case, i.e. $d=n$.} with error $\eps$. Let 
  \[ \Vert XH -U_n \otimes U_d \Vert_1 \leq  \eps'   \quad ; \quad O =\Ext(X,H).  \]Then,
  $\Vert O -U_m  \Vert_1 \leq  \eps+ \eps'.$
\end{corollary}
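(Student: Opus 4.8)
The plan is to derive this directly from Claim~\ref{claim:extractorsampling} by discarding the registers we do not need, or, even more simply, by recycling the first few lines of that claim's proof. First I would introduce the idealized pair $\hat{X}\hat{H} = U_n \otimes U_d$ and set $\hat{O} = \Ext(\hat{X},\hat{H})$. Since $\Ext$ is a $(2m,\eps)$-quantum secure strong extractor (and $U_n$ trivially has min-entropy $n \geq 2m$; in the $\IP$ case the analogous min-entropy hypothesis from Fact~\ref{classicalip} is assumed to hold), its strongness gives in particular $\Vert \hat{O} - U_m \Vert_1 \leq \Vert \hat{O}\hat{H} - U_m \otimes U_d \Vert_1 \leq \eps$, where the first inequality is Fact~\ref{fact:data} (tracing out $\hat{H}$).

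Next I would transfer this bound to $O$. Since $\Vert XH - U_n \otimes U_d \Vert_1 = \Vert XH - \hat{X}\hat{H}\Vert_1 \leq \eps'$ and $O = \Ext(X,H)$, $\hat{O} = \Ext(\hat{X},\hat{H})$ are obtained by applying the same (deterministic, hence CPTP) map $(x,h)\mapsto(\Ext(x,h),x,h)$ to both sides, Fact~\ref{fact:data} yields $\Vert OXH - \hat{O}\hat{X}\hat{H}\Vert_1 \leq \eps'$, and then tracing out $XH$ gives $\Vert O - \hat{O}\Vert_1 \leq \eps'$. Combining with the previous paragraph via the triangle inequality gives $\Vert O - U_m\Vert_1 \leq \Vert O - \hat{O}\Vert_1 + \Vert \hat{O} - U_m\Vert_1 \leq \eps' + \eps$, which is exactly the claimed bound.

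There is essentially no obstacle here — the statement is a one-line weakening of Claim~\ref{claim:extractorsampling} (keep only the ``$\tilde{O} = U_m$'' conclusion with $\eps''=0$, or rather read off the marginal bound from its proof). The only point requiring the tiniest care is making sure the strong-extractor guarantee is being invoked with a valid min-entropy hypothesis; this is covered by the footnote in the statement assuming ``enough min-entropy in the sources for the extractors to work,'' and for the uniform source $U_n$ it is immediate. So the write-up is just: instantiate $\hat{X}\hat{H}$, use strongness to control $\Vert\hat O-U_m\Vert_1$, use data processing to control $\Vert O-\hat O\Vert_1$, add.
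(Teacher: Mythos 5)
Your proposal is correct and follows essentially the same route as the paper's proof: introduce the idealized pair $\hat{X}\hat{H}=U_n\otimes U_d$ with $\hat{O}=\Ext(\hat{X},\hat{H})$, use data processing (Fact~\ref{fact:data}) to get $\Vert O-\hat{O}\Vert_1\leq\eps'$, invoke the extractor guarantee for $\Vert\hat{O}-U_m\Vert_1\leq\eps$, and conclude by the triangle inequality. The only cosmetic difference is that you derive the bound on $\hat{O}$ by tracing the seed out of the strong-extractor guarantee, whereas the paper cites the extractor property directly; this changes nothing of substance.
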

\begin{proof}

Let $ \hat{X}\hat{H} =U_n \otimes U_d $ and $\hat{O} = \Ext(\hat{X},\hat{H})$ be the output of the extractor. Since $ \Vert XH -  \hat{X}\hat{H} \Vert_1  \leq \eps',$ using Fact~\ref{fact:data} we have\begin{equation*}
    \Vert OXH - \hat{O}\hat{X}\hat{H} \Vert_1  \leq \eps'.
\end{equation*}
Using Fact~\ref{fact:data} again, we further  have\begin{equation}\label{eq:claimdfvftrevext1}
    \Vert O - \hat{O}\Vert_1  \leq \eps'.
\end{equation}Also, since $\hat{O} = \Ext(\hat{X},\hat{H})$ is the output of the  extractor, we have 
\begin{equation*}
      \Vert \hat{O} - U_m  \Vert_1 \leq \eps.
\end{equation*}
Using Eq.~\eqref{eq:claimdfvftrevext1} along with triangle inequality, we have the desired. 
\end{proof}

\begin{lemma}\label{lem:reedsolomon}Let $\ecc : \F^{k}_q \to \F^{n}_q$ be an $(n,k,n-k+1)$ Reed-Solomon code from Fact~\ref{fact:ecc} for $q \geq n+1.$ Let random variable $M \in \F^{k}_q$ be uniformly distributed over $\F^{k}_q$. Let $C= \ecc(M)$ and $t$ be any positive integer such that $t < k$. Let $\mathcal{S}$ be a subset of $[n]$ such that $\vert \mathcal{S}\vert =t$ and  $\mathcal{Q}$ be a subset of $[k]$ such that $\vert \mathcal{Q}\vert =j \leq k-t.$ Then, for every fixed string $c$ in $\F^{t}_q$ and $C^{\mathcal{S}} = c$~\footnote{$C^{\mathcal{S}}$ corresponds to codeword corresponding to columns  $\mathcal{S}$ of codeword $C$.}, the distribution $M^c \defeq M \vert (C^{\mathcal{S}} = c)$ is such that $(M^c)^{\mathcal{Q}} =U_{j \log q}$. Further more, for any fixed string 
$l$ in $\F^{j}_q$, we can efficiently (in time polynomial in $(k,q)$) sample from the distribution $(M^c)^{[k] \setminus \mathcal{Q}} \vert ((M^c)^{\mathcal{Q}} =l ).$

\end{lemma}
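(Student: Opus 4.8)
The plan is to instantiate $\ecc$ as a \emph{systematic} Reed--Solomon (MDS) code and to reduce both assertions to the nonsingularity of square Vandermonde matrices. Fix $n$ distinct points $\alpha_1,\dots,\alpha_n\in\F_q$ (this uses $q\ge n+1$); for $M\in\F_q^k$ let $p_M$ be the unique polynomial of degree $<k$ with $p_M(\alpha_i)=M_i$ for all $i\in[k]$, and put $\ecc(M)=(p_M(\alpha_1),\dots,p_M(\alpha_n))$. This map is $\F_q$-linear, efficiently computable, systematic (so $C_i=M_i$ for $i\in[k]$, and in particular $M^{\mathcal Q}=C^{\mathcal Q}$ since $\mathcal Q\subseteq[k]$), and MDS with relative distance $\tfrac{n-k+1}{n}$, exactly as furnished by Fact~\ref{fact:ecc}. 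I will use repeatedly the elementary fact that for any $T\subseteq[n]$ with $|T|\le k$, the evaluation map $p\mapsto (p(\alpha_i))_{i\in T}$ on polynomials of degree $<k$ is a \emph{surjective} $\F_q$-linear map onto $\F_q^{|T|}$ all of whose fibres have size $q^{\,k-|T|}$ (surjectivity is Lagrange interpolation on $\le k$ points); hence if $M$, equivalently $p_M$, is uniform, then $(C_i)_{i\in T}$ is uniform on $\F_q^{|T|}$.

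For the first assertion, assume $\mathcal S\cap\mathcal Q=\emptyset$, which is the relevant case (in the application $\mathcal Q$ indexes the blocks used for extraction and $\mathcal S$ the disjoint positions carrying the imposed constraints). Apply the fact above with $T=\mathcal S\cup\mathcal Q$, of size $t+j\le k$: then $(C^{\mathcal S},M^{\mathcal Q})=(C^{\mathcal S},C^{\mathcal Q})$ is uniform on $\F_q^{t}\times\F_q^{j}$. In particular $C^{\mathcal S}$ is uniform (so every $c\in\F_q^{t}$ occurs), and for each fixed such $c$ the conditional distribution $(M^c)^{\mathcal Q}=M^{\mathcal Q}\mid (C^{\mathcal S}=c)$ is uniform on $\F_q^{j}$, i.e.\ equals $U_{j\log q}$. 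This is the Reed--Solomon argument of~\cite{CGL15}.

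For the second assertion, view $M$ as a row vector and let $G\in\F_q^{k\times n}$ be the generator matrix of $\ecc$. Writing $G_{\mathcal S}$ for the $k\times t$ submatrix of $G$ with columns indexed by $\mathcal S$, and $E_{\mathcal Q}$ for the $k\times j$ matrix whose columns are the standard basis vectors $e_i\in\F_q^{k}$ with $i\in\mathcal Q$, the event $\{C^{\mathcal S}=c,\ M^{\mathcal Q}=l\}$ is precisely the solution set of the linear system $M\cdot[G_{\mathcal S}\ E_{\mathcal Q}]=(c,l)$. By the surjectivity above this system is consistent and its solution set is an affine subspace of $\F_q^{k}$ of dimension $k-t-j$; sampling $M$ uniformly from it and reading off the coordinates in $[k]\setminus\mathcal Q$ yields exactly a sample from $(M^c)^{[k]\setminus\mathcal Q}\mid ((M^c)^{\mathcal Q}=l)$, and this can be done in time $\mathrm{poly}(k,q)$ by Fact~\ref{fact:samplingeffifeintly} applied to $A=[G_{\mathcal S}\ E_{\mathcal Q}]^{\dagger}$ (which is efficiently computable because $\ecc$, hence $G$, is).

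The routine parts are the linear-algebra bookkeeping and the invocation of Fact~\ref{fact:samplingeffifeintly}; the only genuine content is that $(C^{\mathcal S},M^{\mathcal Q})$ is jointly uniform, which rests on (i) choosing a \emph{systematic} MDS code so that $M^{\mathcal Q}$ is itself a block of codeword symbols, and (ii) the fact that any at-most-$k$ symbols of a dimension-$k$ MDS codeword are jointly uniform. The point requiring care is that $\mathcal S$ and $\mathcal Q$ must be disjoint for the stated conclusion to hold, since otherwise a coordinate would be simultaneously pinned down by $c$ and required to be uniform; if one instead instantiates $\ecc$ with a super-regular generator matrix, so that $[G_{\mathcal S}\ E_{\mathcal Q}]$ has full column rank for all admissible $\mathcal S,\mathcal Q$, the disjointness hypothesis becomes unnecessary.
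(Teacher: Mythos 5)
Your argument is internally coherent, but it does not prove the lemma as stated: you have replaced the code by a \emph{systematic} Reed--Solomon encoder and added the hypothesis $\mathcal S\cap\mathcal Q=\emptyset$, which the lemma does not contain and which the paper's application cannot supply. In the paper, $\mathcal S\subseteq[n]$ indexes codeword positions while $\mathcal Q\subseteq[k]$ indexes coordinates of the message $M$ of the coefficient-form encoder $C_i=\sum_{j} M_j\alpha_i^{j-1}$ (this is the generator matrix the paper fixes), so there is no disjointness condition to speak of; the entire content of the lemma is that fixing $t$ arbitrary codeword symbols, each of which involves \emph{all} message coordinates, still leaves the prescribed $j\le k-t$ message coordinates exactly uniform, with the remaining coordinates efficiently sampleable. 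In the application (Claim~\ref{claim:new2nmextsampling1new}) one takes $\mathcal Q=[j]$ and $\mathcal S=\samp(\IP(x_1,y_1))$, and nothing prevents the sampled positions from landing among the first $j$ positions; under your systematic instantiation the needed conclusion is then simply false, and your closing remark that disjointness is ``the relevant case'' rests on a misreading of how the lemma is used: the Reed--Solomon argument there is meant to \emph{move} the imposed constraints onto variables outside the blocks used for extraction, not to assume that the constrained positions avoid those blocks.

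The step your route skips is exactly the content of the paper's proof: writing the conditioning as $G_{\mathcal S}\,m^{\dagger}=c^{\dagger}$ with $G_{\mathcal S}$ the $t\times k$ Vandermonde-type submatrix of the generator matrix, one shows that the $t$ constraints can be solved using an invertible $t\times t$ minor supported on columns of $[k]\setminus\mathcal Q$; Gaussian elimination makes those $t$ coordinates the dependent variables, so the coordinates in $\mathcal Q$ (together with the other free coordinates) are exactly uniform, and the conditional sampling of the second assertion follows in time $\mathrm{poly}(k,q)$ via Fact~\ref{fact:samplingeffifeintly}. (For the prefix $\mathcal Q=[j]$ actually used in Claim~\ref{claim:new2nmextsampling1new}, the relevant minor has consecutive exponents and is a scaled Vandermonde matrix, so its invertibility is genuine; this is where the real work sits.) Your observation that any $\le k$ symbols of an MDS codeword with uniform message are jointly uniform is correct, but it only buys the disjoint case; to make your approach carry the paper's argument you would either have to prove the invertible-minor statement anyway, or change the code used in Algorithm~\ref{alg:2nmExtnew} and redo Claim~\ref{claim:new2nmextsampling1new} so that the sampler never touches the systematic part -- neither of which your write-up does.
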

\begin{proof}
The generator matrix for $\ecc$ is given by
$$G = \begin{pmatrix} 1 & \alpha_{1} & \ldots & \alpha^{{k-1}}_{1} \\ 1 & \alpha_{2} & \ldots & \alpha^{{k-1}}_{2} \\ \vdots & \vdots & \ddots & \vdots \\ 1 & \alpha_{n} & \ldots & \alpha^{{k-1}}_{n} \end{pmatrix},$$ 
where $\alpha_1, \alpha_2, \ldots, \alpha_n$ are distinct non-zero elements of $\mathbb{F}_{q}$ (this is possible since $q \geq n+1$). Let $\mathcal{S}= \{ s_1, s_2, \ldots, s_t \}$ and $$G_{\mathcal{S}} = \begin{pmatrix} 1 & \alpha_{s_1} & \ldots & \alpha^{{k-1}}_{s_1} \\ 1 & \alpha_{s_2} & \ldots & \alpha^{{k-1}}_{s_2} \\ \vdots & \vdots & \ddots & \vdots \\ 1 & \alpha_{s_{t}} & \ldots & \alpha^{{k-1}}_{s_{t}} \end{pmatrix}.$$ Note we have   $G_{\mathcal{S}} M^\dagger = ({C^{\mathcal{S}}})^\dagger$. By fixing $C^{\mathcal{S}}=c$, we have imposed the following linear constraints as given by
$G_{\mathcal{S}} M^c{^\dagger} = c^\dagger$. Note $G_{\mathcal{S}}$ is a Vandermonde matrix for any fixed subset $\mathcal{S} \subset [n]$, $ \vert \mathcal{S} \vert =t $ and $t <k$. Thus, any $t \times t$ submatrix of $G_{\mathcal{S}}$ has full rank. Note $k-j \geq t.$ Note $M^c{}$ is exactly the distribution,
$ m \leftarrow\{ m \in \F_q^k : G_{\mathcal{S}} m^\dagger = c^\dagger\}$.  Let $\mathcal{P} = [k] \setminus \mathcal{Q}$ and $\mathcal{P}= \{ p_1, p_2, \ldots, p_{k-j}\}$ with elements in the set $\mathcal{P}$ in any fixed order. Equivalent way to define $M^c{}$ is the distribution,
$m \leftarrow\{ m \in \F_q^k : \tilde{G} m^\dagger = \tilde{c}^\dagger\}$, such that $\tilde{G}$ is $t \times k$ matrix, the submatrix of  $\tilde{G}$ corresponding to columns given by $\mathcal{P}'= \{p_1,p_2, \ldots, p_t\}$ is exactly $\id_{t \times t}$ (since any $t \times t$ submatrix of $G_{\mathcal{S}}$ has full rank). Note one can get $(\tilde{G},\tilde{c})$ from  $({G}_{\mathcal{S}},{c})$ using standard Gaussian elimination procedure (in time polynomial in $(k,q)$). Thus, sampling $m=(m_1, m_2, \ldots, m_t)$ from the distribution $M^c{}$ can be achieved as follows:
\begin{itemize}
    \item Sample for every $i \in \mathcal{Q}$, $m_i$ uniformly and independently from $\F_q$.
     \item Sample for every $i \in \mathcal{P}\setminus \mathcal{P}'$, $m_i$ uniformly and independently from $\F_q$.
    \item For every $i \in \mathcal{P}'$, set $m_i \in \F_q$ such that it satisfies the linear constraints $\tilde{G}_{i}m^\dagger =\tilde{c}^\dagger$.~\footnote{Note the variables in $\mathcal{Q}$ that appear in this linear constraint are already sampled before.}
\end{itemize}
Thus, $(M^c)^{\mathcal{Q}} =U_{j \log q}$. Further more, for any fixed string 
$l$ in $\F^{j}_q$, we can efficiently (in time polynomial in $(k,q)$) sample from the distribution $(M^c)^{[k] \setminus \mathcal{Q}} \vert ((M^c)^{\mathcal{Q}} =l ).$ 
\end{proof}

 \section{A quantum secure non-malleable code in the split-state model\label{sec:nmcodes}}

\begin{theorem}[Quantum secure non-malleable code in the split-state model]\label{lem:qnmcodesfromnmext} Let $2\nmext : \{0,1\}^{n} \times \{0,1\}^{n} \to \{0,1\}^m$ be an $(n-k,n-k,\eps)$-quantum secure $2 \mhyphen$source non-malleable extractor. There exists an $(m,n,\eps')$-quantum secure non-malleable code with parameter $\eps' = 2^m(4(2^{-k} + \eps) + \eps)$.
\end{theorem}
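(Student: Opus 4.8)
The plan is to construct the encoding/decoding scheme from the non-malleable-extractor and then reduce the non-malleable-code security to the non-malleable-extractor security. First I would define $\enc(s)$ to be the operation that samples a uniformly random preimage $(x,y) \leftarrow (2\nmext)^{-1}(s)$, i.e. samples $(X,Y)$ uniform on $\{0,1\}^n\times\{0,1\}^n$ conditioned on $2\nmext(X,Y)=s$, and $\dec(x,y) = 2\nmext(x,y)$ (outputting $\sm$ only if it is told to do so by a separate consistency check — but for the existential statement no efficiency is needed, so $\dec = 2\nmext$ suffices and the perfect-decoding requirement of Definition~\ref{def:enc-dec_schemes} holds by construction). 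The key structural observation, exactly as in the classical \cite{CG14a} argument, is that for a \emph{uniform} message $S=U_m$ the codeword $(X,Y)=\enc(S)$ is distributed \emph{exactly} uniformly on $\{0,1\}^{2n}$, so in particular $\Hmin(X)_\sigma = \Hmin(Y)_\sigma = n$ in the state $\sigma_{SXY}$ of Definition~\ref{def:splitstateadv}; more importantly, in the purified picture where $\sigma_{S\hat S X\hat X Y \hat Y}$ carries copies of the classical registers, the reduced state on $X\hat X Y \hat Y$ is a $(n,n)$-source and, after the adversary acts, one gets an object that fits the $(n-k,n-k)\mhyphen\qmas$ framework — the slack $k$ is what Fact~\ref{measuredmax}/Fact~\ref{fact:minentropydecrease} buys us when we condition on the ``tampering changed at least one side'' event.

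Next I would split the analysis according to the two cases that the adversary $\mathcal{A}=(U,V,\psi)$ can produce, mirroring Definition~\ref{def:2source-qnmadversarydef}: either $X'=X$ with probability $1$ (so the tampering on Alice's side is the identity on the classical content, and then $(X',Y')=(X,g(Y))$-like behaviour forces $S'=\dec(X,Y')$ to be computed from the \emph{original} $X$), or $Y'=Y$ with probability $1$, or a genuine ``both changed'' branch where $\Pr(X\ne X')=1$ or $\Pr(Y\ne Y')=1$. Actually the cleanest route is: let $p$ be the probability that $X'=X$ and $Y'=Y$ simultaneously; condition on this ``copy'' event and its complement. On the copy event, $\dec$ outputs $S$ itself, which is handled by the $\sm$ symbol: $\mathcal{D}_{\mathcal A}$ puts mass $p$ on $\sm$. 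On the complement event (probability $1-p$), after renormalising we land in a state that is (up to the min-entropy loss $\log(1/(1-p))$, controlled because the non-malleable-extractor already tolerates $k$ bits of deficiency — this is where the $2^{-k}$ terms enter) a $(n-k,n-k)\mhyphen\nmas$, and on each atom of that event either $X\ne X'$ everywhere or $Y\ne Y'$ everywhere, so Definition~\ref{def:2nme} applies and tells us that $2\nmext(X',Y')$ — which is exactly $S'=\dec(X',Y')$ — is $\eps$-close to $U_m$ and, crucially, \emph{independent of $2\nmext(X,Y)=S$} even conditioned on the side registers. Here I would invoke Claim~\ref{factcor:2nmextext} (and its two-sided version) to pass from ``joint distribution close to $U_m\otimes(\cdot)$'' to the simulability statement: define $\mathcal{D}_{\mathcal A}$ on the complement event to be the distribution of $S'$ marginalised (it depends only on $U,V,\psi$ since $S$ has been erased), and the non-malleability guarantee says $\rho_{SS'}\approx_{\eps}$ the product, hence $\rho_{SS'}\approx Z\,\cp(\mathcal D_{\mathcal A},Z)$ after folding the $\sm$ mass back in.

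Assembling the error: the min-entropy deficiency/conditioning on the ``not-copy'' event costs roughly $2(2^{-k}+\eps)$ per side by Fact~\ref{fact:minentropydecrease}, once for producing a valid $\qmas$ on each of the $X$ and $Y$ registers, giving $4(2^{-k}+\eps)$, plus the final $+\eps$ from the non-malleable-extractor's own error in Definition~\ref{def:2nme} — matching the claimed $\eps' = 4(2^{-k}+\eps)+\eps$. The \textbf{main obstacle} I anticipate is the bookkeeping in the purified picture: showing that after measuring $(X',Y')$ in the computational basis (the measurement $\mathcal{M}$ of Definition~\ref{def:splitstateadv}) and conditioning on the ``some side changed'' event, the resulting pure-state extension genuinely satisfies the two conditional-min-entropy bounds $\hmin{X}{MY\hat Y}\ge n-k$ and $\hmin{Y}{NX\hat X}\ge n-k$ required by Definition~\ref{qmadvk1k2} — one must track that the adversary's isometries are safe on the classical registers, that the copies $\hat X,\hat Y$ are correctly maintained, and that conditioning on a probability-$\ge 1-(\text{something})$ event only degrades min-entropy by the log of that probability (Fact~\ref{fact:minentropydecrease}), while simultaneously arranging that on \emph{each atom} of the event exactly one of $X\ne X'$, $Y\ne Y'$ holds so that Definition~\ref{def:2source-qnmadversarydef} is literally applicable. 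Everything else is a routine triangle-inequality chaining using Fact~\ref{fact:data}, Fact~\ref{traceavg} and Fact~\ref{fact:traceconvex}.
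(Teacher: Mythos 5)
Your overall strategy is the paper's: encode by inverting $2\nmext$, decode by evaluating it, pass to a purified uniform-source picture with classical copies, condition on which side of the codeword was altered, pay for the conditioning via Fact~\ref{fact:minentropydecrease}, and invoke the $(n-k,n-k)$ security of $2\nmext$ on each branch, with the unchanged branch simulated by $\sm$. However, there are two concrete gaps. First, your ``cleanest route'' conditions only on the copy event $\{X'=X,\ Y'=Y\}$ versus its complement, and the complement is in general \emph{not} a valid $(n-k,n-k)\mhyphen\nmas$: Definition~\ref{def:2source-qnmadversarydef} requires $\Pr(X\ne X')=1$ or $\Pr(Y\ne Y')=1$ over the whole conditioned state, whereas the complement can mix atoms with ``only $X$ changed'' and atoms with ``only $Y$ changed''. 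You flag this as the main obstacle but do not resolve it; the resolution in the paper is to condition on the full indicator pair $(C,D)\in\{0,1\}^2$ (did $X$ change, did $Y$ change), giving four branches each of which either has probability at most $2^{-k}$ or, after losing at most $k$ bits of min-entropy, is a legitimate $(n-k,n-k)\mhyphen\qmas$/$\nmas$; this four-way split is exactly where the factor $4$ in $\eps'=4(2^{-k}+\eps)+\eps$ comes from, not from ``$2(2^{-k}+\eps)$ per side''.

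Second, your structural claim that for a uniform message $S$ the codeword $(X,Y)=\enc(S)$ is \emph{exactly} uniform on $\{0,1\}^{2n}$ is false in general, since the preimage sizes $|2\nmext^{-1}(s)|$ need not be equal. What is true, and what the paper proves, is that the encoded state $\sigma_{SXY}$ is $\eps$-close to the state $\theta_{S_1XY}$ obtained by taking $X,Y$ exactly uniform and setting $S_1=2\nmext(X,Y)$: conditioned on the message value the two conditional distributions on $(X,Y)$ coincide, and $\Vert\theta_{S_1}-U_m\Vert_1\le\eps$ by Claim~\ref{factcor:2nmextext}. This approximation is precisely the trailing $+\eps$ in $\eps'$; you instead attribute that term to the extractor error inside Definition~\ref{def:2nme}, which is already accounted for within each of the four branch errors $2^{-k}+\eps$. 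Neither issue is fatal to the approach, but as written the branch decomposition would fail to meet the $\nmas$ definition and the error bookkeeping would not assemble into the stated $\eps'$.
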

\begin{proof}
Let $\sigma_S = U_m$. Given a $2\nmext : \{0,1\}^{n} \times \{0,1\}^{n} \to \{0,1\}^m$, let $(\enc,\dec)$ be defined as follows: 
\begin{itemize}
    \item For any fixed message $S=s \in  \{0, 1\}^m$, the encoder, $\enc$ outputs a uniformly random string from the set  $2\nmext^{-1}(s) \subset  \{ 0,1\}^{2n}$. 
    \item $\dec(x,y) = 2\nmext (x,y)$ for every $(x,y) \in  \{0, 1\}^{2n}$.
\end{itemize}
Note for every $s \in \{0,1 \}^m$, $\Pr(\dec(\enc(s)))=1$.

We first show that $(\enc, \dec)$ is a quantum secure non-malleable code in the split-state model (see Definition~\ref{def:nmcodes_qs}~and~Figure~\ref{fig:splitstate1}). Note $$ \sigma_{S\enc(S)}=\sigma_{SXY} =  \sum_{s \in \{0,1 \}^m} \left( \frac{1}{ 2^{m}}\ketbra{s } \otimes  \left( \frac{1}{\vert 2\nmext^{-1}(s)  \vert} \sum_{(x,y) : 2\nmext(x,y)=s} \ketbra{xy} \right)\right) $$ after encoding a uniform message $S$. Note $(X,Y)=\enc(S)$.

Let $\theta_{X\hat{X}X_1 Y\hat{Y}Y_1 } = \theta_{X\hat{X}X_1 }  \otimes \theta_{Y\hat{Y}Y_1 }$ be a pure state such that $\theta_X = \theta_Y =U_n$, ($X_1, \hat{X}$) are copies of $X$, ($Y_1, \hat{Y}$) are copies of $Y$ respectively. Note $\theta_{XY} =U_{n} \otimes U_n$. Let  $S_1=2\nmext(X_1,Y_1)$, $S=2\nmext(X,Y)$. For the state $\theta$ with the following assignment (terms on the left are from Definition~\ref{qmadvk1k2} and on the right are from here),
\[(X,\hat{X},N,M,Y,\hat{Y}) \leftarrow (X_1,\hat{X},X,Y,Y_1,\hat{Y}),\]one can note $\theta$ is an $(n,n)\mhyphen \qmas$. Using Claim~\ref{factcor:2nmextext}~\footnote{$(n,n)\mhyphen \qmas$ is also an $(n-k,n-k)\mhyphen \qmas$.} along with Fact~\ref{fact:data}, we have 
$$ \Vert \theta_{2\nmext(X_1 , Y_1)} - U_m \Vert_1  \leq  \eps  .$$Thus,
\begin{equation}\label{eq:proof}
     \Vert\theta_{S_1XY} -  \sigma_{SXY} \Vert_1 \leq  \Vert \theta_{S_1} -   U_m \Vert_1  \leq \eps.
\end{equation}First inequality follows from Fact~\ref{fact:data} and noting 
$$\theta_{XY} \vert (S_1=s)=\sigma_{XY} \vert (S=s) = \left( \frac{1}{\vert 2\nmext^{-1}(s)  \vert} \sum_{(x,y) : 2\nmext(x,y)=s} \ketbra{xy} \right).$$

Let $\mathcal{A}=(U,V,{\psi})$ be the quantum split-state adversary from Definition~\ref{def:splitstateadv}. Note ${\psi}_{NM}$ is an entangled pure state, $U: \cH_{X} \otimes \cH_N \rightarrow \cH_{X^\prime} \otimes \cH_{N^\prime}$ and $V: \cH_{Y} \otimes \cH_M \rightarrow  \cH_{Y'} \otimes \cH_{M'}$ are isometries without any loss of generality.

\begin{figure}
\centering
\begin{tikzpicture}


\draw (4.2,6.4) -- (15,6.4);
\node at (4,6.4) {$\hat{X}$};

\draw (4.2,6.8) -- (15,6.8);
\node at (4,6.8) {$\hat{Y}$};



\draw (4.2,5.3) -- (11,5.3);
\node at (4,5.3) {$X_1$};

\node at (4,4.5) {$Y_1$};
\draw (4.2,4.5) -- (5,4.5);
\draw (5,4.5) -- (11,4.5);
\node at (14.5,5.2) {$S_1$};
\draw (14,5) -- (15,5);

\draw (11,4) rectangle (14,6);
\node at (12.5,5) {$\dec$};


\draw [dashed] (4.68,-0.8) -- (4.68,7);
\draw [dashed] (8.3,-0.8) -- (8.3,7);
\draw [dashed] (10,-0.8) -- (10,7);

\node at (6.2,1.6) {$\ket{\psi}_{NM}$};
\node at (4.54,-0.8) {$\theta$};
\node at (8.1,-0.8) {$\hat{\rho}'$};
\node at (10.2,-0.8) {${\rho}'$};

\node at (4.2,3) {$X$};
\node at (8,3) {$X'$};
\node at (10.5,3.1) {$X'$};
\node at (9.8,2.4) {$\hat{X}'$};
\draw (4,2.8) -- (6,2.8);
\draw (7,2.8) -- (8.5,2.8);
\draw (9.5,2.9) -- (11,2.9);
\draw (9.5,2.6) -- (9.8,2.6);
\node at (14.5,2.6) {$S'$};
\draw (14,2.4) -- (15,2.4);

\node at (4.2,0.4) {$Y$};
\node at (8,0.0) {$Y'$};
\draw (4,0.2) -- (6,0.2);
\draw (7,0.2) -- (8.5,0.2);
\draw (9.5,0.3) -- (11,0.3);
\draw (9.5,0) -- (9.8,0);
\node at (10.5,0.5) {$Y'$};
\node at (9.8,-0.2) {$\hat{Y}'$};

\draw (6,2) rectangle (7,3);
\node at (6.5,2.5) {$U$};
\draw (6,0) rectangle (7,1);
\node at (6.5,0.5) {$V$};

\node at (6.5,-0.4) {$\mathcal{A}=(U,V,\psi)$};
\draw (5.2,-0.8) rectangle (7.8,3.8);

\draw (5.5,1.5) ellipse (0.2cm and 1cm);
\node at (5.5,2) {$N$};
\draw (5.7,2.2) -- (6,2.2);
\node at (7.4,2.1) {$N'$};
\draw (7,2.2) -- (7.2,2.2);
\node at (5.5,1) {$M$};
\draw (5.7,0.7) -- (6,0.7);
\node at (7.4,0.9) {$M'$};
\draw (7.0,0.7) -- (7.2,0.7);

\draw (8.5,2.3) rectangle (9.5,3.2);
\node at (9,2.8) {$\mathsf{CNOT}$};

\draw (8.5,-0.3) rectangle (9.5,0.7);
\node at (9,0.2) {$\mathsf{CNOT}$};

\draw (11,-0.5) rectangle (14,3.2);
\node at (12.5,1.5) {$\dec$};

\end{tikzpicture}
\caption{Analysis of a quantum secure non-malleable code in the split-state model}\label{fig:splitstate2}
\end{figure}

In the analysis, we consider a pure state ${\rho}'$ which is generated from $\theta_{X\hat{X}X_1Y\hat{Y}Y_1 } = \theta_{X\hat{X}X_1 }  \otimes \theta_{Y\hat{Y}Y_1}$, in the following way (see Figure~\ref{fig:splitstate2}): 
\begin{itemize}
\item  Let $\hat{\rho}' = (U \otimes V)(\theta \otimes \ket{\psi}_{}\bra{\psi}_{})(U \otimes V)^\dagger$ be the state after the action of quantum split-state adversary.
\item Let ${\rho}'$ be the pure state extension after measuring the registers $(X'Y')$ in the computational basis in $\hat{\rho}'$. Note the measurement in the  computational basis of registers $(X',Y')$ corresponds to applying $\mathsf{CNOT}$~\footnote{By $\mathsf{CNOT}$ on $n$ qubit register, we mean applying $\mathsf{NOT}$ gate on different ancilia each time conditioned on each qubit independently. This operation amounts to performing measurement in the computational basis.} to modify $(X',Y')$  $ \to (X'\hat{X}',Y'\hat{Y}')$ such that $\hat{X}',\hat{Y}'$ are copies of $X',Y'$ respectively.


\end{itemize}

Let binary variables $C,D$ (with copies $\hat{C},\hat{D}$) be such that $C=1$ indicates $X_1 \ne X'$ and $D=1$ indicates $Y_1 \ne Y'$ (in state ${\rho}'$). 

Let  $S'=2\nmext(X',Y')$. Since $ \Vert\theta_{S_1XY} -  \sigma_{SXY} \Vert_1  \leq \eps$ from Eq.~\eqref{eq:proof}, using Fact~\ref{fact:data} we have
$ \Vert \rho'_{S_1S'} -\rho_{SS'} \Vert_1 \leq \eps.$ We will show

\begin{equation}\label{eq:imp2}
    \Vert \rho'_{S_1S'}- Z  \cp (\mathcal{D}_{\mathcal{A}} ,Z)   \Vert_1  \leq 4(2^{-k}+\eps),
\end{equation}
for $Z=U_m$ and distribution  $\mathcal{D}_{\mathcal{A}}$ that depends only on $\mathcal{A}$. We get that $$\Vert {\rho}_{SS'}- Z  \cp (\mathcal{D}_{\mathcal{A}} ,Z)   \Vert_1  \leq 4(2^{-k}+\eps)+\eps,$$ from  $\Vert \rho'_{S_1S'} -\rho_{SS'} \Vert_1 \leq \eps $ and the triangle inequality, which implies the desired (using Fact~\ref{traceavg}), 
$$\forall s \in \{0,1\}^m: \qquad \Vert S'_s- \cp (\mathcal{D}_{\mathcal{A}} ,s)   \Vert_1  \leq 2^m(4(2^{-k}+\eps)+\eps).$$

We now proceed to prove Eq.~\eqref{eq:imp2}. For $C=c \in \{ 0,1\}, D=d \in \{ 0,1\}$, denote $\rho'^{c,d} = \rho'\vert ((C,D)=(c,d))$.

\begin{claim}\label{claim:987}For every $c,d \in \{ 0,1 \}$ except $(c,d)=(0,0)$, we have 
$$\Pr((C,D)=(c,d))_{\rho'} \Vert   \rho'^{c,d}_{S_1S' } - Z \otimes \rho'^{c,d}_{S'} \Vert_1 \leq 2^{-k} + \eps .$$
For $(c,d)=(0,0)$, we have 
$$\Pr((C,D)=(c,d))_{\rho'} \Vert  \rho'^{c,d}_{S_1S' }  - ZZ \Vert_1 \leq 2^{-k} + \eps ,$$ where $Z=U_m$.
\end{claim}
\begin{proof}

Fix $c,d \in \{0,1\}$. Suppose $\Pr((C,D)=(c,d))_{\rho'}  \leq 2^{-k}$, then we are done. Thus we assume otherwise. Note in state $\rho'$, we have
\[ \rho'_{X_1Y_1Y'\hat{Y}'\hat{Y}M'D \hat{D}} =U_n  \otimes \rho'_{Y_1Y'\hat{Y}'\hat{Y} M'D \hat{D}} \quad ; \quad \rho'_{Y_1X_1X'\hat{X}'\hat{X}N'C \hat{C}} =U_n  \otimes \rho'_{X_1X'\hat{X}'\hat{X}N'C \hat{C}}.\] Thus,
\[\hmin{X_1}{Y_1Y'\hat{Y}'\hat{Y}M'D \hat{D}}_{\rho'}=n \quad ;\quad \hmin{Y_1}{X_1X'\hat{X}'\hat{X}N'C \hat{C}}_{\rho'}=n. \]
Using Fact~\ref{fact102}, we have 
\[  \hmin{X_1}{Y_1Y'\hat{Y'}\hat{Y}M'}_{\rho'} = n \quad ; \quad  \hmin{Y_1}{X_1X'\hat{X'}\hat{X}N'}_{\rho'} = n. \]We use Fact~\ref{fact:minentropydecrease},  with the following assignment of registers (below the registers on the left are from Fact~\ref{fact:minentropydecrease} and the registers on the right are the registers in this proof),
 $$(\rho_A, \rho_B, \rho_{C}) \leftarrow (\rho'_{X_1} ,  \rho'_{Y_1Y'\hat{Y'}\hat{Y}M'}, \rho'_{CD}).$$From Fact~\ref{fact:minentropydecrease}, we get that
 \[ \hmin{X_1}{Y_1Y'\hat{Y'}\hat{Y}M'}_{\rho'^{c,d}} \geq n-k.\] 
 Similarly $\hmin{Y_1}{X_1X'\hat{X'}\hat{X}N'}_{\rho'^{c,d}} \geq n-k.$

Let $(c,d)=(0,0)$. Note $\rho'^{0,0}$ is an  $(n-k,n-k) \mhyphen\qmas$. Using Claim~\ref{factcor:2nmextext} along with Fact~\ref{fact:data}, we have 
$\Vert   \rho'^{0,0}_{ 2\nmext( X_1,Y_1) } - Z  \Vert_1 \leq \eps.$ We also have $X_1=X'$ and $Y_1=Y'$ in $\rho'^{0,0}$. Thus Fact~\ref{fact:data} will imply 
$$\Vert \rho'^{0,0}_{S_1S' } - ZZ \Vert_1 \leq \eps.$$

Let $(c,d) \ne (0,0)$. Note $\rho'^{c,d}$ is an $(n-k,n-k)\mhyphen\nmas$ since either $\Pr(X_1 \neq X')=1$ or $\Pr(Y_1 \neq Y')=1$ in $\rho'^{c,d}$. Since, $2\nmext : \{0,1\}^{n} \times \{0,1\}^{n} \to \{0,1\}^m$ is an $(n-k,n-k,\eps)$-quantum secure $2 \mhyphen$source non-malleable extractor (see Definition~\ref{def:2nme}), using Fact~\ref{fact:data} we have 
$$\Vert   \rho'^{c,d}_{S_1S' } - Z \otimes \rho'^{c,d}_{S'   } \Vert_1 \leq \eps.$$
This completes the proof. 
\end{proof}
For every $c,d \in \{ 0,1 \}$ except $(c,d)=(0,0)$, let  $\mathcal{D}^{c,d}_{\mathcal{A}} = \rho'^{c,d}_{S'} $ and for $(c,d)=(0,0)$, let $\mathcal{D}^{0,0}_{\mathcal{A}}$ be the distribution that is deterministically equal to $\sm$. Let $\mathcal{D}_{\mathcal{A}} = \sum_{c,d \in \{ 0,1 \}} \Pr((C,D)=(c,d))_{\rho'} \mathcal{D}^{c,d}_{\mathcal{A}}$. Note for every $c,d \in \{ 0,1 \}$, the value $\Pr((C,D)=(c,d))_{\rho'}$ depends only on $\mathcal{A}$. We have,
\begin{equation}\label{eq:newproof1}
   \rho'_{S_1S'} = \sum_{c,d \in \{ 0,1\}}\Pr((C,D)=(c,d))_{\rho'} \rho'^{c,d}_{S_1S'},
\end{equation}and \begin{equation}\label{eq:newproof2}
     Z  \cp (\mathcal{D}_{\mathcal{A}} ,Z)= \sum_{c,d \in \{ 0,1\}}\Pr((C,D)=(c,d))_{\rho'} Z\cp(\mathcal{D}^{c,d}_{\mathcal{A}} ,Z) .
\end{equation}
Consider,
\begin{align*}
   & \Vert \rho'^{}_{S_1S' }  - Z  \cp (\mathcal{D}_{\mathcal{A}} ,Z)   \Vert_1 \\ 
    & =\Vert \sum_{c,d \in \{ 0,1\}}\Pr((C,D)=(c,d))_{\rho'} \rho'^{c,d}_{S_1S' } - Z  \cp (\mathcal{D}_{\mathcal{A}},Z)   \Vert_1 & \mbox{(Eq.~\eqref{eq:newproof1})}\\ 
   &\leq
    \sum_{c,d \in \{ 0,1\}}\Pr((C,D)=(c,d))_{\rho'} \Vert \rho'^{c,d}_{S_1S' } - Z  \cp (\mathcal{D}^{c,d}_{\mathcal{A}} ,Z)  \Vert_1 & \mbox{(Eq.~\eqref{eq:newproof2} and Fact~\ref{fact:traceconvex})} \\
   & \leq (2^{-k} + \eps)4. & \mbox{(Claim~\ref{claim:987})}
\end{align*} 
This completes the proof.

\end{proof}

 \section{Efficient quantum secure non-malleable codes\label{sec:sampling}}

\subsection{Modified non-malleable extractor\label{subsec:modnmext}}
\noindent These parameters hold throughout this section. 
 \subsection*{Parameters}
Let $\delta, \delta_1, \delta_2 >0$ be small enough constants such that $\delta_1 < \delta_2$. Let  $n,n_1,n_2,n_3,n_4,n_5,n_6,n_7,n_x,n_y,a,s,b,h$ be positive integers and
 $ \eps', \eps > 0$ such that: 
  \[  n_1 = n^{\delta_2} \quad ; \quad  n_2 =n-3n_1  \quad ; \quad q= 2^{\log \left(n+1\right) }  \quad ; \quad \eps= 2^{- \cO(n^{\delta_1})} \quad ;\]
 
\[ n_3 = \frac{n_1}{10} \quad ; \quad n_4 = \frac{n_2}{\log (n+1)} \quad ; \quad n_5 = n^{\delta_2/3} \quad ;\quad a=6n_1+2 \cO( n_5) \log (n+1) = \mathcal{O}(n_1) \quad ;     \]

\[ n_6 = 3n_1^3 \quad ; \quad n_7 = n-3n_1-n_6  \quad ; \quad n_x = \frac{n_7}{12a} \quad ; \quad n_y = \frac{n_7}{12a} \quad ; \quad  2^{\cO(a)}\sqrt{\eps'} = \eps \quad ;  \]

\[s = \cO\left(\log^2\left(\frac{n}{\eps'}\right)\log n \right)  \quad  ; \quad b = \cO\left( \log^2\left(\frac{n}{\eps'}\right) \log n \right) \quad ; \quad h = 10s \quad \]
\begin{itemize}\label{sec:extparameters_2nm}
    \item $\IP_1$ be $\IP^{3n_1/n_3}_{2^{n_3}}$,
    \item $\Ext_1$ be $(2b, \eps')$-quantum secure $(n_y,s,b)$-extractor,
    \item $\Ext_2$ be $(2s, \eps')$-quantum secure $(h,b,s)$-extractor,
    \item $\Ext_3$ be $(4h, \eps')$-quantum secure $(n_x,b,2h)$-extractor,
    \item $\Ext_4$ be $(n_y/4, \eps^2)$-quantum secure $(4n_y,2h,n_y/8)$-extractor,
    \item $\IP_2$ be $\IP^{3n_1^3/2h}_{2^{2h}}$,
    \item  $\Ext_6$ be $(\frac{n_x}{2}, \eps^2)$-quantum secure $(4n_x,n_y/8,n_x/4)$-extractor. 
 \end{itemize}
 We first describe a short overview of the modifications required in the construction of non-malleable extractor for the efficient encoding of quantum secure non-malleable code in the split-state model. We modify the construction of $2\nmext$ from~\cite{BJK21}, using ideas from~\cite{CGL15} to construct a $\mathsf{new}$-$2\nmext : \{0,1\}^{n} \times \{0,1\}^{n} \to \{0,1\}^m$ that is $(n-n_1,n-n_1,\eps)$-secure against $\nma$ for parameters $n_1=n^{\Omega(1)}$, $m= n^{1-\Omega(1)}$ and $\eps = 2^{-n^{\Omega(1)}}$.
 We divide the sources $X$ and $Y$ into $n^{\Omega(1)}$ blocks each of size  $n^{1-\Omega(1)}$.  The idea now is to use new blocks of $X$ and $Y$ for each round of alternating extraction in the construction of  non-malleable extractor. This enables the linear constraints that are imposed in the alternating extraction are on different variables of input sources, $X,Y$. Also, since $X$ and $Y$ each have almost full min-entropy, we have block sources, where each block has almost full min-entropy using Fact~\ref{fact:prefixminentropyfact}. This allows us to generate appropriate intermediate seed random variables (approximately uniform) using alternating extraction. 
 \subsection*{Definition of modified non-malleable extractor}
 
 Let $\ecc : \F^{n_4}_q \to \F^{n}_q$ be an $(n,n_4,n-n_4+1)$ Reed-Solomon code from Fact~\ref{fact:ecc}. Let $\samp : \{0,1 \}^r \to [n]^{t_1}$ be the sampler function from Fact~\ref{fact:samp} where $t_1=\cO(n_5)$ and $r \geq n_3$. We identify the output of $\samp$ as $t_1$ samples from the set $[n]$.  By $\ecc(Y)_{\samp(I)}$, we mean the $\samp(I)$ entries of codeword $\ecc(Y)$.

\begin{algorithm}
\caption{: $\mathsf{new}$-$2\nmext: \lbrace 0,1 \rbrace ^n\times \lbrace 0,1 \rbrace^n   \rightarrow \lbrace 0,1 \rbrace^{n_x/4}$}\label{alg:2nmExtnew}
\begin{algorithmic}
\State{}

\noindent \textbf{ Input:}  $X, Y$\\


\begin{enumerate}
    \item Advice generator: \[X_1=\pre(X,1,3n_1) \quad  ; \quad Y_1 = \pre(Y,1,3n_1) \quad ;
    \quad R= \IP_1(X_1,Y_1) \quad ; \quad  \]
    \[X_2=\pre(X,3n_1+1,n) \quad  ; \quad Y_2 = \pre(Y,3n_1+1,n) \quad ;\]
    \[ \quad G= X_1 \circ \bar{X}_2 \circ Y_1 \circ \bar{Y}_2 = X_1  \circ \ecc(X_2)_{\samp(R)} \circ Y_1 \circ \ecc(Y_2)_{\samp(R)} \]

    \item $X_3=\pre(X,3n_1 +1,3n_1+n_6) \quad ; \quad Y_3=\pre(Y,3n_1 +1,3n_1+n_6)\quad; \quad Z^1=\IP_2(X_3,Y_3)$
     \item $X_4=\pre(X,3n_1+n_6+1,n) \quad ; \quad Y_4=\pre(Y,3n_1+n_6+1,n)\quad$
    \item Correlation breaker with advice:\quad $F=2\advcb(Y_4,X_4,Z^1, G)$
    \item $X^{a+1}=\pre(X_4,4n_xa+1,4n_xa+4n_x)$
    \item $S=\Ext_6(X^{a+1},F)$
\end{enumerate}

 \noindent \textbf{ Output:} $S$ 
\end{algorithmic}
\end{algorithm}

\begin{algorithm}
\caption{: $2\advcb: \lbrace 0,1 \rbrace^{n_7} \times \lbrace 0,1 \rbrace^{n_7} \times \lbrace 0,1\rbrace^{2h} \times \lbrace 0,1 \rbrace^a \rightarrow \lbrace 0,1 \rbrace^{\frac{n_y}{8}}$}\label{alg:2AdvCB}
\begin{algorithmic}
\State{}

\noindent \textbf{ Input: }$Y_4, X_4, Z^1, G$

\begin{enumerate}
    \item For $i=1,2,\ldots,a:$ 
    
     \item  \hspace{1cm}  $Y^{i}=\pre(Y_4,4n_y(i-1)+1,4n_y(i))$ 
     \item  \hspace{1cm}  $X^{i}=\pre(X_4,4n_x(i-1)+1,4n_x(i))$ 
    \item  \hspace{1cm}Flip-flop: \quad $Z^{i+1}=2\ff(Y^i,X^i,Z^{i},G_i)$ 
    
     \item For $i=a+1,a+2,\ldots,3a:$ 
    \item  \hspace{1cm}  $Y^{i}=\pre(Y_4,4n_y(i-1)+1,4n_y(i))$ 
     \item  \hspace{1cm}  $X^{i}=\pre(X_4,4n_x(i-1)+1,4n_x(i))$ 
     \item $F=\Ext_4(Y^{a+1},Z^{a+1})$

\end{enumerate}

 \noindent \textbf{ Output:} $F$ 
\end{algorithmic}
\end{algorithm} 


\begin{algorithm}
\caption{: $2\ff : \lbrace 0,1 \rbrace^{4n_y} \times \lbrace 0,1 \rbrace^{4n_x} \times \lbrace 0,1 \rbrace^{2h} \times \lbrace 0,1 \rbrace \rightarrow \lbrace 0,1\rbrace^{2h}$}\label{alg:2FF}
\begin{algorithmic}
\State{}

\noindent\textbf{Input:}  $Y^i, X^i,  Z^i,  G_i$  \\ 
 $X^i_{j} = \pre(X^i,n_x(j-1)+1,n_x(j))$ for every $j \in \{1,2,3,4 \}$ \\
 $Y^i_{j} = \pre(Y^i,n_y(j-1)+1,n_y(j))$ for every $j \in \{1,2,3,4 \}$ \\
 $Z^i_1 = \pre(Z^i,1,h)$  and $Z^i_2 = \pre(Z^i,h+1,2h)$\\ \\
\begin{enumerate}
     
    \item $Z^i_s=$Prefix$(Z^i_1,s)$, $A^i= \Ext_1 (Y^i_1,Z^i_s),\ C^i= \Ext_2(Z^i_2,A^i),\ B^i= \Ext_{1}(Y^i_2,C^i)$
    \item If $G_i=0$ then $\overline{Z}^i= \Ext_3(X^i_1,A^i)$ and if $G_i=1$ then $\overline{Z}^i= \Ext_3(X^i_2,B^i)$
    \item $\overline{Z}^i_1 = \pre(\overline{Z}^i,1,h)$  and $\overline{Z}^i_2 = \pre(\overline{Z}^i,h+1,2h)$.
     \item $\overline{Z}^i_s=$Prefix$(\overline{Z}^i_1,s)$, $\overline{A}^i= \Ext_1 ({Y^i_3},\overline{Z}^i_s),\ \overline{C}^i= \Ext_2(\overline{Z}^i_2,\overline{A}^i),\ \overline{B}^i= \Ext_{1}({Y^i_4},\overline{C}^i)$
    \item If $G_i=0$, then $O=\Ext_3(X^i_3,\overline{B}^i)$ and if $G_i=1$, then $O=\Ext_3(X^i_4,\overline{A}^i)$
\end{enumerate} \noindent \textbf{Output:} $O$
\end{algorithmic}
\end{algorithm}

\newpage 
\begin{theorem}[Security of $\mathsf{new}$-$2\nmext$] \label{thm:2nmext}

Let $\rho_{X\hat{X}X^\prime \hat{X}' NYY'\hat{Y}\hat{Y}'M}$ be an $(n-n_1,n-n_1)\mhyphen\nmas$ with $\vert X \vert = \vert Y \vert = n$. Then,
\[ \Vert \rho_{ S S^{\prime} Y  Y^\prime M} - U_{n_x/4} \otimes \rho_{ S^\prime Y  Y^\prime M} \Vert_1 \leq \cO( \eps) ,\]
where $S=\mathsf{new}$-$2\nmext(X,Y)$ and $S'=\mathsf{new}$-$2\nmext(X',Y')$.
\end{theorem}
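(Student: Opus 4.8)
The plan is to follow the security analysis of $2\nmext$ from~\cite{BJK21}, adapted to the block-source structure forced by the modification, keeping all conditional-independence relations inside the $(k_1,k_2)\mhyphen\qmas$ / $(k_1,k_2)\mhyphen\nmas$ formalism of Definitions~\ref{qmadvk1k2}~and~\ref{def:2source-qnmadversarydef}. By Definition~\ref{def:2source-qnmadversarydef} the adversary's state $\rho$ has $\Pr(X\neq X')_\rho=1$ or $\Pr(Y\neq Y')_\rho=1$; by the near-symmetry of the construction in its two arguments it suffices to treat $\Pr(Y\neq Y')_\rho=1$ for the stated $Y$-side bound (the $X\neq X'$ case is identical after swapping roles). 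First I would analyze the advice generator: if $X_1\neq X_1'$ or $Y_1\neq Y_1'$ then $G\neq G'$ directly; otherwise $X_1=X_1'$, $Y_1=Y_1'$, so $R=R'$ and $\samp(R)=\samp(R')$, while $Y\neq Y'$ forces $Y_2\neq Y_2'$, and then the Reed--Solomon code guarantees $\ecc(Y_2)$ and $\ecc(Y_2')$ disagree on a $\delta$-fraction of coordinates, so by Fact~\ref{fact:samp} the sampler hits a disagreement with probability $1-2^{-\Omega(n^{\delta_1})}$, again giving $G\neq G'$. Hence, up to error $2^{-\Omega(n^{\delta_1})}$, I may condition on a fixed pair $G=g$, $G'=g'$ with $g\neq g'$.

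Next comes the min-entropy bookkeeping. The registers $X_3,X_4,Y_3,Y_4$ and all sub-blocks $X^i_j,Y^i_j$ of $2\ff$ are carved out of $\pre(X,3n_1+1,n)$ and $\pre(Y,3n_1+1,n)$, which have conditional min-entropy $\geq n-4n_1$ by Fact~\ref{fact:prefixminentropyfact}; conditioning on $(g,g')$ (total length $2a=\mathcal{O}(n_1)$) costs a further $\mathcal{O}(n_1)$ bits by Fact~\ref{fact:minentropydecrease}, and Fact~\ref{fact:prefixminentropyfact} then shows each length-$4n_x$ (resp.\ $4n_y$) block keeps almost full min-entropy given the opposite source and all previously generated classical registers. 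This is exactly where using a \emph{fresh} pair of blocks $X^i,Y^i$ per round of $2\advcb$ and fresh sub-blocks inside each $2\ff$ matters: the linear constraints placed on $X$ (resp.\ $Y$) by the strong linear extractors of Fact~\ref{fact:extractor} in one round sit on coordinates disjoint from the blocks read later, so these bounds survive; and since $g$ also fixes $\ecc(X_2)_{\samp(R)}$, Lemma~\ref{lem:reedsolomon} ensures the still-unread coordinates of $X_2$ --- in particular those forming $X_3,X_4$ --- remain uniform. I would also record that $Z^1=\IP_2(X_3,Y_3)$ is $\mathcal{O}(\eps)$-close to uniform given the relevant side information, by the $\IP$-security in a $(k_1,k_2)\mhyphen\qmas$ that follows from Fact~\ref{l-qma-needed-fact} via Lemma~\ref{lemma:nearby_rho_prime_prime} (the classical analogue being Fact~\ref{classicalip}), so it is a legitimate length-$2h$ starting seed for the flip-flop chain.

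The heart of the proof is the correlation-breaker-with-advice step. Let $i^\ast$ be the first index with $g_{i^\ast}\neq g_{i^\ast}'$. For $i<i^\ast$ one has $g_i=g_i'$, and each $2\ff$ round is handled by a bounded number of invocations of the quantum alternating-extraction lemma (Fact~\ref{lem:2}) plus the strong-extractor property, while the $\mathcal{O}(a)$-round cumulative min-entropy loss under the classical interactive communication generated by these invocations is controlled by Fact~\ref{lem:minentropy}; up to round $i^\ast$ one only needs that $Z^i$ stays close to uniform with enough entropy. At round $i^\ast$, say $g_{i^\ast}=0$, $g_{i^\ast}'=1$ (the other orientation is symmetric): the original computes $\overline{Z}^{i^\ast}=\Ext_3(X^{i^\ast}_1,A^{i^\ast})$ with $A^{i^\ast}$ derived from $Y^{i^\ast}_1$, whereas the tampered run computes $\overline{Z}^{i^\ast\prime}=\Ext_3(X^{i^\ast\prime}_2,B^{i^\ast\prime})$ with $B^{i^\ast\prime}$ derived from $Y^{i^\ast\prime}_2$; since $X^{i^\ast}_1,X^{i^\ast}_2$ (and $Y^{i^\ast}_1,Y^{i^\ast}_2$) are disjoint blocks and $\Ext_3$ is a strong quantum-secure extractor, the flip-flop analysis of~\cite{Coh15} carried out in the $\qmas$ setting as in~\cite{BJK21} yields $\overline{Z}^{i^\ast}$ close to uniform and independent of $\overline{Z}^{i^\ast\prime}$ given all side information. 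This independence then propagates through the remaining flip-flop rounds (the ``once broken, stays broken'' phenomenon, re-derived here inside the $\qmas$ formalism), giving $Z^{a+1}$ close to uniform and independent of $Z^{a+1\prime}$. Finally $F=\Ext_4(Y^{a+1},Z^{a+1})$ is a near-uniform seed independent of $F'$, and $X^{a+1}$ still has min-entropy $\geq n_x/2$ given the tampered side, so strongness of $\Ext_6$ gives $S=\Ext_6(X^{a+1},F)$ uniform given $S'$; collecting errors over the $\mathcal{O}(a)=\mathcal{O}(n_1)$ extractor invocations, each $\mathcal{O}(\sqrt{\eps'})$, and using $2^{\mathcal{O}(a)}\sqrt{\eps'}=\eps$ from the parameter setting yields $\Vert \rho_{SS'YY'M}-U_{n_x/4}\otimes\rho_{S'YY'M}\Vert_1\leq\mathcal{O}(\eps)$.

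The step I expect to be the main obstacle is exactly the quantum conditional-independence bookkeeping across all $\mathcal{O}(a)$ rounds: one must fix the right order in which the auxiliary classical registers (for both $(X,Y)$ and $(X',Y')$) are generated so that every application of Fact~\ref{lem:2} and Fact~\ref{lem:minentropy} sees a state of the required $(k_1,k_2)\mhyphen\qmas$ shape, since quantum side information cannot be conditioned coordinate-wise the way classical side information can. The Reed--Solomon argument (Lemma~\ref{lem:reedsolomon}) and the block-disjointness it enables are what make this bookkeeping feasible, but checking that each advice fixing leaves intact the min-entropy of the blocks consumed later is the delicate point that will absorb most of the work.
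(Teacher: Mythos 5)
Your proposal takes essentially the same route as the paper: the paper's own proof of this theorem is just a deferral to the analysis of Theorem~6 in~\cite{BJK21}, citing exactly the ingredients you organize your plan around — Fact~\ref{lem:2} for the alternating-extraction steps, Fact~\ref{lem:minentropy} for the min-entropy bookkeeping across the classical registers generated round by round, Facts~\ref{l-qma-needed-fact} and~\ref{lemma:nearby_rho_prime_prime} for the inner-product security in the $(k_1,k_2)\mhyphen\qmas$ framework, and Fact~\ref{fidelty_trace} to pass between $\Delta_B$ and $\Delta$ — together with the fresh-block/Reed--Solomon advice structure you correctly identify as the new wrinkle of the modified construction. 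The only caution is your appeal to ``near-symmetry'' between the $X\neq X'$ and $Y\neq Y'$ cases: the construction is not symmetric in its arguments, but this is harmless since in either case the advice generator yields $G\neq G'$ with probability $1-2^{-\Omega(n^{\delta_1})}$ and the correlation-breaker argument then proceeds identically, which is how the cited analysis handles it.
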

\begin{proof}
The proof proceeds in similar lines to the proof of Theorem $6$ in~\cite{BJK21} using Fact~\ref{lem:2} for alternating extraction argument, Fact~\ref{lem:minentropy} for bounding the min-entropy required in alternating extraction, Facts~\ref{l-qma-needed-fact},~\ref{lemma:nearby_rho_prime_prime} for the security of inner-product function in $(k_1,k_2) \mhyphen \qmas$ framework, Fact~\ref{fidelty_trace} for relation between $\Delta_B, \Delta$ and we do not repeat it. 
\end{proof}
\subsection{Efficiently sampling from the preimage of $\mathsf{new}$-$2\nmext$}
Recall that we showed existence of a quantum secure non-malleable code where encoding scheme was based on inverting $2\nmext$, a quantum secure $2$-source non-malleable extractor. In particular, for any fixed message $S=s$, the encoder, $\enc$ outputs a uniformly random string from the set  $2\nmext^{-1}(s)$. The decoder is the function $2\nmext$ itself. We call this as the \emph{encoding and decoding} based on $2\nmext$. We now state the main result of this paper.
\begin{theorem}[Main Theorem]\label{thm:main}
Let $\mathsf{new}$-$2\nmext: \lbrace 0,1 \rbrace ^n\times \lbrace 0,1 \rbrace^n   \rightarrow \lbrace 0,1 \rbrace^{m}$ be the quantum secure $2$-source non-malleable extractor from Algorithm~\ref{alg:2nmExtnew}, where  $m=n_x/4$. Let $(\enc,\dec)$ be the encoding and decoding based on $\mathsf{new}$-$2\nmext$. Let $\hat{S}\enc(\hat{S})=\hat{S}\hat{X}\hat{Y}$ for a uniform message $\hat{S}=U_{m}$. There exists an efficient algorithm that can sample from a distribution $\tilde{S}\tilde{X}\tilde{Y}$ such that $\Vert \tilde{S}\tilde{X}\tilde{Y} -\hat{S}\hat{X}\hat{Y} \Vert_1 \leq \cO(\eps)$ and $\tilde{S}= U_{m}$. 
\end{theorem}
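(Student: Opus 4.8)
The plan is to sample the codeword in the same forward order that $\mathsf{new}\mhyphen2\nmext$ computes its output, but running each strong linear seeded extractor "backwards" via Claim~\ref{claim:extractorsampling}, and to stitch the approximations together with the Markov-chain composition Claim~\ref{fact:markovapprox} and the Reed--Solomon sampling Lemma~\ref{lem:reedsolomon}. First I would set up the block structure: $X$ and $Y$ are each cut into $\Theta(n^{\delta_2})$ blocks of size $n^{1-\delta_2}$, and since $\hat X,\hat Y$ are uniform (full min-entropy), Fact~\ref{fact:prefixminentropyfact} gives that every block, conditioned on the others, still has near-full min-entropy — so each seeded extractor invoked inside $2\ff$, $2\advcb$, $\Ext_4$, $\Ext_6$, $\IP_1$, $\IP_2$ acts on a source that is (close to) uniform and on a seed that is (close to) uniform and independent of that source-block, which is exactly the hypothesis needed by Claim~\ref{claim:extractorsampling} and Corollary~\ref{corr:claim4}.

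Then I would run the backtracking from the output inward. Start from $\hat S=U_m$. The last step is $S=\Ext_6(X^{a+1},F)$: using $\hat S=U_m$ and that $X^{a+1}$ is a fresh uniform block, apply Claim~\ref{claim:extractorsampling} to sample $\tilde X^{a+1}\leftarrow\{x:\Ext_6(x,\tilde F)=\tilde s\}$ after first sampling $\tilde F\leftarrow U_{n_y/8}$ (which is $\cO(\eps)$-close to the true $F$ by Corollary~\ref{corr:claim4} applied along the alternating-extraction chain). Proceed one extractor call at a time in reverse topological order through $2\advcb$ and each $2\ff$, at each stage: (i) the current "output" variable is close to uniform and on the correct register, (ii) invoke Claim~\ref{claim:extractorsampling} to pull back the source block, (iii) feed forward the freshly sampled seed. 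Because in each round of alternating extraction we extract from a \emph{new} block, the linear constraints land on disjoint blocks, so no block is ever over-constrained; for the blocks $X_2,Y_2$ feeding the advice generator $G=X_1\circ\ecc(X_2)_{\samp(R)}\circ\cdots$, I would instead use Lemma~\ref{lem:reedsolomon} (with $t$ the number of sampled coordinates, $q\ge n+1$) to efficiently sample $X_2$ consistent with the fixed codeword positions while keeping the rest uniform, and similarly for $Y_2$; for $\IP_1,\IP_2$ use the $\IP$ case of Claim~\ref{claim:extractorsampling}. Finally, the top-level blocks $X_1,Y_1$ and any leftover bits of $X,Y$ not touched by constraints are sampled uniformly at random, yielding $(\tilde X,\tilde Y)=\enc(\tilde S)$ with $\tilde S=\mathsf{new}\mhyphen2\nmext(\tilde X,\tilde Y)$ exactly (the last property holds because each Claim~\ref{claim:extractorsampling} application guarantees $\tilde O=\Ext(\tilde X,\tilde H)$ on the nose).

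For the error bound I would argue inductively: at each of the $\cO(a)=\cO(n_1)$ backtracking steps Claim~\ref{claim:extractorsampling} adds at most $\eps+\eps'+\eps''$ to the total variation distance between the sampled joint distribution and the true $\hat S\hat X\hat Y$, but — and this is the subtle point — the errors do not simply add, because the intermediate "output" variables fed as sources into later rounds are only \emph{approximately} independent uniform, and their approximate independence from later blocks must be maintained through conditioning. This is where Claim~\ref{fact:markovapprox} does the work: the construction lays out a genuine Markov chain between (earlier blocks $+$ intermediate variables) --- (current intermediate variable) --- (later blocks), so I can compose the step-by-step approximations along the chain, paying only $2\eps_1+\eps_2$ per composition rather than a multiplicative blowup. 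Since each $\eps_i$ is $2^{-\cO(n^{\delta_1})}$ and we compose $\mathsf{poly}(n)$ times, with the parameter choice $2^{\cO(a)}\sqrt{\eps'}=\eps$ absorbing the worst accumulation, the total is $\cO(\eps)=2^{-n^{\Omega(1)}}$. Running time is polynomial in $(m,n)$ since each primitive sampling step (Fact~\ref{fact:samplingeffifeintly}, Lemma~\ref{lem:reedsolomon}, linear-algebra inversion of the linear extractors) is polynomial-time and there are polynomially many steps.

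**The main obstacle.** The hard part will be bookkeeping the approximate-independence invariants as the backtracking progresses: in the alternating extraction an intermediate near-uniform variable is used as a \emph{source} to extract the next variable, which is then used to extract from a new block — so I must verify, at every stage and on every register, that the "$XH\approx U_n\otimes U_d$ with $H$ independent of the source" hypothesis of Claim~\ref{claim:extractorsampling} actually holds, identify the exact Markov chain justifying the composition via Claim~\ref{fact:markovapprox}, and confirm the Reed--Solomon constraints (from $\samp(R)$) stay within the $t<k$ regime so Lemma~\ref{lem:reedsolomon} applies — all while the order in which classical variables must be (re)generated is dictated by the non-commutative structure inherited from the forward security proof of Theorem~\ref{thm:2nmext}.
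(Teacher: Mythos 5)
Your overall route is the paper's: backtrack from a uniform output through the extraction chain by inverting each strong linear extractor with Claim~\ref{claim:extractorsampling}, compose the per-step approximations along the Markov chains with Claim~\ref{fact:markovapprox}, and handle the codeword constraints with Lemma~\ref{lem:reedsolomon}. The one place where your plan, as written, would not run is the treatment of the advice generator. You defer $X_1,Y_1$ (hence $R=\IP_1(X_1,Y_1)$) to the very end and sample them uniformly ``finally''; but the backward pass is only well defined once the advice string $G=X_1\circ\ecc(X_2)_{\samp(R)}\circ Y_1\circ\ecc(Y_2)_{\samp(R)}$ is fixed: each bit $G_i$ determines which branch of $2\ff$ was taken (i.e.\ on which block of $X^i$ each $\Ext_3$ call must be inverted), and $\samp(R)$ determines which coordinates of $\ecc(X_2),\ecc(Y_2)$ are constrained --- information you yourself invoke when you ask Lemma~\ref{lem:reedsolomon} to sample $X_2$ ``consistent with the fixed codeword positions''. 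The paper therefore samples $G$ forward first (step 1 of the sampling strategy in Claim~\ref{claim:final}) and proves separately (Claim~\ref{claim:new2nmextsampling1new}, via Lemma~\ref{lem:reedsolomon}) that conditioned on $G=g$ the blocks $X_3,Y_3,X^{[a+1]},Y^{[a+1]}$ feeding the alternating extraction remain \emph{exactly} uniform and independent of $g$; only then is the backward pass run conditioned on $g$, and only the leftover blocks $X^{[a+2,3a]},Y^{[a+2,3a]}$ are filled in at the end. Note also that this exact uniformity after conditioning is what Claim~\ref{claim:extractorsampling} needs (its hypothesis is $\Vert XH-U_n\otimes U_d\Vert_1\le\eps'$); your appeal to Fact~\ref{fact:prefixminentropyfact} (near-full min-entropy of the blocks) is not by itself sufficient for that hypothesis.

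A second, smaller omission: your error accounting (through Claim~\ref{claim:extractorsampling}) bounds the distance of the sampled distribution to the forward-run distribution $SXY$ with $XY=U_n\otimes U_n$ and $S=\mathsf{new}\mhyphen2\nmext(X,Y)$, not directly to the encoder's distribution $\hat S\hat X\hat Y$. One further step is needed, namely $\Vert SXY-\hat S\hat X\hat Y\Vert_1\le\Vert S-U_m\Vert_1\le\cO(\eps)$, which follows from Claim~\ref{factcor:2nmextext} exactly as in Eq.~\eqref{eq:proof}, before the triangle inequality yields the statement. With these two repairs (fix the advice forward first and argue the conditioning is harmless; add the bridge from the forward run to $\hat S\hat X\hat Y$), your plan coincides with the paper's proof.
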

\begin{proof}Consider $XY=U_n \otimes U_n$. Let $S =\mathsf{new} \mhyphen2\nmext(X,Y)$. From~Eq.~\eqref{eq:proof} in the proof of  Theorem~\ref{lem:qnmcodesfromnmext} (after noting $\hat{S}\hat{X}\hat{Y} \equiv (SXY)_\sigma$ in Eq.~\eqref{eq:proof} and $SXY \equiv (SXY)_\theta$ in Eq.~\eqref{eq:proof}), we have 
\begin{equation}\label{eq:1234}
       \Vert{SXY} -  \hat{S}\hat{X}\hat{Y}  \Vert_1  \leq \cO(\eps).
\end{equation}From Claim~\ref{claim:final}, we have an efficient algorithm that can sample from a distribution $\tilde{S}\tilde{X}\tilde{Y}$ such that 
\[  \Vert{SXY} -  \tilde{S}\tilde{X}\tilde{Y}  \Vert_1  \leq \cO(\eps) \quad ; \quad \tilde{S} =U_m. \]
From Eq.~\eqref{eq:1234} and using triangle inequality, we have 
 \[  \Vert \hat{S}\hat{X}\hat{Y}  - \tilde{S}\tilde{X}\tilde{Y}  \Vert_1  \leq \cO(\eps) \quad ; \quad \tilde{S} =U_m, \] 
which completes the proof.  
\end{proof}
We have the following corollary. 
\begin{corollary} Let $0<\delta_3 < \delta_1$ and $m'=n^{\delta_3}$ be an integer. Let $\mathsf{new}$-$2\nmext: \lbrace 0,1 \rbrace ^n\times \lbrace 0,1 \rbrace^n   \rightarrow \lbrace 0,1 \rbrace^{m}$ be the quantum secure $2$-source non-malleable extractor from Algorithm~\ref{alg:2nmExtnew}, where  $m=n_x/4$. Let $2\nmext: \lbrace 0,1 \rbrace ^n\times \lbrace 0,1 \rbrace^n   \rightarrow \lbrace 0,1 \rbrace^{m'}$ be such that $2\nmext(X,Y)$ is same as $\mathsf{new}$-$2\nmext (X,Y)$ truncated to first $m'$ bits.  Let $(\enc,\dec)$ be the encoding and decoding based on $2\nmext$. Let $\hat{S}\enc(\hat{S})=\hat{S}\hat{X}\hat{Y}$ for a uniform message $\hat{S}=U_{m'}$. There exists an efficient algorithm that can sample from a distribution $\tilde{S}\tilde{X}\tilde{Y}$ such that $\tilde{S}= U_{m'}$ and for every $s \in \{0,1 \}^{m'}$, we have $\Vert (\tilde{X}\tilde{Y}) \vert (\tilde{S}=s) -(\hat{X}\hat{Y}) \vert (\hat{S}=s) \Vert_1 \leq \cO(2^{m'}\eps) \leq 2^{-n^{\Omega(1)}}$. 
\end{corollary}

\begin{claim}\label{claim:final}
Let $2\nmext: \lbrace 0,1 \rbrace ^n\times \lbrace 0,1 \rbrace^n   \rightarrow \lbrace 0,1 \rbrace^{n_x/4}$ be the $\mathsf{new} \mhyphen 2\nmext$ from Algorithm~\ref{alg:2nmExtnew}. Let ${XY}=U_n \otimes U_n$, $S =2\nmext(X,Y)$. We can sample efficiently (in time polynomial in $n$) from a distribution $(\tilde{X}\tilde{Y}\tilde{S})$ such that 
\[\Vert XYS -\tilde{X}\tilde{Y}\tilde{S} \Vert_1 \leq {\cO(\eps)} \quad ;\quad \tilde{S} =U_{n_x/4}.\]

\end{claim}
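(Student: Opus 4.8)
Here is the proof plan I would follow.

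\textbf{Forward pass and block structure.} The plan is to run $\mathsf{new}$-$2\nmext$ forward on independent uniform inputs $X=U_n$, $Y=U_n$, record every intermediate variable, and then reconstruct a nearby distribution by inverting the construction step by step in reverse order, committing at the end to $(\tilde X,\tilde Y,\tilde S)$ with $\tilde S$ exactly uniform and $\tilde S=\mathsf{new}$-$2\nmext(\tilde X,\tilde Y)$. First I would note that with $X=U_n$ the segments $X_1=\pre(X,1,3n_1)$, $X_3=\pre(X,3n_1+1,3n_1+n_6)$ and the blocks $X^1,\dots,X^{3a}$ partitioning $X_4$ (the first $a$ of them further subdivided into $X^i_1,\dots,X^i_4$) form disjoint, mutually independent, fully uniform pieces of $X$, and symmetrically for $Y$; every invocation of $\IP_1,\IP_2,\Ext_1,\dots,\Ext_6$ in Algorithms~\ref{alg:2nmExtnew},~\ref{alg:2AdvCB},~\ref{alg:2FF} takes as its \emph{source} one such fresh block (never reused later in the forward order), its \emph{seed} being either a fresh block or the output of an earlier extractor. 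Running through the construction in order and applying Corollary~\ref{corr:claim4} repeatedly, every variable emitted by these extractors and inner products — the advice $G$, the seeds $Z^1,\dots,Z^{a+1}$, all the $A^i,B^i,C^i,\overline A^i,\overline B^i,\overline C^i$, $F$ and $S$ — is $\cO(\eps)$-close to uniform, and, since its consuming block is fresh, $\cO(\eps)$-close to independent of that block. Passing to the nearby distribution in which each such seed is \emph{exactly} uniform and exactly independent of its downstream source block (at cumulative cost $\cO(\eps)$) is exactly the device that handles the reuse inside $2\ff$, where a seed such as $Z^i$ has a prefix $Z^i_s$ feeding $A^i=\Ext_1(Y^i_1,Z^i_s)$ while another chunk $Z^i_2$ serves as a \emph{source} for $C^i=\Ext_2(Z^i_2,A^i)$: in the idealized distribution $Z^i_2$ is genuinely fresh relative to $A^i$, so the inversion below applies verbatim.

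\textbf{Backward pass.} Then I would draw $\tilde S=U_{n_x/4}$ and walk the construction backwards. Each step has the shape $O=\Ext(W,H)$ with $\Ext$ one of the strong linear extractors of Fact~\ref{fact:extractor} or an inner product (Fact~\ref{classicalip}); given the already-produced approximation $\tilde O$ of $O$, I draw a fresh uniform seed $\tilde H$ and, by Fact~\ref{fact:samplingeffifeintly}, efficiently sample $\tilde W$ uniformly from the affine preimage $\{w:\Ext(w,\tilde H)=\tilde O\}$, which by Claim~\ref{claim:extractorsampling} reproduces $(O,W,H)$ up to an extra $\cO(\eps)$. Consecutive inversions are glued with the Markov-chain composition of Claim~\ref{fact:markovapprox}: the variables already committed, the current interface variable (simultaneously the just-produced output and the seed of the next inversion), and the fresh source block form an approximate Markov chain, so the per-step errors compose additively rather than multiplicatively. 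This reconstructs $\tilde F$, $\tilde X^{a+1}$ (via $\Ext_6$), $\tilde Z^{a+1}$ (via $\Ext_4$), all the flip-flop sub-block values used in rounds $1,\dots,a$, $\tilde Z^1$ and hence (via $\IP_2$) $\tilde X_3,\tilde Y_3$, and the advice $\tilde G$; since $G$ literally contains $X_1$ and $Y_1$, it also hands us $\tilde X_1,\tilde Y_1$.

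\textbf{Reconciling the advice via Reed--Solomon.} It remains to produce $\tilde X_2,\tilde Y_2$ consistent both with the extractor-block values already committed and with the codeword coordinates $\ecc(\tilde X_2)_{\samp(\tilde R)}$, $\ecc(\tilde Y_2)_{\samp(\tilde R)}$ recorded inside $\tilde G$, where $\tilde R=\IP_1(\tilde X_1,\tilde Y_1)$. Here I would invoke Lemma~\ref{lem:reedsolomon} with $M=X_2\in\F_q^{n_4}$, $\mathcal S=\samp(\tilde R)$ of size $t_1=\cO(n_5)$, and $\mathcal Q$ the set of field coordinates of $X_2$ occupied by the used extractor sub-blocks: the lemma says conditioning on the advice leaves $X_2|_{\mathcal Q}$ exactly uniform (matching what the backward pass committed on $\mathcal Q$, modulo the accumulated error) and lets us efficiently sample $X_2|_{[n_4]\setminus\mathcal Q}$ from the correct conditional distribution; the hypothesis $|\mathcal Q|\le n_4-t_1$ holds with room to spare because each flip-flop round consumes only two of its four sub-blocks, so the coordinates actually used occupy only a constant fraction of $X_2$. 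The same is done for $Y_2$. Setting $\tilde X=\tilde X_1\circ\tilde X_2$, $\tilde Y=\tilde Y_1\circ\tilde Y_2$ then gives $\mathsf{new}$-$2\nmext(\tilde X,\tilde Y)=\tilde S$, and summing the $\cO(a)$-fold composition of the $\cO(\eps')$ per-step errors — kept at $\cO(\eps)$ by the parameter choice $2^{\cO(a)}\sqrt{\eps'}=\eps$ — yields $\Vert XYS-\tilde X\tilde Y\tilde S\Vert_1\le\cO(\eps)$ with $\tilde S=U_{n_x/4}$, as required.

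\textbf{Main obstacle.} The delicate part is the bookkeeping of the backward pass: the alternating-extraction ``seeds'' double as sources, and inverting a strong extractor reproduces its input only approximately, so one must fix the reconstruction order and the Markov decompositions so that the errors compose additively (Claim~\ref{fact:markovapprox}), and one must make sure the Reed--Solomon constraints carried by the fixed advice never land on coordinates the backward pass has already pinned down to be uniform — which is precisely what Lemma~\ref{lem:reedsolomon} together with the sampler of Fact~\ref{fact:samp} is designed to guarantee.
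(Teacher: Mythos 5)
Your plan follows the paper's own proof in all essentials: a forward analysis showing every intermediate seed is $\cO(\eps)$-close to uniform (the paper's Claim~\ref{claim:newsampling}, proved exactly via Corollary~\ref{corr:claim4}), a backward pass that inverts each strong linear extractor through Claim~\ref{claim:extractorsampling} and glues consecutive inversions with the approximate Markov composition of Claim~\ref{fact:markovapprox} (the paper's Claim~\ref{claim:6final}), and a Reed--Solomon completion of the unused coordinates via Lemma~\ref{lem:reedsolomon} (the paper's Claim~\ref{claim:new2nmextsampling1new}).

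The one point to correct is the ordering of the advice. You list $\tilde G$ among the outputs of the backward pass, but the backward pass is not even well defined until the advice is fixed: inverting round $i$ of Algorithm~\ref{alg:2FF} requires knowing $g_i$ (it decides whether $\overline Z^i$ was extracted from $X^i_1$ with seed $A^i$ or from $X^i_2$ with seed $B^i$, and which sub-block produced the round's output), and the set $\samp(\tilde R)$ of constrained codeword coordinates is likewise a function of the advice. The paper therefore samples $g \leftarrow G$ \emph{first} (this is efficient: draw $X_1,Y_1$ uniformly, compute $R=\IP_1(X_1,Y_1)$, and take the $\samp(R)$-coordinates of codewords of fresh uniform strings), and justifies running everything conditioned on $G=g$ by showing, via the same Lemma~\ref{lem:reedsolomon}, that $G$ is exactly independent of the blocks $X_3Y_3X^{[a+1]}Y^{[a+1]}$ consumed downstream, so conditioning on the advice leaves the used blocks exactly uniform; your near-uniformity statements for the intermediate seeds must accordingly be made conditioned on each $G=g$. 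Since you already invoke Lemma~\ref{lem:reedsolomon} and check the coordinate-count condition, this is a repair of ordering rather than a missing idea, but as written the ``reconcile the advice at the end'' step does not literally execute.
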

\begin{proof} 
We consider the intermediate random variables while generating $S =\mathsf{new}\mhyphen2\nmext(X,Y)$ from ${X,Y}$. We use the same intermediate random variables but with tilde notation to denote the random variables we sample (in the reverse order). We note that the argument in this proof is entirely classical. Let
$$X_1,Y_1,R,G,X_3,Y_3,X^{[a+1]},Y^{[a+1]},X^{[a+2,3a]},Y^{[a+2,3a]},S$$ be  as defined in Algorithms~\ref{alg:2nmExtnew},~\ref{alg:2AdvCB},~\ref{alg:2FF}. We use the following sampling strategy:
\begin{enumerate}
    \item Sample  $g=(x_1,\bar{x}_2,y_1,\bar{y}_2) \leftarrow \tilde{G} =G=X_1 \circ \ecc(U_{n_2})_{\samp(\IP(X_1,Y_1))} \circ Y_1 \circ \ecc(U_{n_2})_{\samp(\IP(X_1,Y_1))}$,
    \item Sample $(x_3,x^{[a+1]},y_3,y^{[a+1]},s) \leftarrow \tilde{X}_3\tilde{X}^{[a+1]}\tilde{Y}_3\tilde{Y}^{[a+1]}\tilde{S}\vert (\tilde{G}=g)$ from Claim~\ref{claim:6final}, where $\tilde{S}\vert (\tilde{G} =g) =U_{n_x/4}.$
     \item Sample $(x^{[a+2,3a]},y^{[a+2,3a]}) \leftarrow \tilde{X}^{[a+2,3a]}\tilde{Y}^{[a+2,3a]}\vert (\tilde{G}=g,\tilde{X}_3\tilde{X}^{[a+1]}\tilde{Y}_3\tilde{Y}^{[a+1]}= x_3x^{[a+1]}y_3y^{[a+1]}) ={X}^{[a+2,3a]}{Y}^{[a+2,3a]}\vert ({G}=g,{X}_3{X}^{[a+1]}{Y}_3{Y}^{[a+1]}= x_3x^{[a+1]}y_3y^{[a+1]})$ from Claim~\ref{claim:new2nmextsampling1new}.
\end{enumerate}
Consider,\begin{align*}
        &\Vert XYS-\tilde{X}\tilde{Y}\tilde{S}\Vert_1\\ 
        & \leq \Vert GXYS-\tilde{G}\tilde{X}\tilde{Y}\tilde{S}\Vert_1& \mbox{(Fact~\ref{fact:data})}\\
        & \leq \Vert GX_1X_3X^{[a+1]}Y_1Y_3Y^{[a+1]}S -\tilde{G}\tilde{X}_1\tilde{X}_3\tilde{X}^{[a+1]}\tilde{Y}_1\tilde{Y}_3\tilde{Y}^{[a+1]}\tilde{S} \Vert_1& \mbox{(Fact~\ref{fact:data},~Step $(3)$~and~Claim~\ref{claim:new2nmextsampling1new})}\\
           & \leq \Vert GX_3X^{[a+1]}Y_3Y^{[a+1]}S -\tilde{G}\tilde{X}_3\tilde{X}^{[a+1]}\tilde{Y}_3\tilde{Y}^{[a+1]}\tilde{S} \Vert_1& \mbox{(Fact~\ref{fact:data})}\\
        & = \mathbb{E}_{g \leftarrow G}\Vert X_3X^{[a+1]}Y_3Y^{[a+1]}S \vert (G=g)-\tilde{X}_3\tilde{X}^{[a+1]}\tilde{Y}_3\tilde{Y}^{[a+1]}\tilde{S}\vert (\tilde{G}=g) \Vert_1& \mbox{(Fact~\ref{traceavg}~and~$\tilde{G}=G$)}\\
        & \leq \cO(\eps)& \mbox{(Claim~\ref{claim:6final})}.
        \end{align*}
\end{proof}

\begin{claim}\label{claim:new2nmextsampling1new}
  Let $2\nmext: \lbrace 0,1 \rbrace ^n\times \lbrace 0,1 \rbrace^n   \rightarrow \lbrace 0,1 \rbrace^{n_x/4}$ be the $\mathsf{new}$-$2\nmext$ from Algorithm~\ref{alg:2nmExtnew}. Let ${XY}=U_n \otimes U_n$, $S =2\nmext(X,Y)$ and intermediate random variables $$X_1,Y_1,G,X_3,Y_3,X^{[a]},Y^{[a]},X^{[a+1,3a]},Y^{[a+1,3a]}$$ be as defined in   Algorithms~\ref{alg:2nmExtnew},~\ref{alg:2AdvCB},~\ref{alg:2FF}. Then, we have 
  $${GX_3Y_3X^{[a+1]}Y^{[a+1]}}={G} \otimes {X_3Y_3X^{[a+1]}Y^{[a+1]}}.$$ 
  Furthermore, given $g \in \supp(G)$ and $x_3y_3x^{[a+1]}y^{[a+1]}$, we can efficiently (in time polynomial in $n$) sample from the distribution $X^{[a+2,3a]}Y^{[a+2,3a]} \vert (G=g,X_3Y_3X^{[a+1]}Y^{[a+1]}=x_3y_3y^{[a+1]}x^{[a+1]})$. 
\end{claim}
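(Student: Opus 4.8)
The plan is to track how conditioning on $G$ restricts the two sources, exploiting that $X_1,Y_1$ (hence $R=\IP_1(X_1,Y_1)$ and the sample set $\samp(R)$) are independent of $X_2,Y_2$, and then to invoke the Reed--Solomon/MDS property via Lemma~\ref{lem:reedsolomon}. First I would fix the bookkeeping. Writing $X=X_1\circ X_2$ and $Y=Y_1\circ Y_2$ as in Algorithm~\ref{alg:2nmExtnew}, uniformity and independence of $X,Y$ give that $X_1,X_2,Y_1,Y_2$ are mutually independent and uniform; moreover $X_2=X_3\circ X^{[a+1]}\circ X^{[a+2,3a]}$ is simply a partition of $X_2$ into three consecutive chunks (similarly for $Y_2$), and I arrange the parameters (adjusting by $\cO(\log n)$ additive terms if necessary) so that $X_3$ and each block $X^i$ align with $\F_q$-symbol boundaries, $q=2^{\log(n+1)}$. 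Now fix $g=(x_1,\bar x_2,y_1,\bar y_2)\in\supp(G)$. This fixes $X_1=x_1$, $Y_1=y_1$, hence $R=r:=\IP_1(x_1,y_1)$ and the sample set $\mathcal S:=\samp(r)\subseteq[n]$ with $|\mathcal S|=t_1=\cO(n_5)$. Conditioned on $(X_1,Y_1)=(x_1,y_1)$ the pair $(X_2,Y_2)$ is uniform and independent, and the further event imposed by $G=g$ is the product event $\{\ecc(X_2)^{\mathcal S}=\bar x_2\}\cap\{\ecc(Y_2)^{\mathcal S}=\bar y_2\}$; hence $X_2\mid(G=g)$ and $Y_2\mid(G=g)$ remain independent, each uniform over an affine-type subset determined by $(\mathcal S,\bar x_2)$, resp.\ $(\mathcal S,\bar y_2)$.

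Next I would apply Lemma~\ref{lem:reedsolomon} to $X_2$ (viewed in $\F_q^{n_4}$) with $C=\ecc(X_2)$, $t=t_1$, and $\mathcal Q\subseteq[n_4]$ the set of consecutive symbols covering the prefix $X_3\circ X^{[a+1]}$, so $j:=|\mathcal Q|$ satisfies $j\log q=n_6+4n_x(a+1)$. The parameter choices ($j\approx n_4/3$, $t_1=o(n_4)$, $n_6/\log q=o(n_4)$) guarantee $t_1+j\le n_4$ for all large $n$, so the lemma applies: conditioned on $G=g$, the first $j$ symbols of $X_2$ --- i.e.\ $X_3\circ X^{[a+1]}$ --- are uniformly distributed, and (crucially) this conditional law does not depend on $g$. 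The same holds for $Y$. Combined with the product structure from the first paragraph, $(X_3,Y_3,X^{[a+1]},Y^{[a+1]})\mid(G=g)$ is uniform and independent of $g$; since this is true for every $g\in\supp(G)$, we conclude $GX_3Y_3X^{[a+1]}Y^{[a+1]}=G\otimes X_3Y_3X^{[a+1]}Y^{[a+1]}$, which is the first assertion.

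For the second assertion, given $g\in\supp(G)$ and $x_3y_3x^{[a+1]}y^{[a+1]}$ I want to sample $X^{[a+2,3a]}Y^{[a+2,3a]}\mid(G=g,\ X_3Y_3X^{[a+1]}Y^{[a+1]}=x_3y_3x^{[a+1]}y^{[a+1]})$. By the product structure this factors across the two sources, so it suffices to do the $X$-side (and run the identical procedure independently for $Y$). The ``furthermore'' clause of Lemma~\ref{lem:reedsolomon}, applied with $c=\bar x_2$ read off from $g$ and $l$ the tuple of $\F_q$-symbols encoding $x_3\circ x^{[a+1]}$, lets us sample $(X_2\mid(G=g))^{[n_4]\setminus\mathcal Q}$ conditioned on $(X_2\mid(G=g))^{\mathcal Q}=l$ in time polynomial in $(n_4,q)$, hence polynomial in $n$; by the symbol-boundary alignment this restriction is exactly $X^{[a+2,3a]}$ with the required conditional distribution, completing the sampler.

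The main obstacle is the combination of steps in the middle paragraph: one must verify that conditioning on $G$ --- whose codeword-restriction coordinates $\samp(R)$ are themselves a function of $X_1,Y_1$ --- still leaves the relevant prefixes of $X_2$ and $Y_2$ both uniform \emph{and} with a conditional law that is invariant under the value of $G$, so that one genuinely obtains a tensor product rather than mere marginal uniformity; the device is to freeze $(X_1,Y_1)$ first, which fixes $R$, and only then invoke the MDS property coordinate-wise. Secondary care is needed to check the parameter inequality $t_1+j\le n_4$ and to align the chunk boundaries $X_3,X^i$ with $\F_q$-symbol boundaries so that $\mathcal Q$ can be taken to be an exact union of symbols (otherwise the ``furthermore'' clause returns a slightly larger suffix than $X^{[a+2,3a]}$ and one has to peel off a tail of $\cO(\log n)$ bits by hand).
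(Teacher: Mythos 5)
Your proposal is correct and follows essentially the same route as the paper's proof: fix $G=g$ (which freezes $X_1,Y_1$ and hence the sampled codeword coordinates), observe that the conditioning factors as a product over the two independent sources, and apply Lemma~\ref{lem:reedsolomon} with $\mathcal{Q}$ the symbol-prefix covering $X_3\circ X^{[a+1]}$ both to get uniformity of that prefix (independent of $g$) and, via its ``furthermore'' clause, to sample $X^{[a+2,3a]}$ (and symmetrically $Y^{[a+2,3a]}$) in polynomial time. The paper handles your symbol-alignment caveat the same way, by taking $j$ so that the first $j\log(n+1)$ bits of $X_2$ contain $(X_3,X^{[a+1]})$ as a prefix, with $j<n_4-n_5$ playing the role of your constraint $t_1+j\le n_4$.
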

\begin{proof}
Note $X=X_1 \circ X_2$, $Y=Y_1 \circ Y_2$ and $G=X_1 \circ \ecc(X_2)_{\samp(\IP(X_1,Y_1))} \circ Y_1 \circ \ecc(Y_2)_{\samp(\IP(X_1,Y_1))}$ from the construction of $2\nmext$.
Note ${X_2Y_2}=U_{n_2} \otimes U_{n_2}$ and for fixed $G=g$, the distribution $X_2Y_2 \vert (G=g)=X_2 \vert (G=g) \otimes Y_2 \vert (G=g)$.

 Let $j \in [n_4]$ be such that $\pre(X_2,1,j \log (n+1))$ has the string $(X_3,X^{[a+1]})$ as prefix.  Thus, $(n_4-j) \log (n+1) \geq (2a-1)4n_x$. For our choice of parameters, $j < n_4 -n_5$. We now fix $G=g=x_1 \circ \bar{x_2} \circ y_1 \circ \bar{y_2}$ and let $\mathcal{T}_g =\samp(\IP(x_1,y_1)) =\{t_1,t_2, \ldots, t_{n_5}\}.$ 

Using Lemma~\ref{lem:reedsolomon}, with the following assignment of terms (terms on left are from Lemma~\ref{lem:reedsolomon} and terms on right are from here)
$$ (k,q,n,t,\mathcal{S}, \mathcal{Q},M) \leftarrow (n_4,n+1,n,n_5,\mathcal{T}_g,[j],X_2),$$ we get that $X_2 \vert (G=g)$ restricted to first $j \log (n+1)$ bits is uniform. Thus, $\pre(X_2\vert (G=g),1,j \log (n+1))$ is uniform. This further implies 
$(X_3,X^{[a+1]}) \vert (G=g)$ is uniform since  $(X_3,X^{[a+1]})$ is a prefix of $\pre(X_2\vert (G=g),1,j \log (n+1))$. Also, sampling from $X^{[a+2,3a]} \vert (G=g, X_3X^{[a+1]} =x_3x^{[a+1]})$ in time polynomial in $(n_4,n+1)$ follows from Lemma~\ref{lem:reedsolomon}. Using similar argument one can also note 
$(Y_3,Y^{[a+1]}) \vert (G=g)$ is uniform and sampling from $Y^{[a+2,3a]} \vert (G=g, Y_3Y^{[a+1]} =y_3y^{[a+1]})$ can be done in time polynomial in $(n_4,n+1)$. Thus, 
 ${GX_3Y_3X^{[a+1]}Y^{[a+1]}}={G} \otimes {X_3Y_3X^{[a+1]}Y^{[a+1]}}$ follows since  $X_3Y_3X^{[a+1]}Y^{[a+1]} \vert (G=g) =X_3Y_3X^{[a+1]}Y^{[a+1]} =U_{n_6} \otimes U_{n_6} \otimes U_{(a+1)4n_x} \otimes U_{(a+1)4n_x}$.

\end{proof}

\begin{claim}\label{claim:6final}
Let $2\nmext: \lbrace 0,1 \rbrace ^n\times \lbrace 0,1 \rbrace^n   \rightarrow \lbrace 0,1 \rbrace^{n_x/4}$ be the $\mathsf{new}$-$2\nmext$ from Algorithm~\ref{alg:2nmExtnew}. Let ${XY}=U_n \otimes U_n$, $S =2\nmext(X,Y)$ and
$$X_1,Y_1,R,G,X_3,Y_3,X^{[a]},Y^{[a]},Z^{[a]},A^{[a]},B^{[a]},C^{[a]},\bar{Z}^{[a]}, \bar{A}^{[a]}, \bar{B}^{[a]}, \bar{C}^{[a]},Z^{a+1},X^{[a+1,3a]},Y^{[a+1,3a]},F,S$$ be  as defined in Algorithms~\ref{alg:2nmExtnew},~\ref{alg:2AdvCB},~\ref{alg:2FF}. For any  $g \in \supp(G)$, we can sample efficiently from a distribution $(\tilde{X_3}\tilde{Y_3}\tilde{X}^{[a+1]}\tilde{Y}^{[a+1]}\tilde{S})\vert G=g$ such that 
\[\Vert (X_3Y_3X^{[a+1]}Y^{[a+1]}S) \vert (G=g)-(\tilde{X_3}\tilde{Y_3}\tilde{X}^{[a+1]}\tilde{Y}^{[a+1]}\tilde{S})\vert (G=g) \Vert_1 \leq {\cO(\eps)} \quad ;\quad \tilde{S}\vert (G=g) =U_{n_x/4}.\]

\end{claim}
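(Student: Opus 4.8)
The plan is to run the construction of $\mathsf{new}\mhyphen2\nmext$ "in reverse", peeling off one extractor application at a time, starting from the output $S$ and working back toward the source blocks $X_3,Y_3,X^{[a+1]},Y^{[a+1]}$. Fix $g\in\supp(G)$ throughout; all distributions below are implicitly conditioned on $G=g$. By Claim~\ref{claim:new2nmextsampling1new}, conditioned on $G=g$ the blocks $X_3,Y_3,X^{[a+1]},Y^{[a+1]}$ are exactly uniform and independent of $g$, which is what lets the forward computation behave like a chain of strong-seeded-extractor calls on fresh uniform blocks. First I would identify the forward sequence of seeded-extractor invocations that compute $S$ from these blocks: $S=\Ext_6(X^{a+1},F)$, then $F=\Ext_4(Y^{a+1},Z^{a+1})$, then $Z^{a+1}=2\ff(Y^a,X^a,Z^a,G_a)$ which itself unrolls into the flip-flop chain of $\Ext_1,\Ext_2,\Ext_3$ calls (Algorithm~\ref{alg:2FF}), all the way down through the $2\advcb$ loop (Algorithm~\ref{alg:2AdvCB}), and finally $Z^1=\IP_2(X_3,Y_3)$. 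Since each block is fresh and uniform (conditioned on $G=g$), at each stage the relevant source/seed pair is close to $U\otimes U$ by Corollary~\ref{corr:claim4} applied inductively, and each extractor in the list is a strong linear extractor (Fact~\ref{fact:extractor}) or $\IP_2$ (Fact~\ref{classicalip}).

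Next I would apply Claim~\ref{claim:extractorsampling} repeatedly in reverse order. Start by sampling $\tilde{S}\leftarrow U_{n_x/4}$ (this is exactly uniform, so $\eps''=0$ at the top), and use Claim~\ref{claim:extractorsampling} with $\Ext\leftarrow\Ext_6$ to sample a preimage block $\tilde{X}^{a+1}$ and recover an approximate $\tilde{F}$; then apply it again with $\Ext_4$ to get $\tilde{Y}^{a+1},\tilde{Z}^{a+1}$; then unroll the flip-flop and $2\advcb$ chains, at each step using the strong-extractor invertibility to sample the relevant fresh block and recover an approximately-uniform intermediate seed, feeding the recovered value as the "$\tilde O$" input to the next backward step. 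Here I would lean on the Markov-chain composition of Claim~\ref{fact:markovapprox} to glue the successive approximations together: because new blocks of $X_4,Y_4$ are used in each round, the intermediate random variables form a Markov chain, and each backward sampling step only perturbs the joint distribution by an additive $O(\eps+\eps'+\eps'')$ (with $\eps''$ the accumulated error so far). Since the number of rounds is $O(a)=O(n_1)=n^{o(1)}$ and the base error is $\eps=2^{-\cO(n^{\delta_1})}$, the total accumulated error after all rounds is still $O(\eps)$ — exactly as the parameter setting $2^{\cO(a)}\sqrt{\eps'}=\eps$ was chosen to guarantee. One must also use the fact that the seeded extractors in the chain are evaluated on \emph{new} blocks so the linear constraints imposed during sampling are on disjoint coordinate sets, hence consistent; this is where the block-source structure (and Fact~\ref{fact:prefixminentropyfact}) is essential, and it is implicitly packaged into Claim~\ref{claim:extractorsampling} through linearity plus Fact~\ref{fact:samplingeffifeintly}.

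Finally I would bound the error by a telescoping triangle-inequality argument: writing the forward-generated tuple $(X_3,Y_3,X^{[a+1]},Y^{[a+1]},S)\mid(G=g)$ and the sampled tuple $(\tilde X_3,\tilde Y_3,\tilde X^{[a+1]},\tilde Y^{[a+1]},\tilde S)\mid(G=g)$, each backward application of Claim~\ref{claim:extractorsampling} and Claim~\ref{fact:markovapprox} contributes $O(\eps)$, and summing over the $O(a)$ rounds (each round contributing a constant number of extractor inversions) gives $\Vert(X_3Y_3X^{[a+1]}Y^{[a+1]}S)\mid(G=g)-(\tilde X_3\tilde Y_3\tilde X^{[a+1]}\tilde Y^{[a+1]}\tilde S)\mid(G=g)\Vert_1\le\cO(\eps)$, with $\tilde S\mid(G=g)=U_{n_x/4}$ by construction. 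Efficiency is immediate since each inversion step runs in time polynomial in $n$ (Fact~\ref{fact:samplingeffifeintly}) and there are polynomially many of them. The main obstacle I anticipate is the \textbf{bookkeeping of the order in which intermediate random variables are regenerated}: in the alternating/flip-flop construction the intermediate seeds (which are close to uniform) are themselves used as sources to extract from later blocks, so a naive reversal would try to invert $\Ext$ on a source that is only \emph{approximately} uniform and \emph{not} independent of downstream randomness. Circumventing this requires the observation (hinted at in the proof overview) that there is a nearby distribution in which each round extracts from a genuinely independent block, plus the careful Markov-chain composition of Claim~\ref{fact:markovapprox} to control how the $\eps$-errors from "pretending independence" accumulate — getting this chain of reductions to line up, with the error staying $O(\eps)$ rather than blowing up by a factor exponential in $a$ in the wrong way, is the delicate part.
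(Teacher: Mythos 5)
Your proposal follows essentially the same route as the paper's proof: fix $G=g$ and use Claim~\ref{claim:new2nmextsampling1new} so the fresh blocks stay exactly uniform, pass to a nearby distribution in which the intermediate seeds are exactly uniform (the paper's ``hat'' variables, justified by Claim~\ref{claim:newsampling}), then invert the extractor chain backwards one call at a time via Claim~\ref{claim:extractorsampling} and glue the successive approximations with the Markov-chain composition of Claim~\ref{fact:markovapprox}, inducting over the $a$ flip-flop rounds according to the advice bits. The only cosmetic difference is the error bookkeeping: in the paper each round contributes $\cO((a+1)\eps')$ and only the final $\Ext_4/\Ext_6$ inversions contribute $\cO(\eps)$-type terms, so the total is $\cO(a^2\eps')+\cO(\eps)=\cO(\eps)$ by the parameter choice $2^{\cO(a)}\sqrt{\eps'}=\eps$ --- slightly sharper than your ``$\cO(\eps)$ per round'' phrasing, but reaching the same conclusion.
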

\begin{proof}
Let $g=g_1g_2 \ldots g_a$. Note from Claim~\ref{claim:new2nmextsampling1new}, we have $GX_3Y_3X^{[a+1]}Y^{[a+1]}=G \otimes X_3Y_3X^{[a+1]}Y^{[a+1]}$. Thus, fixing $G=g$,  distribution of sources in the alternating extraction remains the same as before, i.e. exactly uniform. 

Suppose $g_a=1$, we show how to efficiently sample from distribution $(Z^aX^aY^aX^{a+1}Y^{a+1}S) \vert (G=g)$ approximately~\footnote{Similar argument holds when $g_a=0$.}. We remove conditioning of $G=g$ for the rest of the proof. From Algorithms~\ref{alg:2nmExtnew},~\ref{alg:2AdvCB}~and~\ref{alg:2FF}, we have
\[Y^a=Y^a_1 \circ Y^a_2\circ  Y^a_3\circ Y^a_4, \quad X^a=X^a_1 \circ X^a_2\circ X^a_3\circ X^a_4 \quad \textnormal{and} \quad \{ A^a,B^a,C^a,\bar{Z}^a,\bar{A}^a, Z^{a+1},F,S\}\]as intermediate random variables for $g_a=1$. From Claim~\ref{claim:newsampling}, we have $ \Vert \bar{Z}^a - U_{2h} \Vert_1 \leq \cO(a\eps')$. From construction of $2\nmext$, we have $ \bar{Z}^aY_3^{a}Y_4^{a}X_3^{a}X_4^{a}Y^{a+1}X^{a+1} =\bar{Z}^a \otimes Y_3^{a} \otimes Y_4^{a}\otimes X_3^{a}\otimes X_4^{a} \otimes Y^{a+1} \otimes X^{a+1} $. Thus, we have 
\begin{equation}\label{eq:finalnew1}\Vert\bar{Z}^aY_3^{a}Y_4^{a}X_3^{a}X_4^{a}Y^{a+1}X^{a+1} - U_{2h}\otimes U_{n_y} \otimes U_{n_y}\otimes U_{n_x} \otimes U_{n_x}\otimes U_{4n_y} \otimes U_{4n_x} \Vert_1 \leq \cO(a\eps').
\end{equation}Let us proceed flip-flop procedure (Algorithm~\ref{alg:2FF}) with $\hat{\bar{Z}}^a\hat{Y}_3^a\hat{Y}_4^a\hat{X}_3^a\hat{X}_4^a \hat{Y}^{a+1}\hat{X}^{a+1}=U_{2h}\otimes U_{n_y} \otimes U_{n_y}\otimes U_{n_x} \otimes U_{n_x}\otimes U_{4n_y} \otimes U_{4n_x}$ and denote intermediate random variables as hat variables. Using Fact~\ref{fact:data} and Eq.~\eqref{eq:finalnew1}, we have 
\begin{equation}\label{eq:finalnew2}\Vert \bar{Z}^aY_3^{a}Y_4^{a}X_3^{a}X_4^{a}Y^{a+1}X^{a+1}\bar{A}^aZ^{a+1}FS - \hat{\bar{Z}}^a\hat{Y}_3^a\hat{Y}_4^a\hat{X}_3^a\hat{X}_4^a\hat{Y}^{a+1}\hat{X}^{a+1}\hat{\bar{A}}^a\hat{Z}^{a+1}\hat{F}\hat{S} \Vert_1 \leq \cO(a\eps').
\end{equation}
From Algorithms~\ref{alg:2nmExtnew},~\ref{alg:2AdvCB},~\ref{alg:2FF}, we have 
\begin{equation}\label{eq:claim8num1}
     (\hat{\bar{Z}}_s^a \otimes \hat{Y}_3^a) \leftrightarrow (\hat{\bar{A}}^a \otimes \hat{X}_4^a) \leftrightarrow (\hat{Z}^{a+1} \otimes \hat{Y}^{a+1}) \leftrightarrow (\hat{F} \otimes \hat{X}^{a+1}) \leftrightarrow \hat{S}.
\end{equation}
Using Corollary~\ref{corr:claim4} and noting $\hat{\bar{Z}}_s^a\hat{Y}_3^a=U_{s}\otimes U_{n_y}$, we have, $\Vert \hat{\bar{A}}^a -U_b \Vert_1 \leq \eps'.$ Noting $\hat{\bar{A}}^a \hat{X}_4^{a}= \hat{\bar{A}}^a \otimes  U_{n_x}$, we further have 
\[\Vert \hat{\bar{A}}^a \hat{X}_4^{a} -U_b \otimes U_{n_x}\Vert_1 \leq \eps'.\]
Using similar arguments, we also get 
\[\Vert \hat{Z}^{a+1}\hat{Y}^{a+1} -U_{2h} \otimes U_{4n_y}\Vert_1 \leq 2\eps' \quad ;\quad \Vert  \hat{F} \hat{X}^{a+1} -U_{n_y/8} \otimes U_{4n_x}\Vert_1 \leq 2\eps'+\eps^2.\]

We use the following sampling strategy to sample from a distribution $\tilde{\bar{Z}}^a\tilde{Y}_3^a\tilde{Y}_4^a \tilde{\bar{A}}^a\tilde{X}_3^{a}\tilde{X}_4^{a}\tilde{Z}^{a+1} \tilde{Y}^{a+1}\tilde{F}\tilde{X}^{a+1}\tilde{S}$:
\begin{enumerate}
    \item Sample  $(\tilde{\bar{z}}^a,\tilde{\bar{a}}^a,\tilde{z}^{a+1},\tilde{f},\tilde{s}) \leftarrow \tilde{\bar{Z}}^a\tilde{\bar{A}}^a\tilde{Z}^{a+1}\tilde{F}\tilde{S} = U_{2h} \otimes U_b \otimes  U_{2h} \otimes U_{n_y/8} \otimes U_{n_x/4}$,
    \item Sample $\tilde{x}^{a+1} \leftarrow  \tilde{X}^{a+1}\vert (\tilde{F}\tilde{S}=\tilde{f}\tilde{s})$ which is same as $ \tilde{x}^{a+1} \leftarrow \{ {x}^{a+1}  : \tilde{s}= \Ext_6({x}^{a+1},\tilde{f})\}$,
     \item Sample $\tilde{y}^{a+1} \leftarrow  \tilde{Y}^{a+1}\vert (\tilde{Z}^{a+1}\tilde{F}=\tilde{z}^{a+1}\tilde{f})$ which is same as $\tilde{y}^{a+1} \leftarrow \{ {y}^{a+1}  : \tilde{f}= \Ext_4({y}^{a+1},\tilde{z}^{a+1})\}$,
    \item Sample $\tilde{x}_4^{a} \leftarrow \tilde{X}_4^{a}\vert (\tilde{\bar{A}}^a\tilde{Z}^{a+1}=\tilde{\bar{a}}^a\tilde{z}^{a+1})$ which is same as $\tilde{x}_4^{a} \leftarrow \{ {x}_4^{a}  : \tilde{z}^{a+1}= \Ext_3({x}_4^{a},\tilde{\bar{a}}^a)\}$,
     \item Sample $\tilde{y}_3^{a} \leftarrow \tilde{Y}_3^{a}\vert (\tilde{\bar{Z}}_s^a\tilde{\bar{A}}^a=\tilde{\bar{z}}_s^a\tilde{\bar{a}}^a)$ which is same as $\tilde{y}_3^{a} \leftarrow \{ {y}_3^{a}  : \tilde{\bar{a}}^a= \Ext_1({y}_3^{a},\tilde{\bar{z}}_s^a)\}$,
     \item Sample $\tilde{y}_4^{a} \leftarrow \tilde{Y}_4^{a}= U_{n_y}$ and independently of $\tilde{\bar{Z}}^a\tilde{Y}_3^a\ \tilde{\bar{A}}^a\tilde{X}_4^{a}\tilde{Z}^{a+1} \tilde{Y}^{a+1}\tilde{F}\tilde{X}^{a+1}\tilde{S}$,
     \item Sample $\tilde{x}_3^{a} \leftarrow \tilde{X}_3^{a}= U_{n_x}$ and independently of $\tilde{\bar{Z}}^a\tilde{Y}_3^a\tilde{Y}_4^a\ \tilde{\bar{A}}^a\tilde{X}_4^{a}\tilde{Z}^{a+1} \tilde{Y}^{a+1}\tilde{F}\tilde{X}^{a+1}\tilde{S}$.
\end{enumerate}Noting,
\[\Vert \hat{F} \hat{X}^{a+1} -U_{n_y/8} \otimes U_{4n_x}\Vert_1 \leq 2\eps'+\eps^2 \quad ;\quad \hat{S}=\Ext_6(\hat{X}^{a+1},\hat{F})\] and using Claim~\ref{claim:extractorsampling}, 
 with the below assignment of terms (terms on the left are from Claim~\ref{claim:extractorsampling} and on the right are from here), $$(X,H,O,\tilde{X},\tilde{H},\tilde{O},\eps'') \leftarrow (\hat{X}^{a+1},\hat{F},\hat{S},\tilde{X}^{a+1},\tilde{F},\tilde{S},0),$$we get that the sampled distribution $\tilde{F}\tilde{X}^{a+1}\tilde{S}$ satisfies
\begin{equation}\label{eq:claimfinal1}
    \Vert \hat{F}\hat{X}^{a+1}\hat{S} -\tilde{F}\tilde{X}^{a+1}\tilde{S}\Vert_1 \leq 2\eps'+2\eps^2.
\end{equation}Noting,
\[\Vert \hat{Z}^{a+1}\hat{Y}^{a+1} -U_{2h} \otimes U_{4n_y}\Vert_1 \leq 2\eps' \quad ;\quad \hat{F}=\Ext_4(\hat{Y}^{a+1},\hat{Z}^{a+1})\] and using Claim~\ref{claim:extractorsampling} again, 
 with the below assignment of terms (terms on the left are from Claim~\ref{claim:extractorsampling} and on the right are from here), $$(X,H,O,\tilde{X},\tilde{H},\tilde{O},\eps'') \leftarrow (\hat{Y}^{a+1},\hat{Z}^{a+1},\hat{F},\tilde{Y}^{a+1},\tilde{Z}^{a+1},\tilde{F},0),$$we get that the sampled distribution  $\tilde{Z}^{a+1}\tilde{Y}^{a+1}\tilde{F}$ satisfies
\begin{equation}\label{eq:claimfinal2}
   \Vert \hat{Z}^{a+1}\hat{Y}^{a+1}\hat{F} -\tilde{Z}^{a+1}\tilde{Y}^{a+1}\tilde{F}\Vert_1 \leq 2\eps'+\eps^2.
\end{equation}Note $\hat{Z}^{a+1}\hat{Y}^{a+1} \leftrightarrow \hat{F} \leftrightarrow \hat{X}^{a+1} \hat{S}$ (from Eq.~\eqref{eq:claim8num1}), $ \tilde{Z}^{a+1}\tilde{Y}^{a+1} \leftrightarrow \tilde{F} \leftrightarrow \tilde{X}^{a+1} \tilde{S}$ (from the sampling procedure). Thus, from Eqs.~\eqref{eq:claimfinal1},~\eqref{eq:claimfinal2} and using Claim~\ref{fact:markovapprox} with the below assignment of terms (terms on the left are from Claim~\ref{fact:markovapprox} and on the right are from here), $$(A,B,C,\tilde{A},\tilde{B},\tilde{C}) \leftarrow (\hat{Z}^{a+1}\hat{Y}^{a+1}, \hat{F},\hat{X}^{a+1} \hat{S},\tilde{Z}^{a+1}\tilde{Y}^{a+1}, \tilde{F},\tilde{X}^{a+1} \tilde{S}),$$we get that the sampled distribution $\tilde{Z}^{a+1}\tilde{Y}^{a+1}\tilde{F}\tilde{X}^{a+1}\tilde{S}$ satisfies
\begin{equation}\label{eq:claimfinal3}
  \Vert \hat{Z}^{a+1}\hat{Y}^{a+1}\hat{F}\hat{X}^{a+1}\hat{S} -\tilde{Z}^{a+1}\tilde{Y}^{a+1}\tilde{F}\tilde{X}^{a+1}\tilde{S}\Vert_1 \leq 6\eps'+4\eps^2\leq 6\eps'+4\eps .
\end{equation}
Noting,
\[\Vert \hat{\bar{A}}^a \hat{X}_4^{a} -U_{b} \otimes U_{n_x}\Vert_1 \leq \eps' \quad ;\quad \hat{Z}^{a+1}=\Ext_3(\hat{X}_4^a,\hat{\bar{A}}^a)\]  and using Claim~\ref{claim:extractorsampling} again, 
 with the below assignment of terms $$(X,H,O,\tilde{X},\tilde{H},\tilde{O},\eps'') \leftarrow (\hat{X}_4^{a},\hat{\bar{A}}^{a},\hat{Z}^{a+1},\tilde{X}_4^{a},\tilde{\bar{A}}^{a},\tilde{Z}^{a+1},0),$$we get that the sampled distribution $\tilde{\bar{A}}^{a}\tilde{X_4}^{a}\tilde{Z}^{a+1}$ satisfies \begin{equation}\label{eq:claimfinal4}
   \Vert \hat{\bar{A}}^a \hat{X}_4^{a}\hat{Z}^{a+1} -\tilde{\bar{A}}^{a}\tilde{X_4}^{a}\tilde{Z}^{a+1}\Vert_1 \leq 2\eps'.
\end{equation}
Note $ \hat{\bar{A}}^a \hat{X}_4^{a} \leftrightarrow \hat{Z}^{a+1} \leftrightarrow \hat{Y}^{a+1}\hat{F}\hat{X}^{a+1}\hat{S}$ (from Eq.~\eqref{eq:claim8num1}) and $ \tilde{\bar{A}}^a \tilde{X}_4^{a} \leftrightarrow \tilde{Z}^{a+1} \leftrightarrow \tilde{Y}^{a+1}\tilde{F}\tilde{X}^{a+1}\tilde{S}$ (from sampling procedure). Thus, from Eqs.~\eqref{eq:claimfinal3},~\eqref{eq:claimfinal4} and using Claim~\ref{fact:markovapprox} with the below assignment of registers $$(A,B,C,\tilde{A},\tilde{B},\tilde{C}) \leftarrow ( \hat{\bar{A}}^a \hat{X}_4^{a},\hat{Z}^{a+1}, \hat{Y}^{a+1}\hat{F}\hat{X}^{a+1}\hat{S},\tilde{\bar{A}}^a \tilde{X}_4^{a},\tilde{Z}^{a+1}, \tilde{Y}^{a+1}\tilde{F}\tilde{X}^{a+1}\tilde{S}),$$ we get that the sampled distribution $\tilde{\bar{A}}^a\tilde{X}_4^{a}\tilde{Z}^{a+1}\tilde{Y}^{a+1}\tilde{F}\tilde{X}^{a+1}\tilde{S}$ satisfies
\begin{equation}\label{eq:claimfinal5}
  \Vert \hat{\bar{A}}^a \hat{X}_4^{a}\hat{Z}^{a+1} \hat{Y}^{a+1}\hat{F}\hat{X}^{a+1}\hat{S} - \tilde{\bar{A}}^a\tilde{X}_4^{a}\tilde{Z}^{a+1}\tilde{Y}^{a+1}\tilde{F}\tilde{X}^{a+1}\tilde{S}\Vert_1 \leq 10\eps'+4\eps.
\end{equation}
Proceeding similarly using arguments involving Claim~\ref{claim:extractorsampling}, Claim~\ref{fact:markovapprox} appropriately, the sampled distribution $\tilde{\bar{Z}}_s^a\tilde{Y}_3^a \tilde{\bar{A}}^a \tilde{X}_4^{a}\tilde{Z}^{a+1} \tilde{Y}^{a+1}\tilde{F}\tilde{X}^{a+1}\tilde{S}$ satisfies,
\begin{equation}\label{eq:finalnew6}\Vert \hat{\bar{Z}}_s^a\hat{Y}_3^a \hat{\bar{A}}^a \hat{X}_4^{a}\hat{Z}^{a+1} \hat{Y}^{a+1}\hat{F}\hat{X}^{a+1}\hat{S}  - \tilde{\bar{Z}}_s^a\tilde{Y}_3^a \tilde{\bar{A}}^a \tilde{X}_4^{a}\tilde{Z}^{a+1} \tilde{Y}^{a+1}\tilde{F}\tilde{X}^{a+1}\tilde{S}  \Vert_1 \leq 12\eps'+4\eps.
\end{equation}
Also note registers $ \pre( \hat{\bar{Z}}^a,s+1,2h )\hat{X}_3^a\hat{Y}_4^a = U_{2h-s}\otimes U_{n_x} \otimes U_{n_y}$ and independent of registers $\hat{\bar{Z}}_s^a\hat{Y}_3^a \hat{\bar{A}}^a \hat{X}_4^{a}\hat{Z}^{a+1} \hat{Y}^{a+1}\hat{F}\hat{X}^{a+1}\hat{S}$ (similarly for tilde variables). Thus,
\begin{equation}\label{eq:finalnew16}\Vert \hat{\bar{Z}}^a\hat{Y}_3^a\hat{Y}_4^a \hat{\bar{A}}^a\hat{X}_3^{a} \hat{X}_4^{a}\hat{Z}^{a+1} \hat{Y}^{a+1}\hat{F}\hat{X}^{a+1}\hat{S}  - \tilde{\bar{Z}}^a\tilde{Y}_3^a\tilde{Y}_4^a \tilde{\bar{A}}^a\tilde{X}_3^{a} \tilde{X}_4^{a}\tilde{Z}^{a+1} \tilde{Y}^{a+1}\tilde{F}\tilde{X}^{a+1}\tilde{S}  \Vert_1 \leq 12\eps'+4\eps.
\end{equation}
Using Eqs.~\eqref{eq:finalnew2},~\eqref{eq:finalnew16} and triangle inequality, we get\begin{equation*}\Vert  \bar{Z}^aY_3^{a}Y_4^{a}\bar{A}^aX_3^{a}X_4^{a}Z^{a+1}Y^{a+1}FX^{a+1}S -\tilde{\bar{Z}}^a\tilde{Y}_3^a\tilde{Y}_4^a \tilde{\bar{A}}^a\tilde{X}_3^{a} \tilde{X}_4^{a}\tilde{Z}^{a+1} \tilde{Y}^{a+1}\tilde{F}\tilde{X}^{a+1}\tilde{S}  \Vert_1 \leq \cO((a+1)\eps') +4\eps.
\end{equation*}Using Fact~\ref{fact:data}, we further have that the sampled distribution $\tilde{\bar{Z}}^a\tilde{X}_3^a \tilde{X}_4^{a}\tilde{Y}_3^a \tilde{Y}_4^{a}\tilde{X}^{a+1} \tilde{Y}^{a+1}\tilde{S}$ satisfies
\begin{equation}\label{eq:finalnew7}\Vert  \bar{Z}^aX_3^{a}X_4^{a}Y_3^{a}Y_4^{a}X^{a+1}Y^{a+1}S -\tilde{\bar{Z}}^a\tilde{X}_3^a\tilde{X}_4^a\tilde{Y}_3^{a} \tilde{Y}_4^{a} \tilde{X}^{a+1}\tilde{Y}^{a+1}\tilde{S}  \Vert_1 \leq \cO((a+1)\eps') +4\eps.
\end{equation}

From Claim~\ref{claim:newsampling}, we have $ \Vert Z^a - U_{2h} \Vert_1 \leq \cO(a\eps')$. From construction of $2 \nmext$, we have $ {Z}^aY_1^{a}Y_2^{a}X_1^{a}X_2^{a}={Z}^a \otimes Y_1^{a} \otimes Y_2^{a}\otimes X_1^{a}\otimes X_2^{a}$. Thus, we have 
\begin{equation}\label{eq:finalnew11}\Vert{Z}^aY_1^{a}Y_2^{a}X_1^{a}X_2^{a}- U_{2h}\otimes U_{n_y} \otimes U_{n_y}\otimes U_{n_x} \otimes U_{n_x}  \Vert_1 \leq \cO(a\eps').
\end{equation}Proceed flip-flop procedure with $\hat{Z}^a\hat{X}_1^a\hat{X}_2^a\hat{Y}_1^a\hat{Y}_2^a = U_{2h}\otimes U_{n_x} \otimes U_{n_x}\otimes U_{n_y} \otimes U_{n_y}$ and denote intermediate random variables as hat variables. Using Fact~\ref{fact:data} and Eq.~\eqref{eq:finalnew11}, we have 
\begin{equation}\label{eq:finalnew22}\Vert {Z}^aY_1^{a}Y_2^{a}X_1^{a}X_2^{a}{A}^aC^aB^a\bar{Z}^a -\hat{Z}^a\hat{Y}_1^{a}\hat{Y}_2^{a}\hat{X}_1^{a}\hat{X}_2^{a}\hat{A}^a\hat{C}^a\hat{B}^a\hat{\bar{Z}}^a \Vert_1 \leq \cO(a\eps').
\end{equation}From Algorithms~\ref{alg:2nmExtnew},~\ref{alg:2AdvCB},~\ref{alg:2FF}, we have 
\begin{equation}\label{eq:claim8numdfg}
     (\hat{{Z}}_s^a \otimes \hat{Y}_1^a) \leftrightarrow (\hat{{A}}^a \otimes \hat{{Z}}_2^a) \leftrightarrow (\hat{C}^{a} \otimes \hat{Y}_2^{a}) \leftrightarrow (\hat{B}^a \otimes \hat{X}_2^{a}) \leftrightarrow \hat{\bar{Z}}^a.
\end{equation}
We extend the sampling procedure in the following way: consider sampled $\tilde{\bar{z}}^a \leftarrow \tilde{\bar{Z}}^a=U_{2h}$ (from before), next sample $\tilde{b}^a \leftarrow \tilde{B}^a=U_b$ and independently and proceed to sample from $\tilde{X}_2^a \vert ( \tilde{\bar{Z}}^a\tilde{B}^a= \tilde{\bar{z}}^a\tilde{b}^a)$.
Using similar arguments involving Corollary~\ref{corr:claim4}, Claim~\ref{claim:extractorsampling}, Claim~\ref{fact:markovapprox} like before (in this proof), we can sample from a  distribution  $\tilde{Z}^a\tilde{Y}_1^{a}\tilde{Y}_2^{a}\tilde{X}_1^{a}\tilde{X}_2^{a}\tilde{A}^a\tilde{C}^a\tilde{B}^a\tilde{\bar{Z}}^a$ such that 
\begin{equation}\label{eq:finalnew222}\Vert \hat{Z}^a\hat{Y}_1^{a}\hat{Y}_2^{a}\hat{X}_1^{a}\hat{X}_2^{a}\hat{A}^a\hat{C}^a\hat{B}^a\hat{\bar{Z}}^a -\tilde{Z}^a\tilde{Y}_1^{a}\tilde{Y}_2^{a}\tilde{X}_1^{a}\tilde{X}_2^{a}\tilde{A}^a\tilde{C}^a\tilde{B}^a\tilde{\bar{Z}}^a \Vert_1 \leq \cO(\eps').
\end{equation}
Using Eq.~\eqref{eq:finalnew222} along with Fact~\ref{fact:data} we have, 
\begin{equation}\label{eq:finalnew189}\Vert   \tilde{{Z}}^a -U_{2h} \Vert_1 \leq \cO(\eps').
\end{equation}From Eqs.~\eqref{eq:finalnew22},~\eqref{eq:finalnew222} and triangle inequality,  we have
\begin{equation}\label{eq:finalnew23}\Vert {Z}^aY_1^{a}Y_2^{a}X_1^{a}X_2^{a}{A}^aC^aB^a\bar{Z}^a -\tilde{Z}^a\tilde{Y}_1^{a}\tilde{Y}_2^{a}\tilde{X}_1^{a}\tilde{X}_2^{a}\tilde{A}^a\tilde{C}^a\tilde{B}^a\tilde{\bar{Z}}^a \Vert_1 \leq \cO((a+1)\eps').
\end{equation}Using Fact~\ref{fact:data}, we further can sample from a distribution $\tilde{Z}^a\tilde{Y}_1^{a}\tilde{Y}_2^{a}\tilde{X}_1^{a}\tilde{X}_2^{a}\tilde{\bar{Z}}^a$ such that
\begin{equation}\label{eq:finalnew77}\Vert {Z}^aY_1^{a}Y_2^{a}X_1^{a}X_2^{a}\bar{Z}^a -\tilde{Z}^a\tilde{Y}_1^{a}\tilde{Y}_2^{a}\tilde{X}_1^{a}\tilde{X}_2^{a}\tilde{\bar{Z}}^a \Vert_1 \leq \cO((a+1)\eps').
\end{equation}
Note that ${Z}^aY_1^{a}Y_2^{a}X_1^{a}X_2^{a} \leftrightarrow \bar{Z}^a \leftrightarrow X_3^{a}X_4^{a}Y_3^{a}Y_4^{a}X^{a+1}Y^{a+1}S$ (from construction of $2\nmext$)
 and $\tilde{Z}^a\tilde{Y}_1^{a}\tilde{Y}_2^{a}\tilde{X}_1^{a}\tilde{X}_2^{a} \leftrightarrow \tilde{\bar{Z}}^a \leftrightarrow \tilde{X}_3^{a}\tilde{X}_4^{a}\tilde{Y}_3^{a}\tilde{Y}_4^{a}\tilde{X}^{a+1}\tilde{Y}^{a+1}\tilde{S}$ (from sampling procedure). From  Eqs.~\eqref{eq:finalnew7},~\eqref{eq:finalnew77} and using  Claim~\ref{fact:markovapprox} along with Fact~\ref{fact:data}, we have 
\[ \Vert {Z}^aY^{[a,a+1]}X^{[a,a+1]}S -\tilde{Z}^a\tilde{Y}^{[a,a+1]}\tilde{X}^{[a,a+1]}\tilde{S} \Vert_1 \leq \cO((a+1)\eps')+ 4\eps.\]

We extend the sampling procedure now considering sampled $\tilde{z}^a \leftarrow \tilde{{Z}}^a$ (from before) such that $\Vert \tilde{{Z}}^a -U_{2h} \Vert_1 \leq \cO(\eps')$ (from Eq.~\eqref{eq:finalnew189}) and proceed depending on bit $g_{a-1}$. Proceeding likewise depending on $g=g_1g_2\ldots g_a$ and  inductively using similar arguments involving Corollary~\ref{corr:claim4}, Claim~\ref{claim:extractorsampling}, Claim~\ref{fact:markovapprox} along with Fact~\ref{fact:data}, we can sample from $\tilde{X}_3\tilde{Y}_3\tilde{X}^{[a+1]}\tilde{Y}^{[a+1]}\tilde{S}$ such that 
\begin{equation*}\label{eq:finalnew2222}\Vert X_3Y_3{X}^{[a+1]}{Y}^{[a+1]}{S} -\tilde{X}_3\tilde{Y}_3\tilde{X}^{[a+1]}\tilde{Y}^{[a+1]}\tilde{S} \Vert_1 \leq \cO(\eps') + \left(\sum_{i=1}^{a} \cO((a+1)\eps')\right)+ 4\eps = \cO( \eps).
\end{equation*}
\end{proof}

We show a claim that states that intermediate random variables in the alternating extraction are approximately uniform even conditioned on every $G=g$.
\begin{claim}\label{claim:newsampling}
 Let $2\nmext: \lbrace 0,1 \rbrace ^n\times \lbrace 0,1 \rbrace^n   \rightarrow \lbrace 0,1 \rbrace^{n_x/4}$ be the $\mathsf{new}$-$2\nmext$ from Algorithm~\ref{alg:2nmExtnew}. Let ${XY}=U_n \otimes U_n$, $S =2\nmext(X,Y)$ and set $$ \mathsf{P} \defeq\{G, Z^{[a]},A^{[a]},B^{[a]},C^{[a]},\bar{Z}^{[a]}, \bar{A}^{[a]}, \bar{B}^{[a]}, \bar{C}^{[a]},Z^{a+1}\}$$ be the  intermediate random variables as defined in   Algorithms~\ref{alg:2nmExtnew},~\ref{alg:2AdvCB},~\ref{alg:2FF}. Then, we have for any random variable $Q \in \mathsf{P}\setminus \{G \}$ and any fixing $G=g$, 
 $$ \Vert Q \vert (G=g)-U_{\vert Q \vert} \Vert_1 \leq  \cO(a\eps').$$
\end{claim}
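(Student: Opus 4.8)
The plan is to induct on the alternating-extraction steps inside the flip-flop procedure (Algorithm~\ref{alg:2FF}), tracking the claim as a running invariant over the ordered list $\mathsf{P}\setminus\{G\}$. First I would fix an arbitrary $g\in\supp(G)$ and condition on $G=g$ throughout; by Claim~\ref{claim:new2nmextsampling1new} (and the Reed--Solomon argument of Lemma~\ref{lem:reedsolomon}) this conditioning leaves the block sources $X^{[a+1]},Y^{[a+1]},X_3,Y_3$ exactly uniform, and since each of the remaining blocks $X^i_j,Y^i_j$ used in the flip-flop steps for $i\le a$ lies inside these intervals and outside the $\samp(R)$ coordinates fixed by $g$, the same conditioning leaves \emph{all} relevant blocks of $X,Y$ jointly uniform and mutually independent. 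So after conditioning we are in the clean situation where the sources feeding alternating extraction are genuinely uniform; the min-entropy requirements of Fact~\ref{lem:minentropy} are met (each block has almost full min-entropy), so each invocation of a strong extractor $\Ext_1,\Ext_2,\Ext_3,\Ext_4,\Ext_6$ and of $\IP_2$ is legitimate.

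Next I would run the induction. The base case is $Z^1=\IP_2(X_3,Y_3)$ with $X_3,Y_3\mid(G=g)$ uniform and independent, so Corollary~\ref{corr:claim4} gives $\Vert Z^1\mid(G=g)-U_{2h}\Vert_1\le\eps'$ (using that $X_3,Y_3$ have enough min-entropy so the $\IP$ error is at most $\eps'$). For the inductive step, suppose $Z^i\mid(G=g)$ is $\cO(i\eps')$-close to $U_{2h}$. Inside $2\ff(Y^i,X^i,Z^i,G_i)$ one computes in sequence $A^i=\Ext_1(Y^i_1,Z^i_s)$, $C^i=\Ext_2(Z^i_2,A^i)$, $B^i=\Ext_1(Y^i_2,C^i)$, then $\overline Z^i$ (via $\Ext_3$ on $X^i_1$ or $X^i_2$), then $\overline A^i,\overline C^i,\overline B^i$ analogously on fresh blocks $Y^i_3,Y^i_4$, and finally $O=Z^{i+1}$. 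At each step the seed is the previous output (close to uniform by induction/previous step) and the source is a fresh block of $X$ or $Y$ which, conditioned on $G=g$, is uniform and independent of everything generated so far in this branch; hence Corollary~\ref{corr:claim4} applies to bound the new variable's distance from uniform by the previous distance plus $\eps'$. Each flip-flop invocation contributes $\cO(1)$ such steps, so after $i$ rounds the accumulated error is $\cO(i\eps')$, and in particular every variable in $\mathsf{P}\setminus\{G\}$ generated up to and including round $a$ — and $Z^{a+1}$ — is $\cO(a\eps')$-close to uniform conditioned on $G=g$. Since $g$ was arbitrary this is exactly the claim.

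The main obstacle is \textbf{verifying the independence bookkeeping after conditioning on $G=g$}: one must check that each fresh block $X^i_j$ (resp.\ $Y^i_j$) used as a source in round $i$ is, conditioned on $G=g$, uniform \emph{and independent of all previously generated intermediate variables $A^{i'},C^{i'},B^{i'},\overline Z^{i'},\dots,Z^{i'+1}$ for $i'\le i$} — otherwise the strong-extractor guarantee (and Corollary~\ref{corr:claim4}) cannot be invoked. This follows because, conditioned on $G=g$, the variables $X_2\mid(G=g)$ and $Y_2\mid(G=g)$ are independent (Claim~\ref{claim:new2nmextsampling1new}), their restrictions to the disjoint intervals defining the blocks $X^i_j$ are independent uniform, all intermediate variables produced before round $i$ are deterministic functions of $Z^1$ together with blocks disjoint from $X^i_j$, and $Z^1=\IP_2(X_3,Y_3)$ depends only on blocks disjoint from $X^i_j,Y^i_j$ for $i\ge 1$ (since $X_3,Y_3$ sit in the $X^1,\dots$ prefix but each flip-flop round consumes a distinct, non-overlapping chunk). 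I would spell this disjointness out once, using the explicit $\pre(\cdot,\cdot,\cdot)$ index ranges in Algorithms~\ref{alg:2AdvCB} and~\ref{alg:2FF}, and then the error accumulation is a routine telescoping of Corollary~\ref{corr:claim4}.
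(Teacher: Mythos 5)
Your plan is correct and follows essentially the same route as the paper's proof: fix $G=g$, invoke Claim~\ref{claim:new2nmextsampling1new} to get that the blocks $X_3,Y_3,X^{[a+1]},Y^{[a+1]}$ remain exactly uniform and independent under this conditioning, take $Z^1=\IP_2(X_3,Y_3)$ as the base case, and induct over the $a$ flip-flop rounds applying Corollary~\ref{corr:claim4} to each extractor call so that every round adds only $\cO(\eps')$ to the accumulated error; the paper implements the "seed only approximately uniform" bookkeeping by first replacing $Z^i$ with an exactly uniform $\hat{Z}^i$ via Fact~\ref{fact:data} and a triangle inequality, which is equivalent to your direct application of Corollary~\ref{corr:claim4} with a joint input close to a product of uniforms. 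The only small imprecision is your statement that the source at each step is a fresh block of $X$ or $Y$: in the steps $C^i=\Ext_2(Z^i_2,A^i)$ and $\bar{C}^i=\Ext_2(\bar{Z}^i_2,\bar{A}^i)$ the source is part of the (approximately uniform) seed block itself, but since $\hat{Z}^i_2$ is uniform and independent of $\hat{A}^i$ the same argument applies and the conclusion is unaffected.
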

\begin{proof}
From Claim~\ref{claim:new2nmextsampling1new}, we have  $${GX_3Y_3X^{[a+1]}Y^{[a+1]}}={G} \otimes {X_3Y_3X^{[a+1]}Y^{[a+1]}},$$
where $X=X_1 \circ X_3 \circ X^{[a+1]} \circ X^{[a+2,3a]}$ and $Y=Y_1 \circ Y_3 \circ Y^{[a+1]} \circ Y^{[a+2,3a]}$. Also, note from Algorithms~\ref{alg:2nmExtnew},~\ref{alg:2AdvCB},~\ref{alg:2FF}, any random variable $Q \in \mathsf{P}\setminus \{G \}$ is extracted from sources $X_3Y_3X^{[a+1]}Y^{[a+1]}$. Note for any $i \in [a]$ and $i\mhyphen$th flip-flop procedure (Algorithm~\ref{alg:2FF}), intermediate random variables $ \{ A^{i},B^{i},C^{i},\bar{Z}^{i}, \bar{A}^{i}, \bar{B}^{i}, \bar{C}^{i},Z^{i+1} \}$ are extracted from $X^iY^iZ^i =X^i \otimes Y^i \otimes Z^i$. We remove conditioning on $G=g$ for the random variables for the rest of the proof. 

Since $Z^1 =\IP(X_3,Y_3)$, using Fact~\ref{classicalip} we have $\Vert Z^1 - U_{2h} \Vert_1 \leq 2^{-\Omega(n)} \leq \eps'$. We show if for any round $i$, $\Vert Z^i - U_{2h} \Vert_1 \leq \cO(i\eps')$, then for any $Q \in  \{ A^{i},B^{i},C^{i},\bar{Z}^{i}, \bar{A}^{i}, \bar{B}^{i}, \bar{C}^{i},Z^{i+1} \}$ we have 
$\Vert Q - U_{\vert Q \vert} \Vert_1 \leq \cO((i+2)\eps')$.

We denote  generated intermediate random variables $A^i,B^i,C^i,\bar{Z}^{i}$ from $\hat{Z}^i=\hat{Z}^i_1\hat{Z}^i_2=U_{2h}$ instead of $Z^i$ as hat random variables. Using Fact~\ref{fact:data} repeatedly, we have 
\begin{equation}\label{eq:claimseedepsilon}\Vert X_1^iX_2^iY_1^iY_2^iZ^i A^iB^iC^i\bar{Z}^{i} -X_1^iX_2^iY_1^iY_2^i \hat{Z}^i \hat{A}^i\hat{B}^i\hat{C}^i\hat{\bar{Z}}^{i}  \Vert_1 \leq  \cO(i\eps').
\end{equation}
Since $\hat{A}^i = \Ext_1({Y}^i_1,\hat{Z}^i_s)$, ${Y}^i_1\hat{Z}^i_s = U_{n_y} \otimes U_s$,  using Corollary~\ref{corr:claim4} we have 
\begin{equation}\label{eq:claimseedepsilon1}\Vert \hat{A}^i -U_b\Vert_1\leq \eps'.
\end{equation}
Since $\hat{C}^i = \Ext_2(\hat{Z}^i_2,\hat{A}^i)$, $\hat{Z}^i_2\hat{A}^i=U_h \otimes \hat{A}^i$ and $\Vert \hat{A}^i -U_b\Vert_1\leq \eps'$, using Corollary~\ref{corr:claim4} we have 
\begin{equation}\label{eq:claimseedepsilon2}\Vert \hat{C}^i -U_s\Vert_1\leq 2\eps'.
\end{equation}
Since $\hat{B}^i = \Ext_1({Y}^i_2,\hat{C}^i)$, ${Y}^i_2\hat{C}^i=U_{n_y} \otimes \hat{C}^i$ and $\Vert \hat{C}^i -U_s\Vert_1\leq 2\eps'$, using Corollary~\ref{corr:claim4} we have 
\begin{equation}\label{eq:claimseedepsilon3}\Vert \hat{B}^i -U_b\Vert_1\leq 3\eps'.
\end{equation}
If $g_i =1$, we have $\hat{\bar{Z}}^i = \Ext_1({X}^i_2,\hat{B}^i)$, ${X}^i_2\hat{B}^i=U_{n_x} \otimes \hat{B}^i$ and $\Vert \hat{B}^i -U_b\Vert_1\leq 3\eps'$. Using Corollary~\ref{corr:claim4}, we further have\begin{equation}\label{eq:claimseedepsilon4}\Vert \hat{\bar{Z}}^i -U_{2h}\Vert_1\leq 4\eps'.
\end{equation}Similarly, if $g_i =0$, we will have
\begin{equation}\label{eq:claimseedepsilon5}\Vert \hat{\bar{Z}}^i -U_{2h}\Vert_1\leq 2\eps'.
\end{equation}
Using Fact~\ref{fact:data}, Eqs.~\eqref{eq:claimseedepsilon},~\eqref{eq:claimseedepsilon1},~\eqref{eq:claimseedepsilon2},~\eqref{eq:claimseedepsilon3},~\eqref{eq:claimseedepsilon4}~and~\eqref{eq:claimseedepsilon5} along with appropriate triangle inequalities, for any $Q \in  \{ A^{i},B^{i},C^{i},\bar{Z}^{i}\}$ we have 
$\Vert Q - U_{\vert Q \vert} \Vert_1 \leq \cO((i+1)\eps')$. Using similar arguments and noting 
$X_3^iX_4^iY_3^iY_4^i\bar{Z}^i =X_3^i\otimes X_4^i\otimes Y_3^i\otimes Y_4^i\otimes \bar{Z}^i,$ one can note for any $Q \in  \{  \bar{A}^{i}, \bar{B}^{i}, \bar{C}^{i},Z^{i+1} \}$ we have 
$\Vert Q - U_{\vert Q \vert} \Vert_1 \leq \cO((i+2)\eps')$. Thus, since we have $a$ rounds of flip-flop procedure, the desired follows from induction argument.

\end{proof}


\subsection*{Acknowledgment}
We thank Divesh Aggarwal and Maciej Obremski for introducing us to the problem, sharing their insights on the classical constructions and several other helpful discussions. We also thank Upendra Kapshikar for the helpful discussions. 

This work is supported by the National Research Foundation, including under NRF RF Award
No. NRF-NRFF2013-13, the Prime Minister’s Office, Singapore and the Ministry of Education,
Singapore, under the Research Centres of Excellence program and by Grant No. MOE2012-T3-1-
009 and in part by the NRF2017-NRF-ANR004 VanQuTe Grant. This work is also partially supported by the Singapore Ministry of Education under grant MOE2019-T2-1-145.

\bibliography{References}
\bibliographystyle{alpha}
\appendix

\section{A quantum secure one-many non-malleable code in the split-state model \label{Appendix1}}

\subsection{Preliminaries for quantum secure one-many non-malleable code in the split-state model}
 Let $m,n,k,t$ be positive integers and $\eps,\eps'>0$. Let $\mathcal{F}$ denote the set of all the functions from $\{0,1 \}^n \longrightarrow \{0,1 \}^n$. We next define $\cp^{(t)} : (\{ 0,1\}^* \cup \{ \mathsf{same}\})^t \times \{ 0,1\}^* \to \{ 0,1\}^*$ when the first input is a $t$-tuple as follows:
$\cp^{(t)}((x_1, x_2, \ldots, x_t),s) = (\cp(x_1,s), \cp(x_2,s), \ldots, \cp(x_t,s))$. Recall the function $\cp : \{ 0,1\}^* \cup \{ \mathsf{same}\} \times \{ 0,1\}^* \to \{ 0,1\}^*$ is such that $\cp(x,s) =s$ if $x=\mathsf{same}$, otherwise $\cp(x,s) =x$.

\begin{definition}[One-many non-malleable codes in the split-state model\label{def:nmcodes_classicalt}~\cite{CGL15}]An encoding and decoding scheme $(\enc, \dec)$ is a $(t;m,n,\eps)$-non-malleable code with respect to a family of tampering functions ${(\mathcal{F}\times \mathcal{F})}^{t} $, if for every $$f=((g_1,h_1),(g_2,h_2), \ldots, (g_t,h_t)) \in {(\mathcal{F}\times \mathcal{F})}^{ t} ,$$ there exists a random variable $\mathcal{D}_{f}=\mathcal{D}_{((g_1,h_1),(g_2,h_2), \ldots, (g_t,h_t))}$ on $ (\{0,1 \}^{m}  \cup \lbrace \sm \rbrace)^{ t}$ which is independent of the randomness in $\enc$ such that for all messages $s \in \{0,1 \}^m$, it holds that
$$  \Vert  \dec((g_1,h_1) (\enc(s))) \ldots  \dec((g_t,h_t)  (\enc(s)))  - \cp^{(t)} (\mathcal{D}_{f} ,s)\Vert_1 \leq \eps.$$
\end{definition}

\begin{definition}[$(t;k_1,k_2)\mhyphen\nmas$~\cite{BJK21}]\label{def:2tsource-qnmadversarydef}
     Let $\sigma_{{X}\hat{X}NMY\hat{Y}}$ be a $(k_1,k_2)\mhyphen\qmas$. Let $U: \cH_X \otimes \cH_N \rightarrow \cH_X \otimes \cH_{X^{[t]}}  \otimes  \cH_{\hat{X}^{[t]}}  \otimes \cH_{N'}$ and $V: \cH_Y \otimes \cH_M \rightarrow \cH_Y \otimes \cH_{Y^{[t]}}  \otimes  \cH_{\hat{Y}^{[t]}}  \otimes \cH_{M'}$ be isometries such that for
     $\rho = (U \otimes V)\sigma (U \otimes V)^\dagger,$ we have $X^{[t]}Y^{[t]}$ classical (with copy $\hat{X}^{[t]}\hat{Y}^{[t]}$) and
      \begin{gather*}
           \forall i\in [t],\ \Pr(Y \ne Y^i)_\rho =1 \quad \text{or} \quad \Pr(X \ne X^i)_\rho =1. 
       \end{gather*}We call $\rho$ a $(t;k_1,k_2)\mhyphen\nmas$.
\end{definition}
\begin{definition}[Quantum secure $2$-source  one-many non-malleable extractor~\cite{BJK21}]\label{def:2tsourcenme}
		An $(n,n,m)$-non-malleable extractor $t\mhyphen2\nmext : \{0,1\}^{n} \times \{0,1\}^{n} \to \{0,1\}^m$ is $(t;k_1,k_2,\eps)$-secure against $\nma$ if for every $(t;k_1,k_2)\mhyphen\nmas$ $\rho$ (chosen by the adversary $\nma$),
	$$  \| \rho_{ t\mhyphen2\nmext(X,Y)t\mhyphen2\nmext(X^1,Y^1) \ldots t\mhyphen2\nmext(X^t,Y^t) XX^{[t]}N'} - U_m \otimes \rho_{t\mhyphen2\nmext(X^1,Y^1) \ldots t\mhyphen2\nmext(X^t,Y^t) XX^{[t]}N'} \|_1 \leq \eps,$$and
	$$  \| \rho_{ t\mhyphen2\nmext(X,Y)t\mhyphen2\nmext(X^1,Y^1) \ldots t\mhyphen2\nmext(X^t,Y^t) YY^{[t]}M'} - U_m \otimes \rho_{t\mhyphen2\nmext(X^1,Y^1) \ldots t\mhyphen2\nmext(X^t,Y^t) YY^{[t]}M'} \|_1 \leq \eps.$$
\end{definition}

\begin{definition}[Quantum one-many split-state adversary]\label{def:splitstateadvt}Let $\sigma_{SXY}$ be the state after encoding message $S$.  The quantum one-many split-state adversary (denoted $\mathcal{A} = (U,V,{\psi}_{})$) will act via two isometries, $(U,V)$ using an additional shared entangled state $\ket{\psi}_{NM}$ as specified by $U: \cH_X \otimes \cH_N \rightarrow \cH_{X^{[t]}}  \otimes \cH_{N^\prime}$ and $V: \cH_Y \otimes \cH_M \rightarrow  \cH_{Y^{[t]}} \otimes \cH_{M'}$. Let $\hat{\rho} = (U \otimes V)(\sigma \otimes \ket{\psi}_{}\bra{\psi}_{})(U \otimes V)^\dagger$ and ${\rho}$ be the final state after measuring the registers $(X^{[t]}Y^{[t]})$ in the computational basis~\footnote{We enforce the quantum split-state adversary to return classical registers $X^{[t]},Y^{[t]}$ by doing the measurement in the computational basis. This is without any loss of generality since the decoding process can start by doing the measurement (this will not affect anything in the absence of quantum one-many split-state adversary) and then decode using $\dec$.}. 
\end{definition}

\begin{definition}[Quantum secure one-many non-malleable codes in the split-state model\label{def:nmcodes_qst}]  An encoding and decoding scheme $(\enc, \dec)$ is a $(t;m,n,\eps)$-quantum secure one-many non-malleable code in the split-state model with error $\eps$, if for state $\rho$ and adversary $\mathcal{A} = (U,V, {\psi})$ (as defined in Definition~\ref{def:splitstateadvt}), there exists a random variable $  \mathcal{D}_{\mathcal{A}}$ on $(\{0,1 \}^m  \cup \lbrace \sm \rbrace)^t$ such that 
 $$\forall s \in \{0,1\}^m: \qquad \Vert S^{[t]}_s- \cp^{(t)} (\mathcal{D}_{\mathcal{A}} ,s)   \Vert_1  \leq \eps.$$
 Above $\forall i \in [t], S^i=\dec(X^i,Y^i)$ and $ S^{[t]}_s = (S^{[t]}|S=s)$.
\end{definition}

\begin{claim}
\label{factcor:2nmextextt}
 Let $\sigma_{X\hat{X}NY\hat{Y}M}$ be a $(k_1,k_2)\mhyphen \qmas$ with $\vert X \vert=n$ and $\vert Y \vert=n$. Let $t \mhyphen2\nmext:\{0,1\}^n \times\{0,1\}^{n} \to \lbrace 0,1\rbrace^{m}$ be a $(t;k_1,k_2, \eps)$-quantum secure $2$-source $t \mhyphen$non-malleable extractor. Let $S=t \mhyphen 2\nmext(X,Y)$. Then,
$$ \| \sigma_{SYM} - U_{m} \otimes \sigma_{YM} \|_1 \leq \eps.$$
\end{claim}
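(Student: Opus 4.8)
This claim is the $t$-tampered analogue of Claim~\ref{factcor:2nmextext}, and I expect the proof to follow the same template almost verbatim. The plan is to start from the given $(k_1,k_2)\mhyphen\qmas$ $\sigma_{X\hat{X}NY\hat{Y}M}$ and manufacture a $(t;k_1,k_2)\mhyphen\nmas$ out of it by applying suitable safe isometries that produce the $t$ tampered copies trivially (e.g.\ making each $X^i$ a fresh copy of $X$ shifted by a fixed nonzero constant, and likewise for $Y^i$, so that $\Pr(X\ne X^i)=1$ for every $i$). Concretely, I would pick isometries $U:\cH_X\to\cH_X\otimes\cH_{X^{[t]}}\otimes\cH_{\hat X^{[t]}}$ and $V:\cH_Y\to\cH_Y\otimes\cH_{Y^{[t]}}\otimes\cH_{\hat Y^{[t]}}$ (acting only on the classical registers, hence safe), set $\rho=(U\otimes V)\sigma(U\otimes V)^\dagger$, and observe that by Definition~\ref{def:2tsource-qnmadversarydef} the state $\rho$ is a $(t;k_1,k_2)\mhyphen\nmas$: it has $X^{[t]}Y^{[t]}$ classical with copies, and $\forall i\in[t]$ we have $\Pr(Y\ne Y^i)_\rho=1$ (or $\Pr(X\ne X^i)_\rho=1$), whichever convention is cleanest. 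The one small thing to double-check is that applying $U$ and $V$ does not disturb the min-entropy conditions needed for $\sigma$ to remain a $(k_1,k_2)\mhyphen\qmas$ — but since the isometries only add copies of already-classical registers, Fact~\ref{fact102} gives that the conditional min-entropies $\hmin{X}{MY\hat Y}$ and $\hmin{Y}{NX\hat X}$ are unchanged (or only improved), so the $(k_1,k_2)\mhyphen\qmas$ structure is preserved and $\rho$ is genuinely a valid $(t;k_1,k_2)\mhyphen\nmas$.

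Once $\rho$ is identified as a $(t;k_1,k_2)\mhyphen\nmas$, I invoke the security of the extractor. By Definition~\ref{def:2tsourcenme}, since $t\mhyphen2\nmext$ is $(t;k_1,k_2,\eps)$-quantum secure against $\nma$, we get
\[
\Vert \rho_{t\mhyphen2\nmext(X,Y)\,t\mhyphen2\nmext(X^1,Y^1)\cdots t\mhyphen2\nmext(X^t,Y^t)\,YY^{[t]}M'} - U_m\otimes \rho_{t\mhyphen2\nmext(X^1,Y^1)\cdots t\mhyphen2\nmext(X^t,Y^t)\,YY^{[t]}M'}\Vert_1\le\eps.
\]
Then, exactly as in Claim~\ref{factcor:2nmextext}, I apply Fact~\ref{fact:data} (data processing: trace out the tampered-output registers $t\mhyphen2\nmext(X^i,Y^i)$, the copies $Y^{[t]}$, and replace $M'$-processing trivially) to reduce this to
\[
\Vert \rho_{t\mhyphen2\nmext(X,Y)\,YM'}-U_m\otimes\rho_{YM'}\Vert_1\le\eps.
\]
Finally I note that $\sigma_{XNMY}=\rho_{XNMY}$ (since $U,V$ act only to create extra copies and do not touch the marginal on $XNMY$ — or more carefully, the relevant marginal $XYM$ after possibly composing $V$ with whatever produces $M'$ from $M$ is preserved by data processing), so $S=t\mhyphen2\nmext(X,Y)$ computed in $\sigma$ agrees with the one in $\rho$, yielding $\Vert\sigma_{SYM}-U_m\otimes\sigma_{YM}\Vert_1\le\eps$ as desired.

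The main obstacle — which is really a bookkeeping rather than a conceptual issue — is making sure the register identification is airtight: in Definition~\ref{def:2tsourcenme} the side-information register is written $M'$ (the post-tampering register) whereas in the claim statement it is $M$ (the untampered register). I need the isometry $V$ to be chosen so that $M'$ either equals $M$ or is obtained from $M$ by an isometry, so that Fact~\ref{fact:data} lets me pass freely between $\rho_{\cdots M'}$ and $\sigma_{\cdots M}$; the simplest choice is to let $V$ only add classical copy-registers $Y^{[t]},\hat Y^{[t]}$ and leave $M$ untouched, i.e.\ $M'\equiv M$, and symmetrically for $U$ and $N'\equiv N$. With that choice the final line $\sigma_{XNMY}=\rho_{XNMY}$ is immediate. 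The only other point to verify is that such trivial tampering isometries exist and satisfy the ``no fixed point'' requirement $\Pr(Y\ne Y^i)_\rho=1$ for all $i$ — but this is exactly the footnote-style observation made in the proof of Claim~\ref{factcor:2nmextext} (``It is easily seen that such isometries exist''), since one can take $Y^i$ to be a copy of $Y$ with a fixed nonzero bit flipped. Hence the proof is a direct transcription of Claim~\ref{factcor:2nmextext} with $2\nmext\leadsto t\mhyphen2\nmext$, Definition~\ref{def:2nme}$\leadsto$Definition~\ref{def:2tsourcenme}, and Definition~\ref{def:2source-qnmadversarydef}$\leadsto$Definition~\ref{def:2tsource-qnmadversarydef}.
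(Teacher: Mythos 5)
Your proposal is correct and matches the paper's own proof essentially verbatim: construct trivial tampering isometries acting only on the classical registers (so $N'\equiv N$, $M'\equiv M$), observe the resulting state is a $(t;k_1,k_2)\mhyphen\nmas$, invoke Definition~\ref{def:2tsourcenme}, and finish with Fact~\ref{fact:data} plus the marginal equality $\sigma_{XNMY}=\rho_{XNMY}$. The extra check that the min-entropy conditions are preserved is a harmless addition the paper leaves implicit.
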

\begin{proof}
Let $U: \cH_X \rightarrow \cH_X \otimes \cH_{X^{[t]}} \otimes  \cH_{\hat{X}^{[t]}}$, $V: \cH_Y \rightarrow \cH_Y \otimes \cH_{Y^{[t]}} \otimes  \cH_{\hat{Y}^{[t]}}$ be isometries such that for $\rho = (U \otimes V) \sigma (U \otimes V)^\dagger$, we have $X^{[t]}Y^{[t]}$ classical (with copies $\hat{X}^{[t]}\hat{Y}^{[t]}$ respectively) and for every $i \in [t]$ either  $\Pr(X \ne X^i)_\rho =1$ or $\Pr(Y \ne Y^i)_\rho =1.$~\footnote{It is easily seen that such isometries exists.} Notice the state $\rho$ is a $(t;k_1,k_2)\mhyphen\nmas$. Since $t \mhyphen2\nmext$ is a  $(t;k_1,k_2, \eps)$-quantum secure (see Definition~\ref{def:2tsourcenme}), we have $$ \| \rho_{SS^{[t]}YY^{[t]}M} - U_{m} \otimes \rho_{S^{[t]}YY^{[t]}M} \|_1 \leq \eps,$$  where $S^i =t \mhyphen2\nmext(X^i,Y^i)$. Using Fact~\ref{fact:data}, we get $$ \| \rho_{SYM} - U_{m} \otimes \rho_{YM} \|_1 \leq \eps.$$ 
The desired now follows by noting $\sigma_{XNMY} =  \rho_{XNMY}.$
\end{proof}

\begin{claim}
\label{factcor:2nmextexttt}
Let $t \mhyphen2\nmext:\{0,1\}^n \times\{0,1\}^{n} \to \lbrace 0,1\rbrace^{m}$ be a  $(t;k_1,k_2, \eps)$-quantum secure $2$-source $t \mhyphen$non-malleable extractor. Then, $t \mhyphen2\nmext$ is also a $(j;k_1,k_2, \eps)$-quantum secure $2$-source $j \mhyphen$non-malleable extractor for any positive integer $j \leq t.$
 
\end{claim}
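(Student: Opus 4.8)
\textbf{Proof plan for Claim~\ref{factcor:2nmextexttt}.}

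The statement asserts that a $(t;k_1,k_2,\eps)$-quantum secure $2$-source $t$-non-malleable-extractor is automatically a $(j;k_1,k_2,\eps)$-quantum secure $2$-source $j$-non-malleable-extractor for every $j\leq t$. The plan is to reduce a $j$-tampering adversary to a $t$-tampering adversary by simply duplicating tampered outputs, so that the security guarantee for $t$ tamperings can be invoked directly. First I would take an arbitrary $(j;k_1,k_2)\mhyphen\nmas$ $\rho$ (as in Definition~\ref{def:2tsource-qnmadversarydef}), produced from a $(k_1,k_2)\mhyphen\qmas$ $\sigma$ by isometries $U,V$ creating classical registers $X^{[j]},Y^{[j]}$ with copies $\hat X^{[j]},\hat Y^{[j]}$, satisfying for each $i\in[j]$ either $\Pr(X\ne X^i)=1$ or $\Pr(Y\ne Y^i)=1$.

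Next I would extend $U$ and $V$ to isometries $U',V'$ that, after producing $X^{[j]},Y^{[j]}$, make $t-j$ further (safe) copies: define $X^i := X^j$ and $Y^i := Y^j$ for $j < i \le t$ (with their copies $\hat X^i,\hat Y^i$), appending these into the registers $X^{[t]}, \hat X^{[t]}, Y^{[t]}, \hat Y^{[t]}$. Since copying classical registers is a safe isometry, the resulting state $\rho'$ is a $(t;k_1,k_2)\mhyphen\nmas$: the min-entropy conditions on $X$ given $MY\hat Y$ and on $Y$ given $NX\hat X$ are inherited from $\sigma$ (the extra copies are deterministic functions of registers already present, so by Fact~\ref{fact102} the conditional-min-entropies do not decrease below $k_1,k_2$ — more precisely, adding a classical register that is a function of the conditioning side information does not change the conditional-min-entropy of $X$), and for each $i\in[t]$ the condition $\Pr(X\ne X^i)=1$ or $\Pr(Y\ne Y^i)=1$ holds: for $i\le j$ it is the hypothesis on $\rho$, and for $i>j$ it follows from the $i=j$ case since $X^i=X^j,Y^i=Y^j$.

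Then I would apply the $(t;k_1,k_2,\eps)$-security of $t\mhyphen2\nmext$ to $\rho'$, obtaining that $t\mhyphen2\nmext(X,Y)$ is $\eps$-close to $U_m$ tensored with the joint state of all $t$ tampered outputs together with $XX^{[t]}N'$ (and symmetrically for the $Y$-side). Finally, since $\rho'_{X^{[j]}Y^{[j]}} = \rho_{X^{[j]}Y^{[j]}}$ and the registers $X^{[j+1,t]},Y^{[j+1,t]}$ are deterministic functions of $X^j,Y^j$, tracing them out (Fact~\ref{fact:data}, data processing) yields exactly the $(j;k_1,k_2,\eps)$-security condition of Definition~\ref{def:2tsourcenme} for $\rho$. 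I do not expect any genuine obstacle here — the only thing to be careful about is verifying that the duplication does not disturb the $\qmas$/$\nmas$ structure (registers stay classical with proper copies, and the min-entropy bounds survive), which is routine given Facts~\ref{fact102} and~\ref{fact:data}.
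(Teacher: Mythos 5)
Your proposal is correct and is essentially the paper's own argument: extend the $j$-tampering adversary to a $t$-tampering one by appending $t-j$ additional classical tampered registers (the paper generates them by fresh isometries on $X,Y$ satisfying the non-equality condition, while you simply duplicate the $j$-th tampering, which makes that condition automatic), then invoke the $(t;k_1,k_2,\eps)$-security on the resulting $(t;k_1,k_2)\mhyphen\nmas$ and drop the extra registers by data processing, noting the marginal on the original registers is unchanged. No gap.
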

\begin{proof}
Let $\sigma_{X\hat{X}NY\hat{Y}M}$ be a $(k_1,k_2)\mhyphen \qmas$ with $\vert X \vert=n$ and $\vert Y \vert=n$. Let $U: \cH_X \otimes  \cH_N \rightarrow \cH_X \otimes \cH_{X^{[j]}} \otimes  \cH_{\hat{X}^{[j]}} \otimes  \cH_{N'}$, $V: \cH_Y \otimes  \cH_M \rightarrow \cH_Y \otimes \cH_{Y^{[j]}} \otimes  \cH_{\hat{Y}^{[j]}} \otimes  \cH_{M'}$ be isometries such that for $\rho = (U \otimes V) \sigma (U \otimes V)^\dagger$, we have $X^{[j]}Y^{[j]}$ classical (with copies $\hat{X}^{[j]}\hat{Y}^{[j]}$ respectively) and for every $i \in [j]$ either  $\Pr(X \ne X^i)_\rho =1$ or $\Pr(Y \ne Y^i)_\rho =1.$ Notice the state $\rho$ is a $(j;k_1,k_2)\mhyphen\nmas$. 

Let $U': \cH_X \rightarrow \cH_X \otimes \cH_{X^{[t]\setminus[j]}} \otimes  \cH_{\hat{X}^{[t]\setminus[j]}}$, $V': \cH_Y \rightarrow \cH_Y \otimes \cH_{Y^{[t]\setminus[j]}} \otimes  \cH_{\hat{Y}^{[t]\setminus[j]}}$ be  isometries~\footnote{It is easily seen that such isometries exists.} such that for $\rho' = (U' \otimes V') \rho (U' \otimes V')^\dagger$, we have $X^{[t]\setminus[j]}Y^{[t]\setminus[j]}$ classical (with copies $\hat{X}^{[t]\setminus[j]}\hat{Y}^{[t]\setminus[j]}$ respectively) and for every $i \in [t]$ either  $\Pr(X \ne X^i)_{\rho'} =1$ or $\Pr(Y \ne Y^i)_{\rho'} =1.$ Notice the state $\rho'$ is a $(t;k_1,k_2)\mhyphen\nmas$. Since $t \mhyphen2\nmext$ is a  $(t;k_1,k_2, \eps)$-quantum secure (see Definition~\ref{def:2tsourcenme}), we have
\[ \| \rho'_{SS^{[t]}XX^{[t]}M'} - U_{m} \otimes \rho'_{S^{[t]}XX^{[t]}M'} \|_1 \leq \eps \quad ;\quad  \| \rho'_{SS^{[t]}YY^{[t]}M'} - U_{m} \otimes \rho'_{S^{[t]}YY^{[t]}M'} \|_1 \leq \eps,\]
 where $S^i =t \mhyphen2\nmext(X^i,Y^i)$. Using Fact~\ref{fact:data}, we get 
 \[ \| \rho'_{SS^{[j]}XX^{[j]}M'} - U_{m} \otimes \rho'_{S^{[j]}XX^{[j]}M'} \|_1 \leq \eps \quad ;\quad  \| \rho'_{SS^{[j]}YY^{[j]}M'} - U_{m} \otimes \rho'_{S^{[j]}YY^{[j]}M'} \|_1 \leq \eps.\]Noting $\rho'_{XX^{[j]}N'M'YY^{[j]}} =  \rho_{XX^{[j]}N'M'YY^{[j]}},$ we finally get
\[ \| \rho_{SS^{[j]}XX^{[j]}M'} - U_{m} \otimes \rho_{S^{[j]}XX^{[j]}M'} \|_1 \leq \eps \quad ;\quad  \| \rho_{SS^{[j]}YY^{[j]}M'} - U_{m} \otimes \rho_{S^{[j]}YY^{[j]}M'} \|_1 \leq \eps.\]

\end{proof}

\begin{theorem}[Quantum secure one-many  non-malleable code in the split-state model]\label{lem:qnmcodesfromnmextt} Let $t \mhyphen 2\nmext : \{0,1\}^{n} \times \{0,1\}^{n} \to \{0,1\}^m$ be  a  $(t;n-k,n-k, \eps)$-quantum secure $2$-source $t \mhyphen$non-malleable extractor. There exists a $(t;m,n,\eps')$-quantum secure one-many non-malleable code with parameter $\eps' = 2^m((2^{-k} + \eps)2^{2t} +  \eps)$.
\end{theorem}
\begin{proof}

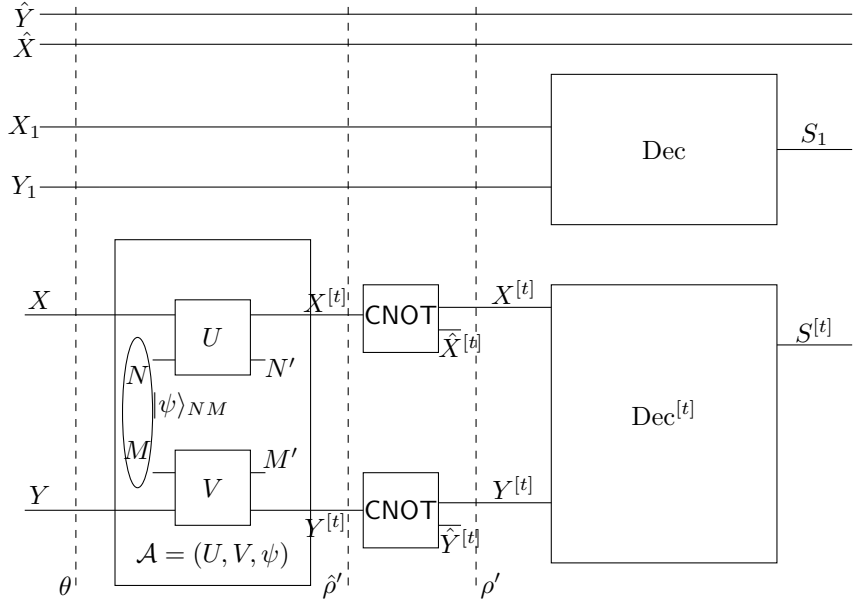
\begin{figure}
\centering
\begin{tikzpicture}


\draw (4.2,6.4) -- (15,6.4);
\node at (4,6.4) {$\hat{X}$};

\draw (4.2,6.8) -- (15,6.8);
\node at (4,6.8) {$\hat{Y}$};



\draw (4.2,5.3) -- (11,5.3);
\node at (4,5.3) {$X_1$};

\node at (4,4.5) {$Y_1$};
\draw (4.2,4.5) -- (5,4.5);
\draw (5,4.5) -- (11,4.5);
\node at (14.5,5.2) {$S_1$};
\draw (14,5) -- (15,5);

\draw (11,4) rectangle (14,6);
\node at (12.5,5) {$\dec$};


\draw [dashed] (4.68,-0.8) -- (4.68,7);
\draw [dashed] (8.3,-0.8) -- (8.3,7);
\draw [dashed] (10,-0.8) -- (10,7);

\node at (6.2,1.6) {$\ket{\psi}_{NM}$};
\node at (4.54,-0.8) {$\theta$};
\node at (8.1,-0.8) {$\hat{\rho}'$};
\node at (10.2,-0.8) {${\rho}'$};

\node at (4.2,3) {$X$};
\node at (8,3) {$X^{[t]}$};
\node at (10.5,3.1) {$X^{[t]}$};
\node at (9.8,2.4) {$\hat{X}^{[t]}$};
\draw (4,2.8) -- (6,2.8);
\draw (7,2.8) -- (8.5,2.8);
\draw (9.5,2.9) -- (11,2.9);
\draw (9.5,2.6) -- (9.8,2.6);
\node at (14.5,2.6) {$S^{[t]}$};
\draw (14,2.4) -- (15,2.4);

\node at (4.2,0.4) {$Y$};
\node at (8,0.0) {$Y^{[t]}$};
\draw (4,0.2) -- (6,0.2);
\draw (7,0.2) -- (8.5,0.2);
\draw (9.5,0.3) -- (11,0.3);
\draw (9.5,0) -- (9.8,0);
\node at (10.5,0.5) {$Y^{[t]}$};
\node at (9.8,-0.2) {$\hat{Y}^{[t]}$};

\draw (6,2) rectangle (7,3);
\node at (6.5,2.5) {$U$};
\draw (6,0) rectangle (7,1);
\node at (6.5,0.5) {$V$};

\node at (6.5,-0.4) {$\mathcal{A}=(U,V,\psi)$};
\draw (5.2,-0.8) rectangle (7.8,3.8);

\draw (5.5,1.5) ellipse (0.2cm and 1cm);
\node at (5.5,2) {$N$};
\draw (5.7,2.2) -- (6,2.2);
\node at (7.4,2.1) {$N'$};
\draw (7,2.2) -- (7.2,2.2);
\node at (5.5,1) {$M$};
\draw (5.7,0.7) -- (6,0.7);
\node at (7.4,0.9) {$M'$};
\draw (7.0,0.7) -- (7.2,0.7);

\draw (8.5,2.3) rectangle (9.5,3.2);
\node at (9,2.8) {$\mathsf{CNOT}$};

\draw (8.5,-0.3) rectangle (9.5,0.7);
\node at (9,0.2) {$\mathsf{CNOT}$};

\draw (11,-0.5) rectangle (14,3.2);
\node at (12.5,1.5) {$\dec^{[t]}$};

\end{tikzpicture}
\caption{Analysis of a quantum secure one-many non-malleable code in the split-state model}\label{fig:splitstate2t}
\end{figure}

The proof proceeds in similar lines of Theorem~\ref{lem:qnmcodesfromnmext}. We do not repeat the entire argument but provide the necessary details required to complete the proof. 

Let $\mathcal{A}=(U,V,{\psi})$ be the quantum split-state adversary from Definition~\ref{def:splitstateadvt}. We show that the encoding based on $t \mhyphen 2\nmext$ is a  $(t;m,n, \eps')$-quantum secure one-many non-malleable code. Using arguments similar to Theorem~\ref{lem:qnmcodesfromnmext}, it suffices to prove 
\[  \Vert \rho'_{ S_1S^{[t]}} - U_m  \cp^{(t)} (\mathcal{D}_{\mathcal{A}} ,U_m)   \Vert_1  \leq (2^{-k} + \eps)2^{2t}, \] 
where $\rho'$ is the pure state as in Figure~\ref{fig:splitstate2t} after the action of adversary $\mathcal{A}$ on state $\theta$. Note $S^i = t \mhyphen 2\nmext(X^i,Y^i)$ and state   $\theta_{X\hat{X}X_1 Y\hat{Y}Y_1 } = \theta_{X\hat{X}X_1 } \otimes \theta_{Y\hat{Y}Y_1 }$ is a pure state such that $\theta_X = \theta_Y =U_n$, ($X_1, \hat{X}$) are copies of $X$, ($Y_1, \hat{Y}$) are copies of $Y$ respectively.~\footnote{It is easily seen such state exists.} In the Figure~\ref{fig:splitstate2t}, with some abuse of notation, we used $\dec^{[t]}$ to denote $\dec (X^i,Y^i)=S^i$ performed for every $i \in [t]$.

Let $\rho'$ be the final pure state after we generate $t$-bit classical registers $C,D$ (with copies $\hat{C},\hat{D}$ respectively) such that $C_i=1$ indicates $X_1 \ne X^{i}$ in state ${\rho}'$ and $D_i=1$ indicates $Y_1 \ne Y^i$ in state ${\rho}'$ for every $i \in [t]$.

Using similar arguments as in  Theorem~\ref{lem:qnmcodesfromnmext}, one can note that the state $\rho'$ is an  $(n,n)$-$\qmas$. For $C=c \in \{ 0,1\}^t,D=d \in \{ 0,1\}^t$, denote $\mathcal{S}^{c,d} = \{  i \in [t] : (c_i=1) \vee  (d_i=1)\}$ and  $\rho'^{c,d} = \rho' \vert ((C,D)=(c,d))$. 

For every $c,d \in \{ 0,1 \}^t$, let  $\mathcal{D}^{c,d}_{\mathcal{A}}=\mathcal{D}^{c,d}_{(U,V,{\psi})} = (\mathcal{D}^{c,d}_{\mathcal{A},1}, \ldots,  \mathcal{D}^{c,d}_{\mathcal{A},t}),$ where 
$\mathcal{D}^{c,d}_{\mathcal{A},j} = \rho'^{c,d}_{t \mhyphen 2\nmext(X^j,Y^j) } $ for $j \in \mathcal{S}^{c,d}$ and  $\mathcal{D}^{c,d}_{\mathcal{A},j}  $ be the distribution that is deterministically equal to $\sm $ otherwise. Let $\mathcal{D}_{\mathcal{A}} = \sum_{c,d \in \{ 0,1 \}^t} \Pr((C,D)=(c,d))_{\rho'} \mathcal{D}^{c,d}_{\mathcal{A}}$. Note for every $c,d \in \{ 0,1 \}^t$, the value $\Pr((C,D)=(c,d))_{\rho'}$ depends only on $(U,V,{\psi})$. Note
\begin{equation}\label{eq:newproof71}
    \rho'_{S_1S^{[t]}} = \sum_{c,d \in \{ 0,1\}^t}\Pr((C,D)=(c,d))_{\rho'} \rho'^{c,d}_{S_1S^{[t]}},
\end{equation}and \begin{equation}\label{eq:newproof72}
     Z  \cp^{(t)} (\mathcal{D}_{\mathcal{A}} ,Z)= \sum_{c,d \in \{ 0,1\}^t}\Pr((C,D)=(c,d))_{\rho'} Z\cp^{(t)}(\mathcal{D}^{c,d}_{\mathcal{A}} ,Z) ,
\end{equation}
where $Z=U_m.$

\begin{claim}\label{claim:987t}For every $c,d \in \{ 0,1 \}^t$, we have
\[\Pr((C,D)=(c,d))_{\rho'} \Vert   \rho'^{c,d}_{S_1S^{[t]}}  
- Z  \cp^{(t)}(\mathcal{D}^{c,d}_{\mathcal{A}} ,Z)  \Vert_1 \leq 2^{-k} + \eps ,  \]
where $Z=U_m$.
\end{claim}
\begin{proof}

Fix $c,d \in \{0,1\}^t$. Suppose $\Pr((C,D)=(c,d))_{\rho'}  \leq 2^{-k}$, then we are done. Thus we assume otherwise. Note $\mathcal{S}^{c,d}$ is empty only when $(c,d)=(0^t,0^t)$.

First let $\mathcal{S}^{c,d} = \{ i_1, \ldots i_j\}$ be non-empty. Using arguments similar to Claim~\ref{claim:987} involving Fact~\ref{fact:minentropydecrease} and noting that state $\rho'$ is an $(n,n)\mhyphen\qmas$, we get that $\rho'^{c,d}$ is a $(j;n-k,n-k)\mhyphen\nmas$. Using  Claim~\ref{factcor:2nmextexttt} to note $t \mhyphen2\nmext$ is also  $(j;n-k,n-k,\eps)$-quantum secure for any positive integer $j \leq t$ and using  Fact~\ref{fact:data}, we have 
\begin{equation}\label{eq:equationlish}
    \Vert   \rho'^{c,d}_{S_1S^{i_1} \ldots S^{i_j}}  
- U_m \otimes  \rho'^{c,d}_{S_1S^{i_1} \ldots S^{i_j}}  \Vert_1 \leq  \eps .
\end{equation}
For any $p \in [t]\setminus \mathcal{S}^{c,d}$, we have 
$  \rho'^{c,d}_{S_1}= \rho'^{c,d}_{S^p}.$ Thus from Eq.~\eqref{eq:equationlish} and using Fact~\ref{fact:data}, we have 
$$\Vert   \rho'^{c,d}_{S_1S^{[t]}}  
- Z  \cp^{(t)}(\mathcal{D}^{c,d}_{\mathcal{A}} ,Z)  \Vert_1 \leq  \eps.$$

For $(c,d)=(0^t,0^t)$, the proof follows noting $\rho'^{c,d}$ is an  $(n-k,n-k)\mhyphen\qmas$ and using Claim~\ref{factcor:2nmextextt} along with Fact~\ref{fact:data}.
\end{proof}Consider,
\begin{align*}
   & \Vert \rho'_{S_1S^{[t]}} - Z  \cp^{(t)} (\mathcal{D}_{\mathcal{A}} ,Z)   \Vert_1 \\ 
    & =\Vert \sum_{c,d \in \{ 0,1\}^t}\Pr((C,D)=(c,d))_{\rho'} \rho'^{c,d}_{S_1S^{[t]}} - Z \cp^{(t)} (\mathcal{D}_{\mathcal{A}},Z)   \Vert_1 & \mbox{(Eq.~\eqref{eq:newproof71})}\\ 
   &\leq
    \sum_{c,d \in \{ 0,1\}^t}\Pr((C,D)=(c,d))_{\rho'} \Vert \rho'^{c,d}_{S_1S^{[t]}} - Z \cp^{(t)} (\mathcal{D}^{c,d}_{\mathcal{A}} ,Z)  \Vert_1 & \mbox{(Eq.~\eqref{eq:newproof72} and Fact~\ref{fact:traceconvex})} \\
   & \leq (2^{-k} + \eps)2^{2t} & \mbox{(Claim~\ref{claim:987t})}.
\end{align*} 
This completes the proof.
\end{proof}

\subsection{A quantum secure modified $2$-source $t$-non-malleable extractor\label{sec:2tnm}}

 We modify the construction of $t$-$2\nmext$ from~\cite{BJK21}, using ideas from~\cite{CGL15} to construct a $\mathsf{new}$-$t$-$2\nmext : \{0,1\}^{n} \times \{0,1\}^{n} \to \{0,1\}^m$ that is  $(n-n_1,n-n_1, \cO(\eps))$-secure against $\nma$ for parameters $n_1=n^{\Omega(1)}$, $m= n^{1-\Omega(1)}$,  $t=n^{\Omega(1)}$ and $\eps = 2^{-n^{\Omega(1)}}$.
 \subsection*{Parameters}\label{sec:parameters_2t_sourcenm}
Let $\delta, \delta_1, \delta_2, \delta_3 >0$ be small enough constants such that $\delta_1 < \delta_2$. Let  $n,n_1,n_2,n_3,n_4,n_5,n_6,n_7,n_x,n_y,a,s,b,h,t$ be positive integers and
 $\eps', \eps > 0$ such that: 
  \[  n_1 = n^{\delta_2} \quad ; \quad  n_2 =n-3n_1  \quad ; \quad q= 2^{\log \left(n+1\right) }  \quad ; \quad \eps= 2^{- \cO(n^{\delta_1})} \quad ;\]
 
\[ n_3 = \frac{n_1}{10} \quad ; \quad n_4 = \frac{n_2}{\log (n+1)} \quad ; \quad n_5 = n^{\delta_2/3} \quad ;\quad a=6n_1+2 n_5 \log (n+1) = \mathcal{O}(n_1) \quad ;     \]

\[ n_6 = 3n_1^3 \quad ; \quad n_7 = n-3n_1-n_6  \quad ; \quad n_x = \frac{n_7}{12a} \quad ; \quad n_y = \frac{n_7}{12a} \quad ; \quad  2^{\cO(a)}\sqrt{\eps'} = \eps \quad ;  \]

\[s = \cO\left(\log^2\left(\frac{n}{\eps'}\right)\log n \right)  \quad  ; \quad b = \cO\left( \log^2\left(\frac{n}{\eps'}\right) \log n \right) \quad ; \quad  t \leq n^{\delta_3} \quad ; \quad h = 10ts \quad \]
\begin{itemize}\label{sec:extparameters_2nmt}
    \item $\IP_1$ be $\IP^{3n_1/n_3}_{2^{n_3}}$,
    \item $\Ext_1$ be $(2b, \eps')$-quantum secure $(n_y,s,b)$-extractor,
    \item $\Ext_2$ be $(2s, \eps')$-quantum secure $(h,b,s)$-extractor,
    \item $\Ext_3$ be $(4h, \eps')$-quantum secure $(n_x,b,2h)$-extractor,
    \item $\Ext_4$ be $(n_y/4t, \eps^2)$-quantum secure $(4n_y,2h,n_y/8t)$-extractor,
    \item $\IP_2$ be $\IP^{3n_1^3/2h}_{2^{2h}}$,
    \item  $\Ext_6$ be $(\frac{n_x}{2t}, \eps^2)$-quantum secure $(4n_x,n_y/8t,n_x/4t)$-extractor. 
 \end{itemize}

 \subsection*{Definition of $2$-source $t$-non-malleable extractor}

\begin{algorithm}
\caption{: $\mathsf{new}$-$t\mhyphen2\nmext: \lbrace 0,1 \rbrace ^n\times \lbrace 0,1 \rbrace^n   \rightarrow \lbrace 0,1 \rbrace^{n_x/4t}$}\label{alg:2tnmExt}
\begin{algorithmic}
\State{}

\noindent \textbf{ Input:}  $X, Y$\\


\begin{enumerate}
    \item Advice generator: \[X_1=\pre(X,1,3n_1) \quad  ; \quad Y_1 = \pre(Y,1,3n_1) \quad ;
    \quad R= \IP_1(X_1,Y_1) \quad ; \quad  \]
    \[X_2=\pre(X,3n_1+1,n) \quad  ; \quad Y_2 = \pre(Y,3n_1+1,n) \quad ;\]
    \[ \quad G=X_1 \circ \bar{X}_2 \circ Y_1 \circ \bar{Y}_2=X_1 \circ \ecc(X_2)_{\samp(R)} \circ Y_1 \circ \ecc(Y_2)_{\samp(R)} \]

    \item $X_3=\pre(X,3n_1 +1,3n_1+n_6) \quad ; \quad Y_3=\pre(Y,3n_1 +1,3n_1+n_6)\quad; \quad Z^1=\IP_2(X_3,Y_3)$
     \item $X_4=\pre(X,3n_1+n_6+1,n) \quad ; \quad Y_4=\pre(Y,3n_1+n_6+1,n)\quad$
    \item Correlation breaker with advice:\quad $F=2\advcb(Y_4,X_4,Z^1, G)$
    \item $X^{a+1}=\pre(X_4,4n_xa+1,4n_xa+4n_x)$
    \item $S=\Ext_6(X^{a+1},F)$
\end{enumerate}

 \noindent \textbf{ Output:} $S$ 
\end{algorithmic}
\end{algorithm}

 The following theorem shows that the function $\mathsf{new}$-$t\mhyphen2\nmext$ as defined in Algorithm~\ref{alg:2tnmExt} is $(t;n-n_1,n-n_1,\cO(\eps))$-secure against $\nma$ by noting that  $S=\mathsf{new}\mhyphen t\mhyphen2\nmext(X,Y)$ and $S^i=\mathsf{new}\mhyphen t\mhyphen2\nmext(X^i,Y^i)$ for every $i \in [t]$.
 Note that $2\advcb$ in Algorithm~\ref{alg:2tnmExt} is same as the one in Algorithm~\ref{alg:2AdvCB} except for parameters and extractors which are to be used as mentioned in this section.  

\begin{theorem}[Security of $\mathsf{new}$-$t\mhyphen2\nmext$]\label{thm:nmext2t}
Let $\rho_{{X X^{[t]} \hat{X} \hat{X}^{[t]} N YY^{[t]} \hat{Y} \hat{Y}^{[t]} M}}$ be a $(t;n-n_1,n-n_1)\mhyphen\nmas$. Then,
\[
   \Vert \rho_{ S S^{[t]} Y  Y^{[t]} M} - U_{n_x/4t} \otimes \rho_{S^{[t]} Y  Y^{[t]} M} \Vert_1 \leq \cO(\eps).
\] 
\end{theorem}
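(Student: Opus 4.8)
The plan is to mirror the proof of Theorem~\ref{thm:2nmext} (the $1$-tampering security of $\mathsf{new}$-$2\nmext$), which in turn follows the structure of Theorem~$6$ in~\cite{BJK21}, and to import the bookkeeping for $t$ simultaneous tampered copies from the $t$-tampered analysis of~\cite{BJK21,CGL15}. First I would unpack the $(t;n-n_1,n-n_1)\mhyphen\nmas$ $\rho$: by Definition~\ref{def:2tsource-qnmadversarydef}, for each $i\in[t]$ either $\Pr(X\ne X^i)_\rho=1$ or $\Pr(Y\ne Y^i)_\rho=1$, and $\hmin{X}{MYY^{[t]}\hat Y\hat Y^{[t]}}_\rho\ge n-n_1$, $\hmin{Y}{NXX^{[t]}\hat X\hat X^{[t]}}_\rho\ge n-n_1$. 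As in the single-tampering case, the first step is the advice generator: using Fact~\ref{l-qma-needed-fact} and Fact~\ref{lemma:nearby_rho_prime_prime} for the security of $\IP_1$ in the $(k_1,k_2)\mhyphen\qmas$ framework, together with the Reed--Solomon argument of Lemma~\ref{lem:reedsolomon} (applied to the codewords $\ecc(X_2),\ecc(Y_2)$ sampled at coordinates $\samp(R)$) and Fact~\ref{fact:samp}, one shows that with probability $1-\cO(\eps)$ the advice $G$ differs from each tampered advice $G^i$ in at least one of the blocks $X_1,\bar X_2,Y_1,\bar Y_2$, and moreover the min-entropy left in the remaining blocks $X_4,Y_4$ (outside the $\samp(R)$ coordinates) is still almost full, by Fact~\ref{fact:prefixminentropyfact} and Fact~\ref{fact:minentropydecrease}.

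Next I would analyze the correlation breaker with advice $2\advcb$. The key point exploited in the modified construction is that each round $i\in[a]$ of the flip-flop $2\ff$ uses a fresh block $X^i,Y^i$ of $X_4,Y_4$, so the linear constraints imposed by conditioning on advice bits and on intermediate seeds act on disjoint coordinates; consequently, conditioned on $G=g$ (and on the tampered advice values $g^i$), the relevant blocks remain close to uniform — this is exactly the content of Claim~\ref{claim:newsampling}-type reasoning, though here one only needs it qualitatively (enough min-entropy, not efficient sampling). With this in place, the alternating-extraction lemma Fact~\ref{lem:2} is applied round by round: at the round $i^\ast$ where $G$ first differs from $G^i$ (for the $i$-th tampered copy), the flip-flop breaks the correlation, so that $Z^{i^\ast+1}$ is close to uniform and independent of the $i$-th tampered chain, and the min-entropy accounting for the classical interactive communication is controlled by Fact~\ref{lem:minentropy}. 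Iterating over all $a$ rounds and all $t$ tampered copies, tracking Bures-distance errors via Fact~\ref{fidelty_trace} and Fact~\ref{traceavg}, one obtains that $F=2\advcb(Y_4,X_4,Z^1,G)$ is $\cO(\eps)$-close to uniform even given $M$, $Y$, $Y^{[t]}$ and all the tampered $F^i$. A final application of $\Ext_6$ (a strong extractor on the fresh block $X^{a+1}$ with seed $F$) via Fact~\ref{lem:2} then yields $\Vert\rho_{SS^{[t]}YY^{[t]}M}-U_{n_x/4t}\otimes\rho_{S^{[t]}YY^{[t]}M}\Vert_1\le\cO(\eps)$.

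The main obstacle I anticipate is the same one that makes the quantum setting delicate throughout this paper and~\cite{BJK21}: correctly ordering the generation of the auxiliary classical registers (for the original and all $t$ tampered inputs) so that at each invocation of Fact~\ref{lem:2} the source register genuinely has enough conditional min-entropy, and so that the states involved fit the $(k_1,k_2)\mhyphen\qmas$ hypothesis. Because conditioning on quantum side information does not commute, one cannot freely reorder; with $t$ tampered copies the number of registers that must be juggled grows, and the min-entropy budget $n-n_1$ must absorb $\sum_j|R_j|$ over all communication rounds across all copies — this is where the parameter choices $h=10ts$, $n_x=n_y=n_7/12a$, and $a=\cO(n_1)$ are tuned, and one must check the arithmetic closes with room to spare. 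Since the excerpt states "the proof proceeds in similar lines to the proof of Theorem~$6$ in~\cite{BJK21}" for the analogous $1$-tampering statement (Theorem~\ref{thm:2nmext}), I would likewise present the argument at the level of reductions to Facts~\ref{lem:2},~\ref{lem:minentropy},~\ref{l-qma-needed-fact},~\ref{lemma:nearby_rho_prime_prime} and Lemma~\ref{lem:reedsolomon}, indicating the one new ingredient — the union over the $t$ copies of the "advice differs somewhere" event and the corresponding round where correlation-breaking kicks in — rather than rewriting the entire multi-page induction.
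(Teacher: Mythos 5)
Your proposal takes essentially the same route as the paper: the paper's proof of Theorem~\ref{thm:nmext2t} is a one-line deferral to the $t$-tampered analysis (Theorem~9) of~\cite{BJK21}, invoking exactly the ingredients you name --- Fact~\ref{lem:2} for alternating extraction, Fact~\ref{lem:minentropy} for the min-entropy accounting, Facts~\ref{l-qma-needed-fact} and~\ref{lemma:nearby_rho_prime_prime} for inner-product security in the $(k_1,k_2)\mhyphen\qmas$ framework, and Fact~\ref{fidelty_trace} for passing between $\Delta_B$ and $\Delta$. Your sketch, including the advice-generator step, the per-round correlation breaking on fresh blocks, and the union over the $t$ tampered copies, is a faithful (and more detailed) rendering of that same argument.
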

\begin{proof}
The proof proceeds in similar lines to the proof of Theorem $9$ in~\cite{BJK21} using Fact~\ref{lem:2} for alternating extraction argument, Fact~\ref{lem:minentropy} for bounding the min-entropy required in alternating extraction and Facts~\ref{l-qma-needed-fact},~\ref{lemma:nearby_rho_prime_prime} for the security of inner-product function in $(k_1,k_2) \mhyphen \qmas$ framework, Fact~\ref{fidelty_trace} for relation between $\Delta_B, \Delta$ and we do not repeat it.
\end{proof}

\subsection*{Efficiently sampling from the preimage of $\mathsf{new}$-$t\mhyphen2\nmext$}

\begin{theorem}\label{thm:main1}
Let $\mathsf{new}$-$t\mhyphen2\nmext: \lbrace 0,1 \rbrace ^n\times \lbrace 0,1 \rbrace^n   \rightarrow \lbrace 0,1 \rbrace^{n_x/4t}$ be the $2$-source $t$-non-malleable extractor from Algorithm~\ref{alg:2tnmExt}. Let $(\enc,\dec)$ be the encoding and decoding based on $\mathsf{new}$-$t \mhyphen 2\nmext$. Let $\hat{S}\enc(\hat{S})=\hat{S}\hat{X}\hat{Y}$ for a uniform message $\hat{S}=U_{n_x/4t}$. There exists an efficient algorithm that can sample from a distribution $\tilde{S}\tilde{X}\tilde{Y}$ such that $\Vert \tilde{S}\tilde{X}\tilde{Y} -\hat{S}\hat{X}\hat{Y} \Vert_1 \leq \cO(\eps)$ and $\tilde{S}= U_{n_x/4t}$.

\end{theorem}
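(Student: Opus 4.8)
The plan is to reduce Theorem~\ref{thm:main1} to the same machinery that gave Theorem~\ref{thm:main} (the single-tampering efficient sampling result). First I would observe that $\mathsf{new}$-$t\mhyphen2\nmext$ from Algorithm~\ref{alg:2tnmExt} is \emph{structurally identical} to $\mathsf{new}$-$2\nmext$ from Algorithm~\ref{alg:2nmExtnew}: the advice generator, the inner-product step $Z^1 = \IP_2(X_3,Y_3)$, the correlation breaker $2\advcb$, the final block extraction $S = \Ext_6(X^{a+1},F)$ all appear in the same positions; the only differences are the numerical parameters (e.g. $h = 10ts$ rather than $h = 10s$, output length $n_x/4t$ rather than $n_x/4$) and which concrete strong $(k,\eps)$-quantum secure extractors instantiate $\Ext_1,\ldots,\Ext_6$. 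Since the efficient-sampling argument for Theorem~\ref{thm:main} (via Claim~\ref{claim:final}, Claim~\ref{claim:new2nmextsampling1new}, Claim~\ref{claim:6final}, Claim~\ref{claim:newsampling}) only uses (i) that $X,Y$ are divided into independent blocks with $GX_3Y_3X^{[a+1]}Y^{[a+1]} = G\otimes X_3Y_3X^{[a+1]}Y^{[a+1]}$ coming from the Reed--Solomon argument Lemma~\ref{lem:reedsolomon}, (ii) that each $\Ext_i$ is a strong \emph{linear} extractor that is invertible in the sense of Claim~\ref{claim:extractorsampling}, and (iii) the Markov-chain composition Claim~\ref{fact:markovapprox}, none of which are sensitive to the parameter changes, the entire chain of claims goes through verbatim with the new parameters.

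Concretely, I would first restate and reprove the analogues of Claim~\ref{claim:new2nmextsampling1new}, Claim~\ref{claim:newsampling}, and Claim~\ref{claim:6final} for $\mathsf{new}$-$t\mhyphen2\nmext$ — in each case the proof is word-for-word the same, only with $S \in \{0,1\}^{n_x/4t}$ and with the error bounds $\Vert Q\vert(G=g) - U_{|Q|}\Vert_1 \le \cO(a\eps')$ and the final accumulation $\cO(a\eps') + \cdots + \cO(\eps)$ checked against $2^{\cO(a)}\sqrt{\eps'} = \eps$, which still holds since $a = \cO(n_1)$ is unchanged. Then I would assemble the analogue of Claim~\ref{claim:final}: given a uniform message $\hat S = U_{n_x/4t}$, use $GX_3Y_3X^{[a+1]}Y^{[a+1]}$ independence to sample $g \leftarrow \tilde G = G$, then invoke the $t$-version of Claim~\ref{claim:6final} to sample $(x_3,y_3,x^{[a+1]},y^{[a+1]},s)$ conditioned on $\tilde G = g$ with $\tilde S\vert(\tilde G=g) = U_{n_x/4t}$, then fill in $X^{[a+2,3a]},Y^{[a+2,3a]}$ via the Reed--Solomon sampling of Lemma~\ref{lem:reedsolomon}. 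Finally, exactly as in the proof of Theorem~\ref{thm:main}, I would combine this with Eq.~\eqref{eq:proof} applied to $\mathsf{new}$-$t\mhyphen2\nmext$ — which needs that for uniform $(X,Y)$ the output $S = \mathsf{new}\mhyphen t\mhyphen2\nmext(X,Y)$ is $\cO(\eps)$-close to $U_{n_x/4t}$, and this follows from Claim~\ref{factcor:2nmextextt} (the $t$-tampering analogue of Claim~\ref{factcor:2nmextext}) together with Theorem~\ref{thm:nmext2t} that $\mathsf{new}\mhyphen t\mhyphen2\nmext$ is a $(t;n-n_1,n-n_1,\cO(\eps))$-secure non-malleable-extractor. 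A triangle inequality between $SXY$, $\tilde S\tilde X\tilde Y$, and $\hat S\hat X\hat Y$ then gives $\Vert \tilde S\tilde X\tilde Y - \hat S\hat X\hat Y\Vert_1 \le \cO(\eps)$ with $\tilde S = U_{n_x/4t}$, as required.

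The main obstacle I anticipate is bookkeeping rather than conceptual: one must verify that with $h = 10ts$ (so $h$ now grows with $t$) all the min-entropy budgets consumed in the alternating extraction inside $2\advcb$ and inside each flip-flop still leave enough conditional min-entropy for Fact~\ref{lem:2} and Fact~\ref{lem:minentropy}, and that the final accumulated error $\cO(a\eps') + \cO(a^2\eps') + \cdots$ still collapses to $\cO(\eps)$ under $2^{\cO(a)}\sqrt{\eps'} = \eps$ and $t \le n^{\delta_3}$. Since the block sizes $n_x = n_y = n_7/(12a)$ are chosen with the same $a = \cO(n_1)$ and the extractors are instantiated to have the stated min-entropy requirements ($n_y/4t$, $n_x/2t$, etc.), these checks are exactly the ones already performed (implicitly) in establishing Theorem~\ref{thm:nmext2t}, so I would simply reference that theorem for the security/uniformity facts and reference Claim~\ref{claim:extractorsampling} and Claim~\ref{fact:markovapprox} for the invertibility and composition, keeping the new proof to a short paragraph that says ``the proof of Theorem~\ref{thm:main} applies \emph{mutatis mutandis}'' and points to the handful of places where $n_x/4$ becomes $n_x/4t$.

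\begin{proof}[Proof of Theorem~\ref{thm:main1}]
The construction $\mathsf{new}$-$t\mhyphen2\nmext$ in Algorithm~\ref{alg:2tnmExt} has the same block structure as $\mathsf{new}$-$2\nmext$ in Algorithm~\ref{alg:2nmExtnew}: the same advice generator (using Reed--Solomon code $\ecc$ and sampler $\samp$), the same $Z^1 = \IP_2(X_3,Y_3)$, the same $2\advcb$ followed by $S = \Ext_6(X^{a+1},F)$; only the numerical parameters and the instantiations of $\Ext_1,\ldots,\Ext_6,\IP_1,\IP_2$ change. In particular $a = \cO(n_1)$, the blocks $X^i,Y^i$ are of size $4n_x = 4n_y$, and each $\Ext_i$ is a strong linear $(k,\eps)$-quantum secure extractor satisfying the hypotheses of Claim~\ref{claim:extractorsampling}.

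Consider $XY = U_n\otimes U_n$ and let $S = \mathsf{new}\mhyphen t\mhyphen2\nmext(X,Y)$. By Claim~\ref{factcor:2nmextextt} applied with the $(n,n)\mhyphen\qmas$ $\theta_{X\hat XX_1Y\hat YY_1}$ (exactly as in Eq.~\eqref{eq:proof}), together with Theorem~\ref{thm:nmext2t}, we get $\Vert \theta_{\mathsf{new}\mhyphen t\mhyphen2\nmext(X_1,Y_1)} - U_{n_x/4t}\Vert_1 \le \cO(\eps)$, hence
\begin{equation}\label{eq:1234t}
  \Vert SXY - \hat S\hat X\hat Y\Vert_1 \le \cO(\eps),
\end{equation}
where $\hat S\hat X\hat Y$ is the encoding of a uniform message under the standard encoding based on $\mathsf{new}\mhyphen t\mhyphen2\nmext$.

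Next, the efficient sampling argument of Claim~\ref{claim:final} — together with its supporting Claims~\ref{claim:new2nmextsampling1new},~\ref{claim:6final},~\ref{claim:newsampling} — applies verbatim to $\mathsf{new}\mhyphen t\mhyphen2\nmext$, since those proofs use only: the independence $GX_3Y_3X^{[a+1]}Y^{[a+1]} = G\otimes X_3Y_3X^{[a+1]}Y^{[a+1]}$ obtained from Lemma~\ref{lem:reedsolomon}; the invertibility of each strong linear extractor via Claim~\ref{claim:extractorsampling} and Corollary~\ref{corr:claim4}; and the Markov-chain composition Claim~\ref{fact:markovapprox}. The error bookkeeping is identical: intermediate seeds are $2^{\cO(i)}\sqrt{\eps'}$-close to uniform in round $i$, so the total accumulates to $2^{\cO(a)}\sqrt{\eps'} = \eps$, which is unchanged because $a = \cO(n_1)$ is unchanged and $t \le n^{\delta_3}$ only rescales the output and seed lengths. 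Thus we obtain an efficient algorithm sampling a distribution $\tilde S\tilde X\tilde Y$ with
\[
  \Vert SXY - \tilde S\tilde X\tilde Y\Vert_1 \le \cO(\eps), \qquad \tilde S = U_{n_x/4t}.
\]
Combining this with Eq.~\eqref{eq:1234t} and the triangle inequality gives $\Vert \hat S\hat X\hat Y - \tilde S\tilde X\tilde Y\Vert_1 \le \cO(\eps)$ with $\tilde S = U_{n_x/4t}$, which completes the proof.
\end{proof}
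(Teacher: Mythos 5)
Your proposal is correct and follows essentially the same route as the paper, which proves Theorem~\ref{thm:main1} simply by noting that the argument of Theorem~\ref{thm:main} (via Claims~\ref{claim:final},~\ref{claim:new2nmextsampling1new},~\ref{claim:6final},~\ref{claim:newsampling} and the security statement Theorem~\ref{thm:nmext2t}) carries over to $\mathsf{new}\mhyphen t\mhyphen2\nmext$ with the rescaled parameters. Your write-up merely makes explicit the parameter bookkeeping and the substitution of Claim~\ref{factcor:2nmextextt} for Claim~\ref{factcor:2nmextext}, which the paper leaves implicit.
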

\begin{proof}
The proof follows similarly to that of Theorem~\ref{thm:main} and we do not repeat it.
\end{proof}

We have the following corollary. 
\begin{corollary} Let $0<\delta_4 + \delta_3 < \delta_1$ and $m'=n^{\delta_4}$ be an integer. Let $\mathsf{new}$-$t \mhyphen 2\nmext: \lbrace 0,1 \rbrace ^n\times \lbrace 0,1 \rbrace^n   \rightarrow \lbrace 0,1 \rbrace^{m}$ be the quantum secure $2$-source $t$-non-malleable extractor from Algorithm~\ref{alg:2tnmExt}, where  $m=n_x/4t$. Let $2\nmext: \lbrace 0,1 \rbrace ^n\times \lbrace 0,1 \rbrace^n   \rightarrow \lbrace 0,1 \rbrace^{m'}$ be such that $2\nmext(X,Y)$ is same as $\mathsf{new}$-$t \mhyphen 2\nmext (X,Y)$ truncated to first $m'$ bits.  Let $(\enc,\dec)$ be the encoding and decoding based on $2\nmext$. Let $\hat{S}\enc(\hat{S})=\hat{S}\hat{X}\hat{Y}$ for a uniform message $\hat{S}=U_{m'}$. There exists an efficient algorithm that can sample from a distribution $\tilde{S}\tilde{X}\tilde{Y}$ such that $\tilde{S}= U_{m'}$ and for every $s \in \{0,1 \}^{m'}$, we have $\Vert (\tilde{X}\tilde{Y}) \vert (\tilde{S}=s) -(\hat{X}\hat{Y}) \vert (\hat{S}=s) \Vert_1 \leq \cO(2^{m'}\eps) \leq 2^{-n^{\Omega(1)}}$. 
\end{corollary}

\section{A quantum secure  non-malleable secret sharing scheme\label{Appendixss}}

Secret sharing is a fundamental primitive in cryptography where a dealer encodes a secret/message into shares and distributes among many parties. Only the authorized subsets of parties should be able to recover the initial secret. Most well known secret sharing schemes are the so called $t$-out-of-$n$ secret sharing schemes where at least $t$-parties are required to decode the secret~\cite{Sha79,Bla79}. In this paper, we focus only on $2$-out-of-$2$ secret sharing schemes. 

Recently non-malleable secret sharing schemes are introduced by Goyal and Kumar~\cite{GK16} with the additional guarantee that when the adversary tampers with possibly all the shares of the secret independently, then the reconstruction procedure outputs original secret or something that is unrelated to the original secret. 

In this paper, in addition, we allow the adversary to make use of arbitrary entanglement to tamper the shares. We then require the reconstruction procedure to output original secret or something that is unrelated to the original secret. We call such secret sharing schemes as quantum secure non-malleable secret sharing schemes. We show that quantum secure non-malleable codes in the split-state model gives rise to quantum secure $2$-out-of-$2$ non-malleable secret sharing schemes.

We consider an encoding and decoding scheme $(\enc,\dec)$  where $\enc(S)=(X,Y)$. Here $S \sim U_m$ ($U_m$ is uniform distribution on $m$ bits) represents the secret/message and  $X,Y\in\{0,1\}^n$ are the two shares/parts of the codeword. $\enc$ is a randomized function and $\dec(X,Y)$ is a deterministic function, such that $\Pr\left(\dec\left( \enc(S)\right) =S \right)=1$.  Let $\mathcal{F}$ denote the set of all functions $f : \{0,1 \}^n \to \{0,1 \}^n$.

\begin{definition}[$2$-out-of-$2$ non-malleable secret sharing scheme\label{def:nmsscodes}~\cite{GK16}]$(\enc, \dec)$ is an $(m,n,\eps_1, \eps_2)$-$2$-out-of-$2$ non-malleable secret sharing scheme, if

\begin{itemize}
    \item \textbf{statistical privacy:} for any two  secrets $s_1,s_2 \in \{0,1 \}^m$, it holds that
\[ \Vert X_{s_1} -X_{s_2} \Vert_1 \leq \eps_1 \quad ; \quad \Vert Y_{s_1} -Y_{s_2} \Vert_1 \leq \eps_1,\] 
where $(X_{s_1},Y_{s_1})=\enc(s_1)$ and $(X_{s_2},Y_{s_2})=\enc(s_2)$
\item \textbf{non-malleability:} for every $f=(g_1,g_2) \in \mathcal{F} \times \mathcal{F}$, there exists a random variable $\mathcal{D}_{f}=\mathcal{D}_{(g_1,g_2)}$ on $\{0,1 \}^m  \cup \lbrace \sm \rbrace$ which is independent of the randomness in $\enc$ such that for all secrets $s \in \{0,1 \}^m$, it holds that
$$  \Vert  \dec(f (\enc(s)))  - \cp (\mathcal{D}_{f} ,s)\Vert_1 \leq \eps_2,$$
 where the function $\cp(x,s)$ is such that $\cp(x,s) =s$ if $x=\mathsf{same}$, otherwise $\cp(x,s) =x$. 
\end{itemize}

\end{definition}

\begin{definition}[Quantum secure $2$-out-of-$2$ non-malleable secret sharing scheme\label{def:qnmsscodes}]  $(\enc, \dec)$ is an $(m,n,\eps_1,\eps_2)$-quantum secure $2$-out-of-$2$  non-malleable secret sharing scheme, if 
\begin{itemize}
    \item \textbf{statistical privacy:} is as defined in Definition~\ref{def:nmsscodes}
\item \textbf{non-malleablity:}
for state $\rho$ and adversary $\mathcal{A} = (U,V, {\psi})$ (as defined in Definition~\ref{def:splitstateadv}), there exists a random variable $\mathcal{D}_{\mathcal{A}}$~\footnote{Distribution depends only on $\mathcal{A}$ and is independent of the original secret $S$.} on $\{0,1 \}^m  \cup \lbrace \sm \rbrace$ such that 
 $$\forall s \in \{0,1\}^m: \qquad \Vert S'_s- \cp (\mathcal{D}_{\mathcal{A}} ,s)   \Vert_1  \leq \eps_2.$$Above $S'=\dec(X',Y')$,  $S'_s = (S'|S=s)$ and the function $\cp$ is as defined in Definition~\ref{def:nmsscodes}.
\end{itemize}

\end{definition}
\begin{fact}[Lemma 6.1 in~\cite{ADKO17}]\label{fact:ssprivacy}
Let $(\enc,\dec)$ be an $(m,n,\eps)$-non-malleable code in the split-state model. Then, 
 for any two  messages $u,v \in \{0,1 \}^m$, it holds that
\[ \Vert X_{u} -X_{v} \Vert_1 \leq 2\eps \quad ; \quad \Vert Y_{u} -Y_{v} \Vert_1 \leq 2\eps,\] 
where $(X_{u},Y_{u})=\enc(u)$ and $(X_{v},Y_{v})=\enc(v)$. 

\end{fact}
\begin{remark}
Since a quantum secure non-malleable code is also a classical non-malleable code in the split-state model, the above fact also applies for quantum secure non-malleable code in the split-state model.
\end{remark}
\begin{corollary}
Let $(\enc,\dec)$ be an $(m,n,\eps)$-quantum secure non-malleable code in the split-state model. Then, $(\enc,\dec)$ is also an $(m,n,2\eps, \eps)$-quantum secure $2$-out-of-$2$ non-malleable secret sharing scheme.
\end{corollary}
\begin{proof}
Statistical privacy follows from the  Fact~\ref{fact:ssprivacy}. Non-malleability property follows from the Definition~\ref{def:nmcodes_qs} of quantum secure non-malleable code in the split-state model.
\end{proof}

\end{document}